\documentclass[11pt,letterpaper]{article}
\usepackage[margin=1in]{geometry}
\usepackage[T1]{fontenc}
\usepackage[dvipsnames]{xcolor}
\usepackage{amsthm}
\usepackage{enumitem}
\usepackage[linesnumbered,ruled,noline,noend,algonl]{algorithm2e}
\usepackage{booktabs}
\usepackage{tabularx}
\usepackage{amsmath, amssymb}
\usepackage{mathtools}
\usepackage{xspace}
\usepackage[framemethod=TikZ]{mdframed}
\usepackage{amsfonts}
\usepackage{mathtools}
\usepackage{thm-restate}
\usepackage{cases}
\usepackage{tikz}
\usepackage{varwidth}

\usepackage{footnote}
\usepackage[textwidth=1.9cm,textsize=small
]{todonotes}

\usepackage[hidelinks]{hyperref}
\hypersetup{colorlinks=false,breaklinks=true}
\usepackage[capitalise,noabbrev]{cleveref}

\usetikzlibrary{automata,positioning,arrows.meta,arrows,decorations.pathmorphing,backgrounds,calc}
\tikzstyle{oscillate} = [decorate, decoration={snake, amplitude=.3mm, segment length=3mm, post length=2mm}]

\makeatletter
    \tikzset{
    from/.style args={#1 to #2}{
        above right={0cm of #1},
        /utils/exec=\pgfpointdiff
            {\tikz@scan@one@point\pgfutil@firstofone(#1)\relax}
            {\tikz@scan@one@point\pgfutil@firstofone(#2)\relax},
        minimum width/.expanded=\the\pgf@x,
        minimum height/.expanded=\the\pgf@y}}
\makeatother

\graphicspath{{./figures}}

\makesavenoteenv{table}

\newenvironment{claimproof}[1][\unskip]{\noindent {\emph{Proof.\space}}}{\hfill$\triangleleft$ \smallskip}

\newcommand{\eps}{\varepsilon}
\newcommand{\nat}{\mathbb{N}}

\newtheorem{theorem}{Theorem}[section]
\newtheorem{lemma}[theorem]{Lemma}
\newtheorem{definition}[theorem]{Definition}
\newtheorem{corollary}[theorem]{Corollary}

\newtheorem{observation}[theorem]{Observation}

\newtheorem{fact}[theorem]{Fact}

\newtheorem{property}{Property}
\newtheorem{claim}{Claim}[lemma]
\newtheorem{pty}{Property}[claim]

\newcommand{\norm}[1]{\left\lVert#1\right\rVert}

\newcommand{\floor}[1]{\left\lfloor #1 \right\rfloor}
\newcommand{\ceil}[1]{\left\lceil #1 \right\rceil}

\newcommand{\Oh}{\mathcal{O}}

\newcommand{\Otilde}{\widetilde{\Oh}}
\newcommand{\Ot}{\Otilde}

\newcommand{\tOh}{\Otilde}

\newcommand{\N}{\mathbb{N}}
\newcommand{\Z}{\mathbb{Z}}

\newcommand{\Mod}[1]{\ (\mathrm{mod}\ #1)}

\newcommand{\bincode}[1]{\mathsf{#1}_2}

\renewcommand{\leq}{\leqslant}
\renewcommand{\geq}{\geqslant}
\renewcommand{\le}{\leqslant}
\renewcommand{\ge}{\geqslant}

\mdfdefinestyle{MyFrame}{%
		linecolor=black,
		middlelinewidth=1pt,
		outerlinewidth=0pt,
		roundcorner=5pt,
		innertopmargin=0pt,
		innerbottommargin=2pt,
		innerrightmargin=2pt,
		innerleftmargin=2pt,
	leftmargin = 2pt,
	rightmargin = 2pt
}

\newcommand{\Problem}[4]{
\begin{center}
\par\addvspace{.5\baselineskip}
	\noindent
	\begin{tabularx}{\textwidth}{ @{\hspace{4ex}}l X c} 
		\multicolumn{2}{@{\hspace{\parindent}}l}{#1} \\
		\textbf{Input:} & {#2} \\
		\textbf{Parameter:} & {#4} \\
		\textbf{Question:} & {#3} \\ 
	\end{tabularx}
	\par\addvspace{.5\baselineskip}
\end{center}
}

\Crefname{equation}{Equation}{Equations}
\Crefname{assumption}{Assumption}{Assumptions}
\Crefname{fact}{Fact}{Facts}
\Crefname{enumi}{Property}{Properties}
\Crefname{observation}{Observation}{Observations}

\makeatletter
\@ifpackageloaded{algorithm2e}{%
\PackageInfo{cleveref}{`algorithm2e' support loaded}%
\crefalias{algocf}{algorithm}%
\crefalias{algocfline}{line}%
\crefalias{AlgoLine}{line}%
\let\cref@old@algocf@nl@sethref\algocf@nl@sethref%
\renewcommand{\algocf@nl@sethref}[1]{%
	\cref@old@algocf@nl@sethref{#1}%
	\cref@constructprefix{AlgoLine}{\cref@result}%
	\@ifundefined{cref@AlgoLine@alias}%
		{\def\@tempa{AlgoLine}}%
		{\def\@tempa{\csname cref@AlgoLine@alias\endcsname}}%
	\xdef\cref@currentlabel{%
		[\@tempa][\arabic{AlgoLine}][\cref@result]%
		\csname p@AlgoLine\endcsname\csname theAlgoLine\endcsname}}%
}{}
\makeatother

\newlist{yesenum}{enumerate}{2}
\setlist[yesenum,1]{%
		label= \textbf{\textsc{Yes}},
		ref= \textsc{Yes},
		leftmargin=*, labelindent=0pt, itemsep=0.5em,topsep=0.25em}
\setlist[yesenum,2]{%
		label= \arabic*.,
		ref= {\theyesenumi}.\arabic*,
		leftmargin=1.25em, labelindent=0pt, itemsep=0.25em,topsep=0pt}
\newlist{yesenump}{enumerate}{2}
\setlist[yesenump,1]{%
		label= \textbf{\textsc{Yes}},
		ref= \textsc{Yes},
		leftmargin=*, labelindent=0pt, itemsep=0.5em,topsep=0.25em}
\setlist[yesenump,2]{%
		label= \roman*.,
		ref= {\theyesenumpi}.\roman*,
		leftmargin=1.25em, labelindent=0pt, itemsep=0.25em,topsep=0pt}

\newlist{noenum}{enumerate}{2}
\setlist[noenum,1]{%
		label= \textbf{\textsc{No}},
		ref= \textsc{No},
		leftmargin=*, labelindent=0pt, itemsep=0.5em,topsep=0.25em}
\setlist[noenum,2]{%
		label= \arabic*.,
		ref= {\thenoenumi}.\arabic*,
		leftmargin=1.25em, labelindent=0pt, itemsep=0.25em,topsep=0pt}
\newlist{noenump}{enumerate}{2}
\setlist[noenump,1]{%
		label= \textbf{\textsc{No}},
		ref= \textsc{No},
		leftmargin=*, labelindent=0pt, itemsep=0.5em,topsep=0.25em}
\setlist[noenump,2]{%
		label= \roman*.,
		ref= {\thenoenumpi}.\roman*,
		leftmargin=1.25em, labelindent=0pt, itemsep=0.25em,topsep=0pt}

\Crefname{noenumi}{Property}{Properties}
\Crefname{noenumii}{Property}{Properties}
\Crefname{yesenumi}{Property}{Properties}
\Crefname{yesenumii}{Property}{Properties}

\Crefname{yesenumpi}{Property}{Properties}
\Crefname{yesenumpii}{Property}{Properties}
\Crefname{noenumpi}{Property}{Properties}
\Crefname{noenumpii}{Property}{Properties}

\renewcommand{\subset}{\subseteq}
\renewcommand{\le}{\leqslant}
\renewcommand{\leq}{\leqslant}
\renewcommand{\ge}{\geqslant}
\renewcommand{\geq}{\geqslant}
\renewcommand{\Tilde}{\widetilde}

\renewcommand{\epsilon}{\ensuremath\varepsilon}

\renewcommand{\phi}{\ensuremath{\varphi}}

\newcommand{\mc}[1]{\mathcal{#1}}

\newcommand{\Cmax}{C_{\max}}

\newcommand\restr[2]{{
	\left.\kern-\nulldelimiterspace 
	#1 
	\littletaller 
	\right|_{#2} 
	}}
\newcommand{\littletaller}{\mathchoice{\vphantom{\big|}}{}{}{}}

\newcommand{\subsetsum}{{\textup{\textsc{Subset Sum}}}\xspace}
\newcommand{\vectorsubsetsum}{{\textup{\textsc{Vector Subset Sum}}}\xspace}

\newcommand{\binpacking}[1][k]{{\textup{\textsc{${#1}$-Bin Packing}}}\xspace}
\newcommand{\binpackingMultisets}[1][k]{{\textup{\textsc{Multiset ${#1}$-Bin Packing}}}\xspace}

\newcommand{\kpartition}[1][k]{{\textup{\textsc{${#1}$-way Partition}}}\xspace}
\newcommand{\kpartitionTargets}[1][k]{{\textup{\textsc{${#1}$-way Partition with targets}}}\xspace}
\newcommand{\kpartitionBounded}[1][k]{{\textup{\textsc{Bounded ${#1}$-way Partition}}}\xspace}
\newcommand{\kpartitionMultisets}[1][k]{{\textup{\textsc{Multiset ${#1}$-way Partition}}}\xspace}
\newcommand{\kpartitionBoundedMultisets}[1][k]{{\textup{\textsc{Multiset Bounded ${#1}$-way Partition}}}\xspace}
\newcommand{\kpartitionTargetsMultisets}[1][k]{{\textup{\textsc{Multiset ${#1}$-way Partition with targets}}}\xspace}

\newcommand{\pbtwoYes}[1][k]{{\textup{\textsc{Grouped ${#1}$-way Partition}}}\xspace}
\newcommand{\pbtwo}[1][k]{{\textup{\textsc{Weak Grouped ${#1}$-way Partition}}}\xspace}
\newcommand{\pbtwoTargetsMultisets}[1][k]{{\textup{\textsc{Multiset Weak Grouped ${#1}$-way Partition with targets}}}\xspace}
\newcommand{\pbtwoMultisets}[1][k]{{\textup{\textsc{Multiset Weak Grouped ${#1}$-way Partition}}}\xspace}

\newcommand{\PSwjUj}[1][k]{{\ensuremath{P_{#1} || \Sigma w_j U_j}}\xspace}
\newcommand{\QSwjUj}[1][k]{{\ensuremath{Q_{#1} || \Sigma w_j U_j}}\xspace}
\newcommand{\PSpjUj}[1][k]{{\ensuremath{P_{#1 } || \Sigma p_j U_j}}\xspace}
\newcommand{\QSpjUj}[1][k]{{\ensuremath{Q_{#1} || \Sigma p_j U_j}}\xspace}
\newcommand{\PSwjCj}[1][k]{{\ensuremath{P_{#1} || \Sigma w_j C_j}}\xspace}
\newcommand{\QSwjCj}[1][k]{{\ensuremath{Q_{#1} || \Sigma w_j C_j}}\xspace}
\newcommand{\PSUj}[1][k]{{\ensuremath{P_{#1} || \Sigma U_j}}\xspace}
\newcommand{\QSUj}[1][k]{{\ensuremath{Q_{#1} || \Sigma U_j}}\xspace}
\newcommand{\PrjCmax}[1][k]{{\ensuremath{P_{#1} |r_j| \Cmax}}\xspace}
\newcommand{\QrjCmax}[1][k]{{\ensuremath{Q_{#1} |r_j| \Cmax}}\xspace}
\newcommand{\PCmax}[1][k]{{\ensuremath{P_{#1} || \Cmax}}\xspace}
\newcommand{\QCmax}[1][k]{{\ensuremath{Q_{#1} || \Cmax}}\xspace}
\newcommand{\PLmax}[1][k]{{\ensuremath{P_{#1} || L_{\text{max}}}}\xspace}
\newcommand{\PTmax}[1][k]{{\ensuremath{P_{#1} || T_{\text{max}}}}\xspace}

\newcommand{\tmp}[1]{}

\definecolor{cb_yellow}{RGB}{255,176,0}
\definecolor{cb_orange}{RGB}{254,97,0}
\definecolor{cb_red}{RGB}{220,038,127}
\definecolor{cb_purple}{RGB}{120,94,240}
\definecolor{cb_blue}{RGB}{100,143,255}

\definecolor{OI_green}{RGB}{0,158,115}
\definecolor{OI_blue}{RGB}{0,114,178}
\definecolor{OI_red}{RGB}{213,94,0}
\definecolor{OI_yellow}{RGB}{240,228,66}
\definecolor{OI_lightblue}{RGB}{86,180,233}
\definecolor{OI_pink}{RGB}{204,121,167}
\definecolor{OI_orange}{RGB}{230,159,0}

\title{Tight (S)ETH-based Lower Bounds for Pseudopolynomial Algorithms for Bin Packing and Multi-Machine Scheduling}
\author{Karl Bringmann\footnote{ETH Zurich, \texttt{\{karl.bringmann,anita.duerr\}@inf.ethz.ch}. Part of this work was done while affiliated to Saarland University and Max Planck Institute for Informatics, Saarbrücken, Germany, where this work was part of the project TIPEA that has received funding from the European Research Council (ERC) under the European Unions Horizon 2020 research and innovation programme (grant agreement No. 850979).}\and Anita D\"{u}rr$^*$
\and Karol W\k{e}grzycki\footnote{Max Planck Institute for Informatics, Saarbr\"ucken, Germany, \texttt{kwegrzyc@mpi-inf.mpg.de}. Supported by 
    the Deutsche Forschungsgemeinschaft (DFG, German Research Foundation) grant number
    559177164.}}

\date{}

\begin{document}
\maketitle

\thispagestyle{empty}
\begin{abstract}


Bin Packing with $k$ bins is a fundamental optimisation problem in which we are given a set of $n$ integers and a capacity~$T$ and the goal is to partition the set into $k$ subsets, each of total sum at most~$T$. Bin Packing is NP-hard already for $k=2$ and a textbook dynamic programming algorithm solves it in pseudopolynomial time $\Oh(n T^{k-1})$. Jansen, Kratsch, Marx, and Schlotter [JCSS'13] proved that this time cannot be improved to $(nT)^{o(k / \log k)}$ assuming the Exponential Time Hypothesis (ETH). Their result has become an important building block, explaining the hardness of many problems in parameterised complexity. Note that their result is one log-factor short of being tight. In this paper, we prove a tight ETH-based lower bound for Bin Packing, ruling out time $2^{o(n)} T^{o(k)}$. This answers an open problem of Jansen et al.\ and yields improved lower bounds for many applications in parameterised complexity.

Since Bin Packing is an example of multi-machine scheduling, it is natural to next study other scheduling problems. We prove tight lower bounds based on the Strong Exponential Time Hypothesis (SETH) for several classic $k$-machine scheduling problems, including makespan minimisation with release dates (\PrjCmax), minimizing the number of tardy jobs (\PSUj), and minimizing the weighted sum of completion times (\PSwjCj). For all these problems, we rule out time $2^{o(n)} T^{k-1-\varepsilon}$ for any $\varepsilon > 0$ assuming SETH, where $T$ is the total processing time; this matches classic $n^{\Oh(1)} T^{k-1}$-time algorithms from the 60s and 70s. Moreover, we rule out time $2^{o(n)} T^{k-\varepsilon}$ for minimizing the total processing time of tardy jobs (\PSpjUj), which matches a classic $\Oh(n T^{k})$-time algorithm and answers an open problem of Fischer and Wennmann [TheoretiCS'25].

\end{abstract}

\clearpage
\setcounter{page}{1}

\section{Introduction}

%

Bin Packing is a fundamental optimisation problem that has many applications both in theory and practice and is widely studied in theoretical computer science, mathematical optimisation, and operations research.
In this problem, given $n$ items $a_1,\ldots,a_n \in \mathbb{N}$, a capacity $T \in \mathbb{N}$, and a number of bins $k \ge 2$, the task is to partition the items into $k$ subsets, each summing to at most $T$. Note that this is the decision version of the problem (in the optimisation version the goal is to minimise the number of bins~$k$). In this paper we study \binpacking, where the number of bins $k$ is fixed (a closely related view is to study the parameterised complexity of {{Bin Packing}} with parameter~$k$). 
Since the problem is NP-hard already for $k=2$, we focus on pseudopolynomial-time algorithms, i.e.~we allow polynomial time in terms of $n$ and~$T$. 
%
%
\binpacking can be solved in $\Oh(n T^{k-1})$ time by a simple dynamic programming algorithm that dates back to the
60s~\cite{rothkopf1966scheduling,lawler1969functional}. This running time has not been improved, despite significant interest in the problem.

A natural special case of \binpacking is the \kpartition problem, where the goal is to partition a given set of $n$ positive integers into $k$ subsets of \emph{equal} sum. In fact, we prove that the two problems are equivalent with respect to the questions studied in this paper (see~\cref{thm:equivalence-bin-packing}), so the reader may think of either of these problems in what follows.

Another important special case is the \subsetsum\footnote{Given a set of $n$ positive integers and a target $T$, does some subset of the input integers sum to $T$?} problem, as it is equivalent to \binpacking[2] by a standard reduction. The vast literature on \subsetsum includes a classic $\Oh(nT)$-time dynamic programming algorithm~\cite{bellman1957dynamic} and a more recent improvement to time $\tOh(n + T)$~\cite{DBLP:conf/soda/Bringmann17}.\footnote{The $\Ot(\cdot)$ notation hides polylogarithmic factors.} 
Lower bounds for \subsetsum are a recent success story of fine-grained complexity theory: The Exponential Time Hypothesis (ETH) can be used to prove conditional lower bounds that are tight up to constants in the exponent, and such an ETH-based lower bound ruling out time $2^{o(n)} T^{o(1)}$ was shown in~\cite{BuhrmanLT15,JansenLL16}. More precise lower bounds that even determine the optimal constant in the exponent can be shown using the Strong Exponential Time Hypothesis (SETH), and indeed a tight SETH-based lower bound ruling out time $2^{o(n)} T^{1-\eps}$ for any $\eps>0$ was shown by Abboud et al.~\cite{AbboudBHS19}.




\subsection{Fine-grained Complexity of Bin Packing}
In this paper we ask whether a similarly precise understanding as for \subsetsum can be obtained for \binpacking. 
The state of the art lower bound for \binpacking was proven by Jansen, Kratsch, Marx, and Schlotter~\cite{jansen2013bin}, and shows that it cannot be solved in time $(nT)^{o(k / \log k)}$ assuming ETH. Jansen et al.\ also asked whether their result can be improved to rule out time $(nT)^{o(k)}$, thus avoiding the log-factor loss of their reduction.
Our first main result answers their question by indeed providing a \emph{tight ETH-based lower bound}, in fact with a better dependence on~$n$:

\begin{theorem}\label{thm:bin-packing-lowerbound}
    Assuming ETH, \binpacking cannot be solved in time $2^{o(n)} T^{o(k)}$.
\end{theorem}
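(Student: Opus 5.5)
We reduce from 3-SAT, using the Sparsification Lemma so we may assume a formula $\varphi$ on $N$ variables with $M=\Oh(N)$ clauses. Under ETH such a formula cannot be decided in time $2^{o(N)}$. The goal is the following: for every (large) $k$, map $\varphi$ to an equivalent instance of \kpartition, and hence of \binpacking by \cref{thm:equivalence-bin-packing}, with $k'=\Theta(k)$ bins, $n=\Oh(N)$ items and target $T=2^{\Oh(N/k)}$. Granting this, an algorithm running in time $2^{o(n)}T^{o(k)}$ would decide $\varphi$ in time $2^{o(N)}\cdot 2^{o(k)\cdot \Oh(N/k)}=2^{o(N)}$, contradicting ETH. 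To cover all regimes we let $k$ be an arbitrary function of $N$ and choose $k$ adversarially; the construction must therefore work uniformly in $k$.

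\textbf{Construction.} Split the variables into $k$ groups $V_1,\dots,V_k$ of size $N/k$. We associate bin $i$ with group $V_i$ in a ``grouped partition'' fashion, i.e.\ the \pbtwo-style intermediate problem suggested by the paper's macros. The contents of bin $i$ should encode an assignment to $V_i$, and a bin's sum is a number written in a positional system with $\Oh(N/k)$ digit blocks. The Jansen et al.\ reduction loses a $\log k$ factor because it identifies bins by distinct numeric identifiers, which costs $\log k$ bits per item. To avoid this loss, every number is made to live in a space of size only $2^{\Oh(N/k)}$.

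This is done through a \emph{grouped} structure. We first reduce 3-SAT to an intermediate vector problem: each item is a vector with $\Oh(1)$ (or $\Oh(N/k)$-length, small-entry) coordinates, and every bin must hit the same target vector. Item $x_v^{b}$ represents variable $v$ taking value $b$, and its coordinates record the clause slots that $v$ can satisfy. Clause gadgets consist of filler items that absorb the slack in a clause coordinate exactly when at least one literal of the clause is satisfied. The key design choice is this: a clause involves variables from at most three groups, so its coordinate is shared only among the bins of those groups. We then reuse coordinates across clauses by a coloring/scheduling argument. Since $M=\Oh(N)$ and each bin handles $\Oh(N/k)$ variables, each bin only needs $\Oh(N/k)$ active clause coordinates. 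Finally we compress the vectors into integers with base $B=\Oh(1)$ per coordinate (enough to prevent carries, since only $\Oh(1)$ items contribute to a coordinate). This gives $T=B^{\Oh(N/k)}=2^{\Oh(N/k)}$.

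\textbf{Correctness.} We must rule out ``cheating'' partitions, for instance both $x_v^0$ and $x_v^1$ landing in bin $i$, or items of group $V_i$ landing in a bin $j\neq i$. For this we add per-variable consistency digits: the items $x_v^0$ and $x_v^1$ carry complementary marker digits, so each bin must take exactly one of them. Group membership is enforced with a bin-specific ``weight'' offset that is also encoded within the $\Oh(N/k)$ digits. The trick is a large leading digit that counts items, combined with the equal-sum requirement over all bins. Completeness is then direct from a satisfying assignment. Soundness follows by reading the digits from the top down and using the absence of carries.

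\textbf{Main obstacle.} The crux is enforcing that items go to their intended bin without spending $\log k$ bits per item on bin identification. Equivalently, the clause coordinates must be shared between the bins of different groups while keeping every coordinate's range $\Oh(1)$ and the total digit count $\Oh(N/k)$. My first attempt would be an edge-coloring of the ``group interaction graph'' on $k$ vertices, with clauses as edges (or triangles) of total multiplicity $\Oh(N)$. Vizing-type bounds give $\Oh(N/k)$ colors when degrees are balanced, and we can randomly permute variables to balance them. I would then check that two clauses sharing a color never interfere, because they involve disjoint bins. If that fails, the fallback is to first design a reduction to \pbtwoMultisets and only afterwards remove multiplicities and groups via the equivalence results.
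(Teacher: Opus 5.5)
\textbf{There is a genuine gap.} The step that carries the whole proof is checking a clause whose literals live in different bins. You assert it but never construct it, and as described it cannot work.

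Every item contributes to exactly one bin, so bin $a$'s copy of a clause block only sees bin $a$'s literal. A ``coordinate shared among the bins'' is therefore up to three independent exact-sum constraints, and the fillers must couple them. This raises the following problems.
\begin{itemize}
\item \emph{No dumpster.} How many fillers are needed depends on how many literals are true, so surplus fillers need a bin with unconstrained content. The unselected item of each pair $\{x_v^0,x_v^1\}$ needs one too. If ``every bin must hit the same target vector'', no such bin exists.
\item \emph{Symmetric bins.} With identical targets the bins are interchangeable, so nothing pins $V_i$ to a particular bin. You must break the symmetry with distinct targets; anchor items amount to the same thing, cf.\ \cref{lem:MultiParttargets_equiv_MultiPart}. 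This is exactly why the paper works with \kpartitionTargetsMultisets. There the I-blocks force the items of clause group $G_\ell$ into bin $\ell$ or into the dumpster bin $k$, whose target $t_k$ is whatever is left over.
\item \emph{Filler leakage.} You reuse each block for clauses on disjoint sets of bins. A filler of $C$ could then go into a bin where the same block belongs to another clause $C'$ and stand in for $C'$'s fillers. ``Clauses sharing a colour involve disjoint bins'' only helps once fillers are explicitly confined by extra counting blocks (the analogue of the paper's III-blocks). Those counts must also cope with a variable number of useful fillers.
\end{itemize}

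Identifying bins is not the real obstacle. The paper's I- and III-blocks cost $\Oh(k^2\log M)$ bits, i.e.\ a factor $N^{\Oh(k^2)}$ in $T$. This is harmless because $k$ is fixed and a single $\delta$ works for every $k$ (\cref{lem:ETH_to_BPtargets}); you do not need $k$ to grow with $N$. Likewise, the $\log k$ loss of Jansen et al.\ comes from the Subgraph Isomorphism lower bound they reduce from, not from bin identifiers.

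\textbf{How the paper avoids this.} It uses the opposite decomposition, so no OR across bins is ever needed. Clauses, not variables, are distributed over the first $k-1$ bins. There is one item $z(i,\alpha)$ for each clause $C_i$ and each satisfying assignment $\alpha$ of its variables. Satisfaction is thus checked inside a single bin, and only variable \emph{consistency} crosses bins.

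Consistency is an equality test between exactly two bins, on a 3-bit channel:
\begin{itemize}
\item $z(i,\alpha)$ writes $\alpha(x)$ and $z(j,\beta)$ writes $\overline{\beta(x)}$;
\item the dummies $d(i,j,x), d'(i,j,x)$ write $\bincode{010}, \bincode{011}$;
\item the target is $\bincode{011}$ in both bins.
\end{itemize}
Each of the two bins takes exactly one dummy, so the III-blocks have fixed counts, and a parity argument gives soundness. The same greedy argument you have in mind bounds the number of channels by $\Oh(N/k)$.

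\textbf{Can your route be repaired?} Probably. You would need a dumpster, confinement blocks indexed by the bin sets of clauses, and per clause a surplus of fillers, exactly one placed in each involved bin and the rest dumped. But that gadget and its soundness proof are precisely the missing content.

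\textbf{Your fallback is not available.} The equivalences of \cref{sec:equivalences_pbtwo} remove targets and multiplicities but not the group structure. \cref{lem:BoundedPart_to_WeakGroupedPart} links \binpacking only to the case $q=1$. A reduction from \pbtwoMultisets with general $q$ back to \binpacking would give the SETH-tight lower bound for \binpacking that the paper leaves open.
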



This shows that any algorithm for \binpacking requires either exponential time $2^{\Omega(n)}$, which matches the known $2^{\Oh(n)}$-time algorithms~\cite{BjorklundHK09,NederlofPSW23}, or time $T^{\Omega(k)}$, which matches the $\Oh(n T^{k-1})$-time dynamic programming algorithm. Since ETH ignores constants in the exponent, what we obtain is the best possible ETH-based lower bound in terms of $n$ and $T$. 
We prove our lower bound by a creative new encoding of communication between bins, see \Cref{sec:technical-overview} for a discussion.


\smallskip

The lower bound of Jansen et al.~\cite{jansen2013bin} has  become an important starting point of reductions to prove lower bounds in parameterised complexity, see, e.g.~\cite{blavzej2024parameterized,
blavzej2024equitable,
bliem2016complexity,dreier2019complexity,
froese2024disentangling,gima2024extended,
hanaka2024parameterized,heeger2023single,
JAVADI20241,
lafond2025cluster}. However, the log-factor loss in their exponent is inherited by all of these lower bounds. Our tight lower bound for \binpacking yields, via the same reductions, tight ETH-based lower bounds for many applications in parameterised complexity. 
In particular, \cref{thm:bin-packing-lowerbound} immediately gives tight lower
bounds for 
$p$-Cluster Editing on cographs~\cite{lafond2025cluster}, Sparsest Cut on bounded treewidth graphs~\cite{JAVADI20241},
Eulerian Strong Component Arc Deletion~\cite{blavzej2024parameterized}, EEF-Allocation~\cite{bliem2016complexity}, Equitable Connected
Partition~\cite{blavzej2024equitable}, Exact Path Packing~\cite{dreier2019complexity}
and two single-machine scheduling problems~\cite{heeger2023single}.
See~\cref{cor:cluster-editing,cor:sparsest-cut,cor:escad,cor:eef-allocation,cor:ecp,cor:path-packing,cor:single-machine} in~\cref{app:further-applications} for a detailed discussion of the improved lower bounds implied by \cref{thm:bin-packing-lowerbound}.



\paragraph{Towards more precise lower bounds?}
Analogously to the success story of \subsetsum, now that we have a tight ETH-based lower bound for \binpacking it is natural to ask for a tight SETH-based lower bound: Can we show that \binpacking cannot be solved in time $2^{o(n)} T^{k-1-\eps}$ for any $\eps > 0$ assuming SETH (or some other popular hypothesis)?
Alas, we leave this as an intriguing open problem. It turns out that reductions to \binpacking are significantly more difficult than reductions to \subsetsum, see~\cref{sec:technical-overview}. 
Since there have been surprising algorithmic breakthroughs for \binpacking in some settings\footnote{In the context of exact algorithms it would have been natural to ask for a lower bound ruling out time $\Oh((2-\eps)^n)$ for any $\eps>0$ for, say, \binpacking[10] -- until Nederlof et al.~\cite{NederlofPSW23} showed that for every $k \ge 2$ there exists $\eps_k >0$ such that \binpacking can be solved in time
$\Oh((2-\eps_k)^n)$.}, it is certainly possible that \binpacking can be solved in time $2^{o(n)} T^{k-1-\eps}$ after all.

\subsection{Fine-grained Complexity of Multi-machine Scheduling}

\binpacking can be reduced to several classic scheduling problems on $k$ machines. As these problems offer more structure for lower bounds, we can bypass the difficulties encountered in our search for more precise lower bounds for \binpacking,  by proving, as our second main result, \emph{tight SETH-based lower bounds for several multi-machine scheduling problems}.
We summarise our results in \cref{tab:summary}.




\begin{table}[!t]
\centering
\begin{tabular}{lllll}
\toprule
Problem & Upper Bound & & Our Result  & \\
\midrule
\binpacking $\equiv$ \PCmax & $\Oh(n T^{k-1})$ & \cite{rothkopf1966scheduling,lawler1969functional} & no $2^{o(n)} T^{o(k)}$ & \cref{thm:bin-packing-lowerbound} \\
\PrjCmax & $\Oh(n T^{k-1})$ & \cite{sahni1976algorithms} & no $2^{o(n)} T^{k-1-\eps}$ & \cref{thm:SETH_to_PrjCmax} \\
\PSwjCj & $\Oh(n T^{k-1})$ & \cite{rothkopf1966scheduling,lawler1969functional} & no $2^{o(n)} T^{k-1-\eps}$ & \cref{thm:SETH_to_PSwjCj} \\
$\PSUj$ & $\Oh(n^2 T^{k-1})$ & [{folklore}] & no $2^{o(n)} T^{k-1-\eps}$ & \cref{cor:SETH_to_PSUj} \\
\PSpjUj & $\Otilde(n + T^{k})$ & \cite{fischer2025minimizing} & no $2^{o(n)} T^{k-\eps}$ & \cref{thm:SETH_to_PSpjUj} \\
\PSwjUj & $\Oh(n T^{k})$ & \cite{rothkopf1966scheduling,lawler1969functional} & no $2^{o(n)} T^{k-\eps}$ & \cref{thm:SETH_to_PSpjUj} \\
\bottomrule
\end{tabular}
\caption{Summary of our lower bounds for \binpacking and various multi-machine scheduling problems. 
The lower bound of \cref{thm:bin-packing-lowerbound} assumes ETH, while all other listed results assume SETH. 
Prior works only exclude time $(nT)^{o(k/\log k)}$ assume ETH~\cite{jansen2013bin}. 
}
\label{tab:summary}
\end{table}

In order to describe these results in detail, let us start with some background on scheduling problems on parallel machines. 
Generally, in scheduling problems the input consists of $n$ jobs with processing times $p_1,\ldots,p_n \in \mathbb{N}$. A schedule is an assignment of each job to a machine and a starting time such that no two jobs are processed at the same time by the same machine. 
The three-field notation $\alpha|\beta|\gamma$ introduced by Graham et al.~\cite{graham1979optimization} is a shorthand notation for scheduling problems. The first field~$\alpha$ describes the machine environment, which in our case will be either ``$P_k$'', denoting $k$ identical parallel machines where executing job $j$ takes time $p_j$ on any machine, or ``$Q_k$'', denoting $k$ uniform machines with given speeds $s_1,\ldots,s_k \in (0,1]$ where executing job $j$ on machine $i$ takes time $p_j/s_i$. The second field $\beta$ describes additional constraints, which in this paper will either be empty, meaning no additional constraints, or ``$r_j$'', denoting that job $j$ can be processed no earlier than its given release date $r_j$. 
In some problems, job $j$ also comes with a due date $d_j$ or with a weight $w_j$ used in the objective function. As they are implicit from the third field, those characteristics are not specified in the notation. 
The third field~$\gamma$ describes the objective function, which in this paper will be one of the following classic choices, where $C_j$ denotes the completion time of job $j$, and $U_j$ is 1 if the job $j$ completes after its due date $d_j$ (i.e.~job $j$ is a \emph{tardy} job) and 0 otherwise:
\begin{itemize}[noitemsep]
	\item Maximum completion time (makespan) $\Cmax = \max_{j} C_j$,
	\item Total weighted completion time $\Sigma w_j C_j$,
	\item Total number of tardy jobs $\Sigma U_j$,
	\item Total processing time of tardy jobs $\Sigma p_jU_j$,
	\item Total weighted number of tardy jobs $\Sigma w_jU_j$.
\end{itemize}
Note that the studied scheduling problems are defined as optimisation problems where the task is to minimise the respective objective function. However, each of them has a natural decision variant, in which the task is to decide if the objective value is at most $\Lambda$ for some given $\Lambda \in \N$.

Several parameters have been used to analyse pseudopolynomial algorithms for scheduling problems, including the total processing time $T \coloneq \sum_{1 \le j \le n} p_j$, the maximum completion time $C \coloneq \max_{1 \le j \le n} C_j$, and, for problems with due dates, the maximum due date $D \coloneq \max_{1 \le j \le n} d_j$. 
We observe that on identical machines for the scheduling problems studied in this paper we have $C,D \le T$.\footnote{Indeed, for problems without release dates we have $C, D \le T$ without loss of generality since all jobs can trivially be processed by time $T$.
The only problem with release dates studied in this paper is \PrjCmax, for which parameter $D$ is irrelevant as there are no due dates, and we prove in \Cref{obs:PrjCmax_wlog_small_release_dates} that without loss of generality $C \le T$.}
In particular, a lower bound in terms of $T$ implies lower bounds in terms of $C$ and $D$. This will suffice to also prove tight lower bounds on uniform machines ($Q_k$). We will use $T$ as the default parameter.

%
%

\paragraph{Makespan with Release Dates}
In the problems \PrjCmax and $\QrjCmax$, each job has a processing time and a release date, and the task is to minimise the maximum completion time. 
To the best of our knowledge, Sahni~\cite{sahni1976algorithms} was the first
to design a $\Oh(n C^{k-1})$-time algorithm (see~\cite{lawler}). For \PrjCmax, since without loss of generality we can assume that $C \le T$ (see \Cref{obs:PrjCmax_wlog_small_release_dates}), this implies time $\Oh(n T^{k-1})$.

As usual for scheduling problems, there is a simple reduction from \subsetsum; here such a reduction exists for $\PrjCmax[2]$. The hardness of \subsetsum transfers via this reduction, and it follows that for any $k \ge 2$ the problem \PrjCmax is NP-hard and has no $2^{o(n)} T^{1-\eps}$-time algorithm for any $\eps>0$ assuming SETH; see~\cite[Theorem 13]{jansen2023complexity} for details. 

There is also a simple reduction from \binpacking to \PrjCmax, since the special case \PCmax (where all release dates are $r_j=0$) is equivalent to \binpacking. The hardness of \binpacking transfers via this reduction, and from \Cref{thm:bin-packing-lowerbound} it follows that \PrjCmax cannot be solved in time $2^{o(n)} T^{o(k)}$ assuming ETH.


This leaves open whether \PrjCmax can be solved in time $2^{o(n)} T^{k-1-\eps}$. In contrast to \binpacking, the \PrjCmax problem has additional structure that allows to bypass the difficulties encountered when studying \binpacking. We indeed prove a tight SETH-based lower bound:


\begin{theorem}\label{thm:SETH_to_PrjCmax}
		Assuming SETH, \PrjCmax cannot be solved in time $2^{o(n)} T^{k-1-\eps}$ for any $\eps > 0$.
\end{theorem}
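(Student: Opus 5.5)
We reduce from SETH through a multi-dimensional subset-sum style problem. The natural intermediate is $(k-1)$-dimensional \vectorsubsetsum, or equivalently a $k$-way partition problem with prescribed targets. SETH-hardness for these follows from the Abboud et al.~\cite{AbboudBHS19} framework: given a $q$-SAT formula on $N$ variables, split the variables into $k-1$ groups. Each group is encoded into one coordinate via the Abboud et al.\ reduction, so each coordinate has target $T_0 = 2^{(1-\eps')N/(k-1)}$ roughly. The number of items is $n=\Oh(N/\log \ldots)$, i.e.\ $2^{o(n)}$ still allows only subexponential-in-$N$ overhead, after the usual grouping trick from \cite{AbboudBHS19} that trades $n$ for target size.

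Concretely, the plan is to show that \pbtwoTargetsMultisets or a similar grouped partition problem with $k$ parts, where $k-1$ parts have independent targets of size about $T_0$ each, has no $2^{o(n)}T_0^{\,k-1-\eps}$ algorithm. Solving it is then equivalent to solving $(k-1)$ independent Subset Sum instances that share the same item structure. This gives a total "volume" of $T_0^{k-1}$, which matches the $T^{k-1}$ bound once $T=\Oh(T_0)$.

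The reduction to \PrjCmax has three parts.
\begin{enumerate}
\item Use release dates to create $k$ disjoint time windows, or "phases". Each machine is forced into a rigid skeleton by long blocker jobs. A blocker job released at time $r$ with makespan bound $C$ must run exactly in $[r, C]$ on some machine.
\item Place the blocker jobs so that machine $i$ has a free interval ending at a position $\approx i\cdot T_0$ and the intervals are staggered. Which item jobs fit before each blocker then encodes membership of items in bins $1,\dots,k-1$, with machine $k$ taking the rest.
\item Release dates also enforce a consistency constraint: item jobs from the same group must be routed as the grouped problem requires. The point where bin packing fails is that jobs cannot be "forced", and release dates supply exactly that forcing.
\end{enumerate}
All processing times must remain $\Oh(T_0)$ so that $T=\Oh(k T_0)$. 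An algorithm running in time $2^{o(n)}T^{k-1-\eps}$ would then solve the SAT instance in time $2^{(1-\delta)N}$.

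The main obstacle is keeping $T$ linear in $T_0$ while encoding $k-1$ independent coordinates. Naive digit concatenation yields $T\approx T_0^{k-1}$, which gives only a $T^{1-\eps}$ bound. The $k-1$ coordinates must therefore be realised by $k-1$ different machines' free capacities, not by a number encoding.

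The gadget I would try first has the following structure.
\begin{itemize}
\item Each machine $i\le k-1$ has one free slot of length exactly $t_i$ (its target), followed by a blocker job released at $t_i$ plus an offset.
\item Machine $k$ absorbs everything else and has enough slack.
\item Tightness is achieved through $C=T$ exactly, with total work equal to $kC$ minus nothing, so no idle time is allowed anywhere.
\end{itemize}
Zero idle time forces every slot to be filled exactly, which turns "at most" constraints into "equal" constraints and makes correctness follow from the partition-with-targets hardness.
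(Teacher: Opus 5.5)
Your proposal has a genuine gap, and it lies exactly where the difficulty of the theorem is. Naming a grouped partition problem as the intermediate is the right instinct, but neither half of the argument is actually supplied.

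\textbf{The intermediate hardness.} First, $(k-1)$-dimensional \vectorsubsetsum and $k$-way partition with prescribed targets are not equivalent. In \vectorsubsetsum every item contributes to every coordinate, whereas a job (or a partition item) contributes to exactly one machine. SETH-hardness transfers easily to the former (base-$B$ digits plus guessing carries), but that says nothing about the latter. Moreover, if the $k-1$ coordinates were genuinely independent Subset Sum instances, they would not be hard at all: each could be solved separately in time $\Ot(n+T_0)$. The $T_0^{k-1}$ hardness must come from coupling between bins. Running the construction of Abboud et al.~\cite{AbboudBHS19} once per coordinate also fails for a partition-type problem. Their (super)clause items must interact with several supervariables, and those supervariables may sit in different bins.

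\textbf{The scheduling gadget.} The only concrete gadget you give is one free slot of length $t_i$ per machine $i\le k-1$, closed by a blocker, plus zero idle time. This is a reduction from \kpartitionTargets to \PrjCmax in which the release dates of the item jobs play no role. It can therefore only transfer a lower bound already known for \kpartitionTargets, which is equivalent to \binpacking under parameter-preserving reductions. A SETH-based $T^{k-1-\eps}$ bound for that problem is precisely what the paper leaves open; only the ETH-based bound ruling out $2^{o(n)}T^{o(k)}$ is available. Your step~3, where release dates are supposed to enforce consistency, is where all the work would have to happen, and it contains no mechanism.

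\textbf{What the paper does instead.} It uses two ideas you are missing.

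(i) An intermediate problem in which every item talks to a single bin. Supervariables $X_i$ are assigned to bins, and clauses are \emph{not} grouped into superclauses. For each clause $C_j$ there is an item $y(j,i,\alpha)$ for each supervariable $X_i$ and each assignment $\alpha$ satisfying $C_j$, and this item can only be packed into the bin of $X_i$. Consistency with the packed item $z(i,\alpha_i)$ is checked inside that single bin via a strong average-free set, after guessing the number $\gamma_i$ of clauses each supervariable is responsible for. ``Exactly one packed item per clause'' is enforced not by per-clause bookkeeping bits but by group structure. Each clause forms its own group with only $(k-1)s-1$ dummy items, each packed bin must take exactly $s$ items per group, and a global count fixes the total number of packed assignment items to $q$.

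(ii) A way for the scheduling instance to enforce this group structure. All items lie in $[W(1-1/n^{10}),W]$, and jobs of group $G_i$ get due date $\min\{\mu, isW\}$. When no job is tardy, this forces each machine to complete exactly $s$ jobs of each group and to have load exactly $\mu$. Reversing time via $r_j = M - d_j$ turns this into a \PrjCmax instance.

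Blockers and a zero-idle-time argument only control the total load per machine, not which items must share a machine, so they cannot replace (ii).
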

Since on identical machines we can assume without loss of generality that $C \le T$ (see \Cref{obs:PrjCmax_wlog_small_release_dates}), we also rule out time $2^{o(n)} C^{k-1-\eps}$ for \PrjCmax, and, since it is a more general problem, also for \QrjCmax.\footnote{
	Since \PrjCmax is parameter-preserving equivalent to \PLmax and to \PTmax, the lower bound also applies for the latter problems. The definitions of \PLmax and \PTmax as well as the equivalence to \PrjCmax is discussed in \cref{app:folklore}.} This matches the classic $\Oh(nC^{k-1})$-time algorithms.

\medskip
In what follows we discuss several further examples of multi-machine scheduling problems that have pseudopolynomial-time algorithms according to the classical survey~\cite{lawler}.


\paragraph{Weighted Sum of Completion Times}
In the problems \PSwjCj and \QSwjCj, each job has a processing time $p_j$ and a weight $w_j$, and the task is to minimise $\sum_{1 \le j \le n} w_j C_j$, where $C_j$ is the completion time of job $j$.
The problem \PSwjCj was one of the first scheduling problems considered in the literature. Lawler, Moore, and
Rothkopf~\cite{lawler1969functional,
rothkopf1966scheduling} designed an $\Oh(n C^{k-1})$-time algorithm, where $C$ is the maximum completion time in an optimal solution; see~\cite[Section 8.2, page 468]{lawler} for the same running time for \QSwjCj. 
For \PSwjCj, since $C \le T$, this yields running time $\Oh(n T^{k-1})$.
As before, time $2^{o(n)} T^{1-\eps}$ can be ruled out for $k \ge 2$ by a reduction from \subsetsum~\cite[Theorem 15]{jansen2023complexity}, and time $2^{o(n)} T^{o(k)}$ can be ruled out via \Cref{thm:bin-packing-lowerbound}, assuming SETH and ETH respectively.
We prove a tight SETH-based lower bound:
\begin{theorem}\label{thm:SETH_to_PSwjCj}
		Assuming SETH, \PSwjCj cannot be solved in time $2^{o(n)} T^{k-1-\eps}$ for any $\eps > 0$.
\end{theorem}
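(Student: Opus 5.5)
We plan a fine-grained reduction from SETH through an intermediate problem with vector structure. Concretely, we start from the SETH-hardness of \subsetsum in its ``many-targets'' or vector form, as in Abboud et al.~\cite{AbboudBHS19}: assuming SETH, for every $\eps>0$ there is no $2^{o(n)} T^{1-\eps}$ algorithm. We use the $(k-1)$-dimensional analogue, \vectorsubsetsum with $k-1$ coordinates each bounded by $T'$, which by the same Abboud et al.\ reduction (splitting the variables of a CNF into $k-1$ groups, one per coordinate) rules out time $2^{o(n)} (T')^{k-1-\eps}$. The goal is then to encode such an instance into \PSwjCj on $k$ machines with total processing time $T = \Oh(T')$ up to $2^{o(n)}$ or $T'^{o(1)}$ factors, and with $n$ growing only linearly.

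The key structural fact about \PSwjCj that we will exploit is Smith's rule: on each machine, an optimal schedule orders its jobs by nonincreasing ratio $w_j/p_j$. If all jobs have $w_j = p_j$ (ratio $1$), then the cost of a machine with load $L_i$ is independent of the order and equals $\sum_{j<j'} p_j p_{j'} + \sum p_j^2 = (L_i^2 + \sum_j p_j^2)/2$. So the total cost is $\frac12\sum_i L_i^2 + \text{const}$, which is minimised exactly when the loads are balanced. This gives a convex ``penalty'' on the load vector $(L_1,\dots,L_k)$ rather than a hard capacity, and this is exactly the extra structure compared to \binpacking. To encode the $k-1$ independent coordinates of a vector instance, we plan to use jobs with a few distinct ratio classes, so that the objective becomes a quadratic form in several load quantities per machine, whose unique minimiser (by strict convexity) forces each machine to receive prescribed amounts in each class. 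Item $j$ of the vector instance, with vector $(v_{j,1},\dots,v_{j,k-1})$, becomes a job whose processing time packs the coordinates into machine-specific contributions; we would use ``machine $k$'' as the complement machine and force machine $i\le k-1$ to achieve exactly target $t_i$ in coordinate $i$, with the total processing time being $\Oh(\sum_i t_i)=\Oh(T')$, not $T'^{k-1}$. Selecting items then corresponds to assigning jobs to machines $1,\dots,k-1$, with the vector sum forced by the balancing.

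The main obstacle is that a single job has only one processing time, so it cannot directly carry a $(k-1)$-dimensional vector into the scalar loads without blowing up $T$ to $T'^{k-1}$ (which would only give a $T^{1-\eps}$ bound). Our first attempt will be to instead reduce from a SETH-hard \emph{grouped} partition problem (in the spirit of the paper's \pbtwo), where the input is already split into $k$ groups each to be routed to a distinct machine with its own target, so each job contributes to only one coordinate; the quadratic objective with $w_j=p_j$ then enforces that each machine's load hits its target exactly, and the decision threshold $\Lambda=\frac12\sum_i t_i^2+\frac12\sum_j p_j^2$ is attained iff the grouped instance is a yes-instance. Arguing that ``near-balanced but wrong'' assignments cannot also achieve $\Lambda$ follows from strict convexity and integrality; ensuring that jobs from group $i$ cannot profitably move to another machine, perhaps via large-ratio dummy jobs that pin machine offsets, is where the careful gadget design will be needed.
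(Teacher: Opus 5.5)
There is a genuine gap: the proposal never reaches a reduction that could give exponent $k-1$. You are right to drop the vector route, since one scalar $p_j$ cannot carry $k-1$ coordinates without $T$ growing to $(T')^{k-1}$. But the fallback, with $w_j=p_j$ for every job, makes the objective $\frac12\sum_\ell L_\ell^2+\frac12\sum_j p_j^2$, which depends only on the load vector.

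\emph{Equal targets.} The decision question is then exactly whether a perfectly balanced partition exists, i.e.\ \textsc{$k$-way Partition}. This is equivalent to \textsc{$k$-Bin Packing} (\cref{thm:equivalence-bin-packing}), and a $T^{k-1-\eps}$ SETH bound for it is precisely the open problem.

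\emph{Unequal targets.} The threshold $\frac12\sum_\ell t_\ell^2+\frac12\sum_j p_j^2$ does not even force $L_\ell=t_\ell$, because every more balanced load vector is strictly cheaper.

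\emph{Your source problem.} Groups routed to fixed machines, each job contributing to one coordinate, give no interaction between machines. The instance splits into $k-1$ independent \textsc{Subset Sum} instances, each solvable in time $\Otilde(n+T)$, so it cannot carry a $T^{k-1}$ lower bound.

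\emph{Ratio classes.} The ``few ratio classes'' idea, as stated, fails too. With class ratios $\rho_1>\rho_2>\dots$, a machine's cost is $\frac12 x^\top Q x$ in its class loads $x$, where $Q_{ab}=\min(\rho_a,\rho_b)$ is positive definite. So the unique optimum gives every machine exactly a $1/k$ share of each class's total sum. That decouples the classes into independent \textsc{$k$-way Partition} instances. The ``careful gadget design'' you defer is where all the difficulty lies.

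What is missing is a SETH-hard source problem that couples the machines, plus an encoding of its extra constraint that is strictly weaker than per-class sum balance. The paper reduces from \textsc{Grouped $k$-way Partition}: $q$ groups of $ks$ items, all in $[W(1-1/n^{10}),W]$. Each group must place exactly $s$ items on every machine, and only the total loads must be equal. Its SETH-hardness (\cref{lem:SETH_to_GroupedkBP} and \cref{sec:equivalences_pbtwo}) is a substantial part of the work.

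For $p_j\in G_i$ the paper sets $w_j=p_j\widetilde W+(q-i)$ with $\widetilde W=2Wn^{10}$. The argument then has four pieces.
\begin{enumerate}
\item The dominant part is your term $\frac{\widetilde W}{2}\sum_\ell L_\ell^2$, which forces $L_\ell=\mu$ by the QM--AM inequality.
\item The perturbation contributes less than $\widetilde W/2$, so the two levels do not mix.
\item Since all processing times lie within a factor $1-1/n^{10}$ of $W$, the residual condition becomes $\sum_{\ell}\sum_{i}\sum_{p_j\in S_{i,\ell}}(q-i)|B_j|\le k\sum_{i=1}^q(q-i)\sum_{r=1}^s((i-1)s+r)$. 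Here $|B_j|$ counts the jobs on $j$'s machine up to and including $j$.
\item An exchange argument shows this bound is met only when every machine holds exactly $s$ jobs of every group. Take the first unbalanced group $g$. Swap the last group-$g$ job on an over-full machine with the first later-group job that follows group $g$ on an under-full machine; this strictly decreases the sum.
\end{enumerate}
So the tiny perturbation enforces per-group \emph{counts}, not per-group sums. That is exactly the grouped constraint, and it keeps the machines coupled only through the global balance condition.
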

Since $C \le T$, this rules out time $2^{o(n)} C^{k-1-\eps}$ for \PSwjCj, and thus also for \QSwjCj, again matching the classic $\Oh(n C^{k-1})$-time algorithm. 
We remark that these lower bounds cannot be extended to
the special case where $w_j = 1$ holds for every job $j$, because $Qk || \Sigma C_j$ can be
solved in polynomial time by a reduction to bipartite
matching~\cite{bruno1974scheduling,horn1973minimizing}.



\paragraph{Number of Tardy Jobs}
In the problems $\PSUj$ and $\QSUj$, each job has a processing time~$p_j$ and a due date $d_j$, and the task is to minimise $\sum_{1 \le j \le n} U_j$, where $U_j$ is 1 if job $j$ finishes after its due date~$d_j$, and 0 otherwise.
The problems can be solved in time $\Oh(n^2 T^{k-1})$. This result seems to be folklore, as we were not able to find a reference. We provide a proof in \cref{lem:PSUj_folkloreUB} for completeness. 
By a simple reduction from \Cref{thm:SETH_to_PrjCmax}, we prove a tight SETH-based lower bound:
\begin{corollary}\label{cor:SETH_to_PSUj}
		Assuming SETH, $\PSUj$ cannot be solved in time $2^{o(n)} T^{k-1-\eps}$ for any $\eps > 0$.
\end{corollary}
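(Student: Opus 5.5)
We reduce from \PrjCmax via the known equivalence between release dates and due dates. The standard route is to reverse time: an instance of \PrjCmax with makespan target $C$ becomes an instance of $P_k||L_{\max}$ with due dates $d_j = C - r_j$, and the answer is at most $C$ exactly when maximum lateness is at most $0$. This equivalence is the one discussed in \cref{app:folklore}.

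Once all jobs must meet their due dates, this is exactly the decision question ``is $\sum U_j \le 0$?'' for \PSUj. So, given a \PrjCmax instance with $n$ jobs, total processing time $T$ and threshold $\Lambda$, we do the following.

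\begin{enumerate}
\item By \Cref{obs:PrjCmax_wlog_small_release_dates}, we assume $\Lambda \le T$ and all $r_j \le \Lambda$. (If $\Lambda \ge T$ the answer is trivially yes, since all jobs fit after the release dates, which are bounded appropriately. We rely on that observation to bound them.)
\item We create one job per original job, with the same processing time $p_j$ and due date $d_j = \Lambda - r_j \ge 0$. We set the tardy-job threshold to $0$.
\end{enumerate}

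For correctness, take a schedule meeting makespan $\Lambda$. Mirror it on each machine: a job occupying $[S_j, S_j + p_j]$ is moved to $[\Lambda - S_j - p_j,\, \Lambda - S_j]$. The new schedule is still feasible on each machine. Since $S_j \ge r_j$, the new completion time is $\Lambda - S_j \le \Lambda - r_j = d_j$, so no job is tardy. Conversely, mirroring a schedule with no tardy jobs gives start times $\Lambda - C_j \ge r_j$ and completion times $\Lambda - (C_j - p_j) \le \Lambda$, since start times are nonnegative. Jobs with $d_j < p_j$ make both sides infeasible consistently, so they need no special treatment.

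The reduction preserves $n$, $k$ and $T$ exactly and runs in polynomial time. Hence a $2^{o(n)} T^{k-1-\eps}$ algorithm for \PSUj would give the same running time for \PrjCmax, contradicting \cref{thm:SETH_to_PrjCmax}.

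The only delicate point is ensuring the due dates are nonnegative and the parameters are unchanged, which is handled by the preprocessing in the first step. Everything else is a routine time-reversal argument.
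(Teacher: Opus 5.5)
Your proof is correct and is essentially the paper's argument: the time reversal $d_j = \Lambda - r_j$ with mirrored schedules is exactly \cref{lem:GroupedkPart_to_PrjCmax}, and your last step is \cref{lem:GroupedkPart_to_PSUj}, which says the $\sum_j U_j = 0$ case is a special case of \PSUj. The paper's formal derivation differs only cosmetically, since it applies \cref{lem:GroupedkPart_to_special_schedule} to go from \pbtwoYes straight to the $\sum_j U_j = 0$ case instead of routing through \PrjCmax; also, you do not need \cref{obs:PrjCmax_wlog_small_release_dates}, because the parameter $T$ does not depend on $\Lambda$ and $r_j \le \Lambda$ holds trivially (otherwise the instance is a \textsc{No} instance).
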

The same lower bound applies to the more general $\QSUj$, and for both problems we match the $\Oh(n^2 T^{k-1})$-time algorithm. 


\paragraph{Weighted Number of Tardy Jobs and Total Processing Time of Tardy Jobs}
In the problems \PSwjUj and \QSwjUj, each job has a processing time~$p_j$, a due date~$d_j$ and a weight~$w_j$, and the task is to minimise $\sum_{1 \le j \le n} w_j U_j$, where $U_j$ is 1 if job $j$ finishes after its due date~$d_j$, and 0 otherwise. 
These problems can be solved in time $\Oh(n D^{k})$, where $D$ is the maximum due date (see~\cite{lawler1969functional,rothkopf1966scheduling}).
Since we can assume $D \le T$, this also runs in time $\Oh(n T^{k})$.
Special cases are the problems \PSpjUj and \QSpjUj, i.e.~$w_j = p_j$ for every job. In particular, recently the problem \PSpjUj has been thoroughly studied~\cite{bringmann2022faster,klein2023minimizing,schieber2023quick,fischer2025minimizing}, and Fischer and Wennmann~\cite{fischer2025minimizing} designed an $\Ot(n + T^k)$-time algorithm.
They wrote that a conditional lower bound for this 
problem ``\emph{is a challenging question which is not resolved yet}''.
We answer this question and show optimality of their algorithm:
\begin{theorem}\label{thm:SETH_to_PSpjUj}
		Assuming SETH, \PSpjUj cannot be solved in time $2^{o(n)} T^{k-\eps}$ for any $\eps > 0$.
\end{theorem}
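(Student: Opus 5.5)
We will reduce from a SETH-hard vector problem: \vectorsubsetsum in dimension $k$, presumably obtainable from $k$-fold \subsetsum via the SETH lower bound of Abboud et al.~\cite{AbboudBHS19}. The source problem is: given $N$ vectors in $\{0,\dots,t\}^k$ and a target $(t_1,\dots,t_k)$, does some subset sum exactly to the target? The first step is to establish that this problem has no $2^{o(N)} t^{k-\eps}$-time algorithm under SETH. We would get this by running the Abboud et al.\ reduction from $k$-SAT ``in parallel'': split the variables into $k$ groups, produce one \subsetsum instance per group, and use a shared set of $o(n)$ items whose $i$-th coordinate carries the $i$-th instance. Each coordinate then has target $t \approx 2^{(1-\eps')n/k}$.

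Next we encode the $k$ coordinates into the $k$ machines of \PSpjUj, with due dates $d_1<d_2<\dots<d_k$. Since the objective with $w_j=p_j$ equals $T-\sum_j p_j(1-U_j)$, the task is to maximise the total processing time of early jobs. On identical machines, the early jobs on each machine should be scheduled in EDD order. The idea is that machine $i$ mimics coordinate $i$: its early load should be exactly $t_i$, giving total early time $\sum_i t_i$, whose maximality certifies tightness on all machines. The difficulty is that the machines are symmetric and a job is a single number, not a vector.

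To break symmetry and tie the coordinates together, we give each vector $v$ a gadget of jobs $J_{v,1},\dots,J_{v,k}$. Job $J_{v,i}$ has processing time $v_i + M_i$, with large padding $M_i$ (distinct scales, or a common $M$ plus a block index), and due date tied to block $i$. The padding forces, in any solution reaching the objective threshold $\Lambda$, that for each $v$ either all $k$ gadget jobs are early (each on a distinct machine, one per due-date level) or all are tardy. Counting padding weights, together with a threshold on the number of early jobs encoded in the high-order digits, should enforce this all-or-nothing behaviour. Here we exploit the extra power of due dates that Bin Packing lacks: a job with due date $d_i$ must fit in a prefix, so each machine is forced to serve a specific coordinate. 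We would add filler jobs that pin machine $i$'s schedule so that its early prefix before $d_i$ is exactly $t_i$ plus padding.

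The main obstacle is keeping $T$ at $t^{1+o(1)}$ rather than $t^{k}$, because the bound must be $T^{k-\eps}$ rather than $T^{1-\eps}$. The padding must therefore be only $2^{o(n)}\cdot t$. For the all-or-nothing behaviour, we would first try enforcing it with the counting argument, which needs only $O(N)$-size multipliers, instead of digit separation. We would also try exploiting that tardy jobs cost exactly their length, so any deviation loses weight. Finally, we would set parameters $T = 2^{o(n)} t$ and check that an algorithm running in time $2^{o(N)}T^{k-\eps}$ gives a $2^{(1-\eps'')n}$ algorithm for $k$-SAT.
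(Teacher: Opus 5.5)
Your reduction breaks at the all-or-nothing gadget, and that is the step carrying the entire difficulty. In \PSpjUj the machines interact only through the single global objective, since every due-date constraint concerns the prefix load of one machine. Padding and counting therefore add only a constant number of global linear constraints, and these cannot distinguish a consistent choice from a mixed one. Suppose $J_{v,1}$ and $J_{w,2}$ are early while $J_{v,2}$ and $J_{w,1}$ are tardy. Then every count, every padding total and the total early processing time are unchanged. More generally, any sets $S_1,\dots,S_k$ of equal size with $\sum_{u\in S_i}u_i=t_i$ for each $i$ meet every machine's target, and hence the objective threshold.

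So as described, the instance only encodes $k$ \emph{independent} one-dimensional Subset Sum instances. These are solvable in time $N^{\Oh(1)}t$, and your remark that ``any deviation loses weight'' is false. The natural way to forbid mixing is to tie each individual vector's $k$ jobs together across machines, i.e.\ a communication channel per vector as in the ETH reduction of \cref{sec:bin-packing-lowerbound}. That costs $\Omega(1)$ bits per vector in every machine's load. With $N=\Theta(n)$ vectors this adds $\Theta(n)$ bits, whereas each machine can only carry about $(1+\eps')n/k$ bits, so the exponent $k$ is lost. This is exactly the barrier that prevents SETH-tight bounds for Bin Packing. Separately, your sketch of the first step is also off: splitting the variables into $k$ groups does not give independent instances, because clauses cross groups. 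The claim itself is true, however, e.g.\ by splitting one Subset Sum instance into base-$t^{1/k}$ digits and guessing the carries.

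The paper avoids cross-machine coupling of individual items altogether.
\begin{itemize}
\item \emph{Source problem.} It reduces $K$-SAT to a \emph{grouped} partition problem with $k+1$ bins, the $(k+1)$-st bin being the set of tardy jobs.
\item \emph{Local consistency.} Each supervariable is owned by one machine. Each clause has its own group, whose items choose a satisfying supervariable and are therefore packed on that supervariable's machine. Consistency is checked inside that single machine with a strong average-free set, after guessing the per-supervariable counts $\gamma_i$ ($2^{o(n)}$ guesses).
\item \emph{Cross-machine requirement.} The only such requirement is that each group places exactly $s$ items on each machine, and it comes for free from the scheduling structure. Jobs of group $b$ get due date $\min\{\mu,bsW\}$ and all processing times lie in $[W(1-1/n^{10}),W]$. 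So the prefix constraints allow at most $bs$ early jobs from the first $b$ groups per machine. Together with the objective threshold $\mu$ and an $\Oh(\log n)$-bit block, this pins exactly $s$ early jobs per group per machine.
\item \emph{Exactly one real item.} The $ks-1$ dummy copies per group then force at least one real item, and a counting block forces exactly one.
\item \emph{Weak variant.} Tardy jobs ignore due dates, so the backward direction only yields relaxed conditions. This is why the paper needs the promise problem \textsc{Weak Grouped $(k+1)$-way Partition} rather than the exact grouped version.
\end{itemize}
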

Note that this has exponent $k$, in contrast to all previous examples which had exponent $k-1$.
Since we can assume $D \le T$, this also rules out time $2^{o(n)} D^{k-\eps}$. The same lower bound holds for the more general problems \QSpjUj, \PSwjUj, and \QSwjUj, for which it matches the classic $\Oh(n D^{k})$-time algorithm.

\section{Technical Overview}
\label{sec:technical-overview}



In this section, we give a high-level overview of our results. 

\subsection{ETH-hardness of \boldmath$k$-Bin Packing}

Jansen et al.~\cite{jansen2013bin} proved their lower bound for \binpacking by a reduction from the Subgraph Isomorphism problem: Given graphs $G$ and $H$, find a subgraph of $G$ that is isomorphic to~$H$. 
Their reduction maps an instance of Subgraph Isomorphism with $n = |V(G)|$ and $k = |E(H)|$ to a {Bin Packing} instance with $n^{\Oh(1)}$ items and $\Oh(k)$ bins of capacity $T = f(k) n^{\Oh(1)}$, for some computable function~$f$. Since current lower bounds exclude $f(k)n^{o(k/\log{k})}$-time algorithms for Subgraph Isomorphism assuming ETH~\cite{marx2010can}, this implies that \binpacking cannot be solved in time $(nT)^{o(k/\log{k})}$. 

To avoid the barrier posed by the $\log(k)$-factor gap between upper and lower bounds for Subgraph Isomorphism~\cite{marx2010can}, we directly reduce from the 3-SAT problem.
Specifically, to simplify the exposition of the reduction, we will reduce 3-SAT to the following variant of {Bin Packing}\footnote{For a (multi)set $X$, we denote the sum of its elements by $\Sigma(X) = \sum_{x \in X} x$.}:

\Problem{\hypertarget{prob:kpartitiontargetsmultisets}{\kpartitionTargetsMultisets}}
{Multiset of $n$ integers $X \subset \N$; targets $t_1, \dots, t_k \in \N$.}
{Is there a partition of $X$ into $X_1, \dots, X_k$ such that $\Sigma(X_\ell) = t_\ell$ for every $\ell \in [k]$?}
{$T \coloneq \max \{t_1, \dots, t_k\}$}
This problem can be viewed as a variant of \binpacking where (i) the given integers form a multiset instead of a set, (ii) every bin has a different capacity instead of all bins having the same capacity, and (iii) the load of every bin is required to match its capacity instead of only asking the load to not exceed the capacity.
We prove both problems to be equivalent, in the sense that if one of them can be solved in time $2^{o(n)}T^{o(k)}$ then so can the other. (In fact, this equivalence extends to several further problem variants obtained by toggling (i)-(iii), see \cref{thm:equivalence-bin-packing}. We note that some of those reductions are folklore or can be found in the literature, but we include them for completeness in \cref{sec:binpacking_equivalence}.) 

It suffices therefore to reduce 3-SAT to \kpartitionTargetsMultisets.
Let $\phi$ be a 3-CNF formula with $N$ variables and $M$ clauses. By the Sparsification Lemma (\cref{lem:sparsification}) we can assume $M = \Oh(N)$ and each variable appears in $\Oh(1)$ clauses. Fix $k \geq 2$. We will construct a multiset $X \subset \mathbb N$ and bin capacities $t_1, \dots, t_k \in \mathbb N$ with $n = |X| = \Oh(N)$ and $T \coloneq \max \{t_1, \dots, t_k\} = 2^{\Oh(N/k)} N^{\Oh(1)}$ such that there exists a solution to \kpartitionTargetsMultisets if and only if $\phi$ is satisfiable. 
Hence, if \kpartitionTargetsMultisets could be solved in time $2^{o(n)} T^{o(k)}$ then 3-SAT could be solved in time $2^{o(N)}$, contradicting ETH.


We remark that in this overview we treat $k$ as if it would be a function of $n$ tending to $\infty$ and we hide all other constants in the $\Oh(\cdot)$ notation (e.g.~when we write $T = 2^{\Oh(N/k)}$). In reality, $k$ is also a constant, so we need to be very careful with dependencies between different constants.


We view a solution to the \kpartitionTargetsMultisets instance, i.e.~a partition $X_1, \dots, X_k$ of $X$ such that $\Sigma(X_i) = t_\ell$ for all $\ell \in [k]$, as a packing of all items into $k$ bins such that the $\ell$-th bin contains the items of $X_\ell$ and has load $\Sigma(X_\ell)$. We will restrict our attention to the first $k-1$ bins, viewing bin $X_k$ as a ``dumpster'', and thus by abuse of notation we say that an item is \emph{packed} if it is assigned to one of the first $k-1$ bins. 


Note that the goal of $T = 2^{\Oh(N/k)}$ allows us to construct numbers using up to $\Theta(N/k) = \Theta(M/(k-1))$ bits, which means that the sum of each bin can encode assignments to a $1/(k-1)$-fraction of all clauses. 
We therefore construct clause-assignment items as follows. For every clause $C_i$ and every assignment $\alpha \in \{0, 1\}^3$ of its variables that satisfies $C_i$, we create an item $z(i, \alpha) \in \mathbb N$. Furthermore, we equally distribute the clauses over the bins by assigning clause $C_i$ to  bin $\lceil (k-1) i / M \rceil$. We ensure:
\begin{enumerate}[label=(P\arabic*), ref=(P\arabic*), leftmargin=*, align=left]
	\itemsep0em
	\item\label{prop:1-block} Item $z(i, \alpha)$ can only be packed into the bin assigned to $C_i$, or into the dumpster bin.
	\item\label{prop:2-block} For every clause $C_i$ exactly one item $z(i,\alpha)$ is packed into the bin assigned to $C_i$.
\end{enumerate}
These properties imply that for each of the first $k-1$ bins the packed items encode satisfying assignments to the $1/(k-1)$-fraction of clauses assigned to the bin. To enforce these properties, we use a standard construction that adds $\Oh(N/k)$ bookkeeping bits to each item (see the I- and II-blocks in \cref{fig:ETH_to_kBP}).
%
%
It remains to ensure consistency: 
\begin{enumerate}[label=(P\arabic*), ref=(P\arabic*), leftmargin=*, align=left,start=3]
	\itemsep0em
	\item\label{prop:consistent} All items packed into the first $k-1$ bins agree on the assignment to each variable.
\end{enumerate}
Once we can enforce consistency, then the picked clause-assignment items together encode one assignment to the $N$ variables, and this assignment satisfies all clauses, which finishes the reduction.
However, enforcing consistency is the main challenge! Indeed, it requires checking for every pair of clauses $C_i$ and $C_j$ with a shared variable $x$ that the corresponding packed items $z(i, \alpha_i)$ and $z(j, \alpha_j)$ assign the same value to $x$, i.e.~$\alpha_i(x) = \alpha_j(x)$.
Since $C_i$ and $C_j$ may be assigned to different bins, we need some way to communicate between bins, despite the fact that any particular item only adds to the sum of a particular bin. 

We solve this issue in a novel way, using \emph{communication channels} and dummy items that communicate over these channels. 
For each pair of clauses $C_i$ and $C_j$, with $i < j$, that have a shared variable $x$, we assign a block of 3 bits as their communication channel. 
On this bit block, item $z(i, \alpha_i)$ takes value $\bincode{0 0 \alpha_i(x)}$ and item $z(j, \alpha_j)$ takes value $\bincode{0 0 \overline{\alpha_j(x)}}$ (where $\alpha_i(x)$ and $\alpha_j(x)$ are the values assigned to $x$ by $\alpha_i$ and $\alpha_j$, $\overline{z} \coloneq 1-z$ denotes negation, and the subscript 2 indicates numbers in binary). 
Let us focus on the case that $C_i$ and $C_j$ are assigned to different bins $\ell_i \ne \ell_j$, as the case of equal bins is easier. 
In this case, we also construct \emph{dummy items} $d(i,j,x), d'(i,j,x) \in \mathbb N$ that on the communication channel take the values $\bincode{010}$ and $\bincode{011}$, respectively. With some bookkeeping, we ensure that the dummy items can only be packed into bin $\ell_i$ or $\ell_j$. If the assignment items $z(i, \alpha_i)$ and $z(j, \alpha_j)$ agree on the value of $x$, i.e.~$\alpha_i(x) = \alpha_j(x)$, then the communication channel has value $\bincode{000}$ in one of the items and value $\bincode{001}$ in the other. 
By pairing $d(i, j, x)$ with the assignment item of value $\bincode{001}$, and pairing $d'(i, j, x)$ with the assignment item of value $\bincode{000}$, the communication channel sums up to $\bincode{011}$ in both pairs. Thus, by setting the communication channel in both bin capacities $t_{\ell_i}, t_{\ell_j}$ to the value $\bincode{011}$ we enforce \cref{prop:consistent}. See \cref{sec:bin-packing-lowerbound} for details. 

Recall that we can construct numbers with up to $\Theta(N/k) = \Theta(M/(k-1))$ bits. 
By the Sparsification Lemma, each variable appears in $\Oh(1)$ clauses, and thus there are $\Oh(M) = \Oh(N)$ tuples $(C_i, C_j, x)$ which need a communication channel. Hence, we cannot afford to construct one communication channel per tuple -- instead we need to reuse communication channels. 
The crucial observation is that two tuples $(C_i, C_j, x)$ and $(C_{i'}, C_{j'}, x')$ can use the same communication channel, i.e.~the same bit block, if the bins assigned to $C_i$ and $C_j$ are distinct from the bins assigned to $C_{i'}$ and~$C_{j'}$.
We thus need an assignment of communication tuples to channels that ensures this distinctness. We show that a simple greedy assignment works and requires only $\Oh(N/k)$ communication channels in total. 
Hence, in total every item and bin capacity can be described using $\Oh(N/k)$ bits. The complete proof is deferred to \cref{sec:bin-packing-lowerbound}.

\subsection{Generalisation Attempt: Vector Subset Sum}
\label{sec:techoverview_vectorsubsetsum}

The challenge in the reduction presented in the previous section was to ensure communication between bins, despite \emph{each item contributing to the sum of only one bin}. This is also the big obstacle preventing more precise lower bounds (for example, an SETH-based lower bound ruling out time $2^{o(n)} T^{k-1-\eps}$). 

To support our intuition that this is the big obstacle, in this section we consider a problem in which each item contributes to the sum of \emph{every} bin, and show a simple tight SETH-based lower bound. 
Specifically, let us consider the $k$-dimensional \vectorsubsetsum problem:
\Problem{\vectorsubsetsum}
{Set of $n$ vectors $X \subset \N^k$, target vector $t \in \N^k$.}
{Is there a subset $S \subset X$ such that $\Sigma(S) = t$?}
{$T \coloneq \lVert t \rVert_{1}$}
Here, vectors in the subset $S$ can be viewed as items being packed into $k$ bins simultaneously, such that each coordinate of the item contributes to the load of the respective bin.
This problem, can be solved in $\Oh(n
T^k)$ time by a straightforward dynamic programming algorithm. We show
that this is essentially optimal assuming SETH:
\begin{restatable}{theorem}{SETHtoVSS}\label{thm:SETH_to_VectorSubsetSum}
	Assuming SETH, for any $k \ge 1$, \vectorsubsetsum in $k$ dimensions cannot be solved in time $2^{o(n)} T^{k-\varepsilon}$ for any $\eps > 0$.
\end{restatable}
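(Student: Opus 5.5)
The plan is to reduce from the SETH-based lower bound for \subsetsum of Abboud et al.~\cite{AbboudBHS19}. That result states that for every $\eps>0$ there is $\delta>0$ such that \subsetsum on $n$ items with target $T'$ cannot be solved in time $\Oh(T'^{1-\eps}2^{\delta n})$. In fact it says more: there is no $2^{o(n)}T'^{1-\eps}$ algorithm, since their reduction produces instances with $n = \Oh(\log T')$-type structure. Concretely, from a $k'$-SAT formula on $N$ variables they produce, for any fixed $\delta$, $2^{\delta N}$ instances, each with $n = \Oh_{\delta,k'}(N)$ items and target $T' = 2^{(1+o(1))N}$ roughly. We will carry out the same kind of reduction, but in $k$ dimensions.

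Step 1 (the construction). Take a SETH-hard instance: a CNF formula $\phi$ on $N$ variables. Split the variables into $k$ groups $V_1,\dots,V_k$ of size $N/k$. Run the Abboud et al.\ reduction on each group separately, treating coordinate $\ell$ as the Subset Sum over $V_\ell$. This cannot be done naively, because clauses mix groups.

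The cleaner route is a black-box direct-sum argument. Given a \subsetsum instance $(Y,t')$ with $|Y|=m$ and target $t'$, split $t'$'s binary/base-$B$ representation into $k$ chunks of $\log t'/k$ bits each. Let coordinate $\ell$ hold chunk $\ell$ of every number, so $y\mapsto (y^{(1)},\dots,y^{(k)})$. The difficulty is carries between chunks. To handle them, guess the carry vector: each chunk carry is at most $m$, so there are $m^{k-1}$ guesses. For each guess $c$, adjust the target to $t_\ell = \text{chunk}_\ell(t') + c_{\ell-1}B - c_\ell$ (taking care of the signs), and pad with a slack of $c_\ell B$.

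Step 2 (checking carry consistency). The cleaner way to enforce carries is with extra vectors: for each chunk boundary $\ell$, add $\Oh(\log m)$ "carry" vectors $2^i(-B e_{\ell}+ e_{\ell+1})$. Negative entries are not allowed, so shift them using an offset. Since every sum of subset $S$ satisfies $\sum_\ell B^{\ell-1}\text{coord}_\ell = \Sigma$ (Horner), a vector solution exists iff some subset sums to $t'$ once carries are transferred. To keep entries nonnegative, we instead include the carry in the target guess. With only $m^{\Oh(k)}=2^{o(m)}$ guesses this costs nothing. Each coordinate target is then at most $m\cdot t'^{1/k}$, so $T\le k\,m\,t'^{1/k}$.

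Step 3 (running time). An algorithm for $k$-dimensional \vectorsubsetsum running in $2^{o(n)}T^{k-\eps}$ would then solve \subsetsum in time $m^{\Oh(k)}\cdot 2^{o(m)}(m\,t'^{1/k})^{k-\eps} = 2^{o(m)}t'^{1-\eps/k}$. This contradicts Abboud et al.\ with $\eps' = \eps/k$.

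The main obstacle is Step 2: making the chunk/carry decomposition exact. We must ensure each guessed carry is consistent. Writing $s_\ell$ for the coordinate-$\ell$ sum, we need exactly $s_\ell + c_{\ell-1} = \text{chunk}_\ell(t') + B c_\ell$. This is a fixed target per coordinate once $c$ is guessed, so it is fine. We must also handle the $2^{o(n)}$ dependence: it is necessary that the Abboud et al.\ hard instances have $m = \Oh(\log t')$ items up to constants depending on $\eps$. I would verify that their theorem is indeed stated in the $2^{o(n)}T^{1-\eps}$ form, as the introduction asserts.
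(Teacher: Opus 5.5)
Your final argument is correct and is the paper's reduction from \subsetsum to \vectorsubsetsum (\cref{lem:SubsetSum_equiv_VectorSubsetSum}): base-$B$ digits with $B\approx t'^{1/k}$, one digit per coordinate, guessing the at most $n^{k-1}$ carry sequences with $c_0=c_k=0$, coordinate targets $\mathrm{chunk}_\ell(t')+Bc_\ell-c_{\ell-1}$ (as in your last paragraph; the Step~1 formula has the carry indices swapped), and $T\le knB$. The worry at the end is unnecessary: the reduction keeps the number of items the same and the $n^{k-1}$ guesses are absorbed into $2^{o(n)}$, so the standard Abboud et al.\ bound (for every $\eps$ some $\delta$ with no $\Oh(2^{\delta n}t'^{1-\eps})$ algorithm) transfers directly with $\eps'=\eps/k$.
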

We show this lower bound by proving equivalence of \vectorsubsetsum with the \subsetsum problem, which in turn is known to be SETH-hard by~\cite{AbboudBHS19}. Our equivalence is quite simple and fits on two pages, see \cref{app:vector-subset-sum}. Thus, precise lower bounds are much easier to prove for a problem in which items contribute to every bin.

We remark that \cref{thm:SETH_to_VectorSubsetSum} improves an ETH-based lower bound of $(n + W)^{o(k/\log k)}$ for the $k$-dimensional Knapsack problem by Doron{-}Arad et al.~\cite{DBLP:journals/corr/abs-2407-10146} (specifically, we show a higher and more precise lower bound for a special case of their problem). See \cref{app:vector-subset-sum} for details.

\subsection{SETH-hardness for Scheduling Problems}

This digression motivates us to consider other, less extreme, relaxations of \kpartition. To ensure that items \emph{affect} the load of multiple bins without having to directly contribute to them, we assume that the items are grouped into $G_1, \dots, G_q \subset X$ and we require that items from a group $G_i$ are equally distributed among the $k$ bins. Intuitively, assigning an item to a bin now affects the other bins because we need to distribute the remainder of its group. 
More precisely, we define:
\Problem{\pbtwoYes}
{Sets of integers $G_1, \dots, G_q \subset \N \cap [W(1 - 1/n^{10}), W]$ with $|G_i| = ks$ for every $i \in [q]$ and $n = k s q$ for integers $s, q, W \in \N$.
}
{Decide whether for all $i \in [q]$ there is a partition of $G_i$ into $k$ subsets $S_{i, 1}, \dots, S_{i, k}$ such that
\begin{enumerate}
	\itemsep0em
	\item \label{enum:pb2yes_1a} $|S_{i, \ell}| = s$ for all $i \in [q]$ and $\ell \in [k]$,
	\item \label{enum:pb2yes_1b} $\sum_{i \in [q]} \Sigma(S_{i, \ell}) = \frac{1}{k} \sum_{i \in [q]} \Sigma(G_i)$ for all $\ell \in [k]$.
\end{enumerate}}
{$W$}
This problem admits a pseudopolynomial $n^{\Oh(1)} W^{k-1}$-time algorithm\footnote{
This can be solved by a simple dynamic programming algorithm, similar to~\cref{lem:PSUj_folkloreUB} for \PSUj. We omit the details, as this is also implied by the series of reductions in \cref{sec:scheduling-problems-seth}.}.
We prove the following tight lower bound.

\begin{restatable}{theorem}{SETHtoGkwP}\label{thm:SETH_to_Grouped_kWay_Partition}
	Assuming SETH, for any $k \ge 2$, \pbtwoYes cannot be solved in time $2^{o(n)} W^{k-1-\varepsilon}$ for any $\eps > 0$.
\end{restatable}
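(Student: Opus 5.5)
The plan is to reduce from \vectorsubsetsum in $k-1$ dimensions, which by \cref{thm:SETH_to_VectorSubsetSum} admits no $2^{o(n)}T^{k-1-\eps}$-time algorithm under SETH. The reduction must satisfy $W = T\cdot n^{\Oh(1)}$, up to factors depending only on $k$, and the number of produced integers must be $\Oh(kn)$. Then a $2^{o(n)}W^{k-1-\eps}$ algorithm for \pbtwoYes would give a $2^{o(n)}T^{k-1-\eps'}$ algorithm for \vectorsubsetsum, contradicting the theorem.

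Let the input vectors be $v_1,\dots,v_n\in\N^{k-1}$ with target $t$. The intended correspondence is as follows. Bins $1,\dots,k-1$ correspond to the coordinates. Bin $k$ acts as a balancing bin whose load is forced by the total, since condition \ref{enum:pb2yes_1b} fixes every bin to the average. For each vector $v_i$ I build one group $G_i$, and the partitions of $G_i$ should come in exactly two admissible shapes:
\begin{itemize}
\item an ``on'' shape, in which bin $\ell$ receives extra low-order value $v_{i,\ell}$ for $\ell\le k-1$;
\item an ``off'' shape, in which bins $1,\dots,k-1$ receive nothing extra.
\end{itemize}
In both shapes bin $k$ absorbs the complement, so that $\Sigma(G_i)$ is independent of the choice.

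Each integer is written as a large common base $B$ plus digit blocks:
\begin{itemize}
\item a high block of role digits, one per bin, that pins down which element goes to which bin;
\item a middle block of selector digits distinguishing the on and off elements;
\item a low block of width about $\log(kT)$ carrying the coordinate values $v_{i,\ell}$.
\end{itemize}
Taking $B\ge n^{10}\cdot(\text{digits})$ ensures that all numbers lie in $[W(1-1/n^{10}),W]$. One additional group, or padding elements, shifts the loads so that the common average required by \ref{enum:pb2yes_1b} equals $t_\ell$ plus fixed constants in the low block of bin $\ell$. The block widths include a factor $\log(kns)$ for carry-freeness, so no carries occur between blocks. Then equality of bin sums decomposes into equality blockwise.

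For the correctness proof, the forward direction is immediate: take the on shape for $i\in S$ and the off shape otherwise. For the backward direction, first argue from the high role block that every bin receives exactly the right multiset of roles. Next argue from the selector block, combined with the fixed sizes $|S_{i,\ell}|=s$, that within each group the choice is consistently all-on or all-off. Finally, reading off the low block of bins $1,\dots,k-1$ gives $\Sigma(S)=t$.

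The main obstacle is this rigidity step. A sum constraint only controls the total per bin across all groups, not per group, so elements of one group can mix on and off roles in a way that is compensated by other groups. I would first try to defeat this with group-indexed selector digits: give group $i$ its own selector digit position. Because this costs $\Theta(n)$ digits and would blow up $W$, I would replace it by a sum-free/Sidon-type encoding of group identities in $\Oh(\log n)$ bits per bin, combined with the per-group size constraint $s$ (using $s=2$: one role element and one complementary element per bin). The intended effect is that a mismatch in any group creates an uncancellable excess in some bin. If this fails, the fallback is to keep per-group bookkeeping but apply it to the \vectorsubsetsum instance after grouping its vectors into $\Oh(\log)$-many blocks, so that the overhead is only polylogarithmic in $W$.
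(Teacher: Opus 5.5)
\textbf{The gap is the step you flag as the main obstacle, and that step is the whole difficulty, not a technicality.} Nothing in the proposal couples the different bins of a group, and the proposed remedy cannot do so.

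\emph{Why the selector idea fails.} Every digit block is checked bin by bin against a fixed number. The set $S_\ell$ of groups that are ``on'' in bin $\ell$, however, depends on the unknown solution.
\begin{itemize}
\item With identifiers $g_i$ from a Sidon set, $\sum_{i\in S_\ell} g_i$ takes only $n^{\Oh(1)}$ values, and each value is shared by many different subsets (Sidon sets only control two-term sums). So equal selector sums in bins $1,\dots,k-1$ do not imply $S_1=\dots=S_{k-1}$.
\item Making subset sums injective needs a dissociated set such as powers of two, i.e.\ $\Theta(n)$ bits per bin. That is exactly the blow-up you wanted to avoid, and the target would then have to encode the unknown $S$ itself.
\item Group-indexed selector positions have the same defect. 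Either they fix the solution in advance, or they merely count elements of $G_i$ in bin $\ell$, which $|S_{i,\ell}|=s$ already does. In neither case do they compare two bins.
\end{itemize}

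\emph{Further problems.} The $s=2$ design is not consistent as described. Bin $k$ receives only two elements of $G_i$, so in the ``off'' shape the $k-1$ carriers of $v_{i,1},\dots,v_{i,k-1}$ cannot all be dumped there, and you never say where they go. The high role block that ``pins down which element goes to which bin'' would remove the on/off freedom altogether. The fallback of blocking the vectors also does not address the coupling. As written, the backward direction of your argument does not go through.

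\emph{How the paper avoids this.} The paper sidesteps cross-bin consistency entirely. It reduces from $K$-SAT and assigns each supervariable to a single bin. Each clause gets its own group padded with $(k-1)s-1$ dummy copies, so that (via the II-block and the count in the III-blocks) exactly one item per group is packed. That item lands in the bin of the supervariable satisfying the clause, and consistency is checked inside that one bin with a strong $(\Delta a)$-average-free set.

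\emph{How to repair your route.} If you keep the \vectorsubsetsum route, the missing ingredient is coupling through \emph{shared elements} under the cardinality constraint, not through digits. One gadget that works:
\begin{itemize}
\item Take $s=1$ and $G_i=\{x^i_1,\dots,x^i_k\}$.
\item Confine $x^i_\ell$ to bins $\ell$ and $\ell+1$ (indices modulo $k$) by an $\Oh(\log n)$-bit block per $\ell$, with per-bin targets $m$ and $n-m$ for a guessed $m=|S|$.
\item Since every bin takes exactly one element of $G_i$, the only legal local shapes are the two rotations of this cycle.
\item Give $x^i_\ell$ low part $v_{i,1}+\dots+v_{i,\ell}$ for $\ell\le k-1$, and $0$ for $\ell=k$. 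Then bin $\ell\le k-1$ gains exactly $v_{i,\ell}$ in the ``on'' rotation compared with the ``off'' one.
\item Emulate per-bin targets with one extra group of $k$ filling items (again $s=1$, using that bins are unlabelled).
\end{itemize}
This keeps $W=T\cdot n^{\Oh(k)}$ and would give a genuinely different, shorter proof, but it is precisely the idea your proposal lacks.
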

Before sketching this lower bound, we demonstrate that \pbtwoYes is an elegant intermediate problem to prove SETH-based lower bounds for several scheduling problems. This discussion will also clarify some of the technical assumptions in the problem definition. We believe that \pbtwoYes will find many further uses as a starting point of SETH-based lower bounds in the future.

\paragraph*{No Tardy Jobs}
Let us sketch a reduction from \pbtwoYes to the special case of \PSUj where we want to decide if $\sum_j U_j = 0$, i.e.~if every job can be finished before its due date.
Let $(G_1,\ldots,G_q)$ be an instance of \pbtwoYes with group size $sk$ and integer range $W$. We build an instance of $\PSUj$ with $n = ksq$ jobs as follows: for every $x \in G_i$ we construct a job with
processing time $x$ and due date $\min\{\mu, i\cdot sW\}$, where $\mu \coloneq \frac{1}{k} \sum_{i=1}^q \Sigma(G_i)$.

Note that the $sk$ jobs corresponding to $G_1$ need to be finished by time $sW$, and all have processing time in $[W(1 - 1/n^{10}), W]$. It follows that each of the $k$ machines needs to process exactly $s$ jobs from $G_1$. This argument can be repeated to show that each machine needs to process exactly $s$ jobs from each group $G_i$. The final due date of $\mu$ ensures that the total load on each bin is $\mu = \frac{1}{k} \sum_{i=1}^q \Sigma(G_i)$. Thus, any schedule that finishes each job before its due date yields a solution of \pbtwoYes. Similarly, it is easy to argue that any solution of \pbtwoYes yields a valid schedule. The complete analysis is provided in \cref{sec:scheduling-problems-seth}. 
We note that this argument crucially uses the technical-looking assumption $G_1, \dots, G_q \subset \N \cap [W(1 - 1/n^{10}), W]$, i.e.~that all items are equal to $W$ up to a factor close to 1.


\medskip
This proves a lower bound for the special case of \PSUj asking for $\sum_j U_j = 0$. 
A lower bound for \PrjCmax follows, because we can turn due dates into release dates by reversing time, which turns the special case of \PSUj asking for $\sum_j U_j = 0$ into \PrjCmax. We further elaborate on these reductions in \cref{sec:scheduling-problems-seth}. Additionally, in \cref{app:folklore} we define the scheduling variants \PLmax and \PTmax and show how a lower bound for those problems also follows.

\paragraph*{Weighted Completion Times}
Our reduction from \pbtwoYes to \PSwjCj is significantly more involved as we need to somehow ``simulate'' due dates by weights. 
We build an instance of \PSwjCj by constructing for every item $x \in G_i$ a job with processing time $x$ and weight $x \cdot n^{10} W + (q-i)$. 
In the analysis, we show that the higher order bits of $\sum_j w_j C_j$, i.e.~the terms corresponding to the $x \cdot n^{10} W$ parts of the weights, are minimised when all loads are perfectly balanced, ensuring \cref{enum:pb2yes_1b} of \pbtwoYes. The lower order bits, corresponding to the $(q-i)$ parts, then ensure balancedness of groups, i.e.~\cref{enum:pb2yes_1a}. This analysis is quite non-trivial, and we defer the details to~\cref{sec:scheduling-problems-seth}.

\paragraph*{Total Processing Time of Tardy Jobs}
For \PSpjUj, a lower bound of $2^{o(n)} T^{k-1-\eps}$ follows immediately now.\footnote{The special case of \PSpjUj asking for $\sum_j p_j U_j = 0$ is equivalent to the special case of \PSUj asking for $\sum_j U_j = 0$, for which we have shown the lower bound above.}
However, current algorithms for \PSpjUj require time $n^{\Oh(1)} T^{k}$~\cite{lawler1969functional,rothkopf1966scheduling}, so our goal is more ambitious: we want to show a lower bound that is higher by one factor $T$.

Intuitively, every machine corresponds to a bin and we want to ``gain one more bin'' by interpreting the set of tardy jobs as being assigned to the dumpster bin. Since we minimise the total processing time of tardy jobs ($\sum_j p_j U_j$), we can easily enforce a bound on the load of this dumpster bin. However, an intricate issue arises in this reduction: The tardy items do not need to adhere to any due dates. This breaks the structure that we previously used to enforce equal splitting of groups. As a matter of fact, we do not see a way to make this reduction work starting from \pbtwoYes.

We therefore consider a relaxation of \pbtwoYes, which we call \pbtwo. 
The main difference is that \pbtwoYes asks for a packing of all items into $k$ bins, while \pbtwo asks to differentiate between the existence of a partial packing of items into $k-1$ bins under certain conditions (\textsc{Yes} case), and the absence of such a packing even under relaxed conditions (\textsc{No} case). 
We note that these cases are non-exhaustive, so \pbtwo is a promise problem where we are guaranteed that the instance is in either one of the cases. 
The relaxed conditions are chosen to make the reduction to \PSpjUj work, proving \cref{thm:SETH_to_PSpjUj}.

\Problem{{\pbtwo}}
{Sets of integers $G_1, \dots, G_q \subset \N \cap [W(1 - 1/n^{10}), W]$ with $|G_i| = ks$ for every $i \in [q]$ and $n = k s q$ for integers $s, q, W \in \N$.
}
{Decide which of the following cases hold.
\begin{yesenum}
	\item $\forall i \in [q]$, there are disjoint subsets $S_{i, 1}, \dots, S_{i, k-1} \subset G_i$ such that
	\begin{yesenum}
		\item $|S_{i, \ell}| = s$ for all $i \in [q]$ and $\ell \in [k-1]$,
		\item $\sum_{i \in [q]} \Sigma(S_{i, \ell}) = \frac{1}{k} \sum_{i \in [q]} \Sigma(G_i)$ for all $\ell \in [k-1]$.
	\end{yesenum}
\end{yesenum}
\begin{noenum}
	\item It does not hold that $\forall i \in [q]$, there are disjoint subsets $S_{i, 1}, \dots, S_{i, k-1} \subset G_i$ such that
	\begin{noenum}
		\item $|S_{1, \ell}| + \dots + |S_{i, \ell}| \leq i \cdot s$ for all $i \in [q]$ and $\ell \in [k-1]$,
		\item $\sum_{i\in [q], \ell \in [k-1]} |S_{i, \ell}| \geq (k-1) q s$,
		\item $\sum_{i \in [q]} \Sigma(S_{i, \ell}) = \frac{1}{k} \sum_{i \in [q]} \Sigma(G_i)$ for all $\ell \in [k-1]$.
	\end{noenum}
\end{noenum}
}
{$W$}
It turns out that we can also prove SETH-tight lower bound for this relaxation of the problem.
\begin{restatable}{theorem}{SETHtoWGkwP}\label{thm:SETH_to_Weak_Grouped_kWay_Partition}
	Assuming SETH, for any $k \ge 2$, \pbtwo cannot be solved in time $2^{o(n)} W^{k-1-\varepsilon}$ for any $\eps > 0$.
\end{restatable}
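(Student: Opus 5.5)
We reduce from $k$-\textsc{Vector Subset Sum}-type structure, or more directly from \subsetsum via the SETH-hardness of \cite{AbboudBHS19}, but we do it through a $(k-1)$-dimensional intermediate. The plan is to start from \vectorsubsetsum in dimension $k-1$, which by \cref{thm:SETH_to_VectorSubsetSum} has no $2^{o(n)}T^{k-1-\eps}$ algorithm. Each of the $k-1$ coordinates will be simulated by one of the first $k-1$ bins, and the $k$-th bin plays the role of the dumpster. The groups provide the "contribute to every bin" effect: each input vector $v=(v_1,\dots,v_{k-1})$ becomes its own group $G_v$. Each group contains one designated chunk of $s$ items per bin $\ell\in[k-1]$. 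Taking the chunk for bin $\ell$ that encodes "$v$ selected" adds $v_\ell$ to bin $\ell$, while the alternative chunk encodes "$v$ not selected" and adds $0$.

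\textbf{Step 1 (gadget design).} Fix a huge base $W\approx n^{c}\cdot T$ and write every item as $W - (\text{small part})$, so all items lie in $[W(1-1/n^{10}),W]$.

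For each vector $v$, the group $G_v$ has $ks$ items, with $s=2$ (or a small constant). For each $\ell\in[k-1]$ we put two "select" items and two "skip" items into the group. These are arranged so that any $s$-subset assigned to bin $\ell$ has a small part with a prescribed value exactly when it is one of two canonical pairs. The canonical pairs are "selected", contributing $v_\ell$ plus a tag, and "unselected", contributing $0$ plus a tag.

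Consistency across bins (all bins agree on whether $v$ is selected) is enforced through the group structure. The bins $1,\dots,k-1$ together must take $(k-1)s$ items out of the group, and the leftover $s$ items go to the dumpster. We add per-group selector bits in separate digit blocks, in the spirit of the I/II bookkeeping blocks of \cref{sec:bin-packing-lowerbound}. Since all bins have exact targets, every bin must receive exactly the tag for each group. Including "selected" in one bin and "unselected" in another then yields a mismatched leftover configuration, which is excluded by the target equalities.

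\textbf{Step 2 (targets).} The bin targets are forced to equal $\frac1k\sum\Sigma(G_i)$. To make the $\ell$-th coordinate target $t_\ell$ the real constraint, we add one padding group. Its items can be split so that the dumpster absorbs whatever slack makes every bin's total equal the common average. Concretely, we choose the padding values so that the Yes condition reduces to $\sum_{v\in S} v_\ell=t_\ell$ for all $\ell$.

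\textbf{Step 3 (completeness and soundness).} Completeness is direct: a solution $S$ to vector subset sum gives exact sets $S_{i,\ell}$ of size $s$. Soundness needs care, since we must rule out even the relaxed No-case packings (No.1--No.3), where prefix counts may only be upper bounded and some groups may be underfilled in some bins. We use the prefix condition $|S_{1,\ell}|+\dots+|S_{i,\ell}|\le is$ together with the total-count condition. These force equality everywhere, because the total is at least $(k-1)qs$ and each bin is capped at $qs$. So every bin gets exactly $qs$ items. Because items are all $\approx W$ and loads are exact, the high-order digit then counts items precisely. An induction on $i$ ordered by groups should then show $|S_{i,\ell}|=s$ for every $i$ and $\ell$, after which the digit-block analysis decodes a consistent selection.

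\textbf{Parameters.} We get $n=\Oh(\#\text{vectors})$ and $W=T\cdot n^{\Oh(1)}$, so a $2^{o(n)}W^{k-1-\eps}$ algorithm would give a $2^{o(n)}T^{k-1-\eps'}$ algorithm for $(k-1)$-dimensional \vectorsubsetsum.

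\textbf{Main obstacle.} The hard part is the soundness induction in Step 3. The prefix constraints only give upper bounds, so a bin could take fewer items from an early group and compensate later. I would first try giving the small parts of group $i$ weights scaled by a group-specific power in a separate digit block, so that a deficit in one group cannot be repaid by a later one. If carries spoil this, I would add a larger gap between digit blocks.
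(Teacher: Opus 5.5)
Your proposal has a genuine gap in Step~1, exactly where the paper locates the main difficulty. You assert cross-bin consistency (every bin agrees on whether $v$ is selected), but none of the mechanisms you describe achieves it within the bit budget.

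\begin{itemize}
\item The gadget does not fit. Disjoint select/skip chunks for each of the $k-1$ bins need $2s(k-1)$ items, while a group has only $ks$. So for $k\ge 3$ the chunks must share items, and which splits of $G_v$ are then feasible is the entire question.
\item The targets cannot detect a mismatch. The leftover $s$ items go to the dumpster, which has no target, and every target is a sum over \emph{all} groups. A group that is selected in bin~1 but skipped in bin~2 can be compensated by another group doing the opposite. With shared digit blocks you are effectively encoding $k-1$ independent \subsetsum instances, not one $(k-1)$-dimensional one.
\item Per-group selector blocks are too expensive. You would pay $\Omega(1)$ bits for each of the $q=\Theta(n)$ groups, so $W\ge T\cdot 2^{\Omega(n)}$. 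A $2^{o(n)}W^{k-1-\eps}$ algorithm then gives nothing, which contradicts your claim $W=T\cdot n^{\Oh(1)}$. Only $\Oh(\log n)$ bits of global bookkeeping are affordable.
\end{itemize}

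The missing idea is that coupling must come from items admissible in two bins, combined with global counts. For instance, with $s=1$:
\begin{itemize}
\item Let $G_v=\{X_0,\dots,X_{k-1}\}$, give $X_j$ a $1$ in tag block $j$, and let target $t_\ell$ hold $\sigma$ in tag block $\ell-1$ and $q-\sigma$ in tag block $\ell$, for a guessed $\sigma=|S|$.
\item Then $X_j$ fits only in bins $j$ and $j+1$, and the counts force every middle item $X_1,\dots,X_{k-2}$ to be packed.
\item Hence each group is either shifted down entirely ($X_{\ell-1}$ in bin $\ell$) or shifted up entirely ($X_\ell$ in bin $\ell$). Giving $X_j$ value $\sum_{m>j}v_m$ makes the two options differ by exactly $v_\ell$ in bin $\ell$.
\end{itemize}
With such a gadget your \vectorsubsetsum route would be a genuinely different proof. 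The paper instead reduces from $K$-SAT and avoids cross-bin consistency altogether. Each group's single packed assignment item lands in one bin (the bin owning its supervariable), all other slots take dummy copies, and consistency is checked inside that bin with a strong average-free set.

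Your Step~3 also fails as stated. Taking $s-1$ items of $G_1$ and $s+1$ of $G_2$ in one bin respects all prefix bounds and totals. Your proposed group-specific powers would again cost $\Omega(q)$ bits. The paper's fix is a single $\Oh(\log n)$-bit block in which items of $G_b$ have value $b$ and targets have $s\sum_b b$. Set $f(b,\ell)=s-|S_{b,\ell}|$. Its prefix sums are nonnegative and its total is $0$, so all suffix sums are nonpositive. Since $\sum_b b\,f(b,\ell)=\sum_c\sum_{b\ge c}f(b,\ell)=0$, every suffix sum, and hence every $f(b,\ell)$, vanishes. Step~2 can be delegated to the paper's lemmas for removing targets and multisets.
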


\noindent
Again we believe that this theorem can easily find further uses as a starting point for lower bounds.

\paragraph{Lower bound for (\textsc{Weak}) \textsc{Grouped \boldmath$k$-way Partition}}
It remains to describe the SETH-based lower bound for (\textsc{Weak}) \pbtwoYes. For this overview, we ignore the differences between the two problems, and focus on \pbtwoYes.
As for \binpacking, we first show a reduction to a problem variant on multisets and with given targets for each bin load, and later prove an equivalence that removes multisets and targets.

We reduce from $K$-SAT, so let $\phi$ be a $K$-CNF formula with $N$ variables and $M$ clauses. 
We will construct an equivalent \pbtwoYes instance. In contrast to the ETH-based lower bound for \binpacking where we could afford $\Oh(N/k) = \Oh(N/(k-1))$ bits per constructed item, now we can only afford $(1+\eps) N/(k-1)$ bits, in particular the overhead from bookkeeping can only be $\eps N / (k-1)$ bits. 
To minimise the bookkeeping, we group the variables $x_1, \dots, x_N$ into supervariables $X_1, \dots, X_{N/a}$ for a large enough constant~$a$.
For each supervariable $X_i$, we create a group $G_i$ containing items $z(i, \alpha)$ for every assignment $\alpha \in \{0, 1\}^a$ of the variables in $X_i$. 
As in the \binpacking reduction, bin $k$ is viewed as a ``dumpster'', and each supervariable is assigned to one of the first $k-1$ bins, so that each bin is assigned $N/(a(k-1))$ supervariables. We use average-free sets from additive combinatorics to ensure consistency of assignments to supervariable~$X_i$, similarly to the \subsetsum lower bound~\cite{AbboudBHS19}.

The main point where we need to diverge from the \subsetsum lower bound is that we cannot group clauses into superclauses, because different clauses might be satisfied due to (super-)variables assigned to different bins. In that case, a superclause item would need to communicate with multiple bins. To avoid this, we therefore construct items for single clauses, specifically we create an item $y(j, i, \alpha)$ for every clause~$C_j$, supervariable $X_i$, and assignment $\alpha \in \{0, 1\}^a$ to $X_i$ that satisfies $C_j$. We need to ensure picking exactly one item $y(j,i,\alpha)$ for each $j$, meaning that each clause points to some supervariable that causes it to be satisfied. For every clause $C_j$, the items $y(j, i, \alpha)$ are put in the group $G_{N/a + j}$. Hence in total, we construct groups $G_1, \dots, G_{N/a + M}$.

The grouping into supervariables helps to reduce the number of bookkeeping bits that enforce picking exactly one item $z(i,\alpha)$ for each clause $C_i$. However, since we cannot consider superclauses, the number of bookkeeping bits to ensure picking exactly one item $y(j, i, \alpha)$ for each clause $C_j$ would be too much. 
Circumventing this is where we rely on the group structure of \pbtwoYes: We add $s(k-1)-1$ dummy items to each group $G_{N/a + j}$, which ensures that at least one item $y(j, i, \alpha)$ must be packed into the first $k-1$ bins for each $C_j$. We furthermore make the dummy items favourable, in the sense that any solution needs to pack all dummy items into the first $k-1$ bins. This then enforces that exactly one item $y(j, i, \alpha)$ must be packed for each $C_j$. 
The details of the reduction are deferred to \cref{sec:SETH_lowerbound}. 

\subsection{Organisation of the paper}
After preliminaries in \cref{sec:preliminaries}, the rest of the paper is dedicated to the reductions summarised in \cref{fig:graph_reduction}. 
In \cref{sec:bin-packing-lowerbound}, we show the ETH-based lower bound for \kpartitionTargetsMultisets. This is shown to be equivalent to \binpacking in \cref{sec:binpacking_equivalence}, thus proving \cref{thm:bin-packing-lowerbound}. Next, in \cref{sec:SETH_lowerbound} we present the SETH-based lower bound for a variant of \pbtwo and in \cref{sec:equivalences_pbtwo} we show equivalence of the variants. Finally, \cref{sec:scheduling-problems-seth} is dedicated to reducing \pbtwo and \pbtwoYes to various scheduling problems, thus proving \cref{thm:SETH_to_PSwjCj,thm:SETH_to_PSpjUj,thm:SETH_to_PrjCmax,cor:SETH_to_PSUj}. 

In \cref{app:problem-definitions}, we provide all problem definitions. 
Further applications of our results are discussed in \cref{app:further-applications}.
The equivalence between \vectorsubsetsum and \subsetsum is given in \cref{app:vector-subset-sum}. Additional observations on scheduling problems are discussed in \cref{app:folklore}.
Finally, \cref{app:technical-lemmas} contains proofs of technical lemmas.

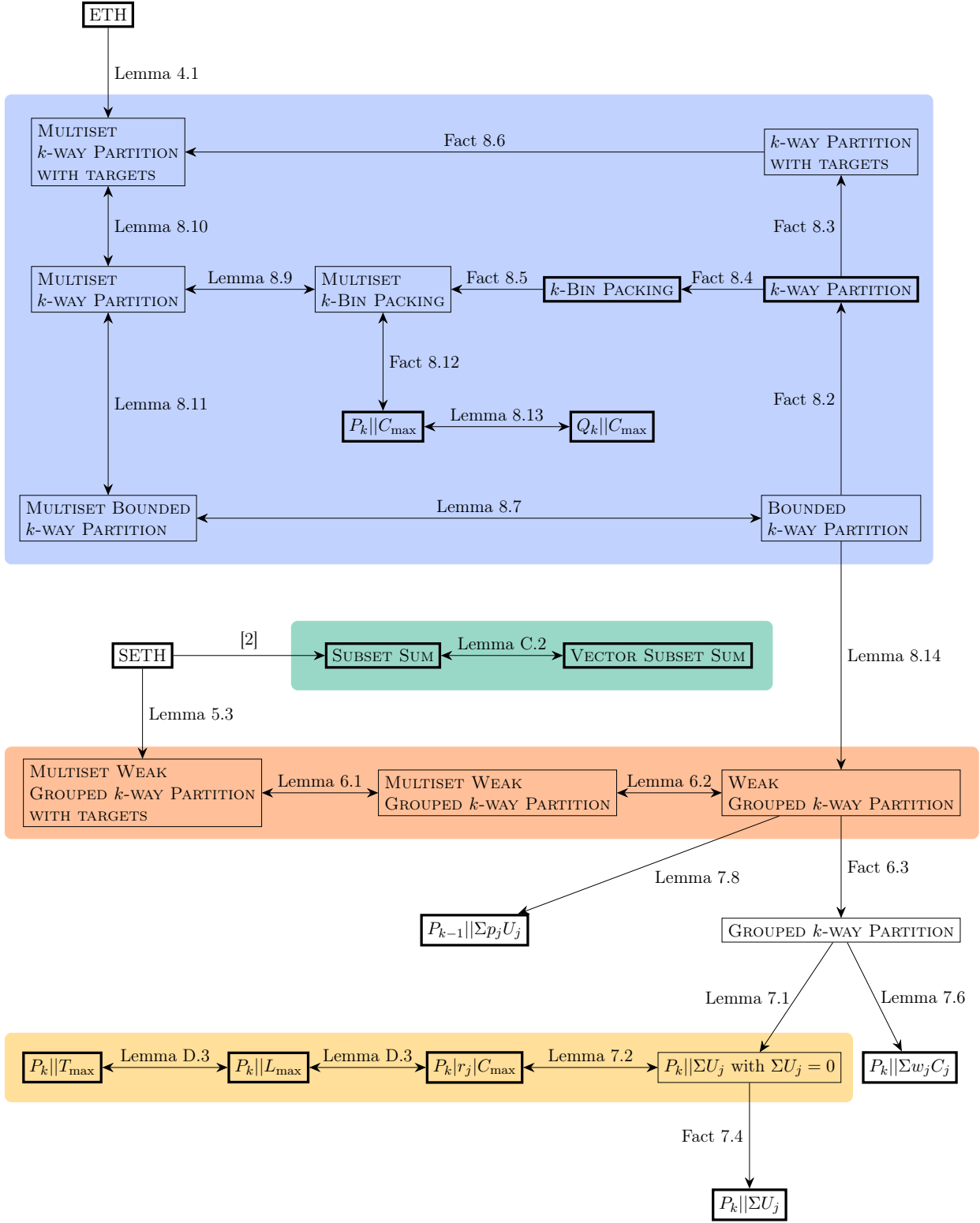
\begin{figure}[!t]
	\centering
		\resizebox{\textwidth}{!}{%
	\begin{tikzpicture}[yscale=-1, >={Stealth[length=2.4mm, width=2mm]}]
		\node[draw, ultra thick] at (-15,0) (ETH) {{ETH}};
		
		\node[draw] at (1,3) (kwayPtarget) {\begin{varwidth}{5cm}\hyperlink{prob:kpartitiontargets}{\textsc{$k$-way Partition} \\ \textsc{with targets}}
		\end{varwidth}};
		\node[draw] at (-15,3) (kPtargetMulti) {\begin{varwidth}{5cm}\hyperlink{prob:kpartitiontargetsmultisets}{\textsc{Multiset} \\ \textsc{$k$-way Partition} \\ \textsc{with targets}}\end{varwidth}};
		\node[draw,ultra thick] at (-4,9) (QkCmax) {\begin{varwidth}{5cm}\QCmax\end{varwidth}};
		\node[draw,ultra thick] at (-9,9) (PkCmax) {\begin{varwidth}{5cm}\PCmax\end{varwidth}};
		\node[draw] at (-9,6) (BPMulti) {\begin{varwidth}{5cm}\hyperlink{prob:binpacking}{\textsc{Multiset} \\ \textsc{$k$-Bin Packing}} \end{varwidth}};
		\node[draw,ultra thick] at (-4,6) (BP) {\begin{varwidth}{5cm}{\binpacking}\end{varwidth}};
		\node[draw,ultra thick] at (1,6) (Part) {\begin{varwidth}{5cm}{\kpartition}\end{varwidth}};
		\node[draw] at (-15,6) (PartMulti) {\begin{varwidth}{5cm}\hyperlink{prob:kpartition}{\textsc{Multiset} \\ \textsc{$k$-way Partition}} \end{varwidth}};
		\node[draw] at (1,11) (BoundedPart) {\begin{varwidth}{5cm}\hyperlink{prob:kpartitionbounded}{\textsc{Bounded} \\ \textsc{$k$-way Partition} }\end{varwidth}};
		\node[draw] at (-15,11) (BoundedPartMulti) {\begin{varwidth}{5cm} \hyperlink{prob:kpartitionbounded}{\textsc{Multiset} \textsc{Bounded} \\ \textsc{$k$-way Partition}} \end{varwidth}};

		\node[draw, ultra thick] at (-14.25,14) (SETH) {\begin{varwidth}{5cm}SETH\end{varwidth}};

		\node[draw,ultra thick] at (-12 +3,14) (SubsetSum) {{\subsetsum}};
		\node[draw,ultra thick] at (-6 +3,14) (VectorSubsetSum) {\vectorsubsetsum};

		\node[draw] at (-14.25,18-1) (Pb2TargetMultiset) {\begin{varwidth}{5cm}\hyperlink{prob:pbtwotargetsmultisets}{\textsc{Multiset Weak}  \\ \textsc{Grouped $k$-way Partition} \\ \textsc{with targets}}\end{varwidth}};
		\node[draw] at (-6.5,18-1) (Pb2Multiset) {\begin{varwidth}{5cm}\hyperlink{prob:pbtwo}{\textsc{Multiset Weak}  \\ \textsc{Grouped $k$-way Partition}}\end{varwidth}};
		\node[draw] at (1,18-1) (Pb2) {\begin{varwidth}{5cm}\hyperlink{prob:pbtwo}{\textsc{Weak} \\ \textsc{Grouped $k$-way Partition}}\end{varwidth}};
		\node[draw] at (1,21-1) (Pb2Yes) {\begin{varwidth}{5cm}{\pbtwoYes}\end{varwidth}};

		\node[draw] at (-1,23) (PSUjSpecial) {\begin{varwidth}{5cm}\PSUj with $\Sigma U_j = 0$\end{varwidth}};
		\node[draw,ultra thick] at (2.5,23) (PSwjUj) {\PSwjCj};
		\node[draw,ultra thick] at (-7, 20) (PSpjUj) {\PSpjUj[k-1]};
		\node[draw,ultra thick] at (-7,23) (PrjCmax) {\PrjCmax};
		\node[draw,ultra thick] at (-11.5,23) (PLmax) {\PLmax};
		\node[draw,ultra thick] at (-16,23) (PTmax) {\PTmax};
		\node[draw,ultra thick] at (-1,26) (PSUj) {\PSUj};

		\draw (SETH)   edge[->] node[above] {\cite{AbboudBHS19}} (SubsetSum);                    
		\draw (SubsetSum)   edge[<->] node[above] {{\cref{lem:SubsetSum_equiv_VectorSubsetSum}}} (VectorSubsetSum);

		\draw (ETH)   edge[->] node[right] {{\cref{lem:ETH_to_BPtargets}}} (kPtargetMulti);
		\draw (QkCmax)   edge[<->] node[above] {\cref{lem:QkCmax_equiv_PkCmax}} (PkCmax);
		\draw (PkCmax)   edge[<->] node[right] {\cref{lem:PkCmax_equiv_MultiBP}} (BPMulti);
		\draw (BPMulti)   edge[<->] node[above] {\cref{lem:MultiBP_to_MultiPart}} (PartMulti);
		\draw (PartMulti)   edge[<->] node[right] {\cref{lem:MultiPart_to_MultiBoundedPart}} (BoundedPartMulti);
		\draw (BoundedPartMulti)   edge[<->] node[above] {{\cref{lem:MultiBoundedPart_equiv_BoundedPart}}} (BoundedPart);
		\draw (BoundedPart)   edge[->] node[left] {\cref{lem:BoundedPart_to_Part}} (Part);
		\draw (Part)   edge[->] node[above] {\cref{lem:Part_to_BP}} (BP);
		\draw (Part)   edge[->] node[left] {\cref{lem:Part_to_TargetPart}} (kwayPtarget);
		\draw (BP)   edge[->] node[above] {\cref{lem:binpacking_to_binpackingMultisets}} (BPMulti);
		\draw (kwayPtarget)   edge[->] node[above] {\cref{lem:TargetPart_to_TargetPartMultisets}} (kPtargetMulti);
		\draw (kPtargetMulti)   edge[<->] node[right] {\cref{lem:MultiParttargets_equiv_MultiPart}} (PartMulti);

		\draw (BoundedPart)   edge[->] node[right] {\cref{lem:BoundedPart_to_WeakGroupedPart}} (Pb2);

		\draw (SETH)   edge[->] node[right] {{\cref{lem:SETH_to_GroupedkBP}}} (Pb2TargetMultiset);
		\draw (Pb2TargetMultiset)   edge[<->] node[above] {{\cref{lem:pbtwotargets_to_pbtwomulti}}} (Pb2Multiset);
		\draw (Pb2Multiset)   edge[<->] node[above] {{\cref{lem:pbtwomulti_to_pbtwo}}} (Pb2);
		\draw (Pb2)   edge[->] node[right] {{\cref{lem:pbtwo_to_pbtwoYES}}} (Pb2Yes);
		
		\draw (Pb2Yes)   edge[->] node[right] {{\cref{lem:GroupedkPart_to_PSwjCj}}} (PSwjUj);
		\draw (Pb2Yes)   edge[->] node[left] {{\cref{lem:GroupedkPart_to_special_schedule}}} (PSUjSpecial);
		\draw (PSUjSpecial)   edge[->] node[left] {\cref{lem:GroupedkPart_to_PSUj}} (PSUj);
		\draw (PSUjSpecial)   edge[<->] node[above] {{\cref{lem:GroupedkPart_to_PrjCmax}}} (PrjCmax);
		\draw (PrjCmax)   edge[<->] node[above] {{\cref{lem:PSUj0_Lmax_Tmax}}} (PLmax);
		\draw (PLmax)   edge[<->] node[above] {{\cref{lem:PSUj0_Lmax_Tmax}}} (PTmax);
		\draw (Pb2)   edge[->] node[below right] {{\cref{lem:WeakGroupedkPart_to_PSpjUj}}} (PSpjUj);

	\begin{scope}[on background layer]
		\draw[cb_blue!40, fill, rounded corners] (3, 1.75) rectangle (-17.25, 12);    
		\draw[cb_orange!40, fill, rounded corners] (4, 17-1) rectangle (-17.25, 19-1);    
		\draw[cb_yellow!40, fill, rounded corners] (-17.25, 22.25) rectangle (1.25, 23.75);    
		\draw[OI_green!40, fill, rounded corners] (-11, 13.25) rectangle (-.5, 14.75);    
	\end{scope}

	\end{tikzpicture}
	}
	\caption{Summary of parameter-preserving reductions. Equivalent problems are highlighted in the same colour.
	}\label{fig:graph_reduction}
\end{figure}


\section{Preliminaries}\label{sec:preliminaries}

For any integer $n$, we let $[n] \coloneq \{1, 2, \dots, n\}$. For a set or multiset $X$, we denote by $\Sigma(X)$ the sum of all elements in $X$. A \emph{partition} of a set or multiset $X$ is a collection of disjoint subsets whose union, denoted with the $\uplus$ operator, is $X$. All problems studied are formally defined in \cref{app:problem-definitions}.

Throughout the whole paper, we assume that the number of machines or bins $k$ is constant, and thus hide functions of $k$ in $\Oh(\cdot)$-notation. All logarithms are in base 2 unless stated otherwise. We use a subscript ``$\bincode{ }$'' to denote binary numbers, e.g.~$\bincode{1001} = 9$.
%

\smallskip
\begin{definition}[Parameter-preserving reduction]
Let $A$ and $B$ be decision problems with associated input sizes $n_A$ and $n_B$ and parameters $T_A$ and $T_B$, respectively. 
A \emph{parameter-preserving reduction} from $A$ to $B$ is an algorithm $\mc A$ that, given an instance $I$ of $A$ of size $n_A$ and parameter $T_A$, computes in time $(n_A+\log{T_A})^{\Oh(1)}$ an equivalent instance $\mc A(I)$ of $B$ of size $n_B = \Oh(n_A + \log T_A)$ and parameter $T_B 
	 \le T_A \cdot n_A^{\Oh(1)}$.
	
\end{definition}

\begin{lemma}\label{lem:parameter-preserving-reduction}
	Assume there is a parameter-preserving reduction from problem $A$ to problem $B$. Let $\gamma > 0$. If there exists $\eps > 0$ such that for all $\delta >0$
	problem $B$ on instances of size $n$ and parameter $T$ can be solved in time $\Oh(2^{\delta n} T^{\gamma-\eps})$, then there exists $\eps > 0$ such that for all $\delta > 0$ problem $A$ on instances of size $n$ and parameter $T$ can be solved in time $\Oh(2^{\delta n} T^{\gamma-\eps})$. 
\end{lemma}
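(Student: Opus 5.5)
The plan is to compose the reduction with the assumed fast algorithm for $B$ and check that the running time transfers. Let $\eps_B>0$ be the constant given for $B$, so that for every $\delta>0$ problem $B$ is solvable in time $\Oh(2^{\delta n} T^{\gamma-\eps_B})$. For $A$ I will take $\eps_A \coloneq \eps_B/2$. Given $\delta>0$, I must show that $A$ can be solved in time $\Oh(2^{\delta n_A} T_A^{\gamma-\eps_A})$.

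On input $I$ of $A$ with size $n_A$ and parameter $T_A$, the algorithm runs $\mc A$ in time $(n_A+\log T_A)^{\Oh(1)}$. This produces $\mc A(I)$ with $n_B\le c(n_A+\log T_A)$ and $T_B\le T_A\, n_A^{c}$ for a constant $c$. It then runs the $B$-algorithm with parameter $\delta'>0$, chosen below, and outputs its answer. Correctness follows because the two instances are equivalent.

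The running time is
\[
(n_A+\log T_A)^{\Oh(1)} + 2^{\delta' c n_A}\, 2^{\delta' c\log T_A}\,(T_A n_A^{c})^{\gamma-\eps_B}.
\]
Choose $\delta'\coloneq \min\{\delta/(2c),\ \eps_B/(2c)\}$. Then $2^{\delta' c \log T_A}=T_A^{\delta' c}\le T_A^{\eps_B/2}$. Together with $T_A^{\gamma-\eps_B}$ this gives $T_A^{\gamma-\eps_B/2}=T_A^{\gamma-\eps_A}$. The factor $2^{\delta' c n_A}\le 2^{\delta n_A/2}$, and $n_A^{c(\gamma-\eps_B)}=2^{o(n_A)}$ is absorbed, giving at most $2^{\delta n_A}$ for large $n_A$. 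The polynomial preprocessing $(n_A+\log T_A)^{\Oh(1)}$ is bounded by $2^{\delta n_A/2}\cdot T_A^{o(1)}$, since $(\log T_A)^{\Oh(1)}=T_A^{o(1)}$. It is therefore also within $\Oh(2^{\delta n_A}T_A^{\gamma-\eps_A})$, provided $\gamma-\eps_A>0$; this holds because we may assume $\eps_B<\gamma$, since shrinking $\eps_B$ only weakens the hypothesis.

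The only delicate step is the $\log T_A$ term in $n_B$. It turns the subexponential factor into a small power of $T$, and this is why $\eps$ must be decreased to $\eps_B/2$ and $\delta'$ chosen depending on $\eps_B$. The order of quantifiers permits this: $\eps_A$ is fixed first, and then $\delta'$ may depend on both $\delta$ and $\eps_B$. Everything else is routine bookkeeping of constants, with all terms polynomial in $n_A$ absorbed into $2^{\delta n_A/2}$ for sufficiently large $n_A$.
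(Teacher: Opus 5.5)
Your proof is correct and follows the paper's route. You compose the reduction with the $B$-algorithm, take $\delta_B$ to be the reduction constant times $\min\{\delta_A,\eps\}$ in reciprocal form, so the $\log T_A$ term in $n_B$ becomes a small power of $T_A$, and you lose a constant factor in $\eps$; the only differences are cosmetic (you use $\eps_A=\eps_B/2$ and bound the preprocessing additively, while the paper uses $\eps_A=\eps_B/3$ and multiplies all factors, spending the extra $T_A^{\eps_A}$ slack on the $(\log T_A)^{\Oh(1)}$ terms).
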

\begin{proof}
	Let $I_A$ be an instance of problem $A$ of size $n_A$ and parameter $T_A$. 
	We want to show that for some $\eps_A > 0$ and for any $\delta_A>0$ 
	we can solve $I_A$ in time $\Oh(2^{\delta_A n_A} T_A^{\gamma-\eps_A})$.
	
	By assumption, there exists a global constant $C \ge 1$, such that we can compute in time $\Oh((n_A+\log{T_A})^{C})$ an equivalent instance $I_B = \mc A(I_A)$ of problem $B$ of size $n_B \le C\cdot (n_A+\log{T_A})$ and parameter $T_B \leq T_A \cdot n_A^{C}$. If problem $B$ on instances of size $n_B$ and parameter $T_B$ can be solved in time $\Oh(2^{\delta_B n_B} T_B^{\gamma-\eps_B})$ for some $\eps_B > 0$ and any $\delta_B > 0$, then in particular for $\delta_B := \frac{1}{C} \min\{\delta_A/2,\eps_A\}$ and $\eps_A := \eps_B/3$, we can solve $I_A$ in time
	\begin{align*}
		\Oh(2^{\delta_B n_B} T_B^{\gamma-\eps_B}) +  \Oh((n_A+\log{T_A})^C) &
		\le \Oh\left(2^{\delta_A n_A/2 + \eps_A \log(T_A)} \cdot T_A^{\gamma - 3\eps_A} \cdot n_A^{C(\gamma - 3\epsilon_A)} \cdot (n_A+\log{T_A})^{C}\right)\\
		&\le \Oh\left(2^{\delta_A n_A/2} \cdot T_A^{\epsilon_A}\cdot T_A^{\gamma - 3\eps_A} \cdot (n_A+\log{T_A})^{(\gamma+1)C}\right)\\
		&\le \Oh\left(2^{\delta_A n_A/2} \cdot T_A^{\gamma - 2\eps_A} \cdot n_A^{(\gamma+1)C}  \cdot (\log{T_A})^{(\gamma+1)C}\right).
	\end{align*}
	Since $\gamma, C, \delta_A > 0$ are constants, asymptotically in $n_A$ we have $n_A^{(\gamma + 1)C} \leq \Oh(2^{\delta_A n_A / 2})$. Similarly, we have $(\log{T_A})^{(\gamma+1)C} \leq \Oh(T_A^{\eps_A})$.
	Hence, the above running time is at most $\Oh(2^{\delta_A n_A} T_A^{\gamma-\eps_A})$. 
\end{proof}

\paragraph*{Hardness Assumptions} The Exponential Time Hypothesis (ETH) and the Strong Exponential Time Hypothesis (SETH) are hypothesis about the running time of algorithms for the $k$-SAT problem: Given a boolean CNF formula $\phi$ with $N$ variables, where each clause consists of at most $k$ literals, the task is to decide whether there is a satisfying assignment for $\phi$. 
The Exponential Time Hypothesis (ETH) is the conjecture that there exists $\delta >0$ such that there is no algorithm solving 3-SAT in time $\Oh(2^{\delta N})$~\cite{impagliazzo2001complexity}.
The Strong Exponential Time Hypothesis (SETH) states that for any $\eps > 0$, there exists an integer $K \geq 3$ such that $K$-SAT cannot be solved in time $\Oh(2^{(1-\eps)N})$~\cite{ImpagliazzoPZ01}.

We use the following standard tool by Impagliazzo, Paturi and Zane~\cite{ImpagliazzoPZ01}:

\begin{lemma}[Sparsification Lemma~{\cite{ImpagliazzoPZ01}, \cite[Lemma 16.17]{flum2006parameterized}}]
	\label{lem:sparsification} 
	For every integer $K \geq 3$ and every $\lambda > 0$, there exist a constant
	$\Delta(K,\lambda) > 0$ and an algorithm that, given a $K$-SAT instance $\Phi$ with $N$ variables,
	produces $K$-SAT instances $\phi_1,\ldots,\phi_r$ with $r \leq 2^{\lambda N}$ such that $\Phi$ is satisfiable if and only if at least one $\phi_i$ is satisfiable. Moreover, $\phi_i$ has $N$ variables and in $\phi_i$ each variable appears in at most $\Delta(K,\lambda)$ clauses, for each $i$. The algorithm runs in time $2^{\lambda N} |\Phi|^{\Oh(1)}$, where $|\Phi|$ is the number of bits needed to represent the input formula.
\end{lemma}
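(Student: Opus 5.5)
Since \cref{lem:sparsification} is the classical result of Impagliazzo, Paturi and Zane, in the paper we would simply cite it. If we had to prove it, we would follow their sunflower-branching argument. Throughout, view $\Phi$ as a family of clauses, each a set of at most $K$ literals.

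\textbf{Setup.} A \emph{sunflower} with core $H$ is a collection of $h$ clauses $C_1,\dots,C_h$ with $C_a\cap C_b=H$ for all $a\neq b$. Its petals are $P_a=C_a\setminus H$.

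\textbf{Branching procedure.} Fix thresholds $\theta_1\le\theta_2\le\dots\le\theta_K$ depending only on $K$ and $\lambda$. Among all clauses of some size $c$, look for a sunflower of $\theta_{c-|H|}$ clauses. Always pick one with the smallest clause size $c$ and, subject to that, the largest core. On finding such a sunflower, branch on the logical disjunction \[\Bigl(\bigvee H\Bigr)\ \vee\ \bigwedge_a\Bigl(\bigvee P_a\Bigr).\]
\begin{itemize}
\item In the first branch, add the clause $H$ and delete every clause that is a superset of $H$.
\item In the second branch, add every petal $P_a$ as a clause and delete the $C_a$.
\end{itemize}
Every satisfying assignment of the current formula satisfies one of the two branches, and each branch implies the current formula. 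Hence satisfiability is preserved along the branching tree: $\Phi$ is satisfiable if and only if some leaf formula is. Clauses only get shorter, and any clause subsumed by a shorter clause is removed, so the procedure terminates.

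\textbf{Bounded frequency at the leaves.} A leaf contains no large sunflower. By the Erdős--Rado sunflower lemma applied to the clauses of each size $c$ containing a fixed literal (taking that literal as part of the core), every literal lies in at most $(c-1)!\,\theta^{c-1}$ such clauses. This gives a bound $\Delta(K,\lambda)$ on the number of occurrences of each variable. Each leaf formula still has the same $N$ variables.

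\textbf{Bounding the number of leaves (main obstacle).} The hard part is proving that the tree has at most $2^{\lambda N}$ leaves. The running-time claim then follows, since each step is polynomial. We would charge each root-to-leaf path as follows.
\begin{itemize}
\item A petal branch adds at least $\theta$ new clauses of a smaller size $c-|H|$.
\item A core branch adds only a single clause.
\end{itemize}
Next we show that the number of clauses of each size ever created along one path is $O(N)$. The argument goes by induction on clause size: a clause of size $c$ can only be created from sunflowers of larger clauses, and the no-large-sunflower invariant at smaller sizes bounds how often this can happen. Consequently a path contains at most $O(N/\theta)$ petal branches per size. The number of paths is then bounded by choosing the positions of the petal branches, roughly $\binom{O(N)}{O(N/\theta)}\le 2^{O(N\log\theta/\theta)}$. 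Choosing the thresholds large enough, with a doubly-exponential growth across sizes so that the inductive bounds compose, makes this at most $2^{\lambda N}$. Getting these thresholds and the induction to mesh is the delicate step. For that we would follow \cite{ImpagliazzoPZ01} (or the presentation in \cite[Lemma 16.17]{flum2006parameterized}).
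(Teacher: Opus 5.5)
Your proposal matches the paper, which gives no proof of its own and simply cites Impagliazzo--Paturi--Zane and Flum--Grohe (Lemma 16.17). Your outline is a faithful sketch of the sunflower-branching argument, and you correctly defer the delicate leaf count to those sources. One small fix if you ever write it out: require the core to be nonempty. For $H=\emptyset$ the petal branch leaves the formula unchanged, so your termination claim (``clauses only get shorter'') fails, and nonempty cores are all your Erd\H{o}s--Rado frequency argument uses anyway.
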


\medskip
Finally, we observe that if the input integers are bounded in a certain range, then we can leverage the following property. We will show that this assumption can be made without any loss of generality in \cref{thm:equivalence-bin-packing}.

\begin{observation}\label{obs:bounded_subset_size}
	Let $W,n,k$ be positive integers. Let $X \subset \N \cap [W(1 - 1/n^2), W]$ be a multiset of size $n$ 
	and let $X_1 \uplus \ldots \uplus X_k = X$ be a partition of $X$ such that $\Sigma(X_i) = \Sigma(X)/k$
	for every $i \in [k]$. Then $|X_i| = n/k$ for every $i \in [k]$. In particular, $n$ is divisible by $k$.
\end{observation}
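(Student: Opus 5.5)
The plan is to bound the cardinality of each part from both sides, using only that all elements lie in the narrow window $[W(1-1/n^2), W]$. Let $m_i := |X_i|$. Then $\sum_i m_i = n$, and every part has the same sum $\Sigma(X_i) = \Sigma(X)/k$.

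The first step is to obtain two sandwich inequalities. Since each element is at most $W$ and at least $W(1-1/n^2)$, we have
\begin{align*}
m_i W(1-1/n^2) \le \Sigma(X_i) &\le m_i W,\\
nW(1-1/n^2) \le \Sigma(X) &\le nW.
\end{align*}
Combining these with $\Sigma(X_i) = \Sigma(X)/k$ gives
\[
\frac{n}{k}\left(1-\frac{1}{n^2}\right) \le m_i \le \frac{n}{k}\cdot\frac{1}{1-1/n^2}.
\]

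The second step converts this into an additive window. The lower bound says $m_i \ge n/k - 1/(kn)$. For the upper bound, $\frac{1}{1-1/n^2} = \frac{n^2}{n^2-1}$, so $m_i \le \frac{n^3}{k(n^2-1)} = \frac{n}{k} + \frac{n}{k(n^2-1)}$. For $n \ge 2$ the excess satisfies $\frac{n}{k(n^2-1)} < \frac{1}{k}$, since this is equivalent to $n < n^2 - 1$. Hence
\[
m_i \in \left(\frac{n}{k} - \frac{1}{k},\ \frac{n}{k} + \frac{1}{k}\right).
\]
The case $n = 1$ is handled separately. There $k$ parts with equal sum and one positive element force $k = 1$, so the claim is trivial. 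The degenerate case of a zero element also cannot occur, because $W(1-1/n^2) > 0$ for $n \ge 2$.

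The final step is an integrality argument. Since $km_i$ is an integer lying strictly inside $(n-1, n+1)$, we get $km_i = n$. Thus $k \mid n$ and $m_i = n/k$ for every $i$.

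I expect the main obstacle to be only the bookkeeping in this last step. The window has width $2/k$ around $n/k$, so one must multiply through by $k$ and argue about the integer $km_i$, rather than try to place $m_i$ itself near $n/k$. The chosen slack $1/n^2$ is generous; a slack of $1/n$ would already suffice.
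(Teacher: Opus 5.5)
Your proof is correct and is essentially the paper's argument: the same two sandwich inequalities give $\frac{n}{k}\bigl(1-\frac{1}{n^2}\bigr) \le |X_i| \le \frac{n}{k}/\bigl(1-\frac{1}{n^2}\bigr)$, and integrality finishes it. The only difference is in the last step. The paper observes that this interval has length less than $1$, so all $|X_i|$ coincide and $\sum_i |X_i| = n$ forces $|X_i| = n/k$. You instead multiply by $k$ and trap the integer $k|X_i|$ in $(n-1,n+1)$, which gives $k \mid n$ part by part without summing over the parts.
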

\begin{proof}
	If $k=1$ or $n=1$ the statement is trivial. Assume $k \geq 2$ and $n \geq 2$.
	For every $i \in [k]$, we have
	\begin{align*}
		\frac{n}{k} W &\ge  \Sigma(X) / k = \Sigma(X_i) \ge |X_i| \cdot W(1 - 1/n^2)
	\intertext{and}
		\frac{n}{k} W(1-1/n^2) &\le  \Sigma(X)/k = \Sigma(X_i) \le |X_i| \cdot W.
	\end{align*}
	So for every $i \in [k]$, we deduce that $|X_i| \in 
	\left[\frac{n}{k} (1 - \frac{1}{n^2}), \frac{n}{k} / (1 - \frac{1}{n^2})\right]
	 =: I$.
	As $|I| < 1$ and $|X_i|$ is an integer, we have that $|X_1| = |X_2| = \ldots = |X_k|$. However, $n = |X| = \sum_{i=1}^k |X_i|$, which means that $|X_i| = n/k$ for every $i \in [k]$.
\end{proof}



\section{ETH-hardness of Bin Packing}
\label{sec:bin-packing-lowerbound}

In this section, we prove a tight ETH-based lower bound for \kpartitionTargetsMultisets[k], see \Cref{app:problem-definitions} for the formal problem definition. This is the main step in the proof of our tight ETH-based lower bound for \binpacking (\Cref{thm:bin-packing-lowerbound}); the remaining step of proving equivalence of \binpacking and its variant \kpartitionTargetsMultisets is postponed to \Cref{sec:binpacking_equivalence}.

\begin{lemma}\label{lem:ETH_to_BPtargets}
	Assuming ETH, there exists a constant $\delta >0$ such that for every $k \geq 2$
	\kpartitionTargetsMultisets cannot be solved in time $\Oh(T^{\delta k} \cdot 2^{\delta n})$, where $T$ is the maximum target and $n$ is the number of items.
\end{lemma}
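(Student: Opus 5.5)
We reduce from 3-SAT and follow the construction sketched in \cref{sec:technical-overview}. Let $\phi$ be a 3-CNF formula with $N$ variables. We first apply the Sparsification Lemma (\cref{lem:sparsification}) with a small fixed $\lambda$. This lets us assume that $M=\Oh(N)$ and that every variable occurs in at most $\Delta=\Oh(1)$ clauses.

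Fix $k\ge 2$. We assign clause $C_i$ to bin $\ell(i)=\lceil (k-1)i/M\rceil$ and treat bin $k$ as a dumpster. Items are numbers written in a large base $B$, chosen as a power of two exceeding $n$ so that no carries occur between blocks. We create three kinds of items.
\begin{itemize}
\item \textbf{Clause-assignment items.} For every clause $C_i$ and every satisfying $\alpha\in\{0,1\}^3$ there is an item $z(i,\alpha)$.
\item \textbf{Dummy items.} For every communication tuple $(i,j,x)$ with $\ell(i)\ne\ell(j)$ there are items $d(i,j,x)$ and $d'(i,j,x)$.
\item \textbf{Filler items.} These are placed in the dumpster so that its target can be met.
\end{itemize}

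\textbf{Layout of each number.} Each number is a concatenation of digit blocks.

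\emph{I-block (which bin).} It has one digit per pair (bin, slot type). An item intended for bin $\ell$ gets a 1 in that bin's digit. Since every other target has $0$ there, items cannot move to a wrong non-dumpster bin, which gives \ref{prop:1-block}. This block is generalised to allow dummy items into exactly two bins $\ell(i),\ell(j)$: the dummy carries digits that both targets accept, as explained below.

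\emph{II-block (exactly one per clause).} Each bin has a digit per clause assigned to it, and items $z(i,\alpha)$ put a 1 on clause $i$'s digit. The target of bin $\ell(i)$ has a 1 there, so exactly one $z(i,\cdot)$ is packed, which gives \ref{prop:2-block}. To keep the width at $\Oh(N/k)$, clauses of different bins share digit positions. This is safe because clause $i$'s items only go to bin $\ell(i)$ or to the dumpster, and the dumpster's target is set to absorb all the unpacked ones exactly; the dumpster's required total is computed as the total sum minus the other targets.

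\emph{Channel block.} It consists of $\Oh(N/k)$ three-bit channels. A tuple $(i,j,x)$ using channel $c$ contributes $\bincode{00\alpha_i(x)}$ to $z(i,\alpha_i)$, $\bincode{00\overline{\alpha_j(x)}}$ to $z(j,\alpha_j)$, and $\bincode{010}$, $\bincode{011}$ to the two dummies. Each non-dumpster target is $\bincode{011}$ on every channel it is involved in.

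\textbf{Channel assignment.} The assignment of tuples to channels is a proper colouring of a conflict graph: two tuples conflict if their bin pairs intersect. Each bin hosts about $M/(k-1)$ clauses, hence $\Oh(N/k)$ tuples. A tuple therefore conflicts with $\Oh(N/k)$ others, and greedy colouring uses $\Oh(N/k)$ channels. Same-bin tuples are handled directly: both items land in the same bin, and the channel sum $\bincode{001}$ forces $\alpha_i(x)=\alpha_j(x)$.

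\textbf{Correctness.} In the forward direction, a satisfying assignment yields a packing. Pick the consistent $z$-items, place each dummy in the bin whose $z$ has the complementary low bit, and put everything else in the dumpster. In the reverse direction, the block constraints force \ref{prop:1-block}--\ref{prop:consistent}. On each channel, bin $\ell(i)$ must receive a total of $\bincode{011}$ from its $z$-item plus possibly dummies. Because of the bookkeeping, exactly one of $d,d'$ is packed into each of $\ell(i),\ell(j)$. Checking the cases of which dummy goes where then shows that $\alpha_i(x)=\alpha_j(x)$.

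\textbf{Main obstacle.} The hardest part is the bookkeeping that forces each dummy pair to split with exactly one item in each of the two designated bins, while keeping the number of digits $\Oh(N/k)$ and preventing carries. I would handle this with a per-channel counter digit. In bin $\ell(i)$ and bin $\ell(j)$ the counter target is $1$, and the dummies carry counter value $1$. Since a channel is used by at most one tuple per bin, each of $\ell(i),\ell(j)$ receives exactly one dummy for that channel. This also prevents a dummy from entering a third bin: that bin's counter target is $0$ on the channel, because the greedy colouring guarantees the bin uses no tuple on it. The dumpster's target is set to absorb the remainder.

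\textbf{Parameters and conclusion.} The instance has $n=\Oh(N)$ items, with the $\Oh(1)$ per clause coming from sparsification. Every number has $\Oh(N/k)$ digits in base $B=N^{\Oh(1)}$, so $T=2^{\Oh((N/k)\log N)}$. To remove the $\log N$ factor, I would group bookkeeping so that only $\Oh(1)$ digits need base $N^{\Oh(1)}$, with the others in constant base; carries stay bounded because the number of items per digit is $\Oh(1)$ by sparsification. This gives $T=2^{\Oh(N/k)}N^{\Oh(1)}$. An algorithm running in $T^{\delta k}2^{\delta n}$ would then solve 3-SAT in time $2^{\Oh(\delta N)}$ for every sparsified formula. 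Summing over the $2^{\lambda N}$ sparsified formulas and taking $\delta,\lambda$ small contradicts ETH.
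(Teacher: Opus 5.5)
Your outline matches the paper's: sparsification, clause groups on bins $1,\dots,k-1$ with bin $k$ as dumpster, shared II-positions, 3-bit channels reused via greedy colouring, and dummies with channel values $\bincode{010},\bincode{011}$. However, the mechanism you propose for the step you call the main obstacle fails.

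\textbf{The gap.} To get only $\Oh(N/k)$ channels, each channel must be shared by many external tuples whose bin pairs are pairwise disjoint. With one counter digit per channel, the dummies of all these tuples are \emph{identical} numbers. Moreover, every bin touched by any of these tuples has counter target $1$ on that channel. So your claim that a third bin has counter target $0$ is false. The counter only guarantees that $\ell(i)$ and $\ell(j)$ each receive \emph{some} dummy of that channel, not one of $d(i,j,x),d'(i,j,x)$.

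Inconsistencies can then cancel across tuples. Suppose $\tau_1=(C_i,C_j,x)$ on bins $\{a,b\}$ has $\alpha_i(x)=1,\alpha_j(x)=0$: both $z$-items show $\bincode{001}$, so both bins need a $\bincode{010}$. Suppose $\tau_2=(C_{i'},C_{j'},x')$ on the same channel and bins $\{a',b'\}$ has $\alpha_{i'}(x')=0,\alpha_{j'}(x')=1$: both $z$-items show $\bincode{000}$, so both bins need a $\bincode{011}$. Putting the two $\bincode{010}$-dummies into $a,b$ and the two $\bincode{011}$-dummies into $a',b'$ meets every target. So your reverse direction only yields local assignments in which, per channel, the number of $\bincode{001}$'s equals the number of tuples on it. That does not imply consistency, and an unsatisfiable formula can be mapped to a yes-instance.

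\textbf{The fix.} Your remark that a dummy should carry ``digits both targets accept'' is the right instinct, but those digits must be indexed by the \emph{bin pair}, not by the channel. The paper adds $\binom{k-1}{2}$ blocks of $\Oh(\log M)$ bits indexed by pairs $\{\ell_1,\ell_2\}$. Each dummy of an external tuple with bins $\{\ell_1,\ell_2\}$ has a $1$ in that block. Targets $t_{\ell_1}$ and $t_{\ell_2}$ both hold the number $E_{\ell_1,\ell_2}$ of such tuples there, and all other non-dumpster targets hold $0$. The argument then runs as follows.
\begin{enumerate}
\item There are exactly $2E_{\ell_1,\ell_2}$ such dummies, so counting forces all of them into $S_{\ell_1}\cup S_{\ell_2}$.
\item Tuples with the same bin pair conflict, so they lie on different channels.
\item Hence, on the channel of $(C_i,C_j,x)$, the only contributions inside bins $\ell(i),\ell(j)$ are $z(i,\alpha_i)$, $z(j,\alpha_j)$, $d$ and $d'$.
\item Their total must be $\bincode{011}+\bincode{011}=\bincode{110}$, which is even. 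If $\alpha_i(x)\neq\alpha_j(x)$, the total is $2v+\bincode{101}$, where $v\in\{\bincode{000},\bincode{001}\}$ is the common value of the two $z$-items, which is odd.
\end{enumerate}
This costs only a factor $M^{\Oh(k^2)}$ in $T$, which is harmless for fixed $k$.

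\textbf{Parameter count.} Dropping per-channel counters also matters for the bit budget. A counter can a priori receive up to $k-1$ dummies in one bin, so carry-free counters need $\Omega(\log k)$ bits each, i.e.\ $\Omega((N/k)\log k)$ bits overall. That is exactly the $\log k$ loss that a $k$-independent $\delta$ must avoid. In the paper, the $\Oh(N/k)$-bit bound instead comes from two places. First, the II-positions are single bits, combined with the exact count $M/(k-1)$ enforced by the I-block. Second, an induction shows each 3-bit channel receives at most $\bincode{111}$ per bin. It does not come from sparsification bounding the number of items per digit.
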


\begin{proof}
	Let $\Phi$ be a 3-CNF formula on $N$ variables.  We apply the Sparsification
	Lemma (\Cref{lem:sparsification}) on $\Phi$ with parameter $\lambda > 0$ to be tuned later, i.e.~in
	$2^{\lambda N} \cdot |\Phi|^{\Oh(1)}$ time we construct 3-CNF formulas $\phi_1,
	\dots, \phi_r$ such that $\Phi = \phi_1 \vee \dots \vee \phi_r$, where $r \le
	2^{\lambda N}$, and in each formula $\phi_q$ 
	every variable appears in at most $\Delta = \Delta(3,\lambda)$
	clauses, in particular $\phi_q$ has $M_q \le
	\Delta N$ clauses.
	We can assume that each $M_q$ is divisible by $k-1$, 
	by adding trivial clauses if necessary.  
	Fix any $k \ge 2$. For every $q \in [r]$, we construct an instance of
	\kpartitionTargetsMultisets that is equivalent to $\phi_q$, in time polynomial in the input size $|\phi_q|$. To simplify notation, in what follows we write $\phi = \phi_q$ and $M = M_q$.

	At a high level, this construction works as follows. For every clause $C_i$ and for every assignment $\alpha \in \{0, 1\}^3$ of its variables satisfying $C_i$ we create an item $z(i, \alpha) \in \mathbb N$ called \emph{assignment item}. We refer to items assigned to one of the first $k-1$ bins with targets $t_1,\ldots,t_{k-1}$ as ``packed'' items. 
	If $\phi$ is satisfiable, we want to ensure that the packed assignment items correspond to an assignment of the $N$ variables that satisfies $\phi$. The $k$-th bin of value $t_k$ is viewed as a ``dumpster'' that contains all remaining assignment items.
	We therefore need to ensure that exactly one item $z(i, \alpha)$ per clause $C_i$ is packed, and that all packed assignment items have a consistent variable assignment. 
	The first property can be achieved with a standard bookkeeping bit encoding. 
	For the latter property, we consider all tuples $(C_i, C_j, x)$, where $x$ is a variable appearing in both clauses $C_i$ and $C_j$. If $z(i, \alpha_i)$ and $z(j, \alpha_j)$ are the corresponding packed assignment items, then we want to ensure $x$ is assigned the same value by $\alpha_i$ and by $\alpha_j$. Hence, we construct a \emph{communication channel} for every such tuple $(C_i, C_j, x)$ along with \emph{dummy items} $d(i, j, x)$ and $d'(i, j, x)$. Since every variable appears in at most $\Delta$ clauses, every clause appears in at most $3\Delta$ such tuples. This will allow us to bound the total number of communication channels. We next describe the full construction of the \kpartitionTargetsMultisets instance before analysing it. \Cref{tab:ETH_parameters} summarises notations used in the construction.

	\begin{table}[!t]
\centering
\begin{tabular}{ll@{\hskip .7cm}l}
\toprule
Notation & Description & Value \\
\midrule
$\lambda$ & parameter of the Sparsification Lemma  & $\in (0,1)$ \\
$\Delta$ & max \# of clauses per variable & $\Delta(\lambda) \in \mathbb{N}$ \\
$N$ & \# of variables & \\
$M$ & \# of clauses (after sparsification) & $\le \Delta N$ \\
& & \\
$k$ & \# of bins & $\ge 2$ \\
$n$ & total number of items & $\le 13 \Delta^2 N$ \\
$G_\ell$ & group of $M/(k-1)$ clauses & $\ell \in [k-1]$\\
$t_\ell$ & target value for the $\ell$-th bin & $\ell \in [k]$ \\
& & \\
$C_i$ & $i$-th clause & $i \in [M]$ \\
$\ell(i)$ & group assigned to $C_i$ & $= \lceil i \cdot \frac{k-1}{M} \rceil$ \\
$p(i)$ & position of $C_i$ in its group & $= i \bmod \frac{M}{k-1}$ \\
$\alpha_i$ & assignment of variables in $C_i$ & $\in \{0,1\}^3$ \\
$z(i, \alpha)$ & assignment item for $C_i$ and $\alpha$ & $\in \mathbb{N}$ \\
$d(i, j, x)$, $d'(i,j,x)$ & dummy items for tuple $(C_i, C_j, x)$ & $\in \mathbb{N}$ \\
\bottomrule
\end{tabular}
\caption{Notations used in the reduction from 3-SAT to \kpartitionTargetsMultisets in \cref{lem:ETH_to_BPtargets}.}
\label{tab:ETH_parameters}
\end{table}

	\paragraph*{Construction}
	First, we group the clauses into $k-1$ groups $G_1, \dots, G_{k-1}$ where 
	for every $\ell \in [k-1]$ we have
	$G_\ell =\{C_{(\ell-1)M/(k-1) + 1},\ldots, C_{\ell M/(k-1)}\}$. Let $\ell(i) \coloneq \lceil i \cdot \frac{k-1}{M} \rceil$ and $p(i) \coloneq (i \bmod \frac{M}{k-1})$ be the functions that assign clause $C_i$ to its group $G_{\ell(i)}$ and to its position $p(i)$ in the group $G_{\ell(i)}$, respectively.
	We call a tuple $(C_i, C_j, x)$ a \textbf{\emph{communication tuple}} if $i < j$ and the variable $x$ appears in both clauses $C_i$ and $C_j$. The communication tuple is said to be \textbf{\emph{internal}} if $C_i$ and $C_j$ belong to the same group, i.e.~$\ell(i) = \ell(j)$, and \textbf{\emph{external}} otherwise.
	For every clause $C_i$ and every assignment $\alpha \in \{0, 1\}^3$ of its variables that satisfies $C_i$ we create an \textbf{\emph{assignment item}} $z(i, \alpha)$. Additionally, for every external communication tuple $(C_i, C_j, x)$ we create two \textbf{\emph{dummy items}} $d(i, j, x)$ and $d'(i, j, x)$. Let $n$ be the total number of created items. 

	We describe the items and targets $t_1, \dots, t_{k-1}$  by describing blocks of their bits, from highest to lowest order. The last target $t_{k}$ is defined such that the total sum of all the items equals $t_1 + \dots + t_{k}$. See \cref{fig:ETH_to_kBP,fig:ETH_to_kBP_CC} for an illustration of the construction.

	\subparagraph*{I-blocks}
	The first $k-1$ blocks of $10 \lceil \log{M} \rceil$ bits, called \emph{I-blocks}, ensure that assignment items of the clause $C_i \in G_\ell$ can only be packed into the $\ell$-th bin (or the $k$-th bin). 
	Every assignment item $z(i, \alpha)$ has value 1 on the $\ell(i)$-th I-block and value 0 on every other I-block. 
	All dummy items have value 0 on all I-blocks.
	Every target $t_\ell$ for $\ell \in [k-1]$ has value $\frac{M}{k-1}$ on the $\ell$-th I-block and value 0 on every other I-block.

	\subparagraph*{II-blocks}
	The following $M/(k-1)$ bits are called \emph{II-blocks} and guarantee that exactly one assignment item per clause can be packed into the first $k-1$ bins.
	Every assignment item $z(i, \alpha)$ has value 1 on the $p(i)$-th II-block and value 0 on every other II-block. 
	All dummy items have all II-blocks set to 0. 
	All targets $t_1, \dots, t_{k-1}$ have all II-blocks set to 1.

	\subparagraph*{III-blocks}
	Next, $\binom{k-1}{2}$ \emph{III-blocks} of $10 \lceil \log{M} \rceil$ bits enforce that dummy items $d(i, j, x)$ and $d'(i, j, x)$ can only go into the $\ell(i)$-th or the $\ell(j)$-th bin (or the $k$-th bin).
	We index the $\binom{k-1}{2}$ III-blocks by pairs $(\ell_1, \ell_2) \in [k-1]^2$ with $\ell_1 < \ell_2$. 
	Let $E_{\ell_1, \ell_2}$ be the number of external communication tuples $(C_i, C_j, x)$ with $\ell(i) = \ell_1$ and $\ell(j) = \ell_2$. 
	All III-blocks of any assignment item have value 0. 
	Dummy items $d(i, j, x)$ and $d'(i, j, x)$ have value 1 on the III-block indexed by $(\ell(i), \ell(j))$ and value 0 on every other III-block.
	Every target $t_\ell$ for $\ell \in [k-1]$ has value $E_{\ell_1, \ell_2}$ in the III-block indexed by $(\ell_1, \ell_2)$ if $\ell \in \{\ell_1, \ell_2\}$, and value 0 otherwise.

	\input{figures/ETH_to_kBP.tex}	

		\begin{figure}
		\centering
		\resizebox{.5\textwidth}{!}{%
		 \begin{tikzpicture}[
			node distance=0pt,
        	box/.style={rectangle,draw,minimum width=1.1cm,minimum height=.8cm},
        	padding/.style={rectangle,draw,minimum width=.2cm, minimum height=.8cm, fill=gray!80},
        	value/.style={yshift=1cm}]

		\begin{scope}[yshift=-5.5cm,xshift=9cm]
			

		\node[box,fill=OI_green] (zi) at (0, 0) {$ 0 \ 0 \ 
		\makebox[0pt][l]{$\alpha(x)$}\phantom{\overline{\beta(x)}} 
		$
		};
		\node [left=1cm of zi] {$z(i, \alpha)$};
		\node [above=.8cm of zi] {$\ell(i) \neq \ell(j)$};

		\node[box,fill=OI_green] (zj) [below=.2cm of zi] {$ 0\ 0\ \overline{\beta(x)}$};
		\node [left=1cm of zj] {$z(j, \beta)$};

		\node[box,fill=OI_green] (dij) [below=.2cm of zj] { $0 \ 1 \ \makebox[0pt][l]{0}\phantom{\overline{\beta(x)}} $
		};
		\node [left=1cm of dij] {$d(i, j, x)$};

		\node[box,fill=OI_green] (dijp) [below=.2cm of dij] { $0 \ 1 \ \makebox[0pt][l]{1}\phantom{\overline{\beta(x)}} $
		};
		\node [left=1cm of dijp] {$d'(i, j, x)$};

		\node[box,fill=OI_green] (ti) [below=.2cm of dijp] { $0 \ 1 \ \makebox[0pt][l]{1}\phantom{\overline{\beta(x)}} $
		};
		\node [left=1cm of ti] {$t_{\ell(i)}, t_{\ell(j)}$};

		\node[box,fill=OI_green] (zie) [right=2cm of zi] {$ 0 \ 0 \ 
		\makebox[0pt][l]{$\alpha(x)$}\phantom{\overline{\beta(x)}} 
		$
		};
		\node [above= .8cm of zie] {$\ell(i) = \ell(j)$};

		\node[box,fill=OI_green] (zje) [below=.2cm of zie] {$ 0\ 0\ \overline{\beta(x)}$};

		\node [below=.2cm of zje,minimum width=1.1cm,minimum height=.8cm] { no dummy item 
		};

		\phantom{\node[box,fill=OI_green] (dije) [below=.2cm of zje] { $0 \ 1 \ \makebox[0pt][l]{0}\phantom{\overline{\beta(x)}} $
		};}

		\node [below=.2cm of dije,minimum width=1.1cm,minimum height=.8cm] { no dummy item 
		};

		\phantom{\node[box,fill=OI_green] (dijpe) [below=.2cm of dije] { $0 \ 1 \ \makebox[0pt][l]{1}\phantom{\overline{\beta(x)}} $
		};}

		\node[box,fill=OI_green] (tie) [below=.2cm of dijpe] { $0 \ 0 \ \makebox[0pt][l]{1}\phantom{\overline{\beta(x)}} $
		};
        
        \draw[dashed] ([yshift=(-.2cm), xshift=(-1cm)]tie.south west) -- ([yshift=(+.2cm), xshift=(-1cm)]zie.north west);

		\end{scope}
    
    \end{tikzpicture}
    }%
	\caption{
        The communication channel assigned to an external (left) and an internal (right) communication tuple $(C_i, C_j, x)$ in \cref{lem:ETH_to_BPtargets} for items $z(i, \alpha)$, $z(j, \beta)$, $d(i, j, x)$, $d'(i, j, x)$ and the targets $t_{\ell(i)}$ and $t_{\ell(j)}$. 
	}
	\label{fig:ETH_to_kBP_CC}
	\end{figure}

	\subparagraph*{Communication Channels}
	The last $6\Delta\frac{M}{k-1}$ blocks of $3$ bits are called \emph{communication channels} and ensure that assignment items packed into the first $k-1$ bins are consistent (see \cref{fig:ETH_to_kBP_CC}). 
	To assign communication tuples to communication channels we proceed as follows. 
	Each communication channel can be \emph{open} or \emph{full} on each bin. Initially, every communication channel is open on every bin. For every communication tuple $(C_i, C_j, x)$, greedily pick a communication channel that is open on bins $\ell(i)$ and $\ell(j)$, and assign the tuple to that channel. The channel then becomes full for the bins $\ell(i)$ and $\ell(j)$. We claim that an open communication channel can always be found. Indeed, every variable in $\phi$ is contained in at most $\Delta$ clauses, so every clause $C_i$ is contained in at most $3\Delta$ communication tuples, and so every bin $\ell \in [k-1]$ has at most $3\Delta \frac{M}{k-1}$ communication tuples $(C_i, C_j, x)$ with $\ell \in \{\ell(i), \ell(j)\}$. Since there are $2 \cdot 3\Delta \frac{M}{k-1}$ communication channels, for every communication tuple $(C_i, C_j, x)$ we can find a channel that is open on $\ell(i)$ and on $\ell(j)$. 
	
	After the above assignment procedure, for each communication channel left open on bin $\ell \in [k-1]$, the target $t_\ell$ and assignment and dummy items for the clauses in $G_\ell$ 
	take value $\bincode{000}$ (written in binary). We now define the values (written in binary) of the communication channel assigned to a communication tuple $(C_i, C_j, x)$ on the bins $\ell(i)$ and $\ell(j)$.
	Every assignment item $z(i, \alpha)$ of the clause $C_i$ has value $\bincode{00\mathnormal{\alpha(x)}}$, where $\alpha(x) = 1$ if the variable $x$ is assigned true by $\alpha$ and $\alpha(x) = 0$ otherwise.
	Every assignment item $z(j, \beta)$ of the clause $C_j$ has value $\bincode{00\overline{\mathnormal{\beta(x)}}}$, where $\overline{\beta(x)} = 1 - \beta(x)$ is the flipped value assigned to $x$ by $\beta$.
	If $(C_i, C_j, x)$ is an \emph{internal} communication tuple, then the target $t_{\ell(i)} = t_{\ell(j)}$ takes value $\bincode{001}$.
	If $(C_i, C_j, x)$ is an \emph{external} communication tuple, then both targets $t_{\ell(i)}$ and $t_{\ell(j)}$ take value $\bincode{011}$, and the dummy items $d(i, j, x)$ and $d'(i, j, x)$ take values $\bincode{010}$ and $\bincode{011}$, respectively (see~\cref{fig:ETH_to_kBP_CC}).

	\subparagraph*{Padding blocks}

	Finally, we add $10 \lceil \log M \rceil$  \emph{padding bits} after every I-block and III-block and after the last II-block. All the padding bits have value $0$ and the purpose is to prevent any overflow between blocks when summing up items. 

	\paragraph*{Correctness}
	We prove that the constructed \kpartitionTargetsMultisets instance has a solution if and only if $\phi$ is satisfiable.
	Suppose that there exists an assignment $\alpha \in \{0, 1\}^N$ satisfying $\phi$. For every $i \in [M]$, let $\alpha_i \in \{0, 1\}^3$ be the assignment of $\alpha$ to the variables in $C_i$. 
	We distribute assignment and dummy items into $k$ sets $S_1, \dots, S_k$ as follows. For every $i \in [M]$, put item $z(i, \alpha_i)$ into $S_{\ell(i)}$ and all items $z(i, \alpha_i')$ with $\alpha_i' \neq \alpha_i$ into $S_k$. For every external communication tuple $(C_i, C_j, x)$, if $\alpha_i(x) = 0$ put item $d(i, j, x)$ into $S_{\ell(i)}$ and item $d'(i, j, x)$ into $S_{\ell(j)}$; if $\alpha_i(x) = 1$ put item $d(i, j, x)$ into $S_{\ell(j)}$ and item $d'(i, j, x)$ into $S_{\ell(i)}$. We now verify that $\Sigma(S_\ell) = t_\ell$ for every $\ell \in [k-1]$. By definition of $t_k$, this implies that $\Sigma(S_{k}) = t_{k}$ as well and therefore that there exists a solution to the \kpartitionTargetsMultisets instance.
	Fix $\ell \in [k-1]$. 
	Since $S_\ell$ contains $M/(k-1)$ assignment items, all corresponding to clauses $C_i$ with $\ell(i) = \ell$, the I-blocks of $\Sigma(S_\ell)$ match the I-blocks of $t_\ell$. Additionally, for every clause $C_i$ with $\ell(i) = \ell$, $S_\ell$ contains exactly one assignment item, so $\Sigma(S_\ell)$ and $t_\ell$ also match on the II-blocks.
	Now for every external communication tuple $(C_i, C_j, x)$ with $\ell \in \{\ell(i), \ell(j)\}$, exactly one dummy item of $\{d(i, j, x), d'(i, j, x)\}$ is contained in $S_\ell$ and thus the III-blocks also match. Additionally, consider the assignment item $z(i, \alpha_i)$ of $C_i$ packed into $S_\ell$. If $\alpha_i(x) = 1$, then $S_\ell$ contains both $z(i, \alpha_i)$ and $d(i, j, x)$ that take respective values $\bincode{001}$ and $\bincode{010}$ in the assigned communication channel. If $\alpha_i(x) = 0$, then $S_\ell$ contains both $z(i, \alpha_i)$ and $d'(i, j, x)$ that take respective values $\bincode{000}$ and $\bincode{011}$ in the assigned communication channel. In both cases, the assigned communication channel of $\Sigma(S_\ell)$ has value $\bincode{011}$, matching the value of $t_\ell$.
	Finally, for every internal communication tuple $(C_i, C_j, x)$, consider the assignment items $z(i, \alpha_i)$ and $z(j, \alpha_j)$ packed in $S_\ell$. Since $\alpha_i(x) = \alpha_j(x)$, the assigned communication channel of $\Sigma(S_\ell)$ has value $\bincode{000} + \bincode{001} = \bincode{001}$, which matches the value in $t_\ell$.
	Since there is no overflow, padding blocks of $\Sigma(S_\ell)$ have value 0 and thus also match $t_\ell$. 

	\medskip
	Conversely, assume that there exists a partition of all the items into $k$ sets $S_1, \dots, S_{k}$ such that $\Sigma(S_\ell) = t_\ell$ for all $\ell \in [k]$. 
	Let $Z_i$ be the set of assignment items for the clause~$C_i$, and $Z \coloneq \bigcup_{i =1}^M Z_i$ be the set of all assignment items. Let $D_{\{\ell_1, \ell_2\}}$ be the set of dummy items for all external communication channels $(C_i, C_j, x)$ such that $\{\ell(i), \ell(j)\} = \{\ell_1, \ell_2\}$, and let $D \coloneq \bigcup_{\ell_1, \ell_2 \in [k-1] \text{ s.t.\ } \ell_1 \neq \ell_2} D_{\{\ell_1, \ell_2\}}$ be the set of all dummy items. The goal is to prove that items in $Z \cap (S_1 \cup \dots \cup S_{k-1})$ have a consistent assignment of the variables $x_1, \dots, x_N$ that satisfies $\phi$. Note that per definition  assignment items satisfy the corresponding clause. Hence, we only need to ensure that the assignments are consistent and that there is one assignment item per clause.

	Since all padding blocks have value 0 and length greater than
	$\log n$ (as $n \ll 2^{10 \lceil \log M \rceil}$), we deduce that there is no overflow from lower bits over any padding block. It follows that every I-block has the same value in $\Sigma(S_{\ell})$ and in $t_\ell$ for each $\ell \in [k-1]$. The same holds for every III-block, for the union of all II-blocks, and for the union of all communication channels.
	In what follows, we analyse these parts
	sequentially by starting with the higher order bits. 

	Fix $\ell \in [k-1]$. 
	Since every I-block of $\Sigma(S_\ell)$ is equal to the corresponding I-block of $t_\ell$, we have $|S_\ell \cap Z| = \frac{M}{k-1}$ and $S_{\ell} \cap Z_i = \emptyset$ for every $i \in [M]$ such that $\ell(i) \neq \ell$. In other words, each of the first $k-1$ bins packs exactly $\frac{M}{k-1}$ assignment items, and assignment items packed in the $\ell$-th bin correspond to clauses of the group $G_\ell$. 
	Furthermore, the padding block after the last II-block ensures that the only items in $S_{\ell}$ contributing to the II-blocks in $\Sigma(S_{\ell})$ are the $\frac{M}{k-1}$ packed assignment items. 
	Since each of these $\frac{M}{k-1}$ items contributes at most one 1-bit to the II-block and the target asks for $\frac{M}{k-1}$ 1-bits, these 1-bits must appear at distinct positions. It follows that $S_{\ell}$ contains exactly one assignment item for each clause from the group $G_\ell$.
	Let $z(i, \alpha_i)$ be the packed item for the clause $C_i$, i.e.~the sole element of $S_{\ell(i)} \cap Z_i$.
	Next, note that the padding blocks ensure that the only items in $S_{\ell}$ that contribute to III-blocks are dummy items, and every III-block of a dummy item can only contribute to the same III-block of $\Sigma(S_{\ell})$. The non-zero III-blocks of $t_\ell = \Sigma(S_{\ell})$ are indexed by $(\ell_1, \ell_2)$ such that $\ell \in \{\ell_1, \ell_2\}$. So we deduce that $D_{\{\ell_1, \ell_2\}} \subset S_{\ell_1} \cup S_{\ell_2}$ for every $\ell_1, \ell_2 \in [k-1]$ with $\ell_1 \neq \ell_2$.

	
	Next, we claim that there is no overflow between communication channels in each bin $S_{\ell}$.
	Indeed, by the distribution of channels, a channel can be used by two communication tuples $(C_i, C_j, x)$ and $(C_{i'}, C_{j'}, x')$ only if $\{\ell(i), \ell(j)\} \cap \{\ell(i'), \ell(j')\} = \emptyset$. This means that the packed assignment items (and the potential dummy items) corresponding to each tuple are packed in disjoint bins. 
	Hence, if we assume no overflow from the lower bits, the only items that can contribute to a given communication channel of $\Sigma(S_{\ell})$ are the packed assignment items (and potential dummy items) of the assigned communication tuple $(C_i, C_j, x)$. Note that, in each communication channel, assignment items take values at most $\bincode{001}$ and dummy items take values $\bincode{010}$ and $\bincode{011}$. In total, the contribution of $(C_i, C_j, x)$ to the assigned communication channel of $\Sigma(S_{\ell})$ is therefore at most $\bincode{111}$. 
	Hence, by induction on the communication channels from lower to higher order bits, there is no overflow between communication channels in $\Sigma(S_{\ell})$.
	
	Finally, fix a communication tuple $(C_i, C_j, x)$ and consider the assigned communication channel on $\ell(i)$ and $\ell(j)$. 
	If the communication channel is internal, i.e.~$\ell(i) = \ell(j) = \ell$, the only items contributing to the assigned channel in $\Sigma(S_\ell)$ are $z(i, \alpha_i)$ and $z(j, \alpha_j)$. Observe that $\Sigma(S_\ell)$ has value
	$\bincode{00\alpha_i(x)} + \bincode{0 0 (1-\alpha_j(x))}$ and $t_\ell$ has value $\bincode{001}$ on that channel, and these values are equal if and only if $\alpha_i(x) = \alpha_j(x)$. 
	Now consider an external communication channel, i.e.~$\ell(i) \neq \ell(j)$. Then the only items contributing to the assigned channel in $\Sigma(S_{\ell(i)})$ and $\Sigma(S_{\ell(j)})$ are $z(i, \alpha_i)$, $z(j, \alpha_j)$, $d(i, j, x)$ and $d'(i, j, x)$. 
	If $\alpha_i(x) \neq \alpha_j(x)$, then $z(i, \alpha_i)$ and $z(j, \alpha_j)$ have the same value in the communication channel: either both are $\bincode{000}$ or both are $\bincode{001}$. Since the dummy items have values $\bincode{010}$ and $\bincode{011}$, the sum of the items $z(i, \alpha_i)$, $z(j, \alpha_j)$, $d(i, j, x)$ and $d'(i, j, x)$ on the communication channel is an odd number. 
	Recall that the targets $t_{\ell(i)}$ and $t_{\ell(j)}$ have value $\bincode{011}$ in that channel, so they sum to an even number $\bincode{110}$.
	Hence, it is impossible to partition these items to add to the target values.
	Therefore, we necessarily have $\alpha_i(x) = \alpha_j(x)$.\footnote{In that case, one of the assignment items takes value $\bincode{001}$ and the other value $\bincode{000}$. By pairing them with the dummy items taking value $\bincode{010}$ and $\bincode{011}$ respectively, we get that $S_{\ell(i)}$ and $S_{\ell(j)}$ both evaluate to $\bincode{011}$ in the channel.} Since this applies for every communication tuple $(C_i, C_j, x)$, we deduce that all assignment items in $Z \cap (S_1 \cup \dots \cup S_{k-1})$ have a consistent variable assignment $\alpha \in \{0, 1\}^N$. Since assignment items are created only for variable assignments satisfying the corresponding clause, the assignment $\alpha$ satisfies all clauses of $\phi$, and thus $\phi$ is satisfiable.

	\paragraph*{Parameters of the reduction}
	Recall that every variable appears in at most $\Delta$ clauses of the 3-CNF formula $\phi$, so there are $M \le \Delta N$ clauses and at most $3 \Delta M \le 3 \Delta^2 N$ communication tuples. For each communication tuple we create at most 2 dummy items, and for each clause we create at most $7$ assignment items, so in total the number of items is $n \leq 7 \cdot M  + 2 \cdot 3 \Delta M \leq 13 \Delta^2 N$. 
	
	The number of bits in each constructed item is at most (i) $20 (k-1) \lceil \log{M} \rceil$ for I-blocks and their padding blocks, (ii) $M/(k-1) + 10 \lceil \log M \rceil$ for II-blocks and their single padding block, (iii) $20 \binom{k-1}{2} \lceil \log{M} \rceil$ for III-blocks and their padding blocks, and (iv) $6 \Delta \frac{M}{k-1} \cdot 3$ for communication channels. In total, this number of bits is at most 
	\begin{align*}
		 M / (k-1) + 18 \Delta M/(k-1) + \left(10+ 20 (k-1) + 20 \binom{k-1}{2} \right)\lceil \log M \rceil&\le\\ 20 \Delta^2 N/(k-1) + 50 (k-1)^2 \lceil \log M \rceil &=: B. 
	\end{align*}
	Therefore, every target has value at most 
	$$T \le n \cdot 2^B \le 13 \Delta^2 N \cdot 2^{20 \Delta^2 N/(k-1)} \cdot (2M)^{50 (k-1)^2} \le 2^{20 \Delta^2 N/(k-1)} \cdot (26 \Delta^2 N)^{51 (k-1)^2}.$$
	 
	Assuming ETH, there exists a constant $\delta_{\text{ETH}} > 0$ such that 3-SAT cannot be solved in time $2^{\delta_{\text{ETH}} \cdot N} \cdot |\Phi|^{\Oh(1)}$. 
	We set the parameter of the Sparsification Lemma to $\lambda \coloneq \delta_{\text{ETH}} / 3$, which determines $\Delta = \Delta(3,\lambda)$, and based on that we set $\delta \coloneq \delta_{\text{ETH}} / (100 \Delta^2)$. 
	For the sake of contradiction, assume that there exists $k \ge 2$ such that \kpartitionTargetsMultisets can be solved in time $\Oh(T^{\delta k} 2^{\delta n})$. Then our reduction yields an algorithm for 3-SAT with running time bounded by
	\begin{align*}
		2^{\lambda N} \cdot |\Phi|^{\Oh(1)} \cdot T^{\delta k} 2^{\delta n}
		&\leq 2^{\lambda N} \cdot |\Phi|^{\Oh(1)} \cdot T^{2 \delta (k-1)} 2^{\delta 13 \Delta^2 N} \\
		&\leq 2^{N (\lambda + 13 \Delta^2 \delta + 40 \Delta^2 \delta)} \cdot (26 \Delta^2 N)^{102 (k-1)^3 \delta} \cdot |\Phi|^{\Oh(1)} \\
		&\leq 2^{N (\lambda + 13 \Delta^2 \delta + 40 \Delta^2 \delta)} \cdot |\Phi|^{\Oh(1)} \tag{as $k,\delta, \Delta = \Oh(1)$ and $|\Phi| \ge N$}\\
		&\leq 2^{N (\delta_{\text{ETH}} / 3 + 53 \Delta^2 \cdot \delta_{\text{ETH}} / (100 \Delta^2))} \cdot |\Phi|^{\Oh(1)} \\
		&\leq 2^{0.9 \cdot \delta_{\text{ETH}} \cdot N} \cdot |\Phi|^{\Oh(1)}.
	\end{align*}
	This contradicts the initial assumption that 3-SAT cannot be solved in time $2^{\delta_{\text{ETH}} \cdot N} \cdot |\Phi|^{\Oh(1)}$, and thus we arrive at a contradiction with ETH.
\end{proof}

\begin{proof}[Proof of \cref{thm:bin-packing-lowerbound}]
	As \kpartitionTargetsMultisets and \binpacking are equivalent by~\cref{thm:equivalence-bin-packing}, the lower bound for \kpartitionTargetsMultisets given by \cref{lem:ETH_to_BPtargets} directly implies the same lower bound for \binpacking.
\end{proof}

The proof of~\cref{thm:equivalence-bin-packing} is deferred to \cref{sec:binpacking_equivalence}.

\section{SETH-hardness of a Variant of Weak Grouped \boldmath$k$-way Partition}\label{sec:SETH_lowerbound}

In this section, we prove a tight SETH-based lower bound for \pbtwoTargetsMultisets.
\Problem{\pbtwoTargetsMultisets}
{Multisets $G_1, \dots, G_q \subset \N \cap [W(1 - 1/n^{10}), W]$ with $|G_i| = ks$ for every $i \in [q]$ and $n = k s q$ for integers $s, q, W \in \N$; and targets $t_1, \dots, t_{k-1} \in \N$.}
{Decide which of the following cases hold.
\begin{yesenum}
	\item  $\forall i \in [q]$, there are disjoint subsets $S_{i, 1}, \dots, S_{i, k-1} \subset G_i$ such that:
	\begin{yesenum}
		\item $|S_{i, \ell}| = s$ for all $i \in [q]$ and $\ell \in [k-1]$,
		\item $\sum_{i \in [q]} \Sigma(S_{i, \ell}) = t_\ell$ for all $\ell \in [k-1]$.
	\end{yesenum}
\end{yesenum}
\begin{noenum}
	\item It does not hold that $\forall i \in [q]$, there are disjoint subsets $S_{i, 1}, \dots, S_{i, k-1} \subset G_i$ such that:
	\begin{noenum}
		\item $|S_{1, \ell}| + \dots + |S_{i, \ell}| \leq i \cdot s$ for all $i \in [q]$ and $\ell \in [k-1]$,
		\item $\sum_{i\in [q], \ell \in [k-1]} |S_{i, \ell}| \geq (k-1) q s$,
		\item $\sum_{i \in [q]} \Sigma(S_{i, \ell}) = t_\ell$ for all $\ell \in [k-1]$.
	\end{noenum}
\end{noenum}
}
{$W$}
Later, in \Cref{sec:equivalences_pbtwo}, we further reduce to the problems \pbtwo and \pbtwoYes, and in \Cref{sec:scheduling-problems-seth} we present reductions to multi-machine scheduling problems, cf.~\cref{fig:graph_reduction}.

We will need the following slight variation of the classic construction of Behrend sets~\cite{behrend1946sets}, which has seen many uses in algorithms and complexity, see, e.g.~\cite{DellM14,AbboudLW14,AbboudB17,AbboudBHS19}.

\begin{definition}[Strong $k$-average-free set]
	A set $B \subset \N$ is \emph{strong $k$-average-free} if for any $a,b \in \{0,1, \dots, k\}$ and $x_1, \dots, x_{a}, x \in B$, if $x_1 + \dots + x_{a} = b x$ then $a = b$ and $x_1 = \dots = x_{a} = x$. 
\end{definition}
\begin{restatable}{lemma}{lembehrend}\label{lem:behrend}
	For any $\mu \in (0, 1)$, $k \geq 2$ and $n \geq 1$, a
    strong $k$-average-free set $B \subset [U]$ of size $n$ with $U = n^{1 + \mu} k^{\Oh(1/\mu)}$ can be constructed in $n^{\Oh(1)}$ time.
\end{restatable}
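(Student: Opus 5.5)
We will adapt Behrend's classical construction: we take the points of a grid lying on a sphere and read them as integers in a large base. Let $d$ be a dimension and $m$ a digit bound, both fixed below. Consider vectors $v \in \{0,1,\dots,m-1\}^d$ and group them by their squared norm $\|v\|_2^2 \in [0, d m^2]$. By pigeonhole, some sphere $\{v : \|v\|_2^2 = r\}$ contains at least $m^d/(dm^2)$ vectors; we choose such an $r$. We then map each vector $v$ to the integer $\sum_{j} v_j (km)^{j}$, using base $km+1$ or $2km$. The base is chosen so that summing at most $k$ such numbers, or multiplying one of them by at most $k$, never produces a carry. The resulting set $B$ lies in $[U]$ with $U \le (2km)^d$.

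Next we verify the strong $k$-average-free property. Suppose $x_1+\dots+x_a = b\,x$ with $a,b \le k$. Since there are no carries, this identity holds coordinatewise for the vectors: $v_1+\dots+v_a = b\,v$. To get $a=b$, we want the integer encoding to also record how many summands were used. We will do this by adding one extra high-order digit equal to $1$ for every element, i.e.\ we append a constant coordinate. The count digit then forces $a=b$. The case $a=b=0$ is trivial.

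With $a=b$, we have $v = \frac1a\sum_i v_i$, the average of $a$ points on the sphere of radius $\sqrt r$. By strict convexity of the Euclidean norm, $\|\frac1a\sum v_i\| \le \sqrt r$, with equality only if all $v_i$ are equal. Since $\|v\|=\sqrt r$, equality holds, so all $v_i = v$, and therefore $x_1=\dots=x_a=x$.

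Finally we choose parameters and handle the running time. We set $d = \lceil 2/\mu\rceil$ and $m$ so that $m^d/(dm^2) \ge n$, that is $m \approx (d n)^{1/(d-2)}$. Then $U \le (2km)^{d+1} = (2k)^{O(1/\mu)} m^{d+1}$. Here $m^{d+1} = (dn)^{(d+1)/(d-2)} = n^{1+3/(d-2)} d^{O(1)}$, so taking $d \approx 3/\mu + 2$ gives exponent $1+\mu$. The constant $d^{O(1)}$ and the factor $(2k)^{d+1}$ are absorbed into $k^{O(1/\mu)}$; we may assume $k\ge2$ and absorb constants depending only on $\mu$ appropriately.

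For construction time, there are $m^d = n^{O(1)}$ grid points. We enumerate all of them, bucket them by norm, pick the largest bucket, and truncate it to $n$ elements. This takes polynomial time.

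The main care point is this parameter bookkeeping: making sure the $\mu$-dependent constants land in the $k^{O(1/\mu)}$ factor rather than the $n$ exponent, and that the digit base really prevents carries under scaling by $b\le k$. I expect the latter to be routine, and the former to be resolved by choosing $d = \Theta(1/\mu)$ as above.
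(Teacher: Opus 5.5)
Your proposal is correct and is essentially the paper's proof: Behrend's grid points on a sphere, encoded without carries in base $2k$ times the digit bound, with an extra leading digit $1$ that forces $a=b$, then the equality case of the triangle inequality (your convexity step) to force all vectors to be equal, and $d=\Theta(1/\mu)$ so that all dimension-dependent factors are absorbed into $k^{\Oh(1/\mu)}$. The differences are only bookkeeping. The paper bounds $U\le 2(2ku)^d$ and takes $d=\lceil 2/\mu\rceil+2$, whereas you bound $U$ more loosely by $(2km)^{d+1}$ and so need $d=\lceil 3/\mu\rceil+2$ (your initial $d=\lceil 2/\mu\rceil$ would not give exponent $1+\mu$). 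You also make explicit the truncation to exactly $n$ elements and the polynomial construction time, which the paper leaves implicit.
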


As this is a simple modification of an already existing construction, we include the proof of \cref{lem:behrend} in \Cref{app:technical-lemmas} for completeness.

\begin{lemma}\label{lem:SETH_to_GroupedkBP}
	Assuming SETH, for any $\epsilon >0$ and $k \ge 2$ there exists $\delta > 0$ such that \pbtwoTargetsMultisets cannot be solved in $\Oh(2^{\delta n}W^{k -1 - \epsilon})$ time, where $n$ is the number of items and $W$ is the upper bound on the integer range.
\end{lemma}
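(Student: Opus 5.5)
We reduce from $K$-SAT, following the outline in the technical overview. Fix $\epsilon>0$ and $k\ge 2$. By SETH, pick $K$ so that $K$-SAT has no $\Oh(2^{(1-\epsilon/(2k))N})$-time algorithm. Apply the Sparsification Lemma (\cref{lem:sparsification}) with a small $\lambda$, so that every formula $\phi$ has $M\le \Delta N$ clauses. Choose a large constant $a$ and group the variables into $N/a$ supervariables $X_1,\dots,X_{N/a}$. Assign them round-robin to the first $k-1$ bins, so each bin receives $N/(a(k-1))$ of them.

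Each bin gets a private region of bits, about $(1+\epsilon')N/(k-1)$ bits in total, plus $\Oh(\log N)$-bit padding blocks between fields. Only the first $k-1$ bins carry targets; bin $k$ acts as the dumpster and is left implicit in the Weak problem. The constructed items are as follows.

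\textbf{Supervariable groups.} For each supervariable $X_i$, group $G_i$ contains the $2^a$ items $z(i,\alpha)$, $\alpha\in\{0,1\}^a$. Inside the region of its bin, $z(i,\alpha)$ writes a value $b_\alpha$ from a strong $k$-average-free set $B$ given by \cref{lem:behrend} with $|B|=2^a$ and $U=2^{a(1+\mu)}$. This field has width about $a(1+\mu)+\Oh(\log n)$ bits, which is within $(1+\epsilon')a$ bits once $a$ is large.

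\textbf{Clause groups.} For each clause $C_j$, group $G_{N/a+j}$ contains the items $y(j,i,\alpha)$, one for each $X_i$ containing a variable of $C_j$ and each $\alpha$ satisfying $C_j$. Such an item writes $b_\alpha$ into the field of $X_i$, in the bin of $X_i$, with a negative offset implemented by shifting the targets. The group also contains $s(k-1)-1$ dummy items carrying a large "reward" in a low counter field. Consistency is then enforced as follows: the field of $X_i$ must sum to a value that, by strong average-freeness, forces all picked items touching $X_i$ to use the same $\alpha$. To make the field equation linear, the counts of items touching $X_i$ must be fixed, so each $z$ item also carries a count contribution. We expect to encode this as $c\cdot b_{\alpha}$, with a target of the form $(\text{count})\cdot b$ plus a mod trick, as in \cite{AbboudBHS19}.

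\textbf{Group sizes and ranges.} All groups are padded to the common size $ks$ with dumpster-only filler items, i.e.\ items too large to fit any first $k-1$ target. Alternatively, $s=1$ is taken per group after suitable padding; we intend to choose $s$ as a common bound on group sizes divided by $k$. To force every item into $[W(1-1/n^{10}),W]$, we add a large common offset $W'$ in the top bits. Targets then include $s q W'$, and since the offset dominates, this also yields the count constraints \ref{enum:pb2yes_1a}/No.1 almost for free.

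\textbf{Proof structure.} Completeness: a satisfying assignment gives subsets meeting the Yes conditions. For each $X_i$ we pack $z(i,\alpha|_{X_i})$ into its bin. For each clause we pack one satisfying $y$ item together with the dummies, and the remaining items go to the dumpster. Soundness: we show that any selection satisfying the relaxed No conditions (prefix count bounds, total count at least $(k-1)qs$, exact targets) yields a satisfying assignment. Since the total count is at least $(k-1)qs$ and every prefix is bounded by $is$, each group contributes exactly $s$ items per bin. Padding blocks rule out overflow, so the fields must match exactly. The dummy-count field forces all dummies of a clause group to be packed, and hence exactly one $y$ item per clause. Average-freeness then forces consistency.

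\textbf{Parameters.} We get $n=\Oh(2^a N)$ items and $W=2^{(1+\epsilon')N/(k-1)}\cdot N^{\Oh(1)}$. A $2^{\delta n}W^{k-1-\epsilon}$-time algorithm would then solve $K$-SAT in time $2^{\lambda N+\delta 2^a \Delta N+(1+\epsilon')(1-\epsilon/(k-1))N}$, which is below the SETH bound for suitable $\epsilon',\lambda,\delta$.

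\textbf{Main obstacle.} The main obstacle is the bookkeeping budget. A clause item writes into a supervariable field lying in a possibly different bin from the clause itself, and we need exactly one $y$ per clause without spending $\Theta(M)$ bits per bin. The fix relies on the group structure. Because each group places exactly $s$ items per bin, choosing $s=1$ with $k-1$ slots makes the packed items of a clause group consist of $k-2$ dummies plus one $y$; the paper's count is $s(k-1)-1$ dummies. For this, the $y$ item of a clause must be placed into the bin of its supervariable, and the dummies must fill the other bins. Dummies are identical items with a unit in a per-bin counter whose target forces all of them to be packed. We also have to arrange the field equation with variable multiplicities correctly, and this is where we expect most of the care to be needed.
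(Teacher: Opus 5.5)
Your outline follows the paper's architecture: sparsification, supervariables, one group per supervariable and per clause, $(k-1)s-1$ dummies per group, strong average-free sets, and bin $k$ as a dumpster. However, the step you flag as needing "most of the care" is the missing idea, and as sketched it does not define a construction. The number $|Y_i|$ of packed clause items pointing to $X_i$ depends on which supervariable each clause uses, while every target is a fixed number. So a target ``of the form $(\text{count})\cdot b$'' is impossible, since $b$ is the unknown chosen element, and you never say how the count gets fixed.

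The paper fixes the counts by \emph{guessing}. For every $(\gamma_1,\dots,\gamma_{N/a})\in\{0,\dots,\Delta a\}^{N/a}$ with $\sum_i\gamma_i=M$ it builds a separate instance. In it, $z(i,\alpha)$ carries $\Delta aU-\gamma_iB(\alpha)$ in the field of $X_i$, $y(j,i,\alpha)$ carries $B(\alpha)$, and the target is $\Delta aU$. Strong $\Delta a$-average-freeness then gives $|Y_i|=\gamma_i$ together with consistency. The same guess supplies the per-bin counter targets $\frac{N}{(k-1)a}+\sum_{i:\ell(i)=\ell}\gamma_i$, where each assignment item has a $1$ in the counter of its supervariable's bin. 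These targets sum to $q$ over the bins, which forces exactly one assignment item per group.

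Your per-bin dummy counter has the same defect. The number of clause dummies in bin $\ell$ is $Ms$ minus the number of clause items in bin $\ell$, and that is solution-dependent unless the $\gamma_i$ are fixed. Guessing costs a factor $(\Delta a+1)^{N/a}\le 2^{2N\log(\Delta a)/a}$, which is absorbed by taking $a$ large after $\Delta$ is fixed. This factor is absent from your running-time estimate.

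Two further steps fail as stated.

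\textbf{Exact group counts.} The prefix bounds and the total count $\ge(k-1)qs$ only give $\sum_b|S_{b,\ell}|=qs$ in each bin, which is also all a common top-bit offset gives. They still permit, in some bin, $|S_{1,\ell}|=0$ and $|S_{2,\ell}|=2s$, so a group may end up with only dummies packed. The paper adds a block in which every item of $G_b$ has value $b$, with target $\sum_b bs$. Combined with the prefix bounds via an Abel-summation argument, this forces $|S_{b,\ell}|=s$.

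\textbf{Field widths.} You cannot afford $\Oh(\log n)$ padding bits per field. There are $N/((k-1)a)$ fields, so this adds $\Theta(N\log N/a)$ bits. The claim that $a(1+\mu)+\Oh(\log n)\le(1+\epsilon')a$ once $a$ is large is false, because $a$ is a constant. The paper places the consistency blocks back to back, each of width $\lceil\log(2\Delta aU)\rceil$. It rules out overflow by induction from the lowest-order block, using that at most $1+\Delta a$ packed items touch each field. That bound again needs exactly one assignment item per group and the restriction of $X_i$'s items to bin $\ell(i)$. This is also why the average-free parameter must be $\Delta a$ rather than $k$, and why $a$ must be large enough to absorb the $(\Delta a)^{\Oh(1/\mu)}$ factor in $U$.

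Finally, $s=1$ is incompatible with groups containing $2^a$ assignment items.
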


\begin{proof}
Suppose for the sake of contradiction that there exist $k \ge2$ and $\eps \in (0,1)$ such that for all $\delta > 0$ \pbtwoTargetsMultisets can be solved in time $\Oh(2^{\delta n} W^{k-1-\eps})$. We will write the running time as $\Oh(2^{\delta n} W^{(k-1)(1-\bar{\eps})})$ for $\bar{\eps} \coloneq \eps/(k-1)$.

\begin{table}[!t]
\centering
\begin{tabular}{ll@{\hskip .7cm}l}
\toprule
Notation & Description & Value \\
\midrule
$K$ & \# of literals per clause & $\geq 3$\\
$\lambda$ & parameter of the Sparsification Lemma & $\in (0,1)$\\
$\Delta$ & max \# of clauses per variable & $\Delta = \Delta(K, \lambda) \in \mathbb{N}$ \\
$N$ & \# of variables &  \\
$M$ & \# of clauses (after sparsification) & $\le\Delta N$\\

& & \\

$X_i$ & supervariable (i.e.\ set of variables) & $i \in [N/a]$\\
$a$ & \# of variables in a supervariable &  \\
$k$ & \# of bins & $k$-th bin is a ``dumpster'' \\
$q$ & \# of groups & $=N/a + M$ \\
$s$ & group size divided by $k$ & $=2^a \cdot \max\{\Delta a, K\}$ \\
$n$ & total \# of items & $=ksq$ \\
$W$ & upper bound on all constructed items &  \\
$G_i$ & $i$-th group of items &  $i \in [q]$ \\
$S_{i,\ell}$ & items from $G_i$ packed in the $\ell$-th bin & $i \in [q], \ell \in [k-1]$ \\
$t_\ell$ & target value for the $\ell$-th bin& $\ell \in [k-1]$ \\

& & \\

$\ell(i)$ & bin assigned to $X_i$ & $=\ceil{i \frac{(k-1)a}{N}}$ \\
$p(i)$ & position of $X_i$ in its bin & $=i \bmod N/((k-1)a) $ \\
$\alpha_i$ & assignment of variables in $X_i$ &  $\in \{0,1\}^a$ \\
$\gamma_i$ & \# of clauses satisfied by variables in $X_i$ &  $\in \{0, \ldots,\Delta a\}$ \\
$z(i,\alpha)$ & assignment item for $X_i$ and $\alpha$ & $\in [W]$ \\
$y(j,i,\alpha)$ & assignment item for $C_j$ satisfied by $X_i \gets \alpha$ & $\in [W]$ \\
$d(i)$ & dummy item for $X_i$ & $\in [W]$ \\

& & \\

$\mu$ & parameter in strong average-free set & $\mu = \mu(K,\lambda) \in (0,1)$ \\
$U$ & maximum value in strong average-free set & $= 2^{a(1+\mu)}(\Delta a)^{\Oh(1/\mu)}$ \\
$B(\alpha)$ & $\alpha$-th element of strong average-free set & $\in [U]$ \\
\bottomrule
\end{tabular}
\caption{Notations used in the reduction from $K$-SAT to \pbtwoTargetsMultisets in \cref{lem:SETH_to_GroupedkBP}.}
\label{tab:SETH_parameters}
\end{table}

Let $K \geq 3$ and $\Phi$ be a $K$-CNF formula on $N$ variables. We apply the Sparsification
Lemma (\Cref{lem:sparsification}) on~$\Phi$ with parameter $\lambda \coloneq \bar{\eps}/10$, i.e.~in $2^{\lambda
N} |\Phi|^{\Oh(1)}$ time we construct $K$-CNF formulas $\phi_1, \dots, \phi_r$
such that $\Phi = \phi_1 \vee \dots \vee \phi_r$, where $r = 2^{\lambda N}$, and in each formula $\phi_q$ every variable appears in at most $\Delta = \Delta(K,\lambda)$ clauses. In particular, $\phi_q$ has at most $M \coloneq \Delta N$ clauses, and we can duplicate clauses to ensure that the number of clauses is equal to $M = \Delta N$.

We let $a \coloneq a(K,\lambda,\bar{\eps}) \in \mathbb{N}$, $ a \ge 2$, be a sufficiently large parameter depending only on $K,\lambda$, and~$\bar{\eps}$; we will tune $a$ later. We can assume that $N$ is a multiple of $(k-1) \cdot a$ by adding dummy variables.
Define:
\begin{align*}
	q \coloneq N/a +M, \quad s \coloneq 2^a \cdot \max\{\Delta a, K\}, \quad n \coloneq ksq. 
\end{align*}
Using \cref{lem:behrend} with $\mu \coloneq \bar{\eps}/10$, we
construct strong $(\Delta a)$-average-free set $B \subset [U]$ of size $|B| = 2^a$ with $U = 2^{a (1 + \mu)} (\Delta a)^{\Oh(1/\mu)}$. 
For $\alpha \in \{0, 1\}^a$, we denote by $B[\alpha]$ the $i$-th element of $B$, where $i = \sum_{j=1}^a \alpha[j] \cdot 2^{j-1}$. 

For every $\phi \in \{\phi_1, \dots, \phi_r\}$, and every tuple $(\gamma_1, \dots, \gamma_{N/a}) \in \{0, \dots, \Delta a\}^{N/a}$ with $\sum_{i \in [N/a]} \gamma_i = M$, we construct an instance $(G_1,\ldots,G_q)$ of \pbtwoTargetsMultisets of group size $ks$ such that $\phi$ is satisfiable if and only if at least one of the constructed instances is in the \ref{enum:pb2targets_yes} case. 
In the following, for any collection of disjoint subsets $S_{i, 1}, \dots, S_{i, k-1} \subset G_i$, we interpret the items in $\bigcup_{i = 1}^q S_{i, \ell}$ as \emph{packed} into the $\ell$-th bin, so that $\sum_{i=1}^q \Sigma(S_{i, \ell})$ is the \emph{load} of the $\ell$-th bin.  
We next describe the full construction of the instance $(G_1,\ldots,G_q)$ before analysing it. \cref{tab:SETH_parameters} summarises notations used in the construction.

\paragraph*{Construction}
First, group the variables into $N/a$ \textbf{\emph{supervariables}} $X_i \coloneq (x_{(i-1)a+1}, \dots, x_{ia})$ for $i \in [N/a]$. 
We say that the supervariable $X_i$ appears in the clause $C_j$ if $X_i$ contains a variable appearing in $C_j$. 
Note that every variable $x_i$ (for $i \in [N]$) appears in at most $\Delta$ clauses, so every supervariable $X_i$ (for $i \in [N/a]$) appears in at most $\Delta a$ clauses. 

We group the supervariables into $(k-1)$ parts such that the supervariable $X_i$ is at the $p(i)$-th position of the $\ell(i)$-th part, where $p(i) \coloneq (i \bmod \frac{N}{(k-1)a})$ and $\ell(i) \coloneq \lceil i \frac{(k-1)a}{N} \rceil$ for all $i \in [N/a]$. 
Intuitively, supervariables of the $\ell$-th part are assigned to the $\ell$-th bin of \pbtwoTargetsMultisets.
%
For every supervariable $X_i$, we construct a \textbf{\emph{dummy item}} $d(i) \in \N$, and for every assignment $\alpha \in \{0, 1\}^a$ of the variables in $X_i$ we construct an \textbf{\emph{assignment item}} $z(i, \alpha) \in \N$.
For $i \in [N/a]$, let $G_i$ be the multiset containing every assignment item $z(i, \alpha)$ and $(k-1)s - 1$ copies of $d(i)$. Note that there are $2^a \leq s$ assignment items for $X_i$, so we can add enough copies of an arbitrary assignment item $z(i, \alpha)$
to $G_i$ until $|G_i| = ks$. 
For every clause $C_j$, we construct a \textbf{\emph{dummy item}} $d(N/a +j) \in \N$, and for every supervariable $X_i$ appearing in $C_j$ and every assignment $\alpha \in \{0, 1\}^a$ of variables in $X_i$ that satisfies $C_j$ we construct an \textbf{\emph{assignment item}} $y(j, i, \alpha) \in \N$. 
For $j \in [M]$, let $G_{N/a + j}$ be the multiset containing every assignment item $y(j, i, \alpha)$ and $(k-1)s-1$ copies of $d(N/a + j)$. 
Note that there are at most $K 2^a \leq s$ assignment items for $C_j$ so we can add enough copies of an arbitrary assignment item to $G_{N/a +j}$ until $|G_{N/a + j}| = ks$. 
We now describe the items and the targets $t_1, \dots, t_{k-1} \in \N$ by describing blocks of their bits, from highest to lowest order. See \cref{fig:SETH_to_GroupedkBP} for an illustration.


\begin{figure}[!t]
    \centering
    \resizebox{\textwidth}{!}{%
        \begin{tikzpicture}[
        node distance=0pt,
        box/.style={rectangle,draw,minimum width=1.1cm,minimum height=1.1cm},
        padding/.style={rectangle,draw,minimum width=.2cm, minimum height=1.1cm, fill=gray!80},
        value/.style={yshift=1cm}]

    \begin{scope}
    \node[box,fill=cb_orange] (I) at (0,0) {$1$};
    \node[padding] (pI) [right=of I] { };
    \node [left=of I] {$z(i, \alpha) \in G_i$};

    \node[box,fill=OI_lightblue] (II) [right=of pI] {$\phantom{\sum_{b=1}^q bs} \makebox[0pt][r]{i}$};
    \node[padding] (pII) [right=of II] { };

    \node[box,fill=OI_yellow] (II1) [right=of pII] {$0$};
    \node[padding] (pII1) [right=of II1] { };
    \node[value] [above of=II1] {1};

    \node[box,fill=OI_yellow] (II2) [right=of pII1] {$0$};
    \node[value] [above of=II2] {2};
    \node[padding] (pII2) [right=of II2] { };

    \node[rectangle,minimum height=.8cm] (II5) [right=of pII2] {$\cdots$};
    \node[padding] (pII5) [right=of II5] { };

    \node[box,fill=OI_yellow] (II3) [right=of pII5] {$\phantom{\frac{N}{(k-1)a} + \sum\limits_{i'  \ : \ \ell(i') = \ell(i)} \gamma_i} \makebox[0pt][r]{1}$};
    \node[value] [above of=II3] {{$\ell(i)$}};
    \node[padding] (pII3) [right=of II3] { };

    \node[rectangle] (II6) [right=of pII3] {$\cdots$};
    \node[padding] (pII6) [right=of II6] { };

    \node[box,fill=OI_yellow] (II4) [right=of pII6] {$0$};
    \node[value] [above of=II4] {$k-1$};
    \node[padding] (pII4) [right=of II4] { };

    \node[value] [above of=pII4] {//};

    \node[box,fill=OI_green] (III1) [right=of pII4] {$0$};
    \node[value] [above of=III1] {$1$};

    \node[box,fill=OI_green] (III2) [right=of III1] {$0$};
    \node[value] [above of=III2] {$2$};

    \node[rectangle] (III5) [right=of III2] {$\cdots$};
    
    \node[box,fill=OI_green] (III3) [right=of III5] {$\Delta a U - \gamma_i B(\alpha)$};
    \node[value] [above of=III3] {{$p(i)$}};

    \node[rectangle] (III6) [right=of III3] {$\cdots$};

    \node[box,fill=OI_green] (III4) [right=of III6] {$0$};
    \node[value] [above of=III4] {$\frac{N}{(k-1)a}$};

    \end{scope}

    \begin{scope}[yshift=-1.5cm]

    \node[box,fill=cb_orange] (I) at (0,0) {$1$};
    \node[padding] (pI) [right=of I] { };
    \node [left=of I] {$y(j, i, \alpha) \in G_{\frac{N}{a} + j} $ };

    \node[box,fill=OI_lightblue] (II) [right=of pI] {$\phantom{\sum_{b=1}^q bs}\makebox[0pt][r]{$\frac{N}{a} + j$}$ };
    \node[padding] (pII) [right=of II] { };

    \node[box,fill=OI_yellow] (II1) [right=of pII] {$0$};
    \node[padding] (pII1) [right=of II1] { };

    \node[box,fill=OI_yellow] (II2) [right=of pII1] {$0$};
    \node[padding] (pII2) [right=of II2] { };

    \node[rectangle] (II5) [right=of pII2] {$\cdots$};
    \node[padding] (pII5) [right=of II5] { };

    \node[box,fill=OI_yellow] (II3) [right=of pII5] {$\phantom{\frac{N}{(k-1)a} + \sum\limits_{i'  \ : \ \ell(i') = \ell(i)} \gamma_i}\makebox[0pt][r]{1}$};
    \node[padding] (pII3) [right=of II3] { };

    \node[rectangle] (II6) [right=of pII3] {$\cdots$};
    \node[padding] (pII6) [right=of II6] { };

    \node[box,fill=OI_yellow] (II4) [right=of pII6] {$0$};
    \node[padding] (pII4) [right=of II4] { };

    \node[box,fill=OI_green] (III1) [right=of pII4] {$0$};

    \node[box,fill=OI_green] (III2) [right=of III1] {$0$};

    \node[rectangle] (III5) [right=of III2] {$\cdots$};
    
    \node[box,fill=OI_green] (III3) [right=of III5] {$\phantom{\Delta a U - \gamma_i B(\alpha)}\makebox[0pt][r]{$B(\alpha)$}$};

    \node[rectangle] (III6) [right=of III3] {$\cdots$};

    \node[box,fill=OI_green] (III4) [right=of III6] {$0$};
    \end{scope}

    \begin{scope}[yshift=-3cm]

    \node[box,fill=cb_orange] (I) at (0,0) {$1$};
    \node[padding] (pI) [right=of I] { };
    \node [left=of I] {$d(b) \in G_b$ };

    \node[box,fill=OI_lightblue] (II) [right=of pI] {$\phantom{\sum_{b=1}^q bs}\makebox[0pt][r]{$b$}$ };
    \node[padding] (pII) [right=of II] { };

    \node[box,fill=OI_yellow] (II1) [right=of pII] {$0$};
    \node[padding] (pII1) [right=of II1] { };

    \node[box,fill=OI_yellow] (II2) [right=of pII1] {$0$};
    \node[padding] (pII2) [right=of II2] { };

    \node[rectangle] (II5) [right=of pII2] {$\cdots$};
    \node[padding] (pII5) [right=of II5] { };

    \node[box,fill=OI_yellow] (II3) [right=of pII5] {$\phantom{\frac{N}{(k-1)a} + \sum\limits_{i'  \ : \ \ell(i') = \ell(i)} \gamma_i}\makebox[0pt][r]{0}$};
    \node[padding] (pII3) [right=of II3] { };

    \node[rectangle] (II6) [right=of pII3] {$\cdots$};
    \node[padding] (pII6) [right=of II6] { };

    \node[box,fill=OI_yellow] (II4) [right=of pII6] {$0$};
    \node[padding] (pII4) [right=of II4] { };

    \node[box,fill=OI_green] (III1) [right=of pII4] {$0$};

    \node[box,fill=OI_green] (III2) [right=of III1] {$0$};

    \node[rectangle] (III5) [right=of III2] {$\cdots$};

    \node[box,fill=OI_green] (III3) [right=of III5] {$\phantom{\Delta a U - \gamma_i B(\alpha)}\makebox[0pt][r]{0} $};

    \node[rectangle] (III6) [right=of III3] {$\cdots$};

    \node[box,fill=OI_green] (III4) [right=of III6] {$0$};
    \end{scope}

    \begin{scope}[yshift=-4.5cm]

    \node[box,fill=cb_orange] (I) at (0,0) {$sq$};
    \node[padding] (pI) [right=of I] { };
    \node [left=of I] {$t_{\ell(i)}$ };

    \node[box,fill=OI_lightblue] (II) [right=of pI] {$\sum_{b=1}^q bs$};
    \node[padding] (pII) [right=of II] { };

    \node[box,fill=OI_yellow] (II1) [right=of pII] {$0$};
    \node[padding] (pII1) [right=of II1] { };

    \node[box,fill=OI_yellow] (II2) [right=of pII1] {$0$};
    \node[padding] (pII2) [right=of II2] { };

    \node[rectangle] (II5) [right=of pII2] {$\cdots$};
    \node[padding] (pII5) [right=of II5] { };

    \node[box,fill=OI_yellow] (II3) [right=of pII5] {$\frac{N}{(k-1)a} + \sum\limits_{i'  \ : \ \ell(i') = \ell(i)} \gamma_i$};
    \node[padding] (pII3) [right=of II3] { };

    \node[rectangle] (II6) [right=of pII3] {$\cdots$};
    \node[padding] (pII6) [right=of II6] { };

    \node[box,fill=OI_yellow] (II4) [right=of pII6] {$0$};
    \node[padding] (pII4) [right=of II4] { };

    \node[box,fill=OI_green] (III1) [right=of pII4] {$\Delta a U$};

    \node[box,fill=OI_green] (III2) [right=of III1] {$\Delta a U$};

    \node[rectangle] (III5) [right=of III2] {$\cdots$};

    \node[box,fill=OI_green] (III3) [right=of III5] {$\phantom{\Delta a U - \gamma_i B(\alpha)}\makebox[0pt][r]{$\Delta a U$} $};
    
    \node[rectangle] (III6) [right=of III3] {$\cdots$};

    \node[box,fill=OI_green] (III4) [right=of III6] {$\Delta a U$};
    \end{scope}

    \begin{scope}[yshift=-6.5cm]

    \node[box,fill=cb_orange] (I) at (0, 0) { };
    \node [align=center, below=.5cm of I] {I-block\\ $10\ceil{\log n}$ bits};

    \node[box,fill=OI_lightblue] (II) [right=2.5cm of I] {};
    \node [align=center,below=.5cm of II] {II-block\\$10 \ceil{\log n}$ bits};

    \node[box,fill=OI_yellow] (III) [right=2.5cm of II] {};
    \node [align=center,below=.5cm of III] {III-block\\ $10 \ceil{\log n}$ bits};


    \node[box,fill=OI_green] (C) [right=2.5cm of III] {};
    \node [align=center,below=.5cm of C] {IV-block \\ $\lceil \log(2\Delta a U) \rceil$ bits};

    \node[padding] (p) [right=3cm of C] { };
    \node [align=center,below=.5cm of p] {Padding block \\ $10\ceil{\log n}$ bits};
    
\end{scope}

\end{tikzpicture}
}%

\caption{Construction of the \pbtwoTargetsMultisets instance in \cref{lem:SETH_to_GroupedkBP}. We show the bit blocks of items and targets, where the highest bits are on the left, for $i \in [N/a]$, $j \in [M]$, $\alpha \in \{0, 1\}^{N/a}$, $b \in [N/a +M] = [q]$.
}
\label{fig:SETH_to_GroupedkBP}
\end{figure}
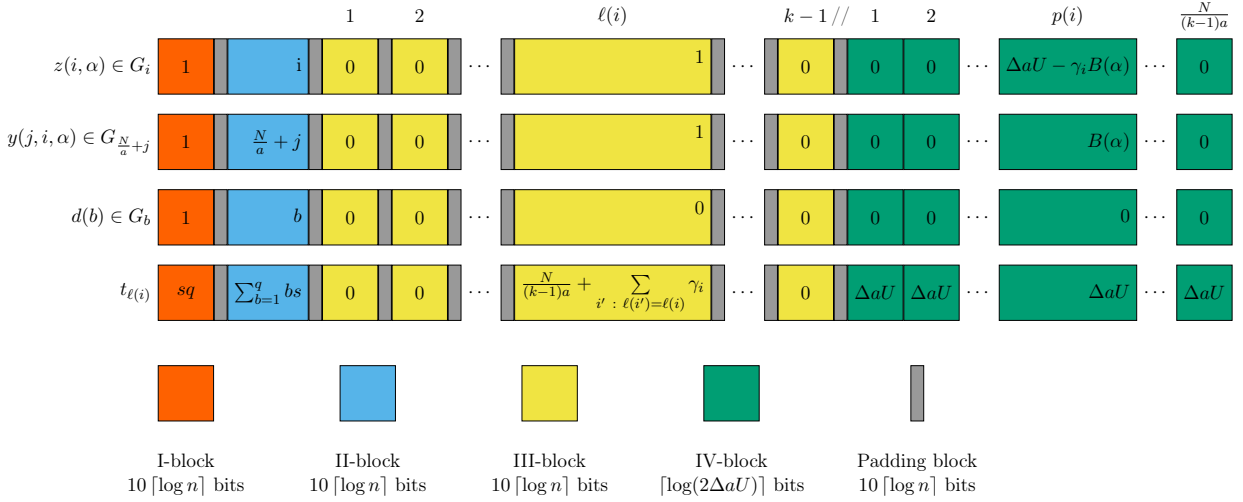

\begin{itemize}
	\item \textbf{I-block} To ensure that items in $G_i$ are all in the same range, in the first block of $10 \ceil{\log n}$ bits every item has value 1 and every target has value $sq$. 
	
	\item \textbf{II-block} In the next block of $10 \ceil{\log n}$ bits, each item in $G_b$ has value $b$ for any $b \in [q]$, and each target has value $\sum_{b \in [q]} b s$. This will ensure that at least one assignment item per group $G_b$ is packed. 
	 
	\item \textbf{III-blocks} The following $k-1$ blocks of $10 \ceil{\log n}$ bits ensure that assignment items involving the supervariable $X_i$ can only be packed in the $\ell(i)$-th bin. 
	Assignment items $z(i, \alpha)$ and $y(j, i, \alpha)$ take value 1 in the $\ell(i)$-th III-block and 0 in any other III-block.  All dummy items take value 0 in all III-blocks. 
	The target $t_\ell$ takes value $\frac{N}{(k-1)a} + \sum_{i : \ell(i) = \ell} \gamma_i$ in the $\ell$-th III-block and 0 in any other III-block, for all $\ell \in [k-1]$.
	
	\item \textbf{IV-blocks} The remaining $\frac{N}{(k-1)a}$ blocks of $\ceil{\log(2 \Delta a U)}$ bits guarantee that packed assignment items are consistent. 
	Every assignment item $z(i, \alpha)$ takes value $\Delta a U - \gamma_i B(\alpha)$ on the $p(i)$-th IV-block and value 0 on any other IV-block. 
	Every assignment item $y(j, i, \alpha)$ takes value $B(\alpha)$ on the $p(i)$-th IV-block and value 0 on any other IV-block. Dummy items take value 0 on all IV-blocks and targets take value $\Delta a U$ on all IV-blocks.
	
	\item \textbf{Padding blocks} Finally, we add $10 \ceil{\log n}$ \emph{padding bits} of value 0 after each I-, II- and III-block.
	
\end{itemize}

\paragraph*{Correctness}
We now prove the correctness of the reduction, which we split into two claims.
\begin{claim}
	Assume that $\phi$ is satisfiable. Then 
	there exists a tuple $(\gamma_1, \dots, \gamma_{N/a}) \in \{0, \dots, \Delta a\}^{N/a}$ with $\sum_{i \in [N/a]} \gamma_i = M$ such that
	the constructed instance of \pbtwoTargetsMultisets is in the \ref{enum:pb2targets_yes} case.

\end{claim}
\begin{claimproof}
We recall that \pbtwoTargetsMultisets is in the \ref{enum:pb2targets_yes} case if for every $b \in [q]$, there are disjoint subsets $S_{b, 1}, \dots, S_{b, k-1} \subset G_b$ satisfying the following \cref{enum:pb2targets_1a_SETH,enum:pb2targets_1b_SETH}:
\begin{yesenum}
	\item
\begin{yesenum}
	\item \label{enum:pb2targets_1a_SETH} $|S_{b, \ell}| = s$ for all $b \in [q]$ and $\ell \in [k-1]$,
	\item \label{enum:pb2targets_1b_SETH} $\sum_{b \in [q]} \Sigma(S_{b, \ell}) = t_\ell$ for all $\ell \in [k-1]$.
\end{yesenum}
\end{yesenum}

Suppose that $\phi$ admits a satisfying assignment $\alpha \in \{0, 1\}^N$. Let $\alpha_1, \dots, \alpha_{N/a} \in \{0, 1\}^a$ be the corresponding assignment of the supervariables $X_1, \dots, X_{N/a}$. For each clause we fix a supervariable whose assignment satisfies the clause, i.e.~we pick a function $f : [M] \mapsto [N/a]$ mapping every $j \in [M]$ to some $i \in [N/a]$ such that $X_i$ contains a variable whose assignment satisfies $C_j$. Let $\gamma_i$ be the cardinality of the inverse image of $f$ for $i \in [N/a]$, i.e.~the number of clauses that are satisfied by a variable in $X_i$ according to $f$. Since every variable appears in at most $\Delta$ clauses, we can bound $\gamma_1, \dots, \gamma_{N/a} \in \{0, \dots, \Delta a\}$, and since we fixed a satisfying variable for each clause we also have $\sum_{i \in [N/a]} \gamma_i = M$.
In what follows, we focus on the instance constructed for $\gamma_1, \dots, \gamma_{N/a}$. 

We construct the following disjoint subsets $S_{b, 1}, \dots, S_{b, k-1} \subset G_b$ for every $b \in [q]$. 
For every $i \in [N/a]$, let $S_{i, \ell(i)}$ be the multiset containing the assignment item $z(i, \alpha_i)$ and $s -1$ copies of the dummy item $d(i)$, and let $S_{i, \ell'}$ be the multiset containing $s$ copies of $d(i)$ for every $\ell' \in [k-1], \ell' \neq \ell(i)$. 
For every $j \in [M]$, let $S_{N/a + j, \ell(f(j))}$ be the multiset containing the assignment item $y(j, f(j), \alpha_{f(j)})$ and $s - 1$ copies of the dummy item $d(N/a + j)$, and let $S_{N/a + j, \ell'}$ be the multiset containing $s$ copies of $d(N/a +j)$ for every $\ell' \in [k-1], \ell' \neq \ell(f(j))$. 
Note that this is possible because for every $b \in [q]$ there are $(k-1)s -1$ copies of $d(b)$ in $G_b$.

Clearly, we have that $|S_{b, \ell}| = s$ for every $b \in [q]$ and $\ell \in [k-1]$, i.e.~\cref{enum:pb2targets_1a_SETH} holds for the subsets $S_{b, \ell}$. 
Fix $\ell \in [k-1]$ and let $L_\ell \coloneq \sum_{i = 1}^q \Sigma(S_{i, \ell})$ be the load of the $\ell$-th bin. We verify \cref{enum:pb2targets_1b_SETH}, i.e.~$L_\ell = t_\ell$ for every $\ell \in [k-1]$, by looking at the value of $L_\ell$ on every bit block.
For every $b \in [q]$, we have $|S_{b, \ell}| = s$, and every item has value 1 on the I-block, so $L_\ell$ has value $sq$ on the I-block, which coincides with $t_\ell$.
In the II-block, items in $G_b$ have value $b$, so $L_\ell$ has value $\sum_{b \in [q]} bs$ on the II-block, which coincides with $t_\ell$. 
Dummy items do not contribute to the blocks III and IV since they take value 0 on all these blocks. 
Hence, by construction of the subsets, the only items contributing to the value of $L_\ell$ in the III- and IV-blocks are assignment items in 
$S_{i, \ell}$ for $i \in [N/a]$ and in $S_{N/a + j, \ell}$ for $j \in [M]$ such that $f(j) = i$ and $\ell = \ell(i)$. Each such item contributes with value $1$ on the $\ell$-th III-block and value 0 on every other III-block.
There are $\frac{N}{(k-1)a}$ choices of $i \in [N/a]$ such that $\ell = \ell(i)$, and for each $i \in [N/a]$ there are $\gamma_i$ choices of $j \in [M]$ such that $f(j)=i$. So the value of $L_\ell$ in the $\ell$-th III-block is $\frac{N}{(k-1)a} + \sum_{i : \ell(i) = \ell} \gamma_i$, and 0 on every other III-block. 
Again, this coincides with the value of the III-blocks of $t_\ell$.
Finally, for every $p \in [\frac{N}{(k-1)a}]$, the only items contributing to the $p$-th IV-block of $L_\ell$ are assignment items in $S_{i, \ell}$ for $i \in [N/a]$ and in $S_{N/a + j, \ell}$ for $j \in [M]$ such that $p(i) = p$, $f(j) = i$ and $\ell = \ell(i)$. Note that, by the definition of $\ell(\cdot)$ and $p(\cdot)$, there is a unique $i \in [N/a]$ such that $\ell(i) = \ell$ and $p(i) = p$, and for that $i$ there are  $\gamma_i$ choices of $j \in [M]$ such that $f(j) = i$. On the $p$-th IV-block, the corresponding variable assignment item $z(i, \alpha_i) \in S_{i, \ell}$ takes value $\Delta a U - \gamma_i B(\alpha_i)$, and the corresponding $\gamma_i$ assignment items $y(j, i, \alpha_i)$ take value $B(\alpha_i)$. Hence, for every $p \in [\frac{N}{(k-1)a}]$, the $p$-th IV-block of $L_\ell$ has value $\Delta a U$, which coincides with $t_\ell$. 
In total, we indeed have that $L_\ell = \sum_{b \in [q]} \Sigma(S_{b, \ell}) = t_\ell$ for every $\ell \in [k-1]$, and thus $(G_1, \dots, G_q)$ is in the \ref{enum:pb2targets_yes} case of \pbtwoTargetsMultisets.
\end{claimproof}

It remains to prove the converse direction.
\begin{claim}
	Assume that $\phi$ is not satisfiable. Then, for every $(\gamma_1, \dots, \gamma_{N/a}) \in \{0, \dots, \Delta a\}^{N/a}$ with $\sum_{i \in [N/a]} \gamma_i = M$ the constructed instance of \pbtwoTargetsMultisets is in the \ref{enum:pb2targets_no} case.
\end{claim}
\begin{claimproof}
We prove the contrapositive. Suppose that for some $(\gamma_1,\ldots,\gamma_{N/a}) \in \{0, \dots, \Delta a\}^{N/a}$ with $\sum_{i \in [N/a]} \gamma_i = M$ the constructed instance $(G_1, \dots, G_q)$ is not in the \ref{enum:pb2targets_no} case of \pbtwoTargetsMultisets, i.e.~for every $b \in [q]$ there exist disjoint subsets $S_{b, 1}, \dots, S_{b, k-1} \subset G_b$ satisfying the following \cref{enum:pb2targets_2a_SETH,enum:pb2targets_2b_SETH,enum:pb2targets_2c_SETH}:
\begin{noenum}
	\item
	\begin{noenum}
		\item \label{enum:pb2targets_2a_SETH} $|S_{1, \ell}| + \dots + |S_{b, \ell}| \leq b \cdot s$ for all $b \in [q]$ and $\ell \in [k-1]$,
		\item \label{enum:pb2targets_2b_SETH} $\sum_{b \in [q], \ell \in [k-1]} |S_{b, \ell}| \geq (k-1) q s$,
		\item \label{enum:pb2targets_2c_SETH} $\sum_{b \in [q]} \Sigma(S_{b, \ell}) = t_\ell$ for all $\ell \in [k-1]$.
	\end{noenum}
\end{noenum}
We show that there exists an assignment $\alpha \in \{0, 1\}^N$ satisfying $\phi$. 
Let $L_\ell \coloneq \sum_{b \in [q]} \Sigma(S_{b, \ell})$ be the load of the $\ell$-th bin. 
By \cref{enum:pb2targets_2c_SETH}, we have $L_\ell = t_\ell$ for every $\ell \in [k-1]$. 
Since all padding blocks have value 0 and length greater than $\log n$, we deduce that there is no overflow from lower bits over any padding block. It follows that every I-block has the same value in $L_\ell$ and in $t_\ell$ for each $\ell \in [k-1]$. The same holds for every III-block, for every II-block, and for the union of all IV-blocks.
We analyse the blocks sequentially by starting from the higher order bits. We start by proving the following property based on the II-blocks.
\begin{pty}\label{cla:intermediate_Sbls}
	For every $b \in [q]$ and $\ell \in [k-1]$, we have $|S_{b, \ell}| = s$.
\end{pty}
\begin{claimproof}
For every $b \in [q]$ and $\ell \in [k-1]$ let $f(b, \ell) \coloneq s - |S_{b, \ell}|$. \cref{enum:pb2targets_2a_SETH} can be rewritten as:
\begin{align}
	&\sum_{b = 1}^{c} f(b, \ell) \geq 0 \quad \text{ for all } c \in [q] \text{ and }\ell \in [k-1].\label{eq:seth_sbl1}
	\intertext{Together, \Cref{enum:pb2targets_2a_SETH,enum:pb2targets_2b_SETH} imply that $\sum_{\ell \in [k-1]} \sum_{b \in [q]} |S_{b, \ell}| = (k-1)sq$, which is equivalent to}
	&\sum_{\ell = 1}^{k-1} \sum_{b = 1}^q f(b, \ell) = 0.\label{eq:seth_sbl2}
\intertext{Since the II-blocks of $t_\ell$ and $L_\ell$ have the same value, we have $\sum_{b = 1}^q b |S_{b, \ell}| = \sum_{b = 1}^q bs$ which is equivalent to $\sum_{b=1}^q b \cdot f(b, \ell) = 0$. This can be rewritten as:}
&\sum_{c=1}^q \sum_{b = c}^q f(b, \ell) = 0\quad \text{ for all } \ell \in [k-1].\label{eq:seth_sbl3}
\end{align}
We claim that~\eqref{eq:seth_sbl1},\eqref{eq:seth_sbl2} and \eqref{eq:seth_sbl3} imply $f(b, \ell) = 0$ for all $b \in [q]$ and $\ell \in [k-1]$.
Indeed, \eqref{eq:seth_sbl1} with $c=q$ gives $\sum_{b = 1}^{q} f(b, \ell) \geq 0$ for all $\ell \in [k-1]$, which combined with \eqref{eq:seth_sbl2} implies that $\sum_{b = 1}^{q} f(b, \ell) = 0$ for all $\ell \in [k-1]$ (as the sum of non-negative integers is $0$ if and only if all the integers are $0$). By using~\eqref{eq:seth_sbl1} again, we obtain
$\sum_{b = c}^{q} f(b, \ell) = \sum_{b = 1}^{q} f(b, \ell) - \sum_{b = 1}^{c-1} f(b, \ell)\leq 0$ for all $c \in [q]$ and $\ell \in [k-1]$. 
Next, we plug this into \eqref{eq:seth_sbl3} to conclude that $\sum_{b=c}^{q} f(b, \ell) = 0$ for all $c \in [q]$ and $\ell \in [k-1]$ (as the sum of non-positive numbers equals $0$ if and only if each of them is $0$). This implies that $f(c, \ell) = \sum_{b = c}^{q} f(b, \ell) - \sum_{b = c+1}^{q} f(b, \ell) = 0 - 0 = 0$ for all $c \in [q]$ and $\ell \in [k-1]$. Hence, $|S_{b,\ell}| = s$ for all $b \in [q]$ and $\ell \in [k-1]$.
\end{claimproof}

Since there are at most $(k-1)s -1$ dummy items per group, \Cref{cla:intermediate_Sbls} implies that \emph{at least one assignment item per group $G_b$ is packed}. 
By construction of the III-blocks, the total number of assignment packed items, i.e.~assignment items in $\bigcup_{b \in [q], \ell \in [k-1]} S_{i, \ell}$, is equal to the sum of all III-blocks of $\sum_{\ell = 1}^{k-1} L_\ell$. This is equal to the sum of all III-blocks of $\sum_{\ell = 1}^{k-1} t_\ell$, which is 
$$\sum_{\ell = 1}^{k-1} \Big(\frac{N}{(k-1)a} + \sum_{i : \ell(i) = \ell} \gamma_i\Big) = \frac{N}{a} + \sum_{i = 1}^{N/a} \gamma_i = \frac{N}{a} + M = q.$$ 
Hence, \emph{exactly one assignment item per group is packed}. 
For $i \in [N/a]$ and $j \in [M]$, denote by $z(i, \alpha_i) \in \bigcup_{\ell \in [k-1]} S_{i, \ell}$ and $y(j, i_j, \beta_j) \in \bigcup_{\ell \in [k-1]} S_{N/a + j, \ell}$ the unique packed assignment item of groups $G_i$ and $G_{N/a +j}$, respectively. By the values of the III-blocks, item $z(i, \alpha_i)$ can only be packed in $S_{i, \ell(i)}$ and item $y(j, i_j, \beta_j)$ can only be packed in $S_{N/a + j, \ell(i_j)}$. 

Finally, we analyse the IV-blocks of $L_\ell$ for every $\ell \in [k-1]$. By induction on $p \in [\frac{N}{(k-1)a}]$ in decreasing order, we show that there is no overflow among the IV-blocks of $L_\ell$. Along the way, we will show that $(\alpha_1,\ldots,\alpha_{N/a})$ is a satisfying assignment of $\phi$.
There is trivially no overflow into the $\frac{N}{(k-1)a}$-th IV-block. So consider $p \in [\frac{N}{(k-1)a}]$ and assume that there is no overflow into the $p$-th, $(p+1)$-th, $\dots$, $\frac{N}{(k-1)a}$-th IV-blocks of $L_\ell$. Then by construction, the only assignment items contributing to the value of the $p$-th IV-block of $L_\ell$ are $z(i, \alpha_i)$ and $y(j, i_j, \beta_j)$ such that $\ell = \ell(i) = \ell(i_j)$ and $p = p(i) = p(i_j)$. 
Note that, by the definition of $\ell(\cdot)$ and $p(\cdot)$, there is a unique $i \in [N/a]$ such that $\ell = \ell(i)$ and $p = p(i)$. Fix this value $i \in [N/a]$. 
Then assignment items contributing to the $p$-th IV block of $L_\ell$ are $z(i, \alpha_i)$ and all $y(j, i_j, \beta_j)$ with $i = i_j$. Let $Y_i \coloneq \{j \ : \ y(j, i_j, \beta_j) \in S_{N/a +j, \ell(i_j)} \text{ and } i_j = i\}$.
The $p$-th IV-blocks of items in $L_\ell$ sum up to
$\Delta a U - \gamma_i B(\alpha_i) + \sum_{j \in Y_i} B(\beta_j)$. Since every variable is contained in at most $\Delta$ clauses, there are at most $\Delta a$ clauses intersecting the supervariable $X_i$, i.e.~$|Y_i| \in \{0, 1, \dots, \Delta a\}$. Since $B \subseteq [U]$, we have $\Delta a U - \gamma_i B(\alpha_i) + \sum_{j \in Y_i} B(\beta_j) < 2 \Delta a U$. The IV-blocks have length $\lceil \log(2\Delta aU)\rceil$, so we deduce that there is no overflow from the $p$-th IV-block of $L_\ell$ into the $(p-1)$-th IV-block of $L_\ell$. In particular, the $p$-th IV-block of $L_\ell$ has value $\Delta a U - \gamma_i B(\alpha_i) + \sum_{j \in Y_i} B(\beta_j)$. As it is equal to the $p$-th IV-block of $t_\ell$, which has value $\Delta a U$, we deduce that 
$$\gamma_i B(\alpha_i) = \sum_{j \in Y_i} B(\beta_j).$$ 
Recall that $\gamma_i, |Y_i| \in \{0, 1, \dots, \Delta a\}$.
Since $B$ is strong $\Delta a$-average-free, the above equality implies that $\gamma_i = |Y_i|$ and $\beta_j = \alpha_i$ for every $j \in Y_i$. 
In other words, for every packed assignment item $y(j, i_j, \beta_j)$ we have $\alpha_{i_j} = \beta_j$ where $z(i_j, \alpha_{i_j})$ is also packed, i.e.~the packed assignment items are all consistent with the assignment $(\alpha_1,\ldots,\alpha_{N/a})$ of the supervariables $X_1,\ldots,X_{N/a}$. Since the item $y(j, i_j, \beta_j)$ is constructed only if $\beta_j$ satisfies the clause $C_j$, it follows that for every $j \in [M]$ there exists $i_j \in [N/a]$ such that $\alpha_{i_j}$ satisfies the clause $C_j$. In other words, $\phi$ is satisfied by the assignment $\alpha \in \{0, 1\}^N$ of variables $x_1, \dots, x_N$ obtained by aggregating the assignments $\alpha_1,\ldots,\alpha_{N/a}$ of supervariables $X_1,\ldots,X_{N/a}$.
\end{claimproof}

\paragraph{Parameters of the reduction}
It remains to prove that we have the desired size bounds. 
In total, the number of bits used to describe the constructed items is $(20(k-1) +40)\ceil{\log n} + \frac{N}{(k-1)a} \ceil{\log(2\Delta aU)}$. Let:
\begin{displaymath}
	W\coloneqq
	(1 + 2^{-10 \lceil \log n \rceil}) \cdot 2^{(20(k-1) +30)\ceil{\log n} + \frac{N}{(k-1)a} \ceil{\log(2\Delta aU)}}.
\end{displaymath}
Note that $W \in \N$. We claim that every constructed item is contained in $[W(1-1/n^{10}), W]$. Indeed, since the I-block of every item has value 1, all items have value at least 
$$2^{(20(k-1) + 30)\ceil{\log n} + \frac{N}{(k-1)a} \ceil{\log(2\Delta aU)}} \ge W (1 - 2^{-10 \lceil \log n \rceil}) \ge W (1-1/n^{10}).$$
Furthermore, since there are at least $10\ceil{\log n}$ padding bits between the I-block and the next 1-bits, all items have value at most $W$.

Recall that $s= \max \{\Delta a 2^a, K 2^a\}$,
$q = N/a + M$, and $M = \Delta N$, so $n = ksq \le c_1 \cdot N$, where $c_1 = c_1(K,k,\Delta,a)$ is some constant depending only on $K$, $k$, $\Delta$ and $a$. Recall that $U =  2^{a (1+\mu)} (\Delta a)^{\Oh(1/\mu)}$, so there exists an absolute constant $c_2 \geq 1$ such that $\ceil{\log(2\Delta aU)} \le (1+\mu) a + c_2 \log(\Delta a)/\mu$. Hence, assuming $N \ge 2$, we can bound:
\begin{align} \label{eq:bound_on_W}
	W \le  2^{(1+\mu) \frac{N}{k-1}} \cdot (\Delta a)^{\frac{c_2}{\mu a} \cdot \frac{N}{k-1}} \cdot N^{c_3},
\end{align}
for some constant $c_3 = c_3(K,k,\Delta,a)$ depending only on $K$, $k$, $\Delta$ and $a$. 

At the beginning of the proof we assumed that for any $\delta > 0$ \pbtwoTargetsMultisets can be solved in time $\Oh(2^{ \delta n} W^{k-1 - \eps}) = \Oh(2^{ \delta n} W^{(k-1)(1 - \bar{\eps})})$, where $\bar{\eps} = \eps/(k-1)$. Our reduction then implies that $K$-SAT can be solved in time
\begin{align*}
	2^{\lambda N} |\Phi|^{\Oh(1)} \cdot (\Delta a+1)^{N/a} \cdot 2^{\delta n} W^{(k-1)(1 - \bar{\eps})},
\end{align*}
where $2^{\lambda N} |\Phi|^{\Oh(1)}$ is the time required by the Sparsification
Lemma (\Cref{lem:sparsification}) and $(\Delta a+1)^{N/a}$ bounds the number of choices of $(\gamma_1, \dots, \gamma_{N/a})$.
We simplify this running time by plugging in $n \le c_1 N$ and (\ref{eq:bound_on_W}), loosely bounding $\Delta a + 1 \le (\Delta a)^2$, and noting that $N \le |\Phi|$ and thus $N^{(k-1) c_3} = |\Phi|^{\Oh(1)}$. This yields:
\begin{align*}
	2^{\lambda N} \cdot 2^{\delta c_1 N} \cdot 2^{(1-\bar{\eps}) (1+\mu) N} \cdot (\Delta a)^{(1-\bar{\eps}) \frac{c_2}{\mu a} \cdot N + \frac {2N} a} \cdot |\Phi|^{\Oh(1)}.
\end{align*}
Setting $c'_2 \coloneq c_2 + 2 \ge 1$, which is again an absolute constant, we can further bound the time by:
\begin{align*}
	2^{\lambda N} \cdot 2^{\delta c_1 N} \cdot 2^{(1-\bar{\eps}) (1+\mu) N} \cdot (\Delta a)^{\frac{c'_2}{\mu a} \cdot N} \cdot |\Phi|^{\Oh(1)}.
\end{align*}
To conclude the proof, we set $\eps_{\text{SAT}} \coloneq \bar{\eps} / 2$ and observe that assuming SETH there exists $K \ge 3$ such that $K$-SAT cannot be
solved in time $2^{(1 - \eps_{\text{SAT}}) N} |\Phi|^{\Oh(1)}$.
Recall that we set $\lambda = \mu = \bar{\eps}/10$. Based on $K$ and $\lambda$, we obtain $\Delta = \Delta(K, \lambda)$ from the Sparsification Lemma (\Cref{lem:sparsification}). 
Finally, we set $a \coloneq \max\{ \Delta, \lceil (200 c'_2 / \bar{\eps}^2)^2 \rceil \}$ and $\delta \coloneq \bar{\eps} / (10 c_1)$. 
We bound each factor of the time bound as follows: 
\begin{itemize}
	\setlength\itemsep{0em}
	\item $2^{\lambda N} \le 2^{\bar{\eps} N/10}$ because $\lambda = \bar{\eps}/10$. 
	\item $2^{\delta c_1 N} \le 2^{\bar{\eps} N/10}$ because $\delta = \bar{\eps}/(10c_1)$.
	\item $2^{(1-\bar{\eps})(1+\mu)N} = 2^{(1-\bar{\eps})(1+\bar{\eps}/10)N} \le 2^{(1 - 0.9\bar{\eps}) N}$ as $\mu = \bar{\eps}/10$.
	\item Using the fact that $x \le 2^{\sqrt{x}}$ for any $x \ge 20$, together with $a \ge \Delta$ we can bound $\Delta a \le a^2 \le 2^{2 \sqrt{a}}$. This yields $(\Delta a)^{c'_2 N / (\mu a)} \le 2^{2 c'_2 N / (\mu \sqrt{a})}$. By plugging in $\mu = \bar{\eps}/10$ and $a \ge (200 c'_2 / \bar{\eps}^2)^2$, we obtain an upper bound of $2^{\bar{\eps} N / 10}$.
\end{itemize}
In total, the running time is bounded by:
\begin{align*}
	2^{(1 - 0.6\bar{\eps}) N} \cdot |\Phi|^{\Oh(1)} \le 2^{(1 - \eps_{\text{SAT}}) N} \cdot |\Phi|^{\Oh(1)},
\end{align*}
which contradicts the assumption that $K$-SAT cannot be solved in time $2^{(1-\eps_{\text{SAT}}) N} |\Phi|^{\Oh(1)}$, and thus contradicts SETH.
\end{proof}

\section{Equivalences of Weak Grouped \boldmath$k$-way Partition}
\label{sec:equivalences_pbtwo}

Since in the previous section we proved SETH-hardness of \pbtwoTargetsMultisets, we now lift the targets and multisets assumptions to establish SETH-hardness of \pbtwo. We first prove that \pbtwoTargetsMultisets is equivalent to \pbtwoMultisets which in turn is proven to be equivalent to \pbtwo (see the orange-highlighted reductions in \cref{fig:graph_reduction}). We refer to \cref{app:problem-definitions} for the detailed problem definitions.

The following terminology is used. 
We consider having $k$ bins. 
Given (multi)sets of integers $G_1, \dots, G_q$ and collection of disjoint subsets $S_{i, 1}, \dots, S_{i, k-1} \subset G_i$, 
we interpret the items in $\bigcup_{i = 1}^q S_{i, \ell}$ as \emph{packed} into the $\ell$-th bin. Then we say that $\sum_{i=1}^q \Sigma(S_{i, \ell})$ is the \emph{load} of the $\ell$-th bin.  The \emph{average load} of the given integers is $\frac{1}{k} \sum_{i=1}^q \Sigma(G_i)$.


\begin{lemma}\label{lem:pbtwotargets_to_pbtwomulti}
	\pbtwoTargetsMultisets and \pbtwoMultisets are equivalent under parameter-preserving reductions. 
\end{lemma}
\begin{proof}
	\pbtwoMultisets is a special case of \pbtwoTargetsMultisets where the targets are all equal to the average load of the given integers, so it suffices to show a reduction from \pbtwoTargetsMultisets to \pbtwoMultisets.

	Let $G_1, \dots, G_q \subset \mathbb N \cap [W(1-1/n^{10}), W]$ and $t_1, \dots, t_{k-1} \in \mathbb N$ be an instance of  \pbtwoTargetsMultisets with group size $|G_i| = sk$ and total number of integers $n \coloneq ksq$. 
	Without loss of generality, we may assume that $n$ is large enough and $W \geq n^{10}$ as otherwise the instance can be solved in polynomial time.
	To construct an equivalent instance of \pbtwoMultisets we add new items that, when distributed among $k-1$ bins of loads $t_1, \dots, t_{k-1}$, increase the load of every bin to the average load of the constructed items. 
	More precisely, let $\mu \coloneq \frac{1}{k}\sum_{i = 1}^q \Sigma(G_i)$ be the average load of the given integers and let $t_{k} \coloneq k \mu - (t_1 +\dots + t_{k-1})$. 
		For each $i \in [q]$, let $G_i'$ be the multiset containing the integer
	\begin{align*}
		e' &\coloneq Wn^{20} + (e - W - 1)n^3 &\\
		\intertext{for every $e \in G_i$ with the same multiplicity. Let $G_{q+1}'$ be the multiset containing the \emph{filling items} $f_1, \dots, f_k$ and $k (s-1)$ copies of the \emph{dummy item} $d$ defined as}
		f_\ell &\coloneq Wn^{20} + (\mu - t_\ell - nW) n^3 + 1 & \text{for all } \ell \in [k] \\
		d &\coloneq W n^{20} - n^3 + n. &
	\end{align*}
	There are in total $m \coloneq n+sk \leq 2n$ constructed items.
	Define $W' \coloneq Wn^{20}$. We show that $G_i' \subset [W'(1-1/m^{10}), W']$ for every $i \in [q+1]$. For every $i \in [q]$, observe that, since every $e \in G_i$ is at most $W$, we have $e' \leq Wn^{20} -n^3 \leq W'$. On the other hand, since every $e \in G_i$ is at least $W(1-1/n^{10})$, we have 
	\begin{align*}
	e' 
	&\geq Wn^{20} + \left(W(1-1/n^{10}) - W - 1\right)n^3 \\
	&= Wn^{20} -W \frac{n^3}{n^{10}} - n^3 \\	
	&= Wn^{20}\left( 1 - \frac{n^3}{n^{20} n^{10}} - \frac{n^3}{Wn^{20}} \right)	 \\
	&\geq Wn^{20}\left(1 - 2\frac{n^3}{n^{30}} \right)	\tag{since $W \geq n^{10}$} \\
	&\geq Wn^{20}\left(1 - \frac{1}{m^{10}} \right)	\tag{since $n \geq 2$ and $2n \geq m$}.
	\end{align*}
	The dummy items can be bounded similarly: using the same assumptions that $n$ is large enough, $W \geq n^{10}$ and the observation that $2n \geq m$, we obtain the bounds $d \leq Wn^{20}$ and $d \geq Wn^{20} - n^3 = Wn^{20} \left(1 - \frac{n^3}{Wn^{20}}\right) \geq Wn^{20}(1 - 1/m^{10})$.
	It remains to bound the filling items. Note that we can assume that $t_1, t_2, \dots, t_{k}, \mu \in [sq W (1 - 1/n^{10}), sqW]$ as otherwise the instance is trivially in the \ref{enum:pb2targets_no} case. 
	In particular, for every $\ell \in [k]$ we have $|\mu - t_\ell| \leq sq W /n^{10} \leq sqW$. So for every $\ell \in [k]$
	\begin{align*}
	f_\ell \leq W n^{20} + (sqW - n W)n^3 + 1 \leq Wn^{20} - n^3 + 1 \leq d \leq Wn^{20}
	\end{align*}
	where the second inequality follows from $n= ksq \geq 2sq \geq sq+1/W$.
	On the other hand, since $n = ksq \geq 2$ and $2n \geq m$, we have
	$$
	f_\ell 
		\geq W n^{20} + (-sqW - n W)n^3 
		\geq W n^{20} -2W n^4 
		= W n^{20}\left( 1 - \frac{2n^4}{n^{10}}\right) 
		= W n^{20}\left( 1 - 1 / m^{10}\right).
	$$
	Hence, all the constructed integers are contained in the interval $[W'(1-1m^{10}), W']$. Therefore, $G_1', \dots, G_{q+1}' \subset \N \cap [W'(1-1/m^{10}), W']$ is an instance of \pbtwoMultisets with total number of items $m = n^{\Oh(1)}$. Since the construction takes time $\Oh(n)$ and $W' = W n^{\Oh(1)}$ it is parameter-preserving. 
	We observe that 
	\begin{align*}
		\Sigma(G_{q+1}') &= \sum_{\ell=1}^k f_\ell + k(s-1) \cdot d\\
		&= kWn^{20} + (k\mu - \sum_{i=1}^k t_\ell - knW)n^3 + k + k(s-1)Wn^{20} - k(s-1)n^3 + k(s-1)n\\
		&= ksWn^{20} + (-knW - k(s-1))n^3 + k(s-1)n +k \tag{by the definition of $t_k$}
	\end{align*}
	and thus the average load $\mu' 
		\coloneq \frac{1}{k}\sum_{i=1}^{q+1} \Sigma(G_i)$ of all constructed integers is
	\begin{align*}
		\mu' 
		&= \frac{n}{k}Wn^{20}  + (\mu - sq W - sq)n^3 + \frac{1}{k}\Sigma(G_{q+1}') \\
		&= \frac{m}{k}Wn^{20}  + \left(\mu - sq W - sq - nW - (s-1)\right)n^3 + (s-1)n + 1.
	\end{align*}

	We show equivalence between the two instances. Assume that $(G_1, \dots, G_q, t_1, \dots, t_{k-1})$ is in the \ref{enum:pb2targets_yes} case, i.e.~there exist disjoint subsets $S_{i, 1}, \dots, S_{i, k-1} \subset G_i$ for every $i \in [q]$, such that the following
	\cref{enum:pb2targets_1a_61,enum:pb2targets_1b_61} hold:
	\begin{yesenum}
		\item
	\begin{yesenum}
		\item \label{enum:pb2targets_1a_61} $|S_{i, \ell}| = s$ for all $i \in [q]$ and $\ell \in [k-1]$,
		\item \label{enum:pb2targets_1b_61} $\sum_{i \in [q]} \Sigma(S_{i, \ell}) = t_\ell$ for all $\ell \in [k-1]$.
	\end{yesenum}
	\end{yesenum}
	We want to show that in that case $(G_1', \dots, G_{q+1}')$ is in the \ref{enum:pb2_yes} case as well, i.e.~there exists disjoint subsets $S_{i, 1}', \dots, S_{i, k-1}' \subset G_i$ for every $i \in [q+1]$ satisfying the following \cref{enump:pb2_1a_61,enump:pb2_1b_61}:
	\begin{yesenump}
		\item
	\begin{yesenump}
		\item \label{enump:pb2_1a_61} $|S_{i, \ell}'| = s$ for all $i \in [q+1]$ and $\ell \in [k-1]$,
		\item \label{enump:pb2_1b_61} $\sum_{i \in [q+1]} \Sigma(S_{i, \ell}') = \mu'$ for all $\ell \in [k-1]$.
	\end{yesenump}
	\end{yesenump}
	For $i \in [q]$ and $\ell \in [k-1]$, let $S_{i, \ell}' \subset G_i'$ be the corresponding subset of items in the constructed instance $(G_1', \dots, G_{q+1}')$. For every $\ell \in [k-1]$, define $S_{q+1, \ell}'$ as the subset of $G_{q+1}'$ containing the filling item $f_\ell$ and $s-1$ copies of the dummy item $d$. Note that there are enough copies of $d$ such that $S_{q+1,1}', \dots, S_{q+1, k-1}'$ are disjoint. Then 
	\cref{enum:pb2targets_1a_61} implies \cref{enump:pb2_1a_61}. Furthermore, for every $\ell \in [k-1]$ we have
	\begin{align*}
		\sum_{i = 1}^{q+1} \Sigma(S_{i, \ell}') 
		&= sq Wn^{20} + \left( \sum_{i=1}^q \Sigma(S_{i, \ell}) - sqW - sq\right)n^3 + f_\ell + d \cdot (s-1) \\ 
		&= sq Wn^{20} + \left( t_\ell - sqW - sq\right)n^3 + f_\ell + d \cdot (s-1) \tag{by \cref{enum:pb2targets_1b_61}} \\ 
		&= s(q+1)Wn^{20} + \left(\mu - sqW - sq -nW - (s-1) \right)n^3 + (s-1)n + 1  = \mu'.
	\end{align*}
	So both \cref{enump:pb2_1a_61,enump:pb2_1b_61} are indeed satisfied.

	Conversely, we show by contraposition that if $(G_1, \dots, G_q, t_1, \dots, t_{k-1})$ is in the \ref{enum:pb2targets_no} case then $(G_1', \dots, G_{q+1}')$ is in the \ref{enum:pb2_no} case as well. To this end, suppose that 
	$(G_1', \dots, G_{q+1}')$ is not in the \ref{enum:pb2_no} case of \pbtwoMultisets, i.e.~there exist disjoint subsets $S_{i, 1}', \dots, S_{i, k-1}' \subset G_i'$ for every $i \in [q + 1]$ such that the following \cref{enump:pb2_2a_61,enump:pb2_2b_61,enump:pb2_2c_61} hold:
	\begin{noenump}
		\item 
		\begin{noenump}
			\item\label{enump:pb2_2a_61} $|S_{1, \ell}'| + \dots + |S_{i, \ell}'| \leq i \cdot s$ for all $i \in [q+1]$ and $\ell \in [k-1]$,
			\item\label{enump:pb2_2b_61} $\sum_{i\in [q+1], \ell \in [k-1]} |S_{i, \ell}'| \geq (k-1) (q + 1) s$,
			\item\label{enump:pb2_2c_61} $\sum_{i \in [q+1]} \Sigma(S_{i, \ell}') =  \mu'$ for all $\ell \in [k-1]$.
		\end{noenump}
	\end{noenump}
	We show that in that case $(G_1, \dots, G_q, t_1, \dots, t_{k-1})$ is not in the \ref{enum:pb2targets_no} case of \pbtwoTargetsMultisets, i.e.~we construct disjoint subsets $S_{i, 1}, \dots, S_{i, k-1} \subset G_i$ for every $i \in [q]$ and verify that the following \cref{enum:pb2targets_2a_61,enum:pb2targets_2b_61,enum:pb2targets_2c_61} hold. 
	\begin{noenum}
		\item 
		\begin{noenum}
			\item\label{enum:pb2targets_2a_61} $|S_{1, \ell}| + \dots + |S_{i, \ell}| \leq i \cdot s$ for all $i \in [q]$ and $\ell \in [k-1]$,
			\item\label{enum:pb2targets_2b_61} $\sum_{i\in [q], \ell \in [k-1]} |S_{i, \ell}| \geq (k-1) qs$,
			\item\label{enum:pb2targets_2c_61} $\sum_{i \in [q]} \Sigma(S_{i, \ell}) = t_\ell$ for all $\ell \in [k-1]$.
		\end{noenum}
	\end{noenum}
	First, observe that $\mu' \equiv 1 \Mod{n}$ and $\mu' \equiv  (s-1)n + 1 \Mod{n^3}$, and the only items in $G_1', \dots, G_{q+1}'$ that are non-zero modulo $n^3$ are the filling and dummy items in $G_{q+1}'$. Since filling items modulo $n$ have value $1$ and dummy items modulo $n^3$ have value $n$, we deduce from the above \cref{enump:pb2_2c_61} that $S_{q+1, \ell}'$ consists of exactly one filling item and $s-1$ dummy items, for each $\ell \in [k-1]$. 
	By reordering the bins, we can assume without loss of generality that $S_{q+1, \ell}'$ contains the filling item $f_\ell$.
	Now, for every $i \in [q]$, let $S_{i, 1},  \dots, S_{i, k-1} \subset G_i$ be the subset of items in the given instance corresponding to $S_{i, 1}',  \dots, S_{i, k-1}' \subset G_i'$.
	Then \cref{enump:pb2_2a_61} directly implies \cref{enum:pb2targets_2a_61}.
	Since we argued above that $|S_{q+1, \ell}'| = s$ for every $\ell \in [k-1]$, \cref{enump:pb2_2b_61} implies that 
	\begin{align*}
		\sum_{\ell=1}^{k-1} \sum_{i=1}^q |S_{i, \ell}| = \sum_{\ell=1}^{k-1} \sum_{i=1}^{q+1} |S_{i, \ell}'| - \sum_{\ell=1}^{k-1} |S_{q+1, \ell}'| \geq (k-1)(q+1)s - (k-1)s = (k-1)qs,
	\end{align*}
	i.e.~\cref{enum:pb2targets_2b_61} holds as well.
	Furthermore, we can compute 
	$\Sigma(S_{q+1, \ell}') = sWn^{20} + (\mu -t_\ell - nW - (s-1))n^3 + (s-1)n + 1$
	and thus 
	\begin{align*}
		\mu' - \Sigma(S_{q+1, \ell}') &= sqWn^{20} + (t_\ell - sqW - sq)n^3. \\
		\intertext{By \cref{enump:pb2_2c_61}, this is equal to }
		\sum_{i=1}^{q} \Sigma(S_{i, \ell}')  
		&= \sum_{i=1}^q |S_{i, \ell}'| W n^{20} + \left(\sum_{i=1}^q \Sigma(S_{i, \ell}) - \sum_{i=1}^q|S_{i, \ell}|(W+1)\right)n^3.
	\end{align*}
	Note that $|(t_\ell - sqW - sq)n^3| \leq 2sqWn^3 < Wn^{20}/2$, and, since $0 \leq \sum_{i=1}^q \Sigma(S_{i, \ell}) \leq nW$ and $0 \leq \sum_{i=1}^q |S_{i, \ell}| \leq n$, we can also bound $ |\left(\sum_{i=1}^q \Sigma(S_{i, \ell}) - \sum_{i=1}^q|S_{i, \ell}|(W+1)\right)n^3| \leq 2n^4W < W n^{20} / 2$. Thus, by taking the equation $\sum_{i=1}^{q} \Sigma(S_{i, \ell}')  = \mu' - \Sigma(S_{q+1, \ell}')$ modulo $Wn^{20}$, we obtain 
	\begin{align*}
		\sum_{i=1}^q \Sigma(S_{i, \ell}) - \sum_{i=1}^q|S_{i, \ell}|(W+1) = t_\ell - sqW - sq.
	\end{align*}
	Finally, notice that by combining \cref{enum:pb2targets_2b_61,enum:pb2targets_2a_61}, we can infer that $\sum_{i=1}^q \sum_{\ell = 1}^{k-1} |S_{i, \ell}| = sq$. Thus, the above equation implies that 
	$\sum_{i=1}^q \Sigma(S_{i, \ell}) = t_\ell$ for every $\ell \in [k-1]$, i.e.~\cref{enum:pb2targets_2c_61} holds as well. 
\end{proof}

\begin{lemma}\label{lem:pbtwomulti_to_pbtwo}
	\pbtwoMultisets and \pbtwo are equivalent under parameter-preserving  reductions. 
\end{lemma}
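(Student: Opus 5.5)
The reduction from \pbtwo to \pbtwoMultisets is trivial, since every set is a multiset. It therefore suffices to reduce \pbtwoMultisets to \pbtwo, i.e.~to make all integers distinct while preserving both the \textsc{Yes} case and the negation of the \textsc{No} case. The plan is the standard scale-and-tag trick. Given an instance $G_1,\dots,G_q \subset \N\cap[W(1-1/n^{10}),W]$ with $n=ksq$, we enumerate all $n$ items as $e_1,\dots,e_n$. We then replace each item $e_j$ by
\[
e_j' \coloneq e_j \cdot n^{c} + j
\]
for a suitable constant $c$ (say $c=3$). Each element keeps its group. These values are pairwise distinct because their residues modulo $n^c$ are the distinct numbers $j\in[n]$. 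The sum of all items changes to $n^c\sum_i\Sigma(G_i)+\binom{n+1}{2}$. The trouble is that the $k-1$ packed bins must each reach exactly the new average load $\mu'$, while the tags landing in each bin differ.

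To handle the tags, we add a dedicated group $G_{q+1}'$ of $ks$ distinct \emph{correction items}, as in the proof of \cref{lem:pbtwotargets_to_pbtwomulti}. We choose the tag scale so that the tags occupy a low-order block of bits. Each bin can then absorb arbitrary tag sums, and the correction is done bin by bin. Concretely, I would use two tags per item,
\[
e_j' \coloneq e_j n^{c}+ j\,n + 0,
\]
and give the correction items low-order values that can top up any bin. An alternative, likely cleaner, approach is to avoid exact matching of tags altogether. We put each tag $j$ into a block of bits below the main block, and set the target's tag block so that it is never compared exactly. However, \pbtwo has no targets, only the average. The average forces exact equality, so exact correction is needed.

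A cleaner route: pair items. For each original item $e$ we create two items $e n^{c}+j$ and $e n^{c}-j$. Doubling changes $s$ to $2s$ and doubles all loads, but a solution may split a pair across bins, so this does not work directly.

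I will commit to the following. Let $e_j' \coloneq e_j\, n^{c}$ plus a distinct tag $j\in[n]$ placed in the lowest block, and add a group $G_{q+1}'$ consisting of $k$ filling items $f_1,\dots,f_k$ and $k(s-1)$ dummy items. The filling items live in a tag block of separate magnitude. Since all items in $G'_{q+1}$ must themselves be distinct, the dummies receive distinct small offsets as well. Choosing the low blocks so that the sum of all tags in any bin is smaller than the block separation, the equality of loads splits blockwise. The high block then gives exactly the original equality $\sum_i\Sigma(S_{i,\ell})=\mu$. The low block is an additional constraint, which the \textsc{Yes} direction must be able to satisfy using the correction items.

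This last point is the main obstacle. In the \textsc{Yes} direction, the tags of the items placed in bin $\ell$ form an arbitrary sum $\tau_\ell$, and a single filling item per bin must compensate so that each bin hits $\mu'$. However, the filling values have to be fixed before the solution is known. My fallback is to make the tag block ``don't care'' by introducing slack. We give each group $G_i$ extra freedom: for each item we include both $e n^c + j$ and a twin $e n^c + j + \text{const}$ in a doubled instance with $s\mapsto 2s$, chosen so that every bin receives each twin pair together. This forces the tag contributions per bin to be determined by the count of items, which is fixed to $s$ per group by \cref{obs:bounded_subset_size}-type counting. Finally, I would check range membership $[W'(1-1/m^{10}),W']$ and the parameter bound $W'=Wn^{\Oh(1)}$ as routine.
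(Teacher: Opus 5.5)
\textbf{There is a genuine gap: you never reach a construction that works in the \textsc{Yes} direction.} Your trivial direction is fine. The committed scheme (tags $j$ in a low block, plus one group of $k$ filling items and $k(s-1)$ dummies) fails for exactly the reason you name. The per-bin tag sum $\tau_\ell$ depends on the solution, while the filling items are fixed in advance, so a fixed offset per bin cannot absorb it. The analogous trick works in \cref{lem:pbtwotargets_to_pbtwomulti} only because there the offsets $\mu - t_\ell$ are known beforehand. The fallback does not repair this, for three reasons:
\begin{itemize}
\item Nothing in \pbtwo forces two items of a group into the same bin; the constraints only involve counts and sums.
\item Even if the twins $en^c+j$ and $en^c+j+\mathrm{const}$ were co-located, they contribute $2j+\mathrm{const}$. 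This still depends on which items land in the bin, not just on their number.
\item \cref{obs:bounded_subset_size} concerns complete equal-sum partitions, so it says nothing about the relaxed packings in the \textsc{No} case.
\end{itemize}

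\textbf{The missing idea is close to one you discarded too early.} Your pairing $en^c+j$, $en^c-j$ failed only because both twins carry the value $e$, so a split pair cannot be mapped back. The paper makes the companion \emph{value-free} and puts it in a separate group appended after all original groups. For $x_j^i \in G_i$ it sets $y_j^i = Wn^{20} + (x_j^i - W)n^7 - n^5 + j \in G'_i$ and $z_j^i = Wn^{20} - j \in G'_{q+i}$.
\begin{itemize}
\item All tags cancel globally, so the new average load $\mu'$ is tag-free.
\item In the \textsc{Yes} direction you place $z_j^i$ in the bin of $y_j^i$, so tags also cancel per bin.
\item In the \textsc{No} direction (by contraposition) pairs never need to stay together. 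Conditions (a) and (b) of the new instance force exactly $2sq$ items per bin, which fixes the $Wn^{20}$ part.
\item The per-bin tag sum has absolute value below $n^3/2$, while every other term is divisible by $n^3$, so the tag sum vanishes.
\item Reducing modulo $n^7$, the $-n^5$ count block gives $\sum_{i\le q}|S_{i,\ell}| = sq$. The $n^7$ block then gives load exactly $\mu$.
\end{itemize}
You map back using only the $y$-items. The prefix constraints $|S_{1,\ell}|+\dots+|S_{i,\ell}|\le is$ transfer directly because the original groups come first.
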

\begin{proof}
	Clearly, \pbtwo is a special case of \pbtwoMultisets, so we show the other direction.
	Consider multisets $G_1, \dots, G_q \subset \mathbb N \cap [W(1 - 1/n^{10}), W]$ for some integers $s, q, W \in \mathbb N$, 
	group size $|G_i| = sk$ and total number of items $n \coloneq ksq$. 
	Without loss of generality, we may assume that $n$ is large enough, and $W \geq n^{10}$ as otherwise the instance can be solved in polynomial time.
	%
	For every $i \in [q]$, fix an arbitrary order (with multiplicities) of the elements of $G_i = \{x_1^i, \dots, x_{sk}^i\}$, and define the sets $G_i' = \{y_1^i, \dots, y_{sk}^i\}$ and $G_{q+i}' = \{z_1^{i}, \dots, z_{sk}^i\}$ where for each $j \in [sk]$
	\begin{align*}
		y_j^i &\coloneq W n^{20} + (x_j - W)n^7 - n^5 + j \\
		z_j^i &\coloneq W n^{20} - j.
	\end{align*}
	Let $m \coloneq 2n$ be the total number of constructed elements and let $W'\coloneq Wn^{20}$.
	Note that $G_{q+i}' \subset [W'(1-1/m^{10}), W']$ for every $i \in [q]$. Additionally, for any $j \in [sk]$, $x_j^i \leq W$ so $y_j^i \leq Wn^{20} - n^5 + n \leq Wn^{20}$. On the other hand, since $x_j^i \geq W(1-1/n^{10})$ we have
	\begin{align*}
		y_j^i &\geq Wn^{20} + \left(W\left(1-1/n^{10}\right) - W\right)n^{7} - n^5 + 1 \\
		&\geq Wn^{20} - \frac{W n^7}{n^{10}} - n^5 \\
		&=Wn^{20}\left(1 - \frac{n^7}{n^{10}n^{20}} - \frac{n^5}{W n^{20}} \right) \\ 
		&\geq Wn^{20}\left(1 - \frac{n^7}{n^{30}} - \frac{n^5}{n^{30}} \right) \tag{since $W \geq n^{10}$} \\ 
		&\geq Wn^{20}\left(1 - \frac{1}{(2n)^{10}} \right). \tag{since $n \geq 2$}
	\end{align*}
	Hence, $G_i' \subset [W'(1-1/m^{10}), W']$ for every $i \in [2q]$ as well, and so $(G_1', \dots, G_{2q}')$ is an instance of \pbtwo with
	total number of items $m = \Oh(n)$. Since the construction takes time $\Oh(n)$ and $W' = W n^{\Oh(1)}$ it is parameter-preserving. Finally, note that the average load $\mu' \coloneq \frac{1}{k}\sum_{i=1}^{2q} \Sigma(G_i')$ of the constructed items is 
	\begin{align*}
		\mu' &= \frac{1}{k} \sum_{i=1}^q \sum_{j=1}^{sk} (y_j^i + z_j^i) = \frac{1}{k} \sum_{i=1}^q \sum_{j=1}^{sk} \left( 2 W n^{20} + (x_j^i - W)n^7 - n^5\right) \\
		&= 2sq W n^{20} + \mu n^7 - sq W n^7 - sqn^5,
	\end{align*}
	where $\mu \coloneq \frac{1}{k} \sum_{i=1}^q \Sigma(G_i)$ is the average load of the given items.

	Suppose that $(G_1, \dots, G_q)$ is in the \ref{enum:pb2_yes} case of \pbtwoMultisets, i.e.~there exist disjoint subsets $S_{i, 1}, \dots, S_{i, k-1} \subset G_i$ such that the following \cref{enum:pb2_1a_62,enum:pb2_1b_62} hold:
	\begin{yesenum}
		\item
		\begin{yesenum}
			\item\label{enum:pb2_1a_62} $|S_{i, \ell}| = s$ for all $i \in [q]$ and $\ell \in [k-1]$,
			\item\label{enum:pb2_1b_62} $\sum_{i \in [q]} \Sigma(S_{i, \ell}) = \mu$ for all $\ell \in [k-1]$.
		\end{yesenum}
	\end{yesenum}
	Let $S_{i, \ell}' \coloneq \{y_j^i \ : \ x_j^i \in S_{i, \ell} \} \subset G_i'$ and $S_{q+i, \ell}' \coloneq \{z_j^i \ : \ x_j^i \in S_{i, \ell}\} \subset G_{q_i}'$ for every $i \in [q]$ and $\ell \in [k-1]$. 
	Then, by \cref{enum:pb2_1a_62}, $|S_{i, \ell}'| = |S_{q+i, \ell}'| = s$ for every $i \in [q]$ and $\ell \in [k-1]$ and thus, by \cref{enum:pb2_1b_62}, we have for every $\ell \in [k-1]$
	\begin{align*}
		\sum_{i=1}^{2q} \Sigma(S_{i, \ell}') 
		&= \sum_{i=1}^q \sum_{x_j^i \in S_{i, \ell}} \left(Wn^{20} + (x_j^i - W)n^{7} - n^5 + j\right) + \sum_{i=q+1}^{2q} \sum_{x_j^i \in S_{i, \ell}} \left(Wn^{20} - j \right) \\
		&= 2|S_{i, \ell}| q W n^{20} + \sum_{i=1}^q \Sigma(S_{i, \ell})n^7 - |S_{i, \ell}|qW n^7 - |S_{i, \ell}|qn^5 \\
		&=2sq Wn^{20} + \mu n^7 - sqWn^7 - sqn^5 = \mu'.
	\end{align*}
	This shows that $(G_1', \dots, G_{2q}')$ is in the \ref{enum:pb2_yes} case of \pbtwo.

	Conversely, we show by contraposition that if $(G_1, \dots, G_q)$ is in the \ref{enum:pb2_no} case of \pbtwoMultisets then $(G_1', \dots, G_{2q}')$ is in the \ref{enum:pb2_no} case of \pbtwo. 
	To this end, suppose that $(G_1', \dots, G_{2q}')$ is not in the \ref{enum:pb2_no} case, i.e.~there exist disjoint subsets $S_{i, 1}', \dots, S_{i, k-1}' \subset G_i'$ for every $i \in [2q]$ such that \cref{enump:pb2_2a_62,enump:pb2_2b_62,enump:pb2_2c_62} hold. 
	\begin{noenump}
		\item 
		\begin{noenump}
			\item\label{enump:pb2_2a_62} $|S_{1, \ell}'| + \dots + |S_{i, \ell}'| \leq i \cdot s$ for all $i \in [2q]$ and $\ell \in [k-1]$,
			\item\label{enump:pb2_2b_62} $\sum_{i\in [2q], \ell \in [k-1]} |S_{i, \ell}'| \geq (k-1) 2q s$,
			\item\label{enump:pb2_2c_62} $\sum_{i \in [2q]} \Sigma(S_{i, \ell}') = \mu'$ for all $\ell \in [k-1]$.
		\end{noenump}
	\end{noenump}
	For every $i \in [q]$ and $\ell \in [k-1]$, let $S_{i, \ell} \coloneq \{x_j^i \ : \ y_j^i \in S_{i, \ell}' \} \subset G_i$. 
	We verify that the following \cref{enum:pb2_2a_62,enum:pb2_2b_62,enum:pb2_2c_62} hold. 
	This will show that $(G_1, \dots, G_q)$ is not in the \ref{enum:pb2_no} case, as desired.
	\begin{noenum}
		\item 
		\begin{noenum}
			\item\label{enum:pb2_2a_62} $|S_{1, \ell}| + \dots + |S_{i, \ell}| \leq i \cdot s$ for all $i \in [q]$ and $\ell \in [k-1]$,
			\item\label{enum:pb2_2b_62} $\sum_{i\in [q], \ell \in [k-1]} |S_{i, \ell}| \geq (k-1) q s$,
			\item\label{enum:pb2_2c_62} $\sum_{i \in [q]} \Sigma(S_{i, \ell}) = \mu$ for all $\ell \in [k-1]$.
		\end{noenum}
	\end{noenum}
	Observe that $|S_{i, \ell}| = |S_{i, \ell}'|$ for every $i \in [q]$ and $\ell \in [k-1]$, so in particular \cref{enump:pb2_2a_62} directly implies \cref{enum:pb2_2a_62}. 
	Furthermore, for every $\ell \in [k-1]$, we have 
	\begin{align*}
		\sum_{i=1}^{2q} \Sigma(S_{i, \ell}') 
		&= \sum_{i=1}^q \sum_{y_j^i \in S_{i, \ell}'} \left( Wn^{20} + (x_j^i - W)n^7 - n^5 +j   \right) + \sum_{i=1}^{q} \sum_{z_j^i \in S_{q+i, \ell}'} \left( Wn^{20} - j \right) \\
		&= \sum_{i=1}^{2q} |S_{i, \ell}'| W n^{20} + \sum_{i=1}^q \Sigma(S_{i, \ell}) n^7 - \sum_{i=1}^q |S_{i, \ell}| \left( W n^7 + n^5 \right) + \sum_{i=1}^{q} \left(\sum_{y_j \in S_{i, \ell}'} j - \sum_{z_j \in S_{q+i, \ell}'} j \right) \\
		&= 2sq W n^{20} + \sum_{i=1}^q \Sigma(S_{i, \ell}) n^7 - \sum_{i=1}^q |S_{i, \ell}| \left( W n^7 + n^5 \right) + \sum_{i=1}^{q} \left(\sum_{y_j \in S_{i, \ell}'} j - \sum_{z_j \in S_{q+i, \ell}'} j \right).
	\end{align*}
	The last equality uses the observation that by  combining \cref{enump:pb2_2a_62,enump:pb2_2b_62} we obtain $2sq = \sum_{i=1}^{2q} |S_{i, \ell}'|$. 
	By \cref{enump:pb2_2c_62}, the above is equal to $\mu'$, which yields the following equation.
	\begin{equation}
		\mu n^7 - sqWn^7 - sqn^5 = \sum_{i=1}^q \Sigma(S_{i, \ell}) n^7 - \sum_{i=1}^q |S_{i, \ell}| \left( W n^7 + n^5 \right) + \sum_{i=1}^{q} \left(\sum_{y_j^i \in S_{i, \ell}'} j - \sum_{z_j^i \in S_{q+i, \ell}'} j \right) \label{eq:pb2_multiset_No3}
	\end{equation}
	The last term can be bounded as $\left| \sum_{i=1}^q \left(\sum_{y_j^i \in S_{i, \ell}'}j - \sum_{z_j^i \in S_{q+i, \ell}'}j \right) \right| < n^3/2$, so by taking \cref{eq:pb2_multiset_No3} modulo $n^3$ we deduce that $0 = \sum_{i =1}^q \left(\sum_{y_j^i \in S_{i, \ell}'}j - \sum_{z_j^i \in S_{q+i, \ell}'}j \right)$. 
	Furthermore, since $0 \leq \sum_{i=1}^q |S_{i, \ell}| n^5 < n^7$, by taking \cref{eq:pb2_multiset_No3} modulo $n^7$ we get  
	$sq= \sum_{i=1}^q |S_{i, \ell}|$, which in turn implies by \cref{eq:pb2_multiset_No3}, that $\mu = \sum_{i=1}^q \Sigma(S_{i, \ell})$. So \cref{enum:pb2_2c_62,enum:pb2_2b_62} also hold.
\end{proof}

\begin{fact}\label{lem:pbtwo_to_pbtwoYES}
	\pbtwoYes is a special case of \pbtwo.
\end{fact}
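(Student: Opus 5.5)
The plan is to check that the two problems have literally the same input format: sets $G_1,\dots,G_q \subset \N\cap[W(1-1/n^{10}),W]$ with $|G_i|=ks$ and parameter $W$. Given that, it suffices to show that on every instance satisfying the promise of \pbtwo (i.e.\ lying in its \textsc{Yes} or \textsc{No} case), the answer prescribed by \pbtwo coincides with the answer of \pbtwoYes. Then the identity map is a parameter-preserving reduction, and any algorithm for \pbtwoYes decides \pbtwo, so by \cref{lem:parameter-preserving-reduction} the lower bound transfers along the arrow in \cref{fig:graph_reduction}. I would prove two implications, both by directly converting between $(k-1)$-set and $k$-set partial packings. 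Throughout, $\mu \coloneq \frac1k\sum_{i\in[q]}\Sigma(G_i)$ is the common target of both problems.

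\emph{\textsc{Yes} case of \pbtwo implies a \textsc{Yes}-instance of \pbtwoYes.} Take disjoint $S_{i,1},\dots,S_{i,k-1}\subset G_i$ with $|S_{i,\ell}|=s$ and $\sum_i\Sigma(S_{i,\ell})=\mu$ for all $\ell\in[k-1]$. I would set $S_{i,k}\coloneq G_i\setminus(S_{i,1}\cup\dots\cup S_{i,k-1})$. This gives a partition of $G_i$ with $|S_{i,k}|=ks-(k-1)s=s$. Moreover,
\[
\sum_i\Sigma(S_{i,k})=\sum_i\Sigma(G_i)-(k-1)\mu=k\mu-(k-1)\mu=\mu.
\]
So both conditions of \pbtwoYes hold.

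\emph{A \textsc{Yes}-instance of \pbtwoYes is not in the \textsc{No} case of \pbtwo.} Take a partition $S_{i,1},\dots,S_{i,k}$ of each $G_i$ witnessing \pbtwoYes, and keep only the first $k-1$ parts. These are disjoint subsets of $G_i$. For every $i$ and $\ell\le k-1$ we have $|S_{1,\ell}|+\dots+|S_{i,\ell}|=is\le is$, and $\sum_{i,\ell\le k-1}|S_{i,\ell}|=(k-1)qs$. The load of each bin $\ell \le k-1$ is $\mu$. Hence all three \textsc{No}-conditions of \pbtwo are satisfied, so the instance is not in the \textsc{No} case.

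Combining the two implications: an instance in the \textsc{Yes} case of \pbtwo is a \textsc{Yes}-instance of \pbtwoYes. By contraposition of the second implication, an instance in the \textsc{No} case of \pbtwo is a \textsc{No}-instance of \pbtwoYes. Thus \pbtwoYes answers every promise instance of \pbtwo correctly, which is the sense in which the statement holds.

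The only point needing care is the bookkeeping that the complement set $S_{i,k}$ has exactly size $s$ and load exactly $\mu$. Both follow from $|G_i|=ks$ and the definition of $\mu$ as the average load. No range or overflow argument (such as \cref{obs:bounded_subset_size}) is required.
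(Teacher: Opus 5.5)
Your proof is correct and is exactly the routine check the paper leaves implicit, since it states this as a Fact without proof. Complementing the $k-1$ sets turns the \textsc{Yes} case of \pbtwo into a \pbtwoYes solution, and restricting any \pbtwoYes solution to its first $k-1$ parts shows the instance is not in the \textsc{No} case, so the identity map reduces \pbtwo to \pbtwoYes — the (loosely worded) direction actually needed for \cref{cor:SETH_pbtwoYES}.
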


The following corollaries are immediate consequences of \cref{lem:parameter-preserving-reduction,lem:SETH_to_GroupedkBP,lem:pbtwotargets_to_pbtwomulti,lem:pbtwomulti_to_pbtwo,lem:pbtwo_to_pbtwoYES}.

\begin{corollary}\label{cor:SETH_pbtwo}
Assuming SETH, for every $\eps > 0$ and $k \ge 2$, there exists a $\delta >0$ such that
\pbtwo cannot be solved in $\Oh(2^{\delta n}W^{k-1-\epsilon})$ time, where $n$ is the total number of items and $W$ is the bound on the items.
\end{corollary}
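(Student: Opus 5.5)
The plan is to chain the hardness of \cref{lem:SETH_to_GroupedkBP} along the reductions $\text{\pbtwoTargetsMultisets} \to \text{\pbtwoMultisets} \to \text{\pbtwo}$ from \cref{lem:pbtwotargets_to_pbtwomulti,lem:pbtwomulti_to_pbtwo}, and use \cref{lem:parameter-preserving-reduction} to carry the running-time lower bound over. The argument is by contraposition.

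Fix $k \ge 2$ and $\eps > 0$. Suppose that for every $\delta > 0$, \pbtwo can be solved in time $\Oh(2^{\delta n} W^{k-1-\eps})$.

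First I compose the two reductions into a single parameter-preserving reduction from \pbtwoTargetsMultisets to \pbtwo. The composition is still parameter-preserving for three reasons.
\begin{enumerate}
\item The intermediate instance has size $n' = \Oh(n + \log W)$ and parameter $W' \le W n^{\Oh(1)}$.
\item The second reduction then gives size $\Oh(n' + \log W') = \Oh(n + \log W)$.
\item It gives parameter $W'' \le W' n'^{\Oh(1)} \le W \cdot (n+\log W)^{\Oh(1)}$.
\end{enumerate}
Item 3 has the form $W \cdot (n+\log W)^{\Oh(1)}$ rather than $W \cdot n^{\Oh(1)}$, but this is harmless. In the proofs of both lemmas we may assume $W \ge n^{10}$, and the concrete constructions give $W' = W n^{20}$ with $n' \le 2n$, so in fact $W'' \le W n^{\Oh(1)}$ directly.

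The one thing to check is that the reductions preserve \emph{promise} problems. \pbtwo is a promise problem, so a reduction must map \textsc{Yes} instances to \textsc{Yes} instances and \textsc{No} instances to \textsc{No} instances. This is exactly what both lemma proofs establish, via the direct direction and the contrapositive respectively. Hence an algorithm that correctly decides \pbtwo on promise instances also decides \pbtwoTargetsMultisets on promise instances, and \cref{lem:parameter-preserving-reduction} applies verbatim with $\gamma = k-1$.

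From \cref{lem:parameter-preserving-reduction} we then get some $\eps' > 0$ such that for all $\delta > 0$, \pbtwoTargetsMultisets can be solved in time $\Oh(2^{\delta n} W^{k-1-\eps'})$. This contradicts \cref{lem:SETH_to_GroupedkBP} applied with $\eps'$.

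There is a quantifier issue in the order of $\eps$, $\eps'$ and $\delta$. \cref{lem:parameter-preserving-reduction} produces $\eps' = \eps/3$, while \cref{lem:SETH_to_GroupedkBP} says that for every $\eps'$ there is some $\delta$ ruling out the running time. Since our hypothesis holds for all $\delta$, and in particular for the $\delta$ supplied by \cref{lem:SETH_to_GroupedkBP} for $\eps'$, the contradiction goes through. So for each $\eps$ some $\delta$ works for \pbtwo, which is the claimed statement.

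The main obstacle is only bookkeeping: making sure the quantifiers and the promise semantics line up. There is no new combinatorial content.
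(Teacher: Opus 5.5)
Your argument is correct and is the paper's own: the corollary is stated there as an immediate consequence of pushing \cref{lem:SETH_to_GroupedkBP} through \cref{lem:pbtwotargets_to_pbtwomulti,lem:pbtwomulti_to_pbtwo} with \cref{lem:parameter-preserving-reduction}. Your remarks on composing the two reductions (one could equally apply the transfer lemma twice), on promise semantics, and on the order of quantifiers make explicit what the paper leaves implicit.

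One caveat concerns the cited lemma rather than your reasoning. In the \textsc{No}-to-\textsc{No} direction of \cref{lem:pbtwotargets_to_pbtwomulti}, the step ``by reordering the bins we may assume bin $\ell$ receives $f_\ell$'' ignores the case where one of the $k-1$ bins receives $f_k$ and so realises target $t_k$. For example, take $k=q=2$, $G_1$ consisting of $2s$ copies of $W$, $G_2$ of $2s$ copies of $W-1$, and $t_1=\mu+1$. This is a \textsc{No} instance, but its image is not a \textsc{No} instance. So your sentence that this direction is ``exactly what both lemma proofs establish'' inherits a gap from the paper, not from your own argument.
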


\begin{corollary}\label{cor:SETH_pbtwoYES}
Assuming SETH, for every $\eps > 0$ and $k \ge 2$, there exists a $\delta > 0$ such that
\pbtwoYes cannot be solved in $\Oh(2^{\delta n}W^{k-1-\epsilon})$ time, where $n$ is the total number of items and $W$ is the bound on the items.
\end{corollary}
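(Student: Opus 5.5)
The plan is to compose the chain of reductions already established and then invoke \cref{lem:parameter-preserving-reduction}. Suppose, for contradiction, that for some $\eps > 0$ and $k \ge 2$ the problem \pbtwoYes could be solved in time $\Oh(2^{\delta n} W^{k-1-\eps})$ for every $\delta > 0$. We will derive that \pbtwoTargetsMultisets also admits, for some $\eps' > 0$ and every $\delta > 0$, an $\Oh(2^{\delta n} W^{k-1-\eps'})$-time algorithm. That contradicts \cref{lem:SETH_to_GroupedkBP} applied with $\eps'$.

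The chain is as follows. By \cref{lem:pbtwotargets_to_pbtwomulti} and \cref{lem:pbtwomulti_to_pbtwo}, there are parameter-preserving reductions from \pbtwoTargetsMultisets to \pbtwoMultisets and from \pbtwoMultisets to \pbtwo. The remaining step is to pass from \pbtwo to \pbtwoYes. Here we must be careful, since \cref{lem:pbtwo_to_pbtwoYES} says that \pbtwoYes is a special case of \pbtwo, which is the wrong direction for a reduction. The point to exploit is that \pbtwo is a promise problem: any algorithm deciding \pbtwoYes decides whether a solution with \ref{enum:pb2yes_1a}--\ref{enum:pb2yes_1b} exists.

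\begin{itemize}
\item If the \pbtwo instance is in the \textsc{Yes} case, then the $k-1$ given sets $S_{i,\ell}$ can be completed by taking $S_{i,k} := G_i \setminus \bigcup_{\ell<k} S_{i,\ell}$. This set has size $s$ and total load equal to the average, so the instance is a yes-instance of \pbtwoYes.
\item If the instance is in the \textsc{No} case, then any \pbtwoYes solution restricted to its first $k-1$ bins would satisfy \ref{enum:pb2_no}.1--3 with equality. This is excluded, so the instance is a no-instance of \pbtwoYes.
\end{itemize}

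Hence the identity map solves \pbtwo on promise instances via a \pbtwoYes algorithm. This is the one step that needs an argument rather than bookkeeping, and it is what I expect to be the main (though mild) obstacle: checking that both the completion of the partition and the restriction of a solution interact correctly with the promise.

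Finally, we apply \cref{lem:parameter-preserving-reduction} with $\gamma = k-1$ twice, along the two parameter-preserving reductions, with the identity step in between. This transfers the hypothetical algorithm back to \pbtwoTargetsMultisets with some $\eps' > 0$ and all $\delta > 0$. Quantifier order matters here: \cref{lem:SETH_to_GroupedkBP} gives, for each $\eps, k$, a single $\delta$ that is ruled out, so "for all $\delta$" in the assumed algorithm contradicts it. Hence \cref{cor:SETH_pbtwoYES} follows, and \cref{cor:SETH_pbtwo} follows identically without the last step.
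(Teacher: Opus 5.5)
Your proposal is correct and follows the paper's proof. The paper obtains this corollary by composing \cref{lem:SETH_to_GroupedkBP}, \cref{lem:pbtwotargets_to_pbtwomulti}, \cref{lem:pbtwomulti_to_pbtwo} and \cref{lem:pbtwo_to_pbtwoYES} through \cref{lem:parameter-preserving-reduction}. Your explicit check of the last step is what justifies the arrow from \pbtwo to \pbtwoYes in \cref{fig:graph_reduction}: on promise instances, the \textsc{Yes} case of \pbtwo coincides with yes-instances of \pbtwoYes, and the \textsc{No} case yields no-instances. The paper states that step as a one-line fact without proof, and its ``special case'' wording points in the opposite direction, as you noticed.
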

\noindent
\cref{cor:SETH_pbtwoYES} establishes~\cref{thm:SETH_to_Grouped_kWay_Partition}, while
\cref{cor:SETH_pbtwo} establishes~\cref{thm:SETH_to_Weak_Grouped_kWay_Partition}.

\section{SETH-hardness of Scheduling Problems}
\label{sec:scheduling-problems-seth}


In this section, we present parameter-preserving reductions from \textsc{(Weak)} \pbtwoYes to the problems \PrjCmax, \PSUj, \PSwjCj and \PSpjUj. This will imply \cref{thm:SETH_to_PSpjUj,thm:SETH_to_PSwjCj,thm:SETH_to_PrjCmax,cor:SETH_to_PSUj}. 
See \cref{fig:graph_reduction} for the full graph of proven reductions, and \cref{app:problem-definitions} for the detailed problem definitions.

Observe that the mentioned scheduling problems are defined as optimisation problems where the task is to minimise the respective objective function. In this section, we consider their decision variants, in which the task is to decide if the objective value is at most $\Lambda$ for some given target objective $\Lambda \in \N$.


\subsection{Makespan with Release Dates: Proof of Theorem~\ref{thm:SETH_to_PrjCmax} and Corollary~\ref{cor:SETH_to_PSUj}.}

We reduce \pbtwoYes to a special case of \PSUj in \cref{lem:GroupedkPart_to_special_schedule}, which we prove in \cref{lem:GroupedkPart_to_PrjCmax} to be equivalent to \PrjCmax.

\begin{lemma}\label{lem:GroupedkPart_to_special_schedule}
	There is a parameter-preserving reduction from \pbtwoYes to the variant of \PSUj with $\sum_j U_j = 0$.
\end{lemma}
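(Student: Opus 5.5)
Following the sketch in the technical overview, I would map each item to a job with the same processing time and a due date that enforces both the group structure and the final balance. Given an instance $G_1,\dots,G_q \subset \N\cap[W(1-1/n^{10}),W]$ of \pbtwoYes with $|G_i|=ks$ and $n=ksq$, set $\mu \coloneq \frac1k\sum_{i}\Sigma(G_i)$. For every $x\in G_i$ (with multiplicity), create a job with processing time $x$ and due date $d \coloneq \min\{\mu,\ i s W\}$, and use $k$ identical machines. The output instance has $n$ jobs and total processing time $T=k\mu\le nW$. The construction clearly runs in polynomial time, so it is parameter-preserving with respect to $W$ (equivalently $T\le nW$). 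If $\mu$ is not an integer the instance is trivially a no-instance, and I would output a fixed no-instance.

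\textbf{Yes $\Rightarrow$ schedule.} Given the partition $S_{i,\ell}$, machine $\ell$ processes $S_{1,\ell},S_{2,\ell},\dots,S_{q,\ell}$ in this order without idle time. Jobs of $S_{i,\ell}$ finish by $\sum_{b\le i}|S_{b,\ell}|\,W = isW$, since every item is at most $W$. They also finish by the machine's total load, which equals $\mu$ by Property~\ref{enum:pb2yes_1b}. Hence every job meets its due date.

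\textbf{Schedule $\Rightarrow$ yes.} This is the main step. Since all jobs finish by $\mu$ and the total processing time is $k\mu$, every machine has load exactly $\mu$ and no idle time before $\mu$. Let $c_{i,\ell}$ be the number of jobs from $G_1\cup\dots\cup G_i$ on machine $\ell$. These jobs must finish by $isW$, and each has length at least $W(1-1/n^{10})$. Therefore $c_{i,\ell}\,W(1-1/n^{10})\le isW$, and since $c_{i,\ell}\le n$ this gives $c_{i,\ell}\le is$. We also have $\sum_\ell c_{i,\ell}=iks$, which forces $c_{i,\ell}=is$ for all $\ell$. Consequently $|S_{i,\ell}|=s$, where $S_{i,\ell}$ is the set of $G_i$-jobs on machine $\ell$, and the load condition gives Property~\ref{enum:pb2yes_1b}.

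The point needing care is when the due date is $\mu<isW$: then the relevant bound is $\mu$ rather than $isW$. Because the loads equal $\mu$ and all items are within a $(1-1/n^{10})$ factor of $W$, \cref{obs:bounded_subset_size} gives $sq$ jobs per machine, so $\mu\ge sqW(1-1/n^{10})$. For $i<q$ we get $isW\le \mu$ (up to integrality), so $\min$ picks $isW$ and the bound $c_{i,\ell}\le is$ holds for $i<q$. The case $i=q$ is given by the count $sq$ from the observation. I expect this rounding bookkeeping to be the only delicate step. To make it robust I would use the strict slack $W/n^{10}$, so that $isW$ and $\mu$ cannot cross incorrectly.
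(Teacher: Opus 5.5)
Your proof is correct and follows the paper's route. The construction and due dates $\min\{\mu,isW\}$ match. Your per-prefix bound $c_{i,\ell}W(1-1/n^{10})\le isW$ forces $c_{i,\ell}=is$; this cumulative count replaces the paper's EDD reordering plus induction on $i$. The load argument for the balance property is the same. The case you flag as delicate in your last paragraph needs no separate treatment: since $\min\{\mu,isW\}\le isW$, every job of $G_1\cup\dots\cup G_i$ finishes by $isW$ whichever term attains the minimum, so your main argument already covers all $i$ without \cref{obs:bounded_subset_size}.
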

\begin{proof}
	Let $G_1, \dots, G_q \subset \mathbb N \cap [W(1 - 1/n^{10}), W]$ be an instance of \pbtwoYes for some integers $s, q, W \in \mathbb N$ where the groups have size $|G_i| = sk$ and the total number of items is $n \coloneq ksq$. 
	Without loss of generality, we may assume that $n$ is large enough and $W \geq n^{10}$ as otherwise the instance can be solved in polynomial time.
	Let $\mu \coloneq \frac{1}{k} \sum_{i = 1}^q \Sigma(G_i)$ and denote the elements in the $i$-th group by $G_i = \{p_{(i-1)sk + 1}, \dots, p_{isk}\}$, i.e.~$G_1 \cup \dots \cup G_q = \{p_1, \dots, p_n\}$. 
	Build an instance of \PSUj with target objective $\sum_j U_j = 0$ by constructing for every item $p_j \in G_i$ for $j \in [(i-1)sk +1, isk]$ a job with processing time $p_{j}$ and due date $d_{j} \coloneq \min \{\mu, isW\}$ for each $i \in [q]$. 
	The construction takes $\Oh(n)$ time and produces an instance of \PSUj of size $n$ where the sum of processing times is $T = \Theta(nW)$ so it is parameter-preserving.
	We prove that the two instances are equivalent. Note that the objective of \PSUj is met if and only if all jobs are non-tardy, i.e.~they complete before their due dates.

	\subparagraph{Yes of \pbtwoYes $\to$ $\sum_j U_j = 0$.}
	Suppose that for each $i \in [q]$ there exists a partition of $G_i$ into subsets $S_{i, 1}, \dots, S_{i, k}$ satisfying the following \cref{enum:pb2yes_1a_71,enum:pb2yes_1b_71} of \pbtwoYes :
	\begin{enumerate}
		\item\label{enum:pb2yes_1a_71} $|S_{i, \ell}| = s$ for all $i \in [q]$ and $\ell \in [k]$,
		\item\label{enum:pb2yes_1b_71} $\sum_{i \in [q]} \Sigma(S_{i, \ell}) = \mu$ for all $\ell \in [k]$.
	\end{enumerate}
	Then we schedule the jobs corresponding to items in $S_{1, \ell} \cup \dots \cup S_{q, \ell}$ on the $\ell$-th machine in that order, for each $\ell \in [k]$ (choose an arbitrary order within $S_{i, \ell}$). 
	Note that for $i < q$, the jobs from group $G_i$ have due date $i s W$, and the jobs from group $G_q$ have due date $\mu$. 
	By \cref{enum:pb2yes_1a_71}, since processing times are at most $W$, the jobs of group $G_i$ are completed on the $\ell$-th machine by time $\Sigma(S_{1, \ell}) + \dots + \Sigma(S_{i, \ell}) \leq i s W$. So for each $i < q$ the jobs are completed before their due date.  On the other hand, by \cref{enum:pb2yes_1b_71}, the completion time of the jobs of the group $G_q$ on the $\ell$-th machine is at most $\Sigma(S_{1, \ell}) + \dots + \Sigma(S_{q, \ell}) = \mu$. Hence, all the jobs are completed before their due date.
	
	\subparagraph{$\sum_j U_j = 0$ $\to$ Yes of \pbtwoYes.}
	Conversely, assume that there exists a schedule of all $n$ jobs on the $k$ machines such that every job completes before its due date. Let $S_{i, \ell}$ be the processing times of the jobs scheduled on the $\ell$-th machine corresponding to group $G_i$. 
	Then $S_{i, 1}, \dots, S_{i, k}$ is a partition of $G_i$. We show that the above \cref{enum:pb2yes_1a_71,enum:pb2yes_1b_71} are satisfied.
	The due dates are non-decreasing with $i$, so we can assume without loss of generality that jobs are processed in increasing order of $i$. 
	Then $\Sigma(S_{1, \ell}) + \dots + \Sigma(S_{i, \ell})$ is the maximum completion time of the jobs corresponding to group $G_i$ on the $\ell$-th machine. 
	By the range of the processing times, we can lower bound it by 
	\begin{align*}
		\Sigma(S_{1, \ell}) + \dots + \Sigma(S_{i, \ell}) &\geq W (1 - 1/n^{10}) \cdot (|S_{1, \ell}| + \dots + |S_{i, \ell}|).
	\intertext{On the other hand, since all jobs complete before their due date, which is at most $isW$, we deduce the following upper bound}
		\Sigma(S_{1, \ell}) + \dots + \Sigma(S_{i, \ell}) &\leq isW.
	\end{align*}
	Since $is \leq n$ and the size of $S_{i, \ell}$ is integral, we infer  
	$$|S_{1, \ell}| + \dots  + |S_{i, \ell}| \leq \left\lfloor \frac{i s}{(1- 1/n^{10})} \right\rfloor \leq \lfloor i s \cdot(1 + 2/n^{10}) \rfloor  = is$$ for all $i \in [q]$ and all $\ell \in [k]$, where the second inequality uses
	$\frac{1}{(1-x)} \leq 1 + 2x$ for any $x \in [0, 1/2]$. 
	We prove \cref{enum:pb2yes_1a_71} by induction on $i$. For $i = 1$, the above equation implies that $|S_{1, \ell}| \leq s$ for all $\ell \in [k]$. But since $|G_1| = |S_{1, 1}| + \dots + |S_{1, k}| = sk$, this means that $|S_{1, \ell}| = s$ for all $\ell \in [k]$.
	Consider $i > 1$ and suppose that $|S_{i', \ell}| = s$ for all $i' < i$. Then the above equation implies that $(i-1)s + |S_{i, \ell}| \leq is$, which implies that $|S_{i, \ell}| \leq s$ for all $\ell \in [k]$. Again, since $|G_i| = |S_{i, 1}| + \dots + |S_{i, k}| = sk$, we deduce that $|S_{i, \ell}| = s$ for all $\ell \in [k]$, i.e.~\cref{enum:pb2yes_1a_71} holds for all $i \in [q]$. Furthermore, since the due dates are at most $\mu$ and every job is completed before its due date, 
	we can bound the makespan of the $\ell$-th machine by $\sum_{i=1}^q \Sigma(S_{i, \ell}) \leq \mu$. As this holds for every $\ell \in [k]$ and $\sum_{i=1}^q \Sigma(G_i) = \sum_{\ell = 1}^k \sum_{i=1}^q \Sigma(S_{i, \ell}) = k \mu$, we have $\sum_{i=1}^q \Sigma(S_{i, \ell}) = \mu$ for every $\ell \in [k]$, i.e.~\cref{enum:pb2yes_1b_71} holds as well. 
\end{proof}


\begin{lemma}\label{lem:GroupedkPart_to_PrjCmax}
	\PrjCmax and the special case of \PSUj with target objective $\sum_j U_j = 0$ are equivalent with respect to parameter-preserving reductions. 
\end{lemma}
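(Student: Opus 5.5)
The plan is the classical time-reversal argument, applied in both directions. For \PrjCmax we work with the decision version: given jobs with processing times $p_j$, release dates $r_j$, and a bound $\Lambda$, decide whether there is a schedule with makespan at most $\Lambda$. For the special case of \PSUj we are given $p_j$ and due dates $d_j$ and ask whether every job finishes by its due date. Both problems have the same parameter $T=\sum_j p_j$. By \Cref{obs:PrjCmax_wlog_small_release_dates} we may assume $\Lambda \le T$ (otherwise answer trivially or clip $\Lambda$ and $r_j$ to at most $T$). Similarly, in the due-date instance we may replace each $d_j$ by $\min\{d_j,T\}$, since a feasible schedule can be taken without idle time.

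For the reduction from \PrjCmax to \PSUj, map each job $j$ to a job with the same processing time and due date $d_j \coloneq \Lambda - r_j$; if some $r_j+p_j>\Lambda$, output a trivial no-instance. Given a schedule with starting times $S_j \ge r_j$ and completion times $C_j=S_j+p_j\le\Lambda$, define the reversed schedule on the same machines by $S'_j \coloneq \Lambda - C_j$. Intervals $[S'_j,S'_j+p_j]$ are mirror images of $[S_j,C_j]$ inside $[0,\Lambda]$, so no two jobs on the same machine overlap. Moreover $S'_j\ge0$ and $C'_j=\Lambda-S_j\le \Lambda-r_j=d_j$. The converse is the same map applied backwards: from a schedule meeting all due dates, first shift so that every $C'_j\le \max_j d_j\le\Lambda$, then set $S_j\coloneq\Lambda-C'_j\ge r_j$ and $C_j\le\Lambda$.

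For the reduction from \PSUj to \PrjCmax, set $\Lambda\coloneq D=\max_j d_j$ (which is at most $T$ after clipping) and $r_j\coloneq D-d_j$. The same mirror argument on $[0,D]$ shows equivalence.

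Both maps keep $n$ and all $p_j$ unchanged, so $T$ is preserved exactly. They run in linear time, which makes them parameter-preserving in the sense of the definition. The only care needed is the preprocessing that keeps release dates and due dates nonnegative and bounded by $T$, and this is where I expect the (mild) obstacle. One must check that clipping $\Lambda$ to $T$ and discarding instances with $r_j+p_j>\Lambda$ do not change the answer. This follows because any feasible schedule can be compressed so that each machine idles only before release dates, giving makespan at most $\max_j r_j + T$. After normalising release dates as in \Cref{obs:PrjCmax_wlog_small_release_dates}, the makespan is at most $T$.
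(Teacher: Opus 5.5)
Your proposal is correct and is essentially the paper's proof: the same time-reversal maps $d_j \coloneq \Lambda - r_j$ and $r_j \coloneq \max_{j'} d_{j'} - d_j$, where the mirrored starting times $S'_j = \Lambda - C_j$ give a one-to-one correspondence between valid schedules. The extra preprocessing is unnecessary, since both maps keep $n$ and every $p_j$, hence $T$, unchanged. Note also that literally clipping the $r_j$ to $T$ can change the answer; the only safe normalisation is the one in \cref{obs:PrjCmax_wlog_small_release_dates}, with $\Lambda$ lowered by the removed idle time.
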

\begin{proof}
	\begin{figure}[!t]
    \centering
    \resizebox{\textwidth}{!}{%
        \begin{tikzpicture}[
        node distance=0pt,
        box/.style={rectangle,draw,minimum height=.8cm},
        padding/.style={rectangle,draw,minimum width=25pt, minimum height=.8cm, fill=gray!80}]	

    \begin{scope}
    \draw[line width=4pt, cb_orange] (0.75, 0) -- (9, 0) ;
    \draw[line width=4pt, OI_yellow] (2.25, -.5) -- (9, -.5) ;
    \draw[line width=4pt, OI_lightblue] (3.75, -1) -- (9, -1) ;

    \node[anchor=east,align=right] at (-.5, -.5) {Availability \\ of jobs};
    \end{scope}

    \begin{scope}[yshift=-2.5cm]
    \node[box,fill=cb_orange,from={.75,0 to 3,.8}] (A) {A};
    \node[box,fill=OI_lightblue,from={3.75,0 to 6,.8}] (C) {C};
    \node[box,fill=OI_yellow,from={6,0 to 8.25,.8}] (B) {B};    
    \node[anchor=east,align=right] at (-.5, .3) {Schedule};
    \end{scope}
    
    \begin{scope}[yshift=-3cm]
        \draw[line width=1pt, ->, >={Stealth}] (-.5,0)--(10,0);

        \draw[line width=1pt]  (0, 3.5)--(0, -.1) node[below] {$0$};
        \draw[dashed] (.75, 3.2) -- (.75,-.3) node[below,align=center,rotate=60,anchor=north east] {$t_A = 1$\\$r_A = 1$};
        \draw[dashed] (2.25, 2.7) -- (2.25, -.3) node[below,align=center,rotate=60,anchor=north east] {$r_B = 3$};
        \draw[dashed] (3.75, 2.2) -- (3.75,-.3) node[below,align=center,rotate=60,anchor=north east] {$t_C = 5$\\$r_C =5$};
        \draw[dashed] (6, 1.5) -- (6, -.3) node[below,rotate=60,anchor=north east] {$t_B = 8$};

        \draw[dashed] (9, 3.2) -- (9, -.3) node[below,align=center,rotate=60,anchor=north east] {$M= 12$};

        \node at (10, -.4) {time};
    \end{scope}

    \begin{scope}[xshift=12cm]

    \begin{scope}
    \draw[line width=4pt, cb_orange] (0, 0) -- (8.25, 0);
    \draw[line width=4pt, OI_yellow] (0, -.5) -- (6.75, -.5);
    \draw[line width=4pt, OI_lightblue] (0, -1) -- (5.25, -1);


    \end{scope}

    \begin{scope}[yshift=-2.5cm]
    \node[box,fill=cb_orange,from={6,0 to 8.25,.8}] (A) {A};
    \node[box,fill=OI_lightblue,from={3,0 to 5.25,.8}] (C) {C};
    \node[box,fill=OI_yellow,from={.75,0 to 3,.8}] (B) {B};    
    \end{scope}

    \begin{scope}[yshift=-3cm]
        \draw[line width=1pt, ->, >={Stealth}] (-.5,0)--(9.5,0);

        \draw[line width=1pt]  (0, 3.5)--(0, -.3) node[below] {$0$};
        \draw[dashed] (6, 1.5) -- (6,-.3) node[below,align=center,rotate=60,anchor=north east] {$t_A = 8$};
        \draw[dashed] (6.75, 2.7) -- (6.75, -.3) node[below,align=center,rotate=60,anchor=north east] {$d_B = 9$};
        \draw[dashed] (8.25, 3.2) -- (8.25, -.3) node[below,align=center,rotate=60,anchor=north east] {$d_A = 11$};
        \draw[dashed] (5.25, 2.2) -- (5.25, -.3) node[below,align=center,rotate=60,anchor=north east] {$d_C = 7$};
        \draw[dashed] (3, 1.5) -- (3, -.3) node[below,align=center,rotate=60,anchor=north east] {$t_C = 4$};
        \draw[dashed] (.75, 1.5) -- (.75,-.3) node[below,rotate=60,anchor=north east] {$t_B = 1$};

        \node at (9.5, -.4) {time};
    \end{scope}

    \end{scope}
    \end{tikzpicture}
    }%
\caption{An instance of \PrjCmax with a valid schedule $S$ (left) and the corresponding instance of \PSUj with target objective $\sum_j U_j=0$ with the valid schedule $S'$ (right) in the proof of \cref{lem:GroupedkPart_to_PrjCmax}.}
\label{fig:GroupedkPart_to_PrjCmax}
\end{figure}
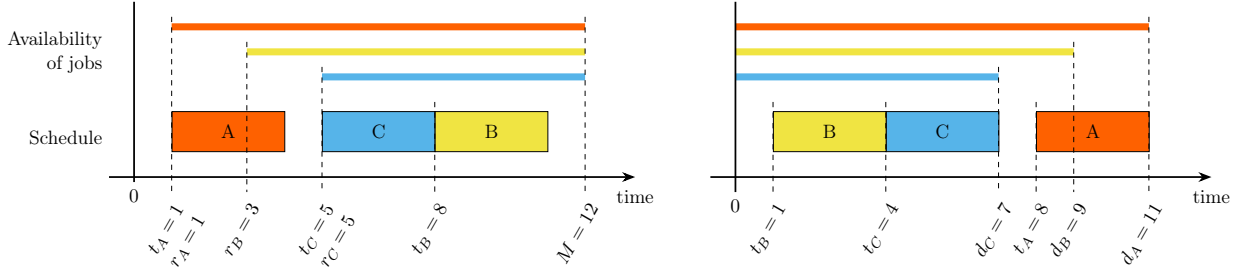
	The idea behind the equivalence is to ``reverse the course of time'' (see \cref{fig:GroupedkPart_to_PrjCmax} for an illustration).
	Indeed, for an instance of \PrjCmax with processing times $p_1, \dots, p_n \in \mathbb N$, release dates $r_1, \dots, r_n \in \mathbb N$ and target objective $M \in \mathbb N$, we create jobs with the same processing time $p_j' \coloneq p_j$ and due date $d_j \coloneq M - r_j$ for all $j \in [n]$. Since we can assume without loss of generality that $r_j \leq M$, the due dates are non-negative integers $d_j \in \mathbb N$. 
	For any schedule $S$ of the jobs $p_1, \dots, p_n$ with release dates $r_1, \dots, r_n$ on $k$ identical machines, construct the schedule $S'$ of the jobs $p_1', \dots, p_n'$ with due dates $d_1, \dots, d_n$ on $k$ identical machines as follows. If job $j$ is scheduled in $S$ from time $t_j$ to $t_j + p_j$, then schedule the job $j$ in $S'$ on the same machine from time $t_j'$ to $t_j' + p_j'$ for $t_j' \coloneq M - t_j - p_j'$. 
	We say that $S$ is \emph{valid} if every job is scheduled after its release date and the makespan of every machine is at most $M$. Similarly, $S'$ is \emph{valid} if every job terminates before its due date, and every job is scheduled after the start time $0$. Note that the job $j$ is scheduled in $S$ after its release date if and only if $t_j \geq r_j \Leftrightarrow t'_j + p_j' = M - t_j \leq M - r_j = d_j$, i.e.~the corresponding job $j$ terminates before its due date in the schedule $S'$. Furthermore, the makespan of every machine in $S$ is at most $M$ if and only if for every job $t_j + p_j \leq M \Leftrightarrow 0 \leq  M - t_j - p_j = t_j'$, i.e.~every job is scheduled in $S'$ after the start time $0$. Hence, there is a one-to-one mapping between valid schedules of the \PrjCmax instance and valid schedules of the constructed \PSUj instance with target objective $\sum_j U_j = 0$.
	

	Conversely, for an instance of \PSUj with processing times $p_1, \dots, p_n \in \mathbb N$, due dates $d_1, \dots, d_n \in \mathbb N$ and target objective $\sum_j U_j = 0$, let $M \coloneq \max_{j} d_j$ be a makespan objective and create jobs with the same processing times $p_j$ and release dates $r_j \coloneq M - d_j$. 
	Then a similar argument as above gives a one-to-one mapping between valid schedules of the \PrjCmax instance and valid schedules of the \PSUj instance with target objective $\sum_j U_j = 0$.
\end{proof}

Combining \cref{lem:parameter-preserving-reduction,cor:SETH_pbtwoYES,lem:GroupedkPart_to_special_schedule,lem:GroupedkPart_to_PrjCmax} proves the following, which in turn proves~\cref{thm:SETH_to_PrjCmax}.
\begin{corollary}
	Assuming SETH, for every $\eps > 0$ and $k \ge 2$, there exists $\delta > 0$ such that \PrjCmax cannot be solved in time $\Oh(2^{\delta n} T^{k-1-\epsilon})$, where $n$ is the number of jobs and $T$ is the total processing time.
\end{corollary}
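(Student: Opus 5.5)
The corollary follows by chaining \cref{cor:SETH_pbtwoYES}, \cref{lem:GroupedkPart_to_special_schedule} and \cref{lem:GroupedkPart_to_PrjCmax}, and transferring the bound through \cref{lem:parameter-preserving-reduction}. The only real work is checking that the parameter $T$ (total processing time) of the final \PrjCmax instance is polynomially related to the parameter $W$ of the \pbtwoYes instance. This matters because the lower bound in \cref{cor:SETH_pbtwoYES} is stated in terms of $W$, whereas the target statement is in terms of $T$.

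Concretely, I argue by contradiction. Suppose that for some $\eps>0$ and $k\ge 2$, for every $\delta>0$, \PrjCmax can be solved in time $\Oh(2^{\delta n}T^{k-1-\eps})$.

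\begin{enumerate}
\item Given a \pbtwoYes instance with $n$ items and bound $W$, apply \cref{lem:GroupedkPart_to_special_schedule}. This produces an equivalent instance of \PSUj with target $\sum_j U_j=0$, with $n$ jobs and processing times equal to the items. So its total processing time is $T=\sum_i\Sigma(G_i)\le nW$, and its due dates are at most $\mu\le nW$.
\item Apply the time reversal of \cref{lem:GroupedkPart_to_PrjCmax}. It keeps the processing times, so $T$ is unchanged, and it sets $M=\max_j d_j\le nW$ and $r_j=M-d_j$. Since the reduction only reverses time, the number of jobs stays $n$.
\end{enumerate}

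The composed map is therefore a parameter-preserving reduction from \pbtwoYes (parameter $W$) to \PrjCmax (parameter $T$): its output size is $\Oh(n)$, it runs in polynomial time, and $T\le W\cdot n^{\Oh(1)}$. Note that $\log W$ is not needed in the size bound, since the item count is preserved exactly.

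Now apply \cref{lem:parameter-preserving-reduction} with $\gamma=k-1$. It yields an $\eps'>0$ such that, for every $\delta>0$, \pbtwoYes can be solved in time $\Oh(2^{\delta n}W^{k-1-\eps'})$. This contradicts \cref{cor:SETH_pbtwoYES} applied with $\eps'$, which asserts that for some $\delta>0$ no such algorithm exists. The quantifier order matches because the assumed algorithm works for all $\delta$.

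I expect no serious obstacle. The only subtle point is that $T$ must not exceed $W\cdot n^{\Oh(1)}$; the time reversal never changes processing times, and the introduced release dates are irrelevant to the parameter $T$. Finally, the proved statement directly rules out time $2^{o(n)}T^{k-1-\eps}$, which gives \cref{thm:SETH_to_PrjCmax}.
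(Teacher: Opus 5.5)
Your proof is correct and follows the paper's route exactly: the paper also obtains this corollary by combining \cref{cor:SETH_pbtwoYES}, \cref{lem:GroupedkPart_to_special_schedule}, \cref{lem:GroupedkPart_to_PrjCmax} and \cref{lem:parameter-preserving-reduction}. You check directly that the composed map keeps the job count at $n$ and gives $T \le nW$, rather than composing two parameter-preserving reductions abstractly, which makes explicit a detail the paper leaves implicit.
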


Finally, observe that in \cref{lem:GroupedkPart_to_special_schedule} we studied the special case of \PSUj with target objective $\sum_j U_j = 0$. This problem trivially generalises to \PSUj (with any target objective). 
\begin{fact}\label{lem:GroupedkPart_to_PSUj}
	\PSUj with $\sum_j U_j = 0$ is a special case of \PSUj.
\end{fact}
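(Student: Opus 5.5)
The statement to prove is \cref{lem:GroupedkPart_to_PSUj}: the problem \PSUj restricted to target objective $\sum_j U_j = 0$ is a special case of \PSUj. The plan is to exhibit the identity map as a parameter-preserving reduction.

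Given an instance of the restricted problem, with processing times $p_1,\dots,p_n$ and due dates $d_1,\dots,d_n$ on $k$ identical machines, I output the same jobs on the same $k$ machines. For the decision version of \PSUj I set the threshold $\Lambda \coloneq 0$.

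Equivalence is immediate. The decision question ``is there a schedule with $\sum_j U_j \le 0$?'' holds if and only if some schedule has every $U_j = 0$, because each $U_j \in \{0,1\}$. Having every $U_j = 0$ means every job completes by its due date, which is exactly the restricted problem's question.

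For the parameters, the number of jobs $n$ is unchanged. The total processing time $T = \sum_j p_j$ is also unchanged, so $T_B = T_A$. The construction is a copy and runs in linear time. This meets the definition of a parameter-preserving reduction.

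Combined with \cref{lem:parameter-preserving-reduction}, \cref{lem:GroupedkPart_to_special_schedule} and \cref{cor:SETH_pbtwoYES}, this gives \cref{cor:SETH_to_PSUj}.

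There is no real obstacle here. The only point requiring care is that $\Lambda = 0$ is a legal target in the decision formulation, which it is since $\Lambda \in \N$.
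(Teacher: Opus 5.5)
Your proof is correct and is essentially the paper's reasoning: the paper gives no separate argument beyond noting that the restricted problem is \PSUj with threshold $\Lambda = 0$. Your identity-map reduction, with $n$ and $T$ unchanged and $\sum_j U_j \le 0$ forcing every $U_j = 0$, is exactly that observation written out.
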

\noindent Therefore, combining \cref{lem:parameter-preserving-reduction,lem:GroupedkPart_to_special_schedule,cor:SETH_pbtwoYES} with \cref{lem:GroupedkPart_to_PSUj} proves the following, which establishes~\cref{cor:SETH_to_PSUj}.
\begin{corollary}
	Assuming SETH, for every $\eps > 0$ and $k \ge 2$ there exists $\delta > 0$ such that \PSUj cannot be solved in time 
	$\Oh(2^{\delta n} T^{k-1-\epsilon})$, where $n$ is the number of jobs and $T$ is the total processing time.
\end{corollary}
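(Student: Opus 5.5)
The statement is a direct chaining of results already in place, so the plan is simply to compose the reductions and transfer the running-time bound via \cref{lem:parameter-preserving-reduction}. Start from \cref{cor:SETH_pbtwoYES}: assuming SETH, for every $\eps>0$ and $k\ge 2$ there is $\delta>0$ such that \pbtwoYes admits no $\Oh(2^{\delta n}W^{k-1-\eps})$-time algorithm. Then apply \cref{lem:GroupedkPart_to_special_schedule} to reduce \pbtwoYes to the special case of \PSUj asking for $\sum_j U_j=0$. By \cref{lem:GroupedkPart_to_PSUj}, the latter is literally a special case of \PSUj: keep the jobs and set the target objective $\Lambda=0$. So the composed map is still a parameter-preserving reduction from \pbtwoYes to \PSUj.

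The one point to check is the parameter accounting. The source parameter is $W$, while the target parameter is the total processing time $T$. In \cref{lem:GroupedkPart_to_special_schedule} the instance has $n$ jobs with processing times at most $W$, so $T\le nW = W\cdot n^{\Oh(1)}$. The number of jobs is unchanged, and the construction runs in linear time. This meets the definition of a parameter-preserving reduction, with $T_B\le T_A\, n_A^{\Oh(1)}$ and $n_B=\Oh(n_A)$.

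Now argue by contraposition with $\gamma=k-1$. Suppose that for some $\eps>0$ and every $\delta>0$, \PSUj could be solved in time $\Oh(2^{\delta n}T^{k-1-\eps})$. Then \cref{lem:parameter-preserving-reduction} yields some $\eps'>0$ such that for all $\delta>0$, \pbtwoYes is solvable in $\Oh(2^{\delta n}W^{k-1-\eps'})$. This contradicts \cref{cor:SETH_pbtwoYES} instantiated with $\eps'$.

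There is no real obstacle here. The only subtlety is the order of quantifiers: "for every $\eps$ there exists $\delta$" must be matched to the "exists $\eps$, for all $\delta$" form in \cref{lem:parameter-preserving-reduction}. Stating the argument as the contrapositive above handles this directly.
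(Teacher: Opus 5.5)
Your proposal is correct and follows the paper's own argument: it chains \cref{cor:SETH_pbtwoYES}, the reduction of \cref{lem:GroupedkPart_to_special_schedule} (which has $T=\Theta(nW)$), the observation \cref{lem:GroupedkPart_to_PSUj} that the $\sum_j U_j=0$ case is a special case of \PSUj, and \cref{lem:parameter-preserving-reduction}. Your parameter accounting and your handling of the quantifier order via the contrapositive are both sound.
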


\subsection{Weighted Sum of Completion Times: Proof of Theorem~\ref{thm:SETH_to_PSwjCj}}

\begin{lemma}\label{lem:GroupedkPart_to_PSwjCj}
	There is a parameter-preserving reduction from \pbtwoYes to \PSwjCj.
\end{lemma}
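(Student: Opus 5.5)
We follow the sketch in the technical overview. Given an instance $G_1,\dots,G_q\subset \N\cap[W(1-1/n^{10}),W]$ of \pbtwoYes (w.l.o.g.\ $n$ large and $W\ge n^{10}$), we create for every $x\in G_i$ a job with processing time $p=x$ and weight $w=x\cdot L+(q-i)$, where $L\coloneq n^{10}W$ (any $L\gg n^3W$ will do). The threshold is
\[
\Lambda\coloneq L\cdot\Phi^\ast+\Psi^\ast ,
\]
where $\Phi^\ast$ and $\Psi^\ast$ are the two cost components of the intended schedule, defined below. Since $T=\sum_j p_j\le nW$ is unchanged, the reduction is parameter-preserving.

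On a single machine, an optimal order is Smith's rule: nonincreasing $w_j/p_j = L+(q-i)/p_j$. Because all $p_j$ lie in a factor-$(1+o(1))$ window, this rule sorts jobs by group index (smaller $i$ first), breaking ties inside a group by smaller $p$. We split the cost of a machine with job sequence $J$ as
\[
\sum_j w_jC_j = L\sum_{j\le j'\in J}p_jp_{j'} \;+\; \sum_j (q-i_j)C_j .
\]
The first term equals $\tfrac L2\bigl(P_\ell^2+\sum_j p_j^2\bigr)$, where $P_\ell$ is the machine load. Summed over machines, the high-order part is
\[
\tfrac L2\Bigl(\sum_\ell P_\ell^2 + \sum_j p_j^2\Bigr),
\]
in which $\sum_j p_j^2$ does not depend on the schedule. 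By convexity, $\sum_\ell P_\ell^2$ is minimized, with value $k\mu^2$, exactly when all loads equal $\mu$; this is \cref{enum:pb2yes_1b}. Any imbalance makes $\sum_\ell P_\ell^2$ grow by at least $1$ (loads are integers, and with $\sum_\ell P_\ell = k\mu$ the deviation is at least of order $1$; handle divisibility by assuming $k\mid \Sigma$). That increase costs at least $L/2$, which exceeds every possible low-order term: the low-order part is at most $q\cdot n\cdot nW\le n^3W< L/2$. Hence $\Phi^\ast=\tfrac12(k\mu^2+\sum_j p_j^2)$, and a schedule meets $\Lambda$ only if it is perfectly balanced and its low-order cost is at most $\Psi^\ast$.

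Next, the low-order part $\Psi=\sum_j (q-i_j)C_j$ must enforce \cref{enum:pb2yes_1a}. Rewrite it as
\[
\Psi=\sum_{c=1}^{q-1}\ \sum_{j:\,i_j\le c} C_j ,
\]
so it suffices to control, for each prefix of groups, the sum of completion times of those jobs. This is the main obstacle. Approximate each $C_j$ by $W\cdot(\text{rank of }j\text{ on its machine})$, with error at most $n\cdot W/n^{10}$ per job. For a fixed prefix $c$ containing $m_c=csk$ jobs, the sum of ranks is minimized exactly when every machine receives $cs$ of them (sums of $1+\dots+r$ are convex in $r$), giving the value $k\binom{cs+1}{2}$. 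Any deviation from $cs$ per machine raises this by at least $1$ rank unit, i.e.\ by roughly $W$, which dominates the accumulated error $n^2\cdot nW/n^{10}$.

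The difficulty is that $\Psi^\ast$ must be an exact number, while the true $C_j$ depend on the actual processing times. To get an exact statement, define $\Psi^\ast$ as the minimum of $\Psi$ over balanced, group-equal schedules. The catch is that the within-group distributions affect the exact value. I would instead use a two-term comparison: show that any schedule with some prefix unbalanced has $\Psi\ge \Psi_{\text{bal}}+W/2$, whereas every group-equal schedule has $\Psi$ within $o(W)$ of $\sum_c k\binom{cs+1}{2}W$. The threshold is then set to $\Psi^\ast\coloneq\sum_c Wk\binom{cs+1}{2}+W/4$, and it suffices to check that the error bounds hold with room to spare.

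The two directions then follow. In the yes-case, schedule $S_{1,\ell},\dots,S_{q,\ell}$ in order on machine $\ell$. This gives high-order part exactly $L\Phi^\ast$ and low-order part below $\Psi^\ast$. Conversely, any schedule of cost at most $\Lambda$ must be balanced, since otherwise the high-order part already exceeds $L\Phi^\ast+\Psi^\ast$. It must also have all prefix counts equal to $cs$ per machine, and taking differences of consecutive prefixes yields $|S_{i,\ell}|=s$. Hence the \pbtwoYes instance is a yes-instance.
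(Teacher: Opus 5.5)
Your proof is correct. For the high-order part it coincides with the paper's: the same weights up to the constant in front of $p_j$, the identity $\sum_j p_jC_j=\frac12(\sum_\ell P_\ell^2+\sum_j p_j^2)$, convexity, and no overflow because the low-order part stays below $L/2$.

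For the low-order part you take a different route.

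The paper sets that part of the threshold to exactly $WR$, where $R=k\sum_i(q-i)\sum_{r=1}^s((i-1)s+r)$. It then proceeds in three steps:
\begin{enumerate}
\item It uses integrality to convert $\sum_j(q-i_j)C_j\le WR$ into the exact rank inequality $\sum_j(q-i_j)|B_j|\le R$.
\item It reorders each machine by group index.
\item It runs an exchange argument: swap the last surplus job of the first unbalanced group $g$ on one machine with the first later-group job on a deficient machine. This shows that only group-equal schedules attain $R$.
\end{enumerate}

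You instead write $q-i_j=|\{c: i_j\le c\le q-1\}|$, so $\Psi$ splits into prefix sums. Let $a_{c,\ell}$ be the number of jobs of groups $\le c$ on machine $\ell$. Their ranks are distinct positive integers, so their rank sum is at least $\sum_\ell\binom{a_{c,\ell}+1}{2}=k\binom{cs+1}{2}+\frac12\sum_\ell(a_{c,\ell}-cs)^2$. This holds for every schedule, not just group-sorted ones. It is a global lower bound whose excess is at least one rank unit, i.e.\ about $W$, as soon as some prefix is unbalanced. The slack $W/4$ in $\Psi^\ast$ absorbs the $W/n^{10}$ rounding errors.

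What your route buys: no integrality rounding, no normalisation of the job order, and no local-improvement step. The low-order analysis becomes the same sum-of-squares argument as the high-order one. What the paper's route buys: a threshold that exactly matches the rank structure of the canonical schedule.

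One small point to add. The prefix $c=q$ does not appear in $\Psi$, since its coefficient $q-q$ is zero. So getting $|S_{q,\ell}|=s$ by differencing consecutive prefixes needs $a_{q,\ell}=qs$, i.e.\ every machine holds exactly $qs$ jobs. This follows from the equal loads $P_\ell=\mu$ and the narrow item range, exactly as in \cref{obs:bounded_subset_size}. The paper's swap argument tacitly needs the same fact in the case $g=q$, where no later-group job $p_y$ exists.
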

\begin{proof}
	Let $G_1, \dots, G_q \subset \mathbb N \cap [W(1 - 1/n^{10}), W]$ be an instance of \pbtwoYes for some integers $s, q, W \in \mathbb N$ where the groups have size $|G_i| = sk$ and the total number of items is $n \coloneq ksq$. 
	Without loss of generality, we may assume that $W \geq n^{10}$ as otherwise the instance can be solved in polynomial time.
	Let $\mu \coloneq \frac{1}{k} \sum_{i = 1}^q \Sigma(G_i)$ and denote the elements in the $i$-th group by $G_i = \{p_{(i-1)sk + 1}, \dots, p_{isk}\}$, i.e.~$G_1 \cup \dots \cup G_q = \{p_1, \dots, p_n\}$. Let $\Tilde W \coloneq 2 W n^{10}$. 
	Build an instance of \PSwjCj by constructing for every item $p_j \in G_i$ for $j \in [(i-1)sk +1, isk]$ a job with processing time $p_{j}$ and weight $w_j \coloneq p_j \Tilde W + (q-i)$, and set the target objective $\Lambda$ of weighted completion times to 
	$$
	\Lambda \coloneq \frac{\Tilde W}{2} k \mu^2 + \frac{\Tilde W}{2} \sum_{j = 1}^n p_j^2 + W k \sum_{i=1}^q (q-i) \sum_{r = 1}^s ((i-1)s + r).
	$$
	This constructs an instance of \PSwjCj in time $\Oh(n)$ of size $n$ where the sum of processing times is $\sum_{i =1}^q \Sigma(G_i) = \Theta(nW)$. Hence, it remains to show equivalence of the constructed and original instance.
	 
	We start with the following observations.
	Consider a schedule of all $n$ jobs on the $k$ machines and let $S_{i, \ell} \subset G_i$ be the set of processing times of the jobs of group $G_i$ scheduled on the $\ell$-th machine. Let $S_{\ell} \coloneq \bigcup_{i=1}^q S_{i, \ell}$.
	For any $j \in [n]$, denote by $C_j$ the completion time of job $j$ and let $B_j \subset [n]$ be the set of jobs that are scheduled on the same machine as $j$ and not processed after $j$ (so $j \in B_j$). Then $C_j = \sum_{j' \in B_j} p_{j'}$ and for each $\ell \in [k]$ we can express $\sum_{p_j \in S_\ell} p_j C_j$ as follows:
	\begin{align*}
		\sum_{p_j \in S_\ell} p_j C_j 
		= \sum_{p_j \in S_\ell} p_j \sum_{j' \in B_j} p_{j'} 
		= \frac{1}{2} \left( \left(\sum_{p_j \in S_\ell} p_j\right)^2 + \sum_{p_j \in S_\ell} p_{j}^2 \right)
		= \frac{1}{2} \Sigma(S_{\ell})^2 + \frac{1}{2}\sum_{p_j \in S_\ell} p_{j}^2.
	\end{align*}
	So the sum of weighted completion times is:
	\begin{align*}
		\sum_{j \in [n]} w_j C_j 
		&= \sum_{\ell =1}^{k} \sum_{p_j \in S_\ell} w_j C_j \\
		&= \sum_{\ell = 1}^{k} \left( \sum_{p_j \in S_\ell} p_j \Tilde W \cdot C_j + \sum_{i=1}^q \sum_{p_j \in S_{i, \ell}} (q-i) \cdot C_j \right) \\
		&= \frac{\Tilde W}{2} \sum_{\ell =1}^{k} \Sigma(S_{\ell})^2 + \frac{\Tilde W}{2} \sum_{\ell = 1}^k \sum_{p_j \in S_\ell} p_j^2 + \sum_{\ell =1}^{k} \sum_{i=1}^q \sum_{p_j \in S_{i, \ell}} (q-i)C_j \\
		&= \frac{\Tilde W}{2} \sum_{\ell =1}^{k} \Sigma(S_{\ell})^2 + \frac{\Tilde W}{2} \sum_{j =1}^n p_j^2 + \sum_{\ell =1}^{k} \sum_{i=1}^q \sum_{p_j \in S_{i, \ell}} (q-i)C_j
	\end{align*}
	and thus $\sum_{j \in [n]} w_j C_j \leq \Lambda$ if and only if 
	$$
	\frac{\Tilde W}{2} \sum_{\ell =1}^{k} \Sigma(S_{\ell})^2  + \sum_{\ell =1}^{k} \sum_{i=1}^q \sum_{p_j \in S_{i, \ell}} (q-i)C_j  \leq \frac{\Tilde W}{2} k \mu^2 + W k \sum_{i=1}^q (q-i) \sum_{r = 1}^s ((i-1)s + r).
	$$
	Note that on the left-hand side, the first term is at least $\Tilde W /2$ and the second term is less than $\Tilde W /2$. On the right-hand side, the first term is also at least $\Tilde W/2$ while the second term is less than $\Tilde W /2$.
	So in the above equation, there is no overflow between terms below and above $\Tilde W / 2$, and therefore $\sum_{j \in [n]} w_j C_j \leq \Lambda$ holds if and only if either \cref{eq:PSwjCj_makespan_str} holds  
	\begin{equation}\label{eq:PSwjCj_makespan_str}
		\sum_{\ell =1}^{k} \Sigma(S_{\ell})^2 < k \mu^2
	\end{equation}
	or \cref{eq:PSwjCj_makespan_tight,eq:PSwjCj_order_Cj} hold:
	\begin{equation}\label{eq:PSwjCj_makespan_tight}
		\sum_{\ell =1}^{k} \Sigma(S_{\ell})^2 =  k \mu^2
	\end{equation}
	\begin{equation}\label{eq:PSwjCj_order_Cj}
		\sum_{\ell =1}^{k} \sum_{i=1}^q \sum_{p_j \in S_{i, \ell}} (q-i)C_j \leq Wk\sum_{i=1}^q (q-i) \sum_{r= 1}^s ((i-1)s + r)
	\end{equation}
	However, note that by the inequality between quadratic and arithmetic means, we have 
	$$\sum_{\ell=1}^k \Sigma(S_{\ell})^2 \geq k \left(\frac{1}{k} \sum_{\ell=1}^k \Sigma(S_{\ell}) \right)^2 = k\mu^2,$$
	and equality holds if and only if $\Sigma(S_1) = \dots = \Sigma(S_k)$.
	Hence, \cref{eq:PSwjCj_makespan_str} never holds and we can replace \cref{eq:PSwjCj_makespan_tight} by 
	\begin{equation}\label{eq:PSwjCj_makespan}
		\Sigma(S_{\ell}) =  \mu \quad \forall \ell \in [k].
	\end{equation}
	On the other hand, since all processing times are at least $W (1 - 1/n^{10})$, we can bound the completion time of job $j$ by $C_j \geq |B_j| \cdot W (1 - 1/n^{10})$. Since all $|B_j|$ are integral and the right-hand side of \cref{eq:PSwjCj_order_Cj} is less than $W (n^{10} -1)$, \cref{eq:PSwjCj_order_Cj} is equivalent to \cref{eq:PSwjCj_order_Bj}:
	\begin{equation}\label{eq:PSwjCj_order_Bj}
		\sum_{\ell =1}^{k} \sum_{i=1}^q \sum_{p_j \in S_{i, \ell}} (q-i)|B_j| \leq k\sum_{i=1}^q (q-i) \sum_{r= 1}^s ((i-1)s + r).
	\end{equation}
	Therefore, $\sum_{j \in [n]} w_j C_j \leq \Lambda$ holds if and only if \cref{eq:PSwjCj_makespan,eq:PSwjCj_order_Bj} hold.
	We now show equivalence between the given instance of \pbtwoYes and the constructed instance of \PSwjCj.

	\subparagraph*{Yes of \pbtwoYes $\to$ $\sum_{j} w_j C_j \leq \Lambda$.}
	Suppose that for all $i \in [q]$ there exists a partition of $G_i$ into subsets $S_{i, 1}, \dots, S_{i, k}$ satisfying the following \cref{enum:pb2yes_1a_76,enum:pb2yes_1b_76} of~\pbtwoYes. 
	\begin{enumerate}
		\item\label{enum:pb2yes_1a_76} $|S_{i, \ell}| = s$ for all $i \in [q]$ and $\ell \in [k]$,
		\item\label{enum:pb2yes_1b_76} $\sum_{i \in [q]} \Sigma(S_{i, \ell}) = \mu$ for all $\ell \in [k]$.
	\end{enumerate}
	For each $\ell \in [k]$, schedule the jobs corresponding to $S_{1, \ell}, \dots, S_{q, \ell}$ on the $\ell$-th machine in that order (choose an arbitrary order within $S_{i, \ell}$). Then by \cref{enum:pb2yes_1b_76}, for each $\ell \in [k]$ we have $\sum_{i = 1}^q \Sigma(S_{i, \ell}) = \mu$. Therefore, \cref{eq:PSwjCj_makespan} holds. 
	Furthermore, by \cref{enum:pb2yes_1a_76}, the $r$-th job $j$ in $S_{i, \ell}$ has $|B_j| = (i-1)s + r$ jobs scheduled on the $\ell$-th machine not after $j$. So \cref{eq:PSwjCj_order_Bj} also holds. By the above discussion, this means that the schedule satisfies $\sum_{j\in[n]} w_j C_j \leq \Lambda$.

	\subparagraph*{$\sum_{j} w_j C_j \leq \Lambda$ $\to$ Yes of \pbtwoYes.}

	Conversely, suppose that there exists a schedule $\sigma$ with $\sum_{j\in[n]} w_j C_j \leq \Lambda$.
	Let $S_{i, \ell}$ be the processing times of jobs scheduled on the $\ell$-th machine corresponding to items in group $G_i$ and $S_\ell \coloneq \bigcup_{i = 1}^q S_{i, \ell}$. Then $S_{i, 1}, \dots, S_{i, k}$ is a partition of $G_i$. We show that the above \cref{enum:pb2yes_1a_76,enum:pb2yes_1b_76} are satisfied.
	Let $B_j \subset [n]$ be the set of jobs processed on the same machine as~$j$ not after~$j$, as introduced above. 
	Then, by the above discussion, both \cref{eq:PSwjCj_order_Bj,eq:PSwjCj_makespan} hold.
	Observe that we can assume without loss of generality that the jobs are scheduled with increasing order of $i$ on each machine. Indeed, this rearranging does not affect \cref{eq:PSwjCj_makespan}, and can only decrease the left-hand side of \cref{eq:PSwjCj_order_Bj}.
	Since \cref{enum:pb2yes_1b_76} directly follows from \cref{eq:PSwjCj_makespan}, we focus on showing \cref{enum:pb2yes_1a_76}.
	We will prove that the left-hand side of \cref{eq:PSwjCj_order_Bj} is minimised whenever \cref{enum:pb2yes_1a_76} holds. Since \cref{enum:pb2yes_1a_76} implies that \cref{eq:PSwjCj_order_Bj} is tight, and $\sigma$ satisfies \cref{eq:PSwjCj_order_Bj}, this will imply that $\sigma$ satisfies \cref{enum:pb2yes_1a_76} as well.


	Suppose that $\sigma$ does not satisfy \cref{enum:pb2yes_1a_76}, i.e.~there exists $i \in [q]$ and $\ell \in [k]$ with $|S_{i, \ell}| \neq s$. Let $g$ be the smallest such $i \in [q]$, i.e.~for each $i < g$ we have $|S_{i, \ell}| = s$ for all $\ell \in [k]$.
	Since $|S_{g, 1}| + \dots + |S_{g, k}| = |G_g| = sk$, this means that there exist $\alpha, \beta \in [k]$ such that $|S_{g, \beta}| < s < |S_{g, \alpha}|$. 
	Let $p_x \in S_{g, \alpha}$ be the last job from $S_{g, \alpha}$ scheduled on the $\alpha$-th machine and let $p_y \in S_{f, \beta}$ be the first job scheduled on the $\beta$-th machine right after the jobs of $S_{g, \beta}$ (see \cref{fig:PSwjCj_swap} for an illustration). In particular, $f > g$. 
	%
	Consider the schedule $\sigma'$ where we swap the jobs $p_x$ and $p_y$, i.e.~$p_y$ is now scheduled on the $\alpha$-th machine after the jobs of $S_{g, \alpha} \setminus \{p_x\}$ and $p_x$ is scheduled on the $\beta$-th machine after all the jobs of $S_{g, \beta}$ (see \cref{fig:PSwjCj_swap}). We use the prime notation to denote quantities corresponding to the schedule $\sigma'$. 
	Then $|B_x'| = |B_y| = 1 + \sum_{i = 1}^g |S_{i, \beta}|$, $|B_y'| = |B_x| = \sum_{i= 1}^g |S_{i, \alpha}|$ and $|B_j'| = |B_j|$ for any job $j \notin \{x, y\}$. 
	Therefore, the left-hand side of \cref{eq:PSwjCj_order_Bj} decreases by
	\begin{align*}
		&\phantom{=} 
		\sum_{\ell =1}^{k} \sum_{i=1}^q \sum_{p_j \in S_{i, \ell}} (q-i)|B_j| - \sum_{\ell =1}^{k} \sum_{i=1}^q \sum_{p_j \in S_{i, \ell}'} (q-i)|B_j'|\\
		&=
		(q - g)|B_x| + (q - f)|B_y| - (q-g)|B_x'| - (q-f) |B_y'| \\
		&= (|B_x| - |B_x'|)( (q-g) - (q - f)) \\
		&\geq |B_x| - |B_x'| \tag{because $f> g$} \\
		&\geq 1. \tag{because $|S_{g, \beta}| < s < |S_{g, \alpha}|$}
	\end{align*}
	In particular, the left-hand side of \cref{eq:PSwjCj_order_Bj} is strictly smaller for the schedule $\sigma'$ than for the schedule $\sigma$. This means that the left-hand side of \cref{eq:PSwjCj_order_Bj} is minimised when \cref{enum:pb2yes_1a_76} holds, as otherwise we can apply the above transformation. However, if \cref{enum:pb2yes_1a_76} holds, i.e.~$|S_{i, \ell}| = s$ for all $i \in [q]$ and $\ell \in [k]$, then the $r$-th job $j$ in $S_{i, \ell}$ has $|B_j| = (i-1)s + r$ jobs scheduled on the $\ell$-th machine not after $j$, and so 
	$$
	\sum_{\ell =1}^{k} \sum_{i=1}^q \sum_{p_j \in S_{i, \ell}'} (q-i)|B_j'| = k \sum_{i=1}^q (q-i) \sum_{r=1}^s ((i-1)s + r).
	$$
	This means that \cref{eq:PSwjCj_order_Bj} can only be satisfied if the left-hand side is minimised, i.e.~\cref{enum:pb2yes_1a_76} holds. Since $\sigma$ satisfies \cref{eq:PSwjCj_order_Bj}, we conclude that $\sigma$ satisfies \cref{enum:pb2yes_1a_76}, and thus that the given \pbtwoYes instance is in the \textsc{Yes} case.
	\begin{figure}[!t]
    \centering
    \resizebox{.8\textwidth}{!}{%
        \begin{tikzpicture}[
        node distance=0pt,
        box/.style={rectangle,draw,minimum height=.8cm},
        padding/.style={rectangle,draw,minimum width=25pt, minimum height=.8cm, fill=gray!80}]

    \begin{scope}
    \node[box,fill=cb_orange,minimum width = 70pt,anchor=west] (j1) at (0, 0) { };
    \node[box,fill=cb_orange,minimum width = 20pt] (j2) [right=of j1] { };
    \node[box,fill=cb_orange,minimum width = 20pt] (C) [right=of j2] { };    

    \node [left=.5cm of j1] (T) {Machine $\alpha$};
    \node [above=.5cm of T,xshift=-1cm] {\textbf{Schedule $\sigma$}};

    \node[box,fill=OI_lightblue,minimum width = 20pt,anchor=west] (j1) [right=of C] { };
    \node[box,fill=OI_lightblue,minimum width = 45pt] (j2) [right=of j1] { };
    \node[box,fill=OI_lightblue,minimum width = 25pt] (I) [right=of j2] { };  
    \node[box,fill=OI_lightblue,minimum width = 55pt] (J) [right=of I] {$x$};

    \node[box,fill=OI_yellow,minimum width = 20pt,anchor=west] (j1) [right=of J] { };
    \node[box,fill=OI_yellow,minimum width = 50pt] (j2) [right=of j1] { };

    \end{scope}

    \begin{scope}[yshift=-1cm]
    \node[box,fill=cb_orange,minimum width = 10pt,anchor=west] (A) at (0, 0) { };
    \node[box,fill=cb_orange,minimum width = 40pt] (j2) [right=of A] { };
    \node[box,fill=cb_orange,minimum width = 50pt] (C) [right=of j2] { };    

    \node [left=.5cm of A] {Machine $\beta$};

    \node[box,fill=OI_lightblue,minimum width = 60pt,anchor=west] (j1) [right=of C] { };
    \node[box,fill=OI_lightblue,minimum width = 30pt] (j2) [right=of j1] { };

    \node[box,fill=OI_green,minimum width = 20pt,anchor=west] (j1) [right=of j2] {$y$};
    \node[box,fill=OI_green,minimum width = 45pt] (j2) [right=of j1] { };
    \node[box,fill=OI_green,minimum width = 55pt] (I) [right=of j2] { };  
    \node[box,fill=OI_green,minimum width = 15pt] (Z) [right=of I] { };  

    \end{scope}

    \begin{scope}[yshift=-2cm]
        \draw[line width=1pt]  ([yshift=(1.5cm)]A.north west)--(0, -.1) node[below] {$0$};
    \end{scope}

    
    \begin{scope}[yshift=-4.5cm]

    \begin{scope}
    \node[box,fill=cb_orange,minimum width = 70pt,anchor=west] (j1) at (0, 0) { };
    \node[box,fill=cb_orange,minimum width = 20pt] (j2) [right=of j1] { };
    \node[box,fill=cb_orange,minimum width = 20pt] (C) [right=of j2] { };    

    \node [left=.5cm of j1] (T) {Machine $\alpha$};
    \node [above=.5cm of T,xshift=-1cm] {\textbf{Schedule $\sigma'$}};

    \node[box,fill=OI_lightblue,minimum width = 20pt,anchor=west] (j1) [right=of C] { };
    \node[box,fill=OI_lightblue,minimum width = 45pt] (j2) [right=of j1] { };
    \node[box,fill=OI_lightblue,minimum width = 25pt] (I) [right=of j2] { };  
    \node[box,fill=OI_green,minimum width = 20pt] (J) [right=of I] {$y$};

    \node[box,fill=OI_yellow,minimum width = 20pt,anchor=west] (j1) [right=of J] { };
    \node[box,fill=OI_yellow,minimum width = 50pt] (j2) [right=of j1] { };

    \end{scope}

    \begin{scope}[yshift=-1cm]
    \node[box,fill=cb_orange,minimum width = 10pt,anchor=west] (A) at (0, 0) { };
    \node[box,fill=cb_orange,minimum width = 40pt] (j2) [right=of A] { };
    \node[box,fill=cb_orange,minimum width = 50pt] (C) [right=of j2] { };    

    \node [left=.5cm of A] {Machine $\beta$};

    \node[box,fill=OI_lightblue,minimum width = 60pt,anchor=west] (j1) [right=of C] { };
    \node[box,fill=OI_lightblue,minimum width = 30pt] (j2) [right=of j1] { };

    \node[box,fill=OI_lightblue,minimum width = 55pt,anchor=west] (j1) [right=of j2] {$x$};
    \node[box,fill=OI_green,minimum width = 45pt] (j2) [right=of j1] { };
    \node[box,fill=OI_green,minimum width = 55pt] (I) [right=of j2] { };  
    \node[box,fill=OI_green,minimum width = 15pt] (Z) [right=of I] { };  

    \end{scope}

    \begin{scope}[yshift=-2cm]
        \draw[line width=1pt, ->, >={Stealth}] (-.5,0)--([yshift=(-1cm),xshift=(1cm)]Z.east);

        \draw[line width=1pt]  ([yshift=(1.5cm)]A.north west)--(0, -.1) node[below] {$0$};
        \node at ([yshift=(-1.4cm),xshift=(1cm)]Z.east) {time};
    \end{scope}

    \end{scope}

    \begin{scope}[yshift=-2cm]
        \draw[line width=1pt, ->, >={Stealth}] (-.5,0)--([yshift=(+3.5cm),xshift=(1cm)]Z.east);

        \node at ([yshift=(+3.1cm),xshift=(1cm)]Z.east) {time};
    \end{scope}

    \end{tikzpicture}
    }%
\caption{In the proof of \cref{lem:GroupedkPart_to_PSwjCj} we transform the schedule $\sigma$ into the schedule $\sigma'$ by swapping the jobs $x$ and $y$. Jobs belonging to different groups are represented with different colours.}
\label{fig:PSwjCj_swap}
\end{figure}
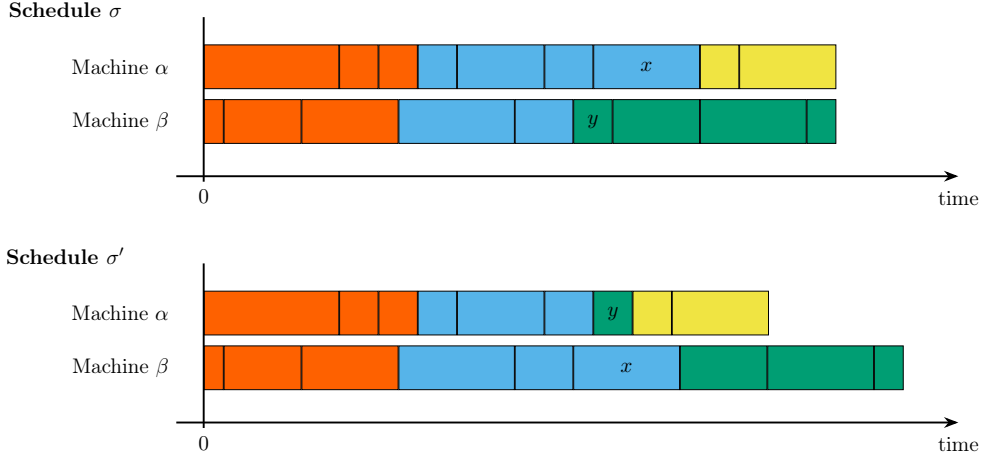
\end{proof}

Combining \cref{lem:parameter-preserving-reduction,cor:SETH_pbtwoYES,lem:GroupedkPart_to_PSwjCj}
proves the following.
\begin{corollary}
	Assuming SETH, for every $\eps > 0$ and $k \ge 2$, there exists $\delta > 0$ such that \PSwjCj cannot be solved in time $\Oh(2^{\delta n} T^{k-1-\epsilon})$, where $n$ is the number of jobs and $T$ is the total processing time.
\end{corollary}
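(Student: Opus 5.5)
This follows immediately by composing earlier results, so the plan is just to chain them and check the parameters line up. First I invoke \cref{cor:SETH_pbtwoYES}: assuming SETH, for every $\eps>0$ and $k\ge 2$ there is $\delta>0$ such that \pbtwoYes cannot be solved in time $\Oh(2^{\delta n}W^{k-1-\eps})$. Second, \cref{lem:GroupedkPart_to_PSwjCj} gives a reduction from \pbtwoYes to \PSwjCj. It runs in time $\Oh(n)$, keeps the number of jobs equal to $n$, and produces total processing time $T=\sum_i\Sigma(G_i)=\Theta(nW)$.

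The only point needing care is the parameter mismatch: the source problem is parameterised by $W$, the target by $T$. The bound $T\le nW = W\cdot n^{\Oh(1)}$ is exactly the form required in the definition of a parameter-preserving reduction, with $W$ playing the role of $T_A$. The instance size is also fine: $n_B=n=\Oh(n_A+\log W)$. Hence \cref{lem:parameter-preserving-reduction} applies with $\gamma=k-1$.

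The argument is then by contraposition. Suppose that for some $\eps>0$ and every $\delta>0$, \PSwjCj could be solved in time $\Oh(2^{\delta n}T^{k-1-\eps})$. Then \cref{lem:parameter-preserving-reduction} yields some $\eps'>0$ such that for every $\delta>0$, \pbtwoYes is solvable in time $\Oh(2^{\delta n}W^{k-1-\eps'})$. This contradicts \cref{cor:SETH_pbtwoYES} instantiated with $\eps'$. Therefore, for each $\eps$ there exists a $\delta$ for which the $\Oh(2^{\delta n}T^{k-1-\eps})$ algorithm for \PSwjCj fails, which is the claim.

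I expect no real obstacle. The one subtlety is the order of quantifiers in \cref{lem:parameter-preserving-reduction}: the $\eps$ changes, and $\delta$ must be ``for all'' in the hypothesis. I will phrase the contraposition as above so the quantifiers match. Any polynomial overhead from the reduction, and the $n^{\Oh(1)}$ factor in $T$, is absorbed by that lemma.
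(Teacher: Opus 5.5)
Your proposal is correct and matches the paper's proof, which obtains the corollary by combining \cref{cor:SETH_pbtwoYES}, the reduction of \cref{lem:GroupedkPart_to_PSwjCj} (where $T=\Theta(nW)$), and \cref{lem:parameter-preserving-reduction} with $\gamma=k-1$. Your contrapositive phrasing also correctly reconciles the lemma's ``exists $\eps$, for all $\delta$'' quantifier order with the corollary's ``for every $\eps$, exists $\delta$'' form.
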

\noindent
This establishes~\cref{thm:SETH_to_PSwjCj}.

\subsection{Total Processing Time of Tardy Jobs: Proof of Theorem~\ref{thm:SETH_to_PSpjUj}}

Finally, we reduce \pbtwo to \PSpjUj[k-1]. 
We emphasise that the reduction is from $k$ to $k-1$, so that we get the desired tight lower bound for \PSpjUj.

\begin{lemma}\label{lem:WeakGroupedkPart_to_PSpjUj}
	There is a parameter-preserving reduction from \pbtwo to \PSpjUj[k-1].
\end{lemma}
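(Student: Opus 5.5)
I would mirror the construction of \cref{lem:GroupedkPart_to_special_schedule}, now on only $k-1$ machines, and let the set of tardy jobs play the role of the $k$-th (dumpster) bin. Given an instance $G_1,\dots,G_q \subset \N\cap[W(1-1/n^{10}),W]$ of \pbtwo with $|G_i|=ks$, $n=ksq$, and average load $\mu \coloneq \frac1k\sum_i\Sigma(G_i)$, I would create one job for each item $p_j\in G_i$. It gets processing time $p_j$ and due date $d_j \coloneq \min\{\mu, isW\}$, exactly as before, and the target is $\sum_j p_jU_j \le \Lambda \coloneq \mu$. As before I would assume $W\ge n^{10}$ and $n$ large. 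The construction takes linear time and the total processing time is $\Theta(nW)$, so the reduction is parameter-preserving.

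\emph{Yes case.} Take the disjoint sets $S_{i,1},\dots,S_{i,k-1}$ with $|S_{i,\ell}|=s$ and $\sum_i\Sigma(S_{i,\ell})=\mu$. On machine $\ell$ I schedule $S_{1,\ell},\dots,S_{q,\ell}$ in this order. The same computation as in \cref{lem:GroupedkPart_to_special_schedule} shows that every job of group $i$ finishes by $\min\{isW,\mu\}$. All remaining jobs are declared tardy, and their total processing time is $k\mu-(k-1)\mu=\mu=\Lambda$.

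\emph{Converse, by contraposition of the No case.} Take a schedule with $\sum_j p_jU_j\le\mu$. Let $S_{i,\ell}$ be the on-time jobs of group $i$ on machine $\ell\in[k-1]$, and assume (by reordering the on-time jobs by due date) that they run in order of increasing $i$. The on-time jobs of groups $1..i$ on machine $\ell$ all finish by $isW$. Since each has length at least $W(1-1/n^{10})$, the integrality argument of \cref{lem:GroupedkPart_to_special_schedule} gives $|S_{1,\ell}|+\dots+|S_{i,\ell}|\le is$, which is condition (a) of the No case. Every on-time job finishes by $\mu$, so each machine's on-time load is at most $\mu$. The tardy load is $k\mu-\sum_\ell(\text{load}_\ell)\le\mu$, which forces $\sum_\ell \text{load}_\ell\ge(k-1)\mu$. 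Hence every on-time load equals exactly $\mu$, which is condition (c).

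\emph{Main obstacle: condition (b).} We need at least $(k-1)qs$ on-time jobs in total. Condition (a) only gives the upper bound $(k-1)qs$. Because all items lie in $[W(1-1/n^{10}),W]$, load $\mu$ on a machine pins down its job count. Concretely, $\mu\ge qsW(1-1/n^{10})$, since $\mu$ is the average of $ksq$ items, each at least $W(1-1/n^{10})$. If machine $\ell$ had at most $qs-1$ on-time jobs, its load would be at most $(qs-1)W<qsW(1-1/n^{10})\le\mu$, a contradiction. So each machine has at least $qs$ on-time jobs, giving (b). This uses the same style of bound as \cref{obs:bounded_subset_size}; I would double-check the inequality $(qs-1)W<qsW(1-1/n^{10})$, which holds since $qs/n^{10}<1$.

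Thus the \PSpjUj[k-1] instance has objective at most $\Lambda$ exactly when the \pbtwo instance is not in the No case. Because \pbtwo is a promise problem, this suffices for correctness.
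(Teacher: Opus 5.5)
Your proposal is correct and follows the paper's proof essentially step by step: the same jobs with due dates $\min\{\mu, isW\}$ on $k-1$ machines, the same target $\mu$, the same Yes-direction schedule, and the same EDD/integrality arguments for conditions (a) and (c). The one deviation is condition (b): the paper bounds the number of tardy jobs directly via $W(1-1/n^{10})\cdot(n-\sum_{i,\ell}|S_{i,\ell}|)\le \mu\le qsW$, whereas you first prove (c) and then argue per machine that load exactly $\mu\ge qsW(1-1/n^{10})>(qs-1)W$ forces at least $qs$ on-time jobs, which is equally valid (and even shows each machine has exactly $qs$ on-time jobs).
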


\begin{proof}
	Let $G_1, \dots, G_q \subset \mathbb N \cap [W(1 - 1/n^{10}), W]$ be an instance of \pbtwo for some integers $s, q, W \in \mathbb N$ where the groups have size $|G_i| = sk$ and the total number of items is $n \coloneq ksq$. 
	Without loss of generality, we may assume that $W \geq n^{10}$ as otherwise the instance can be solved in polynomial time.
	Let $\mu \coloneq \frac{1}{k} \sum_{i = 1}^q \Sigma(G_i)$ and denote the elements in the $i$-th group by $G_i = \{p_{(i-1)sk + 1}, \dots, p_{isk}\}$, i.e.~$G_1 \cup \dots \cup G_q = \{p_1, \dots, p_n\}$. 
	Build an instance of \PSpjUj[k-1] with target objective $\mu$ by constructing for every $i \in [q]$ and for every item $p_j \in G_i$ with $j \in [(i-1)sk +1, isk]$ a job with processing time $p_{j}$ and due date $d_{j} \coloneq \min \{\mu, isW\}$.
	This constructs an instance of \PSpjUj[k-1] in time $\Oh(n)$ of size $n$ where the sum of processing times is $\sum_{i =1}^q \Sigma(G_i) = \Theta(nW)$. Hence, it remains to show equivalence of the constructed and original instance.
	To simplify the presentation, we view a schedule of all jobs as a partial schedule where only non-tardy jobs are actually scheduled, while tardy jobs are viewed as unscheduled. 

	\subparagraph*{\ref{enum:pb2_yes} of \pbtwo $\to$ Yes of \PSpjUj[k-1].}

	Suppose that the instance $(G_1, \dots, G_q)$ is in the \ref{enum:pb2_yes} case of \pbtwo, i.e.~the following \cref{enum:pb2_1a_78,enum:pb2_1b_78} are satisfied by disjoint subsets $S_{i, 1}, \dots, S_{i, k-1} \subset G_i$ for every $i \in [q]$:
	\begin{yesenum}
		\item 
		\begin{yesenum}
			\item\label{enum:pb2_1a_78} $|S_{i, \ell}| = s$ for all $i \in [q]$ and $\ell \in [k-1]$,
			\item\label{enum:pb2_1b_78} $\sum_{i \in [q]} \Sigma(S_{i, \ell}) = \mu$ for all $\ell \in [k-1]$.
		\end{yesenum}
	\end{yesenum}
	Then, for each $\ell \in [k-1]$, we schedule the jobs corresponding to items in $S_{1, \ell} \cup \dots \cup S_{q, \ell}$ on the $\ell$-th machine in that order (choose an arbitrary order within $S_{i, \ell}$).
	We argue that those jobs are non-tardy.
	Note that for $i < q$, the jobs from the group $G_i$ have due date $i s W$, and for $i = q$, the jobs from the group $G_q$ have due date $\mu$. 
	By \cref{enum:pb2_1a_78}, since processing times are at most $W$, the jobs of group $G_i$ are completed on the $\ell$-th machine at time $\Sigma(S_{1, \ell}) + \dots + \Sigma(S_{i, \ell}) \leq i s W$, so for $i < q$ all jobs are completed before their due date. 
	On the other hand, by \cref{enum:pb2_1b_78}, the completion time of the jobs of the group $G_q$ on the $\ell$-th machine is $\Sigma(S_{1, \ell}) + \dots + \Sigma(S_{q, \ell}) = \mu$. Hence, all the scheduled jobs are completed before their due date, i.e.~they are non-tardy. Furthermore, the sum of the processing times of all remaining jobs is
	\begin{align*}
		\sum_{j \in [n]} p_j U_j = \sum_{i = 1}^q \Sigma(G_i) - \sum_{\ell \in [k-1] }\sum_{i \in [q]} \Sigma(S_{i, \ell}) 
		&= \sum_{i = 1}^q \Sigma(G_i) - (k-1) \cdot \mu  = \mu
	\end{align*}
	Since the remaining jobs include all tardy jobs, this shows that there is a schedule that meets the objective $\sum_j p_j U_j \leq \mu$. 

	\subparagraph*{\ref{enum:pb2_no} of \pbtwo $\to$ No of \PSpjUj[k-1].}
	This case is analogous to the respective case in the proof of~\cref{lem:GroupedkPart_to_special_schedule}. We include it here for completeness.
	
	By contraposition, assume that there exists a schedule with $\sum_j p_j U_j \leq \mu$. 
	Let $S_{i, \ell}$ be the set of processing times of the non-tardy jobs scheduled on the $\ell$-th machine corresponding to items of group $G_i$. Then $S_{i, 1}, \dots, S_{i, k-1}$ are disjoint subsets of $G_i$, for every $i \in [q]$. 
	We now show that the following \cref{enum:pb2_2a_78,enum:pb2_2b_78,enum:pb2_2c_78} are satisfied. This will imply that $(G_1, \dots, G_q)$ is not in the \ref{enum:pb2_no} case of \pbtwo, as desired.
	\begin{noenum}
		\item 
		\begin{noenum}
			\item\label{enum:pb2_2a_78} $|S_{1, \ell}| + \dots + |S_{i, \ell}| \leq i \cdot s$ for all $i \in [q]$ and $\ell \in [k-1]$,
			\item\label{enum:pb2_2b_78} $\sum_{i\in [q], \ell \in [k-1]} |S_{i, \ell}| \geq (k-1) q s$,
			\item\label{enum:pb2_2c_78} $\sum_{i \in [q]} \Sigma(S_{i, \ell}) = \mu$ for all $\ell \in [k-1]$.
		\end{noenum}
	\end{noenum}
	We can assume without loss of generality that all non-tardy jobs are scheduled before tardy jobs, and, since due dates are non-decreasing with $i$, that non-tardy jobs are processed in increasing order of $i$. 
	Then $\Sigma(S_{1, \ell}) + \dots + \Sigma(S_{i, \ell})$ is the maximum completion time of the non-tardy jobs corresponding to group $G_i$ on the $\ell$-th machine. 
	By the range of the processing times, we can lower bound it by 
	\begin{align*}
		\Sigma(S_{1, \ell}) + \dots + \Sigma(S_{i, \ell}) &\geq W (1 - 1/n^{10}) \cdot (|S_{1, \ell}| + \dots + |S_{i, \ell}|).
	\intertext{On the other hand, since all non-tardy jobs complete before their due date, which is at most $isW$, we deduce the following upper bound}
		\Sigma(S_{1, \ell}) + \dots + \Sigma(S_{i, \ell}) &\leq isW.
	\end{align*}
	Since $is \leq n$ and the size of $S_{i, \ell}$ are integral, we infer 
	$$|S_{1, \ell}| + \dots  + |S_{i, \ell}| \leq \left\lfloor \frac{i s}{(1- 1/n^{10})} \right\rfloor \leq \lfloor i s \cdot(1 + 2/n^{10}) \rfloor  = is$$ for all $i \in [q]$ and all $\ell \in [k-1]$ where the second inequality uses
	$\frac{1}{(1-x)} \leq 1 + 2x$ for any $x \in [0, 1/2]$. 
	In other words, the subsets $S_{i, \ell}$ satisfy \cref{enum:pb2_2a_78}. 
	We can also bound the sum of processing times of tardy jobs $\sum_{j} p_j U_j$ by
	$$
		W \left(1 - \frac{1}{n^{10}}\right) \cdot \left(n - \sum_{i \in [q], \ell \in [k-1]} |S_{i, \ell}|\right) \leq \sum_{j \in [n]} p_j U_j \leq \mu \leq \frac{1}{k} n W = qs W
	$$
	which implies that
	$$	
		\sum_{i \in [q], \ell \in [k-1]} |S_{i, \ell}| 
		\geq \sum_{i \in [q], \ell \in [k-1]} |S_{i, \ell}| \cdot \left(1 - \frac{1}{n^{10}}\right) \geq \left(1 - \frac{1}{n^{10}}\right) n - qs = (k-1)qs - \frac{1}{n^{9}}.
	$$
	Since $\sum_{i \in [q], \ell \in [k-1]} |S_{i, \ell}|$ is integral, we conclude that $\sum_{i \in [q], \ell \in [k-1]} |S_{i, \ell}| \geq (k-1) qs$, i.e.~\cref{enum:pb2_2b_78} holds as well. 
	Finally, since due dates are at most $\mu$, the completion times of non-tardy jobs are at most $\mu$, i.e.~$\sum_{i\in [q]}\Sigma(S_{i, \ell}) \leq \mu$ for each $\ell \in [k-1]$. Hence, the assumption that $\sum_{j \in [n]} p_j U_j \leq \mu$ implies that $\sum_{j \in [n]} p_j U_j = \mu$ and $\sum_{i\in [q]}\Sigma(S_{i, \ell}) = \mu$ for each $\ell \in [k-1]$. This proves that all \cref{enum:pb2_2a_78,enum:pb2_2b_78,enum:pb2_2c_78} are indeed satisfied.
\end{proof}
Note that in \cref{lem:WeakGroupedkPart_to_PSpjUj} we show a
parameter-preserving reduction from \pbtwo to \PSpjUj[k-1].  By
combining
\cref{cor:SETH_pbtwo,lem:WeakGroupedkPart_to_PSpjUj,lem:parameter-preserving-reduction}
we obtain the following.
\begin{corollary}
	Assuming SETH, for every $\eps > 0$ and $k \ge 1$, there exists $\delta > 0$ such that 
\PSpjUj cannot be solved in time
$\Oh(2^{\delta n} T^{k-\epsilon})$, where $n$ is the number of jobs and $T$ is the total processing time.
\end{corollary}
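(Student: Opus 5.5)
This corollary follows by composing results already proven, with an index shift to turn $k-1$ machines into $k$. Fix $k \ge 1$ and $\eps > 0$, and set $k' \coloneq k+1 \ge 2$. By \cref{cor:SETH_pbtwo} applied with $k'$ bins and the same $\eps$, assuming SETH there is $\delta' > 0$ such that \pbtwo[k'] cannot be solved in time $\Oh(2^{\delta' n} W^{k'-1-\eps}) = \Oh(2^{\delta' n} W^{k-\eps})$. \cref{lem:WeakGroupedkPart_to_PSpjUj}, instantiated with $k'$, gives a parameter-preserving reduction from \pbtwo[k'] to \PSpjUj[k'-1], that is, to \PSpjUj[k].

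The only point needing care is that the parameters in the reduction match. The source parameter is $W$, and the produced instance has $n$ jobs with total processing time $T = \sum_i \Sigma(G_i) \le nW = W \cdot n^{\Oh(1)}$. The size stays $n$, and the construction runs in $\Oh(n)$ time. This is exactly the parameter-preserving definition, with $T_B \le T_A n_A^{\Oh(1)}$ and $n_B = \Oh(n_A)$.

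Next, argue by contradiction in the quantifier form of \cref{lem:parameter-preserving-reduction}, with $\gamma \coloneq k$. Suppose that for every $\delta > 0$, \PSpjUj[k] can be solved in time $\Oh(2^{\delta n} T^{k-\eps})$. Then \cref{lem:parameter-preserving-reduction} yields some $\eps_A > 0$ (namely $\eps/3$ in its proof) such that for every $\delta > 0$, \pbtwo[k'] can be solved in time $\Oh(2^{\delta n} W^{k-\eps_A})$. But \cref{cor:SETH_pbtwo} applied with $\eps_A$ in place of $\eps$ gives a $\delta'$ for which this is impossible, a contradiction.

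I expect no real obstacle. The only subtlety is keeping the quantifiers straight: the lemma loses a constant factor in $\eps$, which is why the corollary is applied with $\eps_A$, and the corollary's "exists $\delta$" is correctly the negation of "for all $\delta$".
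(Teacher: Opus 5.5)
Your proposal is correct and is the paper's argument: the paper obtains this corollary by combining \cref{cor:SETH_pbtwo} with $k+1$ bins, the reduction of \cref{lem:WeakGroupedkPart_to_PSpjUj} from $k+1$ bins to $k$ machines, and \cref{lem:parameter-preserving-reduction} with $\gamma = k$. You additionally spell out what the paper leaves implicit: the index shift, the check $T \le nW$, and the quantifier bookkeeping of applying \cref{cor:SETH_pbtwo} with the degraded $\eps_A$.
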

\noindent
This concludes the proof of~\cref{thm:SETH_to_PSpjUj}.

\section{Equivalences of Bin Packing}
\label{sec:binpacking_equivalence}

In this section, we show equivalence of \binpacking, \kpartition, \PCmax and various problem variants under parameter-preserving reductions.
Recall that in the ``\textsc{with targets}'' variants of \kpartition, every bin has a different capacity $t_1, \dots, t_k \in \mathbb N$ given with the input. In the ``\textsc{Bounded}'' variants of \kpartition, we are guaranteed that the $n$  input integers lie in the range $[W(1 - 1/n^{10}), W]$ for some integer $W \in \mathbb N$. See \cref{app:problem-definitions} for the detailed problem definitions.

\begin{theorem}\label{thm:equivalence-bin-packing}
	The following problems are equivalent under parameter-preserving reductions:
	\binpacking, \kpartition, \kpartitionTargets, \kpartitionBounded,
		 \binpackingMultisets,  \kpartitionMultisets, \kpartitionTargetsMultisets, \kpartitionBoundedMultisets, 
		 \PCmax and \QCmax.
\end{theorem}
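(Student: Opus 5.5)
The plan is to establish the equivalences along the cycle of parameter-preserving reductions drawn in the blue region of \cref{fig:graph_reduction}. Since parameter-preserving reductions compose (up to polynomial factors in $n$ and a constant factor in the size), it suffices to close a cycle containing all ten problems. I would prove the following chain:
\begin{align*}
&\kpartitionTargetsMultisets \to \kpartitionMultisets \to \kpartitionBoundedMultisets \to \kpartitionBounded \to \kpartition\\
&\to \kpartitionTargets \to \kpartitionTargetsMultisets.
\end{align*}
In addition, I would attach \binpacking, \binpackingMultisets, \PCmax and \QCmax via short two-way reductions.

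Several of these steps are immediate special-case inclusions. These are \kpartitionBounded $\to$ \kpartition, \kpartition $\to$ \kpartitionTargets (set all $t_\ell=\Sigma(X)/k$), sets $\to$ multisets in every variant, and \kpartition $\to$ \binpacking (set $T=\Sigma(X)/k$; a packing with all loads at most $T$ must have all loads exactly $T$). Others are standard:
\begin{itemize}
\item \binpacking $\to$ \binpackingMultisets is inclusion.
\item \binpackingMultisets $\leftrightarrow$ \kpartitionMultisets: pad with $kT-\Sigma(X)$ unit items, or with up to $k$ items of fill value, keeping $n$ linear since at most $kT$ units is too many. So instead add $k$ items whose sizes are chosen as slack; more precisely, scale by $k$, add one large filler, or reduce through \kpartitionTargetsMultisets with targets $T$ and an extra slack handled by scaling items by $2$ and adding $k$ items of value $1$ per bin. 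I expect some care is needed here.
\item \PCmax $\equiv$ \binpackingMultisets holds by definition.
\item \QCmax $\to$ \PCmax follows by scaling speeds. The converse \PCmax $\to$ \QCmax holds by taking equal speeds, while the direction \QCmax $\to$ \PCmax uses per-machine capacities $\lfloor s_i C\rfloor$ as targets, which gives \kpartitionTargetsMultisets-style bin packing with distinct capacities, again reducing to the target variant after padding.
\end{itemize}

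The substantive steps are the following three.

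First, targets $\to$ no targets (\kpartitionTargetsMultisets $\to$ \kpartitionMultisets). Here I would mimic \cref{lem:pbtwotargets_to_pbtwomulti}. Write each item as $xk'$ plus a tag in low-order bits forcing one designated filler item $f_\ell = \mu - t_\ell$ (shifted) into each bin. Concretely, multiply items by a large factor $L$, add $k$ fillers $L(M-t_\ell)+1$ with $M=\max t_\ell$, and set the common target to $LM+1$. The low part then forces exactly one filler per bin, and the high part forces the original loads $t_\ell$.

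Second, unbounded $\to$ bounded (\kpartitionMultisets $\to$ \kpartitionBoundedMultisets). Add a huge common offset $W_0=n^{20}\Sigma(X)$ to every item. This requires forcing each bin to receive the same number of items, which I would do with the tag trick: for each item use $W_0 + x n^{c}$, and add enough dummy items $W_0$ so that the common cardinality $m=n$ per bin is achievable. The cardinality is then enforced by the top part of the sum, because $W_0$ dominates and the bin sum equals the target. Since all loads are equal, this is essentially \cref{obs:bounded_subset_size}.

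Third, multiset $\to$ set (\kpartitionBoundedMultisets $\to$ \kpartitionBounded). Here I would copy \cref{lem:pbtwomulti_to_pbtwo}. Replace the $j$-th item $x_j$ by $y_j=W n^{20}+x_j n^7 + j$ and add a partner $z_j = Wn^{20} - j$. The low-order parts must then cancel, and cardinality balance is recovered by \cref{obs:bounded_subset_size}. I expect the main obstacle to be this last step: a bin may take $y_j$ without its partner $z_j$. The fix is to argue only that the sum of the indices $j$ cancels modulo $n^3$, together with equal cardinalities, which suffices for the loads of the original items to match $\mu$ after taking the equation modulo $n^7$ and dividing. Distinctness of all new values follows from distinct $j$ and the separation of scales. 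Throughout, I would check that sizes stay $\Oh(n)$ and that $W'=Wn^{\Oh(1)}$.
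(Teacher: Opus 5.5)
The partition side of your plan is fine. The cycle \kpartitionTargetsMultisets $\to$ \kpartitionMultisets $\to$ \kpartitionBoundedMultisets $\to$ \kpartitionBounded $\to$ \kpartition $\to$ \kpartitionTargets $\to$ \kpartitionTargetsMultisets is exactly the paper's, and your gadgets work. The fillers $L(M-t_\ell)+1$ with $L>k$ are a valid alternative to the paper's large dummies $3\max_\ell t_\ell - t_\ell$. Your simpler $y_j=Wn^{20}+x_jn^7+j$, $z_j=Wn^{20}-j$ also works once $W'$ is taken slightly above $Wn^{20}$, since $|X_i'|=2n/k$ and the index terms cancel.

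The genuine gap is the step that brings \binpacking, \binpackingMultisets, \PCmax and \QCmax back into this cycle, i.e.\ \binpackingMultisets $\to$ \kpartitionMultisets. There the constraints $\Sigma(X_\ell)\le T$ must become equalities, and the slacks $T-\Sigma(X_\ell)$ are arbitrary values in $[0,T]$ determined by the unknown solution. None of your options achieves this:
\begin{itemize}
\item $kT-\Sigma(X)$ unit items give instance size $\Theta(kT)$, violating $n_B=\Oh(n_A+\log T_A)$.
\item $k$ fixed fillers, or $k$ unit items per bin after scaling by $2$, can only absorb slacks of predetermined or constant size, and scaling changes nothing.
\item Your route for \QCmax $\to$ \PCmax (capacities $\lfloor s_iC\rfloor$, then ``the target variant after padding'') runs into the same obstacle, and scaling speeds alone does not yield identical machines.
\end{itemize}
So as written, these four problems are only shown to be at least as hard as the partition variants, not equivalent to them.

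The missing idea is a universal filler. You need a multiset $P$ with $\Sigma(P)=\tau$, where $\tau=kT-\Sigma(X)$, and only $\Oh(k^2\log\tau)$ elements, such that for every $(t_1,\dots,t_k)\in\N^k$ with $t_1+\dots+t_k=\tau$ the multiset $P$ splits into parts with sums $t_1,\dots,t_k$ (\cref{lem:genRohwedderW25}). The paper builds $P$ from three pieces: $k-1$ copies of $\{2^0,\dots,2^{h-1}\}$ with $2^h\approx\tau/k^2$; the powers $2^i$, $i<h$, appearing in the binary expansion of $a=\tau-(k-1)(2^h-1)$; and $\lfloor a/2^h\rfloor\le 2k^2$ copies of $2^h$. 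Each of the $k-1$ smallest targets takes its residue modulo $2^h$ from its own binary copy and its multiple of $2^h$ from the copies of $2^h$. There are enough such copies since the largest target is at least $\tau/k$, and the largest target takes everything else. Then $X\cup P$ with common bin sum $T$ is equivalent to the \binpackingMultisets instance (\cref{lem:MultiBP_to_MultiPart}).

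For \QCmax $\to$ \PCmax no filler is needed if you stay with inequalities. Add one dummy job of length $M'-\lfloor s_iM\rfloor$ per machine, where $M'=3\max_i\lfloor s_iM\rfloor$. Each dummy has length at least $2M'/3$, so every machine gets exactly one, and machine $i$ keeps capacity exactly $\lfloor s_iM\rfloor$ for the original jobs.
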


The proof of Theorem~\ref{thm:equivalence-bin-packing} follows from the lemmas and observations established in this section (see the blue-highlighted cycle of reductions in \cref{fig:graph_reduction}).

To organise the section, we group the reduction into the following categories: set vs.\ multiset variants, bin packing vs.\ partition, target and bounded assumptions, and packing vs.\ scheduling problems. We note, that some of the reductions are folklore or can be found in the literature. For example, the relation between \kpartition and \binpacking was independently observed in~\cite{dreier2019complexity,heeger2023single}. We include them for completeness. 

\subsection{Set vs. Multiset Variants}
We first observe the following trivial relations between problems in \cref{thm:equivalence-bin-packing}.

\begin{fact}\label{lem:BoundedPart_to_Part}
	\kpartitionBounded is a special case of \kpartition.
\end{fact}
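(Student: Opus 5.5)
The statement is the observation that \kpartitionBounded is a special case of \kpartition. I would prove it simply by comparing the two problem definitions from \cref{app:problem-definitions}, and then noting that the identity map is a parameter-preserving reduction.

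First, recall what each problem asks. An instance of \kpartitionBounded is a set of $n$ integers $X \subset \N \cap [W(1-1/n^{10}), W]$, and the question is whether $X$ can be partitioned into $k$ subsets of equal sum $\Sigma(X)/k$. This is exactly the question of \kpartition on the same set $X$. The only difference is the additional promise that all integers lie in the range $[W(1-1/n^{10}),W]$. Dropping the promise shows that every instance of the bounded problem is literally an instance of \kpartition with the same answer. The reduction outputs its input unchanged, in linear time, with the same number of items $n$.

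Second, I would check the parameters, which is the only point requiring any care. The bounded variant is parameterised by $W$, while \kpartition is parameterised by the target $T = \Sigma(X)/k$ (equivalently, the bin capacity). Since every item is at most $W$, we get $T \le nW/k \le W \cdot n$. This is within the allowance $T_B \le T_A \cdot n_A^{\Oh(1)}$ of the definition of a parameter-preserving reduction, and the output size is $n_B = n_A$. Hence the trivial reduction is parameter-preserving, and \cref{lem:parameter-preserving-reduction} applies to transfer lower bounds along it.

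There is no real obstacle here. The only potential pitfall is confirming that the parameter conventions of the two problems match up to a polynomial factor in $n$, which the bound $T \le nW$ settles.
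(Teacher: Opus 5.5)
Your proposal is correct and matches the paper. The paper states this as a fact that is immediate from the definitions: the identity map is the reduction, exactly as you argue.

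One minor correction. In \cref{app:problem-definitions}, \kpartitionBounded is parameterised by $T = \Sigma(X)/k$, the same parameter as \kpartition, not by $W$. So the parameter is preserved exactly, and your bound $T \le nW$ is not needed, though it is harmless.
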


\begin{fact}\label{lem:Part_to_TargetPart}
	\kpartition is a special case of \kpartitionTargets
\end{fact}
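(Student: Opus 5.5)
The statement to justify is Fact~\ref{lem:Part_to_TargetPart}: \kpartition is a special case of \kpartitionTargets. I will argue by direct embedding, using the identity map on instances.

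Given an instance of \kpartition, namely a set $X$ of $n$ positive integers, I first compute $\Sigma(X)$. If $k \nmid \Sigma(X)$, the instance is a trivial no-instance, since $k$ subsets of equal sum would force $\Sigma(X) = k \cdot \Sigma(X_1)$. In that case I output a fixed trivial no-instance of \kpartitionTargets, for example $X = \{1\}$ with all targets $t_1 = \dots = t_k = 0$ if $k \ge 2$, or any small infeasible choice.

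Otherwise I output the same set $X$ together with the targets $t_1 = \dots = t_k \coloneq \Sigma(X)/k$. The equivalence is immediate: a partition $X_1 \uplus \dots \uplus X_k$ has equal sums exactly when each $\Sigma(X_\ell) = \Sigma(X)/k = t_\ell$.

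It remains to check the parameter-preserving conditions.
\begin{itemize}
\item The instance size is unchanged.
\item The construction runs in polynomial time.
\item The new parameter is $\max_\ell t_\ell = \Sigma(X)/k$. This is at most the \kpartition parameter $T$ (the bin capacity $\Sigma(X)/k$), or at most $n$ times the maximum item if that is the parameter used in the formal definition in \cref{app:problem-definitions}. In either case it is within the allowed factor $n^{\Oh(1)}$.
\end{itemize}

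I expect no step to be a real obstacle. The only point needing care is matching the parameter convention of \kpartition in \cref{app:problem-definitions}, and both possible conventions are within a polynomial factor of each other.
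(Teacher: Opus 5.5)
Your proposal is correct and is the intended argument: setting all targets to $\Sigma(X)/k$ makes the two questions identical, and the new parameter $\max_\ell t_\ell = \Sigma(X)/k$ is exactly the $k$-way Partition parameter $T$ defined in the appendix, so your hedge about a second parameter convention is unnecessary. The paper states this fact without proof; the non-divisible case $k \nmid \Sigma(X)$, which you send to a fixed no-instance, is a harmless detail it leaves implicit.
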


\begin{fact}\label{lem:Part_to_BP}
	There is a parameter-preserving reduction from \kpartition to \binpacking.
\end{fact}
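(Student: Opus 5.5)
The fact to prove is the parameter-preserving reduction from \kpartition to \binpacking. The plan is the standard one: keep the items, set the capacity to the average, and handle non-divisibility separately. Given an instance $X = \{a_1,\dots,a_n\}$ of \kpartition, I first compute $\Sigma(X)$ in linear time. If $k \nmid \Sigma(X)$, the instance is trivially a \textsc{No} instance. In that case I output a fixed constant-size \textsc{No} instance of \binpacking, for example $k+1$ items that are pairwise distinct, all larger than half of a small capacity $T$, so that no two fit in one bin. Otherwise I output the same set $X$ with capacity $T := \Sigma(X)/k$.

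For correctness, a partition into $k$ parts of equal sum $\Sigma(X)/k$ is in particular a packing with every load at most $T$. Conversely, take any packing $X_1 \uplus \dots \uplus X_k = X$ with $\Sigma(X_\ell) \le T$ for all $\ell$. Summing over all bins gives
\[
\sum_{\ell} \Sigma(X_\ell) = \Sigma(X) = kT,
\]
so every inequality must be tight and $\Sigma(X_\ell) = T$ for each $\ell$. Hence the two instances are equivalent.

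It remains to check the parameters. The number of items is unchanged (or constant in the trivial case), and the running time is $\Oh(n)$ arithmetic operations on numbers of $\Oh(\log T_A + \log n)$ bits. The one step needing care is that the parameter of \kpartition as defined in the appendix is the target $\Sigma(X)/k$, which is exactly $T$. If the parameter were instead the largest item, one still has $\Sigma(X)/k \le n \cdot \max_i a_i$, so $T_B \le T_A \cdot n^{\Oh(1)}$ either way. Items are kept as a set, so no multiset issue arises. This is essentially a routine verification, and I do not expect a real obstacle beyond matching the exact parameter conventions of the problem definitions.
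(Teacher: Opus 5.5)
Your proposal is correct and matches the paper's proof. The paper likewise keeps $X$, sets the capacity to $\Sigma(X)/k$, and observes that loads at most the average must all equal it. Your explicit treatment of the case $k \nmid \Sigma(X)$, via a constant-size \textsc{No} instance, fills a detail the paper leaves implicit.
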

\begin{proof}
	One can decide the \kpartition problem on instance $X$ by calling an algorithm for \binpacking with the set of integers $X$ and bin capacity $T = \Sigma(X) / k$. 
	Indeed, if a partition $X_1, \dots, X_k$ of $X$ satisfies $\Sigma(X_i) \leq \Sigma(X)/k$ for all $i \in [k]$, then it necessarily also satisfies $\Sigma(X_i) = \Sigma(X)/k$ for all $i \in [k]$.
\end{proof}

All the above problems are special cases of the variants where the input integers are given as a multiset. In particular, we can generalise \binpacking and \kpartitionTargets to their multiset variants.

\begin{fact}\label{lem:binpacking_to_binpackingMultisets}
	\binpacking is a special case of \binpackingMultisets.
\end{fact}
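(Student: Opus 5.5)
The statement is that \binpacking is a special case of \binpackingMultisets, so the plan is to show that the identity map is a parameter-preserving reduction. Given an instance of \binpacking, I would output it unchanged and regard it as an instance of \binpackingMultisets: the same integers $a_1,\ldots,a_n$ (now read as a multiset in which every element has multiplicity one), the same capacity $T$, and the same number of bins $k$.

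Next I would check that the two instances are equivalent. For a multiset with no repeated elements, a partition into $k$ sub-multisets $X_1,\ldots,X_k$ is the same object as a partition of the underlying set into $k$ subsets. The feasibility condition $\Sigma(X_\ell)\le T$ for every $\ell\in[k]$ reads identically in both problems. Hence the \binpacking instance is a \textsc{Yes}-instance if and only if the \binpackingMultisets instance is one.

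It remains to check the quantitative requirements of a parameter-preserving reduction. The output size is $n_B=n_A$, the parameter is $T_B=T_A$, and the map runs in linear time, so all bounds in the definition hold. I expect no real obstacle here. The only point to confirm is that the formal definitions in \cref{app:problem-definitions} use the same parameter (the capacity $T$) and the same number of bins for both problems, which is how the problems are set up throughout \cref{sec:binpacking_equivalence}.
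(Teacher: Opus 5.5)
Your proposal is correct and matches the paper, which states this as a Fact without proof. The justification is exactly the one you give: a set is a multiset in which every multiplicity is one, the capacity $T$ and the feasibility condition are unchanged, and so the identity map is trivially a parameter-preserving reduction.
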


\begin{fact}\label{lem:TargetPart_to_TargetPartMultisets}
	\kpartitionTargets is a special case of \kpartitionTargetsMultisets.
\end{fact}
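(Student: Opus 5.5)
The statement is that \kpartitionTargets is a special case of \kpartitionTargetsMultisets. The plan is to exhibit the identity map on instances as the reduction.

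First, recall the two problem definitions. An instance of \kpartitionTargets consists of a set $X \subset \N$ of $n$ integers together with targets $t_1, \dots, t_k \in \N$. It asks whether $X$ can be partitioned into $X_1, \dots, X_k$ with $\Sigma(X_\ell) = t_\ell$ for every $\ell \in [k]$. The multiset variant poses the same question for a multiset $X$. Since every set is a multiset in which each element has multiplicity one, every instance of \kpartitionTargets is syntactically also an instance of \kpartitionTargetsMultisets.

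Next, I check that the two questions coincide on this instance. For a set, a partition into sub(multi)sets is exactly a partition into subsets. Hence the set of feasible solutions, and therefore the answer, is identical in both problems.

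Finally, the parameters are unchanged. The number of items is still $n$ and $T = \max\{t_1, \dots, t_k\}$, and the map takes linear time. The identity is therefore trivially a parameter-preserving reduction in the sense of the definition in \cref{sec:preliminaries}.

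There is no real obstacle. The only point worth stating explicitly is that both problems measure $T$ in the same way and require exact equality of the loads to the targets, so the identity map needs no adjustment.
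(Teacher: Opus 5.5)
Your proposal is correct and matches the paper's (implicit) argument: the paper states this as a fact without proof precisely because every set is a multiset with all multiplicities one. Hence the identity map on instances, which preserves $n$, the targets, and $T=\max_\ell t_\ell$, is trivially a parameter-preserving reduction.
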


In the following we show that this relation holds both ways for \kpartitionBounded. This justifies the focus on multiset variants to show the converse reductions of the above facts.

\begin{lemma}\label{lem:MultiBoundedPart_equiv_BoundedPart}
	\kpartitionBoundedMultisets and \kpartitionBounded are equivalent under parameter-preserving reductions. 
\end{lemma}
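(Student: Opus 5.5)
One direction needs no work. \kpartitionBounded is the special case of \kpartitionBoundedMultisets in which all multiplicities are $1$, so the identity map is a parameter-preserving reduction. For the converse we start from a multiset $X=\{x_1,\dots,x_n\}\subset[W(1-1/n^{10}),W]$, where repeated values are allowed. Without loss of generality $n$ is large and $W\ge n^{10}$; otherwise the instance is solved directly in polynomial time. The plan is to make all values distinct by adding small tags, in the same spirit as the proof of \cref{lem:pbtwomulti_to_pbtwo}, while preserving both the boundedness and the equal-sum structure.

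Concretely, we fix an order of the elements and output $2n$ items:
\[ y_j \coloneq W n^{20} + x_j n^{7} + j, \qquad z_j \coloneq W n^{20} + W n^{7} - j, \qquad j\in[n]. \]
We take $W' \coloneq Wn^{20}+Wn^7$ and $m\coloneq 2n$.

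\textbf{Distinctness.} The low-order tags make the $y_j$ pairwise distinct and the $z_j$ pairwise distinct. Every $y_j$ differs from every $z_{j'}$ because $y_j\le Wn^{20}+Wn^7-n^7+n < z_{j'}$.

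\textbf{Boundedness.} All items lie in $[W'-2Wn^{-3}\cdot n^{7}\,\cdot\, ,W']$. More carefully, each item is at least $Wn^{20}+Wn^7-Wn^{-3}-n$, so it is at least $W'(1-1/m^{10})$ once $W\ge n^{10}$.

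\textbf{Target.} The average load is $\mu'=\frac{1}{k}\bigl(2nWn^{20}+(\Sigma(X)+nW)n^7\bigr)$, because the tags $+j$ and $-j$ cancel in the total sum.

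\textbf{Forward direction.} Suppose $X_1,\dots,X_k$ is a valid partition; by \cref{obs:bounded_subset_size} each part has size $n/k$. Put $y_j$ and $z_j$ into bin $\ell$ whenever $x_j\in X_\ell$. In each bin the tags cancel, there are $2n/k$ items, and the $n^7$-coefficient equals $\Sigma(X_\ell)+\frac nk W$. Hence every bin has sum $\mu'$.

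\textbf{Backward direction.} Suppose a partition of the new items has all bin sums equal to $\mu'$. Again by \cref{obs:bounded_subset_size}, each bin holds exactly $2n/k$ items. Let $a_\ell$ be the number of $y$'s in bin $\ell$. The bin sum is
\[ \frac{2n}{k}Wn^{20} + \Bigl(\sum_{y_j\in\text{bin}} x_j + (2n/k-a_\ell)W\Bigr)n^7 + \Bigl(\sum j - \sum j'\Bigr). \]
The tag difference has absolute value at most $n^2<n^7/2$. The coefficient of $n^7$ is at most $2nW<n^{13}/2$, and $W\ge n^{10}$ gives slack in the reduction modulo $Wn^{20}$ as well. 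So we can read off the blocks separately. This yields a zero tag difference and
\[ \sum_{y_j\in\text{bin}} x_j + (2n/k-a_\ell)W = \frac{\Sigma(X)}{k}+\frac{n}{k}W. \]
It remains to extract $a_\ell=n/k$, and this is the main obstacle. The $x_j$ are only known up to a factor $1-1/n^{10}$, so we argue by ranges. Each $x_j$ lies in $[W-W/n^{10},W]$, so the left-hand side equals $(2n/k)W$ up to an error of $nW/n^{10}<W/2$. Likewise the right-hand side equals $(2n/k)W$ up to an error less than $W/2$. Therefore the equation pins down only the total count $2n/k$, which we already know, and not $a_\ell$ itself. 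This exposes a real defect in the construction: a $y$ and a $z$ carry essentially the same weight in the $n^7$ block, so the bins cannot tell them apart. The fix is to shift the $z$'s in the $n^7$ block, taking $z_j\coloneq Wn^{20}+2Wn^7-j$ (or any larger separation, consistent with boundedness because $W n^7\ll Wn^{20}/n^{10}$). With this change the $n^7$ block reads $\sum x_j + (2n/k-a_\ell)2W$. Since each $x_j\approx W$ and the error is below $W/2$, we get $a_\ell=n/k$. The partition $X_\ell\coloneq\{x_j : y_j\in\text{bin }\ell\}$ then satisfies $\Sigma(X_\ell)=\Sigma(X)/k$, which gives a valid multiset partition.

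The whole construction takes time $\Oh(n)$, has size $2n$, and has parameter $W'=Wn^{\Oh(1)}$, so the reduction is parameter-preserving.
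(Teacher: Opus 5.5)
Your final construction, with $z_j = Wn^{20}+2Wn^7-j$, is correct and has the same skeleton as the paper's proof. Each $x_j$ is paired with a partner item so that the tags $+j$ and $-j$ cancel. \cref{obs:bounded_subset_size} then fixes every bin at $2n/k$ items, so the $Wn^{20}$ parts cancel exactly, and the lower blocks are read off.

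The real difference is how you determine the number $a_\ell$ of $y$-items per bin, which is exactly the obstacle you ran into. The paper gives each $y_j$ a separate counter term, taking $y_j = Wn^{20}+(x_j-W)n^7-n^5+j$ and $z_j = Wn^{20}-j$. Once the tags vanish modulo $n^5$, reducing modulo $n^7$ gives $|X_i| = n/k$ by exact arithmetic.

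You instead separate the $y$'s from the $z$'s by magnitude within the $n^7$ block, $x_j \approx W$ versus $2W$. The block equation then gives $|(4n/k-a_\ell)W - 3nW/k| \le W/n^{9}$, hence $a_\ell = n/k$; if $k \nmid n$, it shows no valid partition exists at all. This is valid, but it uses the narrow input range a second time, through a rounding argument. The paper's counter block is exact, and it also keeps every item below $W' = Wn^{20}$.

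Your write-up needs to be rebuilt around the fixed $z_j$. The distinctness, boundedness, target and forward paragraphs were computed for the discarded version. Redo them with $W' = Wn^{20}+2Wn^7$ and $\mu' = \frac{1}{k}\bigl(2nWn^{20}+(\Sigma(X)+2nW)n^7\bigr)$; this is routine.

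Two stray claims should be dropped:
\begin{itemize}
\item The bound $y_j \le Wn^{20}+Wn^7-n^7+n$ fails when $x_j = W$. This is harmless after the fix, since then $y_j < z_{j'}$ always holds.
\item The bound $2nW < n^{13}/2$ is false for large $W$. It is also unnecessary, because the $Wn^{20}$ contributions already cancel once each bin is known to contain $2n/k$ items.
\end{itemize}
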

\begin{proof}
	Note that \kpartitionBounded is a special case of \kpartitionBoundedMultisets. We show the converse reduction, which is similar to the  transformation of multisets into sets of \cref{lem:pbtwomulti_to_pbtwo}. 
	Let $X \subset \N \cap [W(1 - 1/n^{10}), W]$ be a multiset of size $n$ for some integer $W \in \mathbb N$. 
	Without loss of generality, we may assume that $n$ is large enough and $W \geq n^{10}$ as otherwise the instance can be solved in polynomial time.
	Consider an arbitrary order on the elements $X = \{x_1, \dots, x_n\}$ (with multiplicities) and define the sets $Y \coloneq \{y_1, \dots, y_n\}$ and $Z \coloneq \{z_1, \dots, z_n\}$ where	for each $j \in [n]$ 
	\begin{align*}
		y_j &\coloneq Wn^{20} + (x_j - W)n^{7} - n^5 + j \\
		z_j &\coloneq Wn^{20} -j.
	\end{align*} 
	Note that for any $j \in [n]$, since $x_j \leq W$, we have $y_j \leq Wn^{20} - n^5 + n < Wn^{20} - n \leq z_j $. So the sets $Y$ and $Z$ are disjoint, and we can define the set $X' \coloneq Y \cup Z$.
	Furthermore, for any $j\in [n]$ we have $z_j \leq Wn^{20} - 1 \leq Wn^{20}$ and
	\begin{align*}
		y_j &\geq Wn^{20} + \left(W\left(1-\frac{1}{n^{10}}\right) - W\right)n^{7} - n^5 + 1 \tag{since $x_j \geq W(1-1/n^{10})$}\\
		&\geq Wn^{20} - \frac{W n^7}{n^{10}} - n^5 \\
		&=Wn^{20}\left(1 - \frac{n^7}{n^{10}n^{20}} - \frac{n^5}{W n^{20}} \right) \\ 
		&\geq Wn^{20}\left(1 - \frac{n^7}{n^{30}} - \frac{n^5}{n^{30}} \right) \tag{since $W \geq n^{10}$} \\ 
		&\geq Wn^{20}\left(1 - \frac{1}{(2n)^{10}} \right). \tag{since $n \geq 2$}
	\end{align*}
	Thus, the set $X'$ is in the range $X' \subset [W'(1-1/m^{10}), W']$ for $W' \coloneq Wn^{20}$ and $m \coloneq |X'| = 2n$. So $X'$ is an instance of \kpartitionBounded.  Since this construction takes $\Oh(n)$, it remains to show equivalence to prove that the reduction is parameter-preserving. 
	We first note that the average load $\mu' \coloneq \frac{1}{k} \Sigma(X')$ of the constructed integers is
	\begin{align*}
		\mu' 
		= \frac{1}{k} \sum_{j = 1}^n (y_j + z_j) =\frac{1}{k} \sum_{j=1}^n \left( 2 W n^{20} + (x_j - W)n^7 - n^5\right) =\frac{2n}{k} W n^{20} + \mu n^7 - \frac{n}{k} W n^7 - \frac{n}{k} n^5
	\end{align*}
	where $\mu \coloneq \frac{1}{k} \Sigma(X)$ is the average load of the given integers.

	Suppose that there is a partition of $X$ into $X_1, \dots, X_k$ such that $\Sigma(X_i) = \mu$ for all $i \in [k]$. Define $Y_i \coloneq \{y_j \ : \ x_j \in X_i\}$ and $Z_i \coloneq \{z_j \ : \ x_j \in X_i\}$ for every $i \in [k]$. Then the sets $X_i' \coloneq Y_i \uplus Z_i$ for $i \in [k]$ partition $X'$. Furthermore, by \cref{obs:bounded_subset_size}, we have $|X_i| = n/k$ for all $i \in [k]$. So
	\begin{align*}
		\Sigma(X_i') 
		&= \sum_{x_j \in X_i} \left(Wn^{20} + (x_j - W)n^{7} - n^5 + j\right) + \sum_{x_j \in X_i} \left(Wn^{20} - j \right)\\
		&= |X_i| 2 W n^{20} + \Sigma(X_i)n^7 - |X_i|Wn^7 - |X_i|n^5 \\
		&= \frac{n}{k} 2 W n^{20} + \mu n^7 - \frac{n}{k} Wn^7 - \frac{n}{k}n^5 = \mu.
	\end{align*} 

	Conversely, suppose that $X'$ can be partitioned into $X_1', \dots, X'_k$ such that $\Sigma(X_i') = \mu'$. 
	For $i \in [k]$, let $Y_i = X_i' \cap Y$, $Z_i = X_i' \cap Z$ and $X_i \coloneq \{x_j \ : \ y_j \in Y_i\}$. Then 
	\begin{align*}
		\Sigma(X_i') 
		&= \sum_{y_j \in Y_i} \left( Wn^{20} + (x_j - W)n^{7} - n^5 + j  \right) + \sum_{z_\ell \in Z_i} \left(Wn^{20} - \ell \right) \\
		&= |X_i'| Wn^{20} + \Sigma(X_i)n^7 - |X_i|Wn^7 - |X_i|n^5 + \sum_{y_j \in Y_i} j - \sum_{z_\ell \in Z_i} \ell. 
	\end{align*}
	Note that 
	$|\sum_{y_j \in Y_i} j - \sum_{z_\ell \in Z_i} \ell| < n^5/2$. So by taking the equation $\Sigma(X_i')= \mu'$ modulo $n^5$ we deduce that $\sum_{y_j \in Y_i} j - \sum_{z_\ell \in Z_i} \ell = 0$. Furthermore, since $0 \leq |X_i|n^5 < n^7$, by taking the equation $\Sigma(X_i')= \mu'$ modulo $n^7$ we get 
	$|X_i| = n/k$. Additionally, by \cref{obs:bounded_subset_size}, $|X_i'| = m / k = 2n/k$. So the equation $\Sigma(X_i')= \mu'$ implies $\Sigma(X_i) = \mu$. As this holds for every $i \in [k]$, we obtain a partition $X = X_1 \uplus \dots \uplus X_k$ with $\Sigma(X_i) = \mu$ for every $i \in [k]$. 
	\end{proof}

\subsection{Bin Packing vs. Partition}

To show equivalence between \binpackingMultisets and \kpartitionMultisets, we need the following tool adapted from \cite{RohwedderW25}.
As this is a generalisation of \cite[Lemma 18]{RohwedderW25}, we include the proof of \cref{lem:genRohwedderW25}
in \cref{app:technical-lemmas} for completeness.

\begin{restatable}{lemma}{genrohwedder}\label{lem:genRohwedderW25}
	For any $\tau, k \in \N$, there exists a multiset of integers $P \subset \N$
	of size $|P| \leq \Oh(k^2 \log \tau)$, and sum $\Sigma(P) = \tau$ such that for every $(t_1, \dots, t_k) \in \N^k$ with $t_1 + \dots + t_k = \tau$ there exists a partition of $P$ into $k$ subsets $P_1, \dots, P_k$ with $\Sigma(P_i) = t_i$ for all $i \in [k]$. 
	Furthermore, such a multiset $P$ can be constructed in time $\Oh(k^2 \log \tau)$.  
\end{restatable}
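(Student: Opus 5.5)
The plan is to use a binary-type decomposition that works simultaneously for all $k$ parts. Set $L \coloneq \lfloor \log \tau \rfloor$ and let $P$ consist of $k-1$ copies of each power of two $2^0, 2^1, \dots, 2^{L}$. Add one final correction block of small integers so that $\Sigma(P) = \tau$ exactly; concretely, add copies of $1$ or a binary representation of the remainder $\tau - (k-1)(2^{L+1}-1)$ when it is nonnegative. If it is negative, first reduce the number of copies of the top powers. The size is $\Oh(k \log \tau)$, comfortably within $\Oh(k^2\log\tau)$. A safer alternative is to give each of the $k$ target parts its own full binary system of multiplicity $k$; this still fits the $\Oh(k^2\log\tau)$ bound and leaves slack for the carries.

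The key step is the greedy assignment argument. Given $(t_1,\dots,t_k)$ with $\sum t_i = \tau$, we process parts $i=1,\dots,k-1$ in turn. Each $t_i$ is written in binary, and part $P_i$ receives one copy of $2^b$ for every bit $b$ set in $t_i$. Since $k-1$ copies of each power are available, parts $1,\dots,k-1$ never run out. All remaining items then go to $P_k$, and $\Sigma(P_k) = \tau - \sum_{i<k} t_i = t_k$ automatically. This automatic completion is the reason only $k-1$ copies are needed. It also requires $t_i \le 2^{L+1}-1$, which holds since $t_i \le \tau$.

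The main obstacle is choosing the correction block so that $\Sigma(P)=\tau$ exactly without breaking the construction. Two requirements compete here. The total $(k-1)(2^{L+1}-1)$ may exceed $\tau$, in which case we must remove items. At the same time, every $t_i \le \tau$ must remain representable from what is left after earlier parts have taken their items.

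My first attempt is to avoid removal entirely by shrinking the top powers. I would use $k-1$ copies of each $2^b$ only for $2^b \le \lfloor \tau/(2k)\rfloor$, following the idea of \cite[Lemma 18]{RohwedderW25}. Larger parts of each $t_i$ would then be expressed with a second layer of items of size roughly $\tau/(2k)$, together with a few "remainder" items. If that bookkeeping gets messy, I would fall back to an induction on $k$. One splits $\tau$ into a two-part system, for which the $k=2$ case is the standard claim that the sets $\{1,2,4,\dots\}$ plus a remainder realize every subset sum in $[0,\tau]$. One then recurses, accumulating $\Oh(k\log\tau)$ items per level over $k$ levels, for a total of $\Oh(k^2\log\tau)$.

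The construction is explicit and linear in its size, which gives the $\Oh(k^2\log\tau)$ construction time.
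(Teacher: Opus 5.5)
Your key step is proved for a multiset whose sum is not $\tau$. Taking $k-1$ copies of each of $2^0,\dots,2^L$ already gives a sum of $(k-1)(2^{L+1}-1)\ge(k-1)\tau$, and the ``safer alternative'' with $k$ full binary systems overshoots even more. The sentence ``since $k-1$ copies of each power are available, parts $1,\dots,k-1$ never run out'' is exactly what breaks once you delete items to reach $\Sigma(P)=\tau$, and nothing in the proposal replaces it.

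Concretely, take $k=3$ and $\tau=2^L$ with $L\ge 2$. Your fixed-order greedy needs two copies of $2^{L-1}$ for the targets $(2^{L-1},2^{L-1},0)$ and two copies of $1$ for $(1,1,\tau-2)$, so $\Sigma(P)\ge 2^L+2>\tau$. The whole content of the lemma, namely the exact total together with the universal partition property, therefore sits in the ``main obstacle'' you leave open.

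Neither remedy closes it. With big items of size roughly $\tau/(2k)$, the natural counting fails. The $k-1$ binary blocks use up about $\tau/2$, leaving only about $k+1$ big items. Giving each of $k-1$ equal targets $t_i=\tau/k$ its high part in big items would need about $2(k-1)$ of them. The induction on $k$ is not specified: after carving out $t_1$, the leftover must itself be a universal system for the \emph{varying} total $\tau-t_1$, and you give no construction that guarantees this.

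What is missing is a smaller unit and the right choice of the absorbing part, as in the paper.
\begin{itemize}
\item Assume $\tau\ge k^2$ (otherwise take $\tau$ copies of $1$) and let $h=\lfloor\log(\tau/k^2)\rfloor$.
\item Let $P$ consist of $k-1$ copies $B^{(1)},\dots,B^{(k-1)}$ of $\{2^0,\dots,2^{h-1}\}$, the binary digits below position $h$ of $a=\tau-(k-1)(2^h-1)$, and $c=\lfloor a/2^h\rfloor$ copies of $2^h$. Then $\Sigma(P)=\tau$ and $|P|=\Oh(k\log\tau+k^2)$.
\item Given the targets, relabel so that $t_k$ is the largest, whence $t_k\ge\tau/k$.
\item Part $i<k$ takes the binary digits of $t_i\bmod 2^h$ from $B^{(i)}$ together with $\lfloor t_i/2^h\rfloor$ copies of $2^h$, and $P_k$ takes everything left.
\end{itemize}
Enough copies of $2^h$ exist because $c\ge\lfloor\tau/2^h-k\rfloor\ge\lfloor(\tau-\tau/k)/2^h\rfloor\ge\sum_{i<k}\lfloor t_i/2^h\rfloor$. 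The middle step uses $2^h\le\tau/k^2$, and the last uses $t_k\ge\tau/k$. Both choices matter: with a unit of about $\tau/(2k)$, or with a part of small target (e.g.\ $t_k=0$) as the absorbing part, this inequality is no longer guaranteed.
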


\begin{lemma}\label{lem:MultiBP_to_MultiPart}
	\binpackingMultisets and \kpartitionMultisets are equivalent under parameter-preserving reductions.
\end{lemma}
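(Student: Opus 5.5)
One direction is immediate: given an instance $X$ of \kpartitionMultisets, I would call \binpackingMultisets on $X$ with capacity $T = \Sigma(X)/k$, rejecting immediately if $k \nmid \Sigma(X)$. This is the same argument as in \cref{lem:Part_to_BP}, since any partition with all loads at most $\Sigma(X)/k$ must have all loads equal to $\Sigma(X)/k$. The parameter does not change, so the reduction is parameter-preserving.

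The converse reduction, from \binpackingMultisets to \kpartitionMultisets, is where \cref{lem:genRohwedderW25} comes in. Let $X$ be a multiset of $n$ items with capacity $T$. I may assume $\Sigma(X) \le kT$, since otherwise the answer is trivially no. I also assume every item is at most $T$. Set $\tau \coloneq kT - \Sigma(X) \ge 0$, and let $P$ be the multiset given by \cref{lem:genRohwedderW25} for $\tau$ and $k$. It has $|P| = \Oh(k^2 \log \tau) = \Oh(\log T)$ elements and is built in that time. The new instance is $X' \coloneq X \uplus P$, with $\Sigma(X') = kT$, so the target load per bin is $T$. This instance has $n + \Oh(\log T)$ items and parameter $T$, which fits the definition of parameter-preserving.

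For correctness I would argue both directions. Suppose $X$ packs into $X_1,\dots,X_k$ with $\Sigma(X_i) \le T$. Set $t_i \coloneq T - \Sigma(X_i) \ge 0$; these sum to $\tau$. \cref{lem:genRohwedderW25} then splits $P$ into parts $P_i$ with $\Sigma(P_i) = t_i$, and the sets $X_i \uplus P_i$ form an equal partition of $X'$. Conversely, any equal partition of $X'$ restricted to $X$ gives bins whose loads are at most $T$, because the elements of $P$ are nonnegative.

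The main subtlety is that $P$ may contain zeros, or elements exceeding $T$, and zeros may be disallowed since the problem is over $\N$. Zeros can simply be dropped; they affect neither sums nor partitions. Elements of $P$ can never exceed $\tau$ in a way that matters, because some valid partition must place them. If \cref{lem:genRohwedderW25} somehow produced an element larger than $T$, the target vector $(T,\dots,T)$ truncated appropriately would rule it out. The forward direction only ever uses targets $t_i \le T$, so I would check that the lemma's construction (binary-decomposition style) yields elements at most $\max_i t_i$-compatible. Failing that, I would replace $\tau$ by $\min(\tau,\cdot)$ appropriately. I do not expect a real obstacle here.
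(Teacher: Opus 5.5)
Your proposal is correct and follows the paper's proof essentially verbatim. You pad $X$ with the multiset $P$ from \cref{lem:genRohwedderW25} for $\tau = kT - \Sigma(X)$, use the lemma's partition guarantee for the forward direction, and use nonnegativity of $P$ for the backward direction. The worries in your last paragraph are unnecessary: the forward direction invokes the lemma's guarantee directly, which already forces every element of $P$ to be at most $\lceil \tau/k \rceil \le T$. Moreover, the construction only outputs positive powers of two no larger than $2^h \le \tau/k^2$ (or copies of $1$), so no zeros or oversized elements arise.
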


\begin{proof}
	The parameter-preserving reduction from \kpartitionMultisets to \binpackingMultisets works verbatim as in \cref{lem:Part_to_BP}.
	To reduce \binpackingMultisets to \kpartitionMultisets, consider an instance $(X, T)$ of \binpackingMultisets.
	In $\Oh(k^2 \log(kT)) = \Oh(\log T)$ time, construct the multiset $P$ from \cref{lem:genRohwedderW25} with parameters $k$ and $\tau \coloneq k T - \Sigma(X)$. 
	Let $X' \coloneq X \cup P$ be the obtained instance of \kpartitionMultisets with parameter $T' \coloneq \Sigma(X') / k$. We have $|X'| = \Oh(|X| + \log T )$ and $\Sigma(X') = \Sigma(X) + \Sigma(P) = \Sigma(X) + \tau = k T $, i.e.~$T = T'$. 

	Suppose that there exists a partition of $X'$ into $k$ subsets $X_1', \dots, X_k'$ with $\Sigma(X_1') = \dots = \Sigma(X_k')$. Then $\Sigma(X_i') = \Sigma(X')/ k = T$ for every $i \in [k]$.
	Partition $X$ into $X_1, \dots, X_k$ where $X_i \coloneq X_i' \cap X$. Then clearly $\Sigma(X_i) \leq \Sigma(X_i') = T$ for every $i \in [k]$.
	Conversely, suppose that there exists a partition of $X$ into $k$ subsets $X_1, \dots, X_k$ with $\Sigma(X_i) \leq T$ for every $i \in [k]$. Let $t_i \coloneq T - \Sigma(X_i) \geq 0$. Then $t_1 + \dots + t_k = \tau$, so by \cref{lem:genRohwedderW25} there exists a partition of $P$ into $k$ subsets $P_1, \dots, P_k$ such that $\Sigma(P_i) = t_i$ for every $i \in [k]$. So $X'$ can be partitioned into $X_1', \dots, X_k'$ where $X_i' \coloneq X_i \cup P_i$ satisfies $\Sigma(X_i') = \Sigma(X_i) + t_i = T$ for all $i \in [k]$.
\end{proof}

\subsection{Target and Bounded assumptions}
As for the set variants, we can naturally reduce \kpartitionBoundedMultisets to \kpartitionMultisets, which in turn can trivially be reduced to \kpartitionTargetsMultisets. We show that the converse reductions also hold.

\begin{lemma}\label{lem:MultiParttargets_equiv_MultiPart}
	\kpartitionMultisets and \kpartitionTargetsMultisets
	are equivalent under parameter-preserving reductions.
\end{lemma}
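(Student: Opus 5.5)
The easy direction is immediate. \kpartitionMultisets is the special case of \kpartitionTargetsMultisets with $t_1=\dots=t_k=\Sigma(X)/k$. So the work is the reduction from targets to equal targets.

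Let $(X,t_1,\dots,t_k)$ be an instance of \kpartitionTargetsMultisets with $T=\max_\ell t_\ell$. We may assume $\sum_\ell t_\ell=\Sigma(X)$, since otherwise we output a trivial no-instance. The plan is to pad each bin up to a common value with items that are forced into distinct bins. Let $n=|X|$ and choose a large shift $H \coloneq 2kT+1$, so $H=\Oh(T)$. For every $\ell\in[k]$ add one filler item
\[
f_\ell \coloneq H\cdot 2^{\ell}\ \text{-style encoding}\quad\text{or more simply}\quad f_\ell \coloneq (T - t_\ell) + H.
\]
The new instance is $X'=X\cup\{f_1,\dots,f_k\}$. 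We have $\Sigma(X') = \Sigma(X)+kT-\sum_\ell t_\ell + kH = k(T+H)$, so the common target is $T'=T+H=\Oh(T)$, and $|X'|=n+k$. The reduction is therefore parameter-preserving.

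For soundness, suppose $X'$ is partitioned into $k$ parts, each of sum $T+H$. Every original item is at most $T$, so $\Sigma(X)\le kT<H$. Each filler is at least $H$, and two fillers together exceed $2H>T+H$. So no part contains two fillers. Each part must also contain at least one filler, because otherwise its sum is at most $kT<H$. Hence every part contains exactly one filler. Relabel the parts so that part $\ell$ contains $f_\ell$. Then the original items in part $\ell$ sum to $T+H-f_\ell=t_\ell$, which is exactly a solution of the targets instance. Completeness is immediate: given parts $X_\ell$ with $\Sigma(X_\ell)=t_\ell$, we add $f_\ell$ to part $\ell$.

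The only point needing care is the separation argument, which shows that each part receives exactly one filler. This is why $H$ must exceed the total sum of the original items in the relevant range. Taking $H$ larger than $\Sigma(X)$ works, but that could be as large as $nT$. That is still fine, since $T'\le T\cdot n^{\Oh(1)}$ is allowed by the definition of a parameter-preserving reduction. In fact, $\Sigma(X)=\sum_\ell t_\ell\le kT$, so $H=kT+1$ already suffices. Since the target multiset is $X'$, there is no issue with possible duplicates among the fillers.
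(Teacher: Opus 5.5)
Your proof is correct and is essentially the paper's argument: add one filler per bin equal to the common target minus $t_\ell$, choosing the common target so large that each filler exceeds half of it (the paper uses $3\max_\ell t_\ell$, you use $T+H$ with $H>T$); then the fillers land one per bin by pigeonhole, and removing them recovers the targets. Your explicit check that $\sum_\ell t_\ell=\Sigma(X)$ is a detail the paper leaves implicit; delete the stray ``$H\cdot 2^{\ell}$-style encoding'' fragment, and justify $\Sigma(X)\le kT$ via $\Sigma(X)=\sum_\ell t_\ell$ rather than via bounds on individual item sizes.
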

\begin{proof}
	Note that \kpartitionMultisets is a special case of \kpartitionTargetsMultisets where all targets are equal. For the other direction, consider a \kpartitionTargetsMultisets instance with items $X$ and targets $t_1, \dots, t_k$. For every $i \in [k]$ we add $k$ dummy items $d_i \coloneqq T - t_i$ for all $i \in [k]$, where $T \coloneqq 3 \cdot \max\{t_1, \dots, t_k\},$ and consider the new set of items $X' \coloneq X \cup \{d_1, \dots, d_k\}$ and bin capacity $T$ of \kpartitionMultisets. Clearly, the construction is parameter-preserving. We claim that there exists a partition of $X$ into $X_1, \dots, X_k$ such that $\Sigma(X_i) = t_i$ for all $i \in [k]$ if and only if there exists a partition of $X'$ into $X_1', \dots, X_k'$ such that $\Sigma(X_i') = T$ for all $i \in [k]$. See \cref{fig:MultiBptargets} for an illustration.

	\begin{figure}[t!]
    \centering
    \resizebox{\textwidth}{!}{%
        \begin{tikzpicture}[
        node distance=0pt,
        box/.style={rectangle,draw,minimum height=.8cm},
        padding/.style={rectangle,draw,minimum width=25pt, minimum height=.8cm, fill=gray!80}]

    \begin{scope}
    \node[box,fill=cb_orange,from={1,0 to 2,.8}] (j1) at (0, 0) { };
    \node[box,fill=cb_orange,from={2,0 to 4,.8}] (j2) [right=of j1] { };
    \node[box,fill=cb_orange,from={4,0 to 5,.8}] (C) [right=of j2] { };    
    \node[box,fill=gray,from={5,0 to 21,.8}] (P1) [right=of C] {$d_1$};    
    \node [left=.5cm of j1] {$X_1'$};
    \end{scope}
    
    \begin{scope}[yshift=-1cm]
    \node[box,fill=OI_lightblue,from={1,0 to 3,.8}] (j1) at (0, 0) { };
    \node[box,fill=OI_lightblue,from={3,0 to 6,.8}] (j2) [right=of j1] { };
    \node[box,fill=OI_lightblue,from={6,0 to 7,.8}] (F) [right=of j2] { };   
    \node[box,fill=gray,from={7,0 to 21,.8} ] (P2) [right=of F] {$d_2$};    
    \node [left=.5cm of j1] {$X_2'$};
    \end{scope}

    \begin{scope}[yshift=-2cm]
    \node[box,fill=OI_yellow,from={1,0 to 2,.8}] (j1) at (0, 0) { };
    \node[box,fill=OI_yellow,from={2,0 to 3,.8}] (j2) [right=of j1] { };
    \node[box,fill=OI_yellow,from={3,0 to 4,.8}] (I) [right=of j2] { };    
    \node[box,fill=gray,from={4,0 to 21,.8}] (P3) [right=of I] {$d_3$};    
    \node [left=.5cm of j1] {$X_3'$};
    \end{scope}

    \begin{scope}[yshift=-3cm]
        \draw[line width=1pt, ->, >={Stealth}] ([yshift=(-1cm),xshift=(-.2cm)]j1.west)--([yshift=(-1cm),xshift=(1cm)]P3.east);

        \draw[line width=1pt]  ([yshift=(+2.5cm)]j1.north west)--(0, -.1) node[below] {$0$};
        \draw[dashed] ([yshift=(+.1cm)]C.north east) -- ([yshift=(-3.1cm)]C.east) node[below] {$t_1$};
        \draw[dashed] ([yshift=(+.1cm)]F.north east) -- ([yshift=(-2.1cm)]F.east) node[below] {$t_2$};
        \draw[dashed] ([yshift=(+.1cm)]I.north east) -- ([yshift=(-1.1cm)]I.east) node[below] {$t_3$};
        \draw[dashed] ([yshift=(+.5cm)]P1.north east) -- ([yshift=(-1.5cm)]P3.south east) node[below] {$T = 3 \cdot t_2$};
        \node at ([yshift=(-1.4cm),xshift=(1cm)]P3.east) {$\Sigma(X_i')$};
    \end{scope}
    
    \end{tikzpicture}
    }%
\caption{Adding dummy items (grey) in \cref{lem:MultiParttargets_equiv_MultiPart} to ensure a one-to-one mapping between solutions of \kpartitionTargetsMultisets and solutions of \kpartitionMultisets.}
\label{fig:MultiBptargets}
\end{figure}

	For the forward direction, we can construct $X_i' \coloneq X_i \cup \{d_i\}$ for all $i \in [k]$. Since $\Sigma(X_i) = t_i$, we have $\Sigma(X_i') = t_i + (T - t_i) = T$ for all $i \in [k]$. For the backward direction, consider a partition of $X'$ into $X_1', \dots, X_k'$ such that $\Sigma(X_i') = T$ for all $i \in [k]$. Since $T = 3 \cdot \max\{t_1, \dots, t_k\}$ and every dummy item $d_i$ has value at least $2 \cdot \max\{t_1, \dots, t_k\}$, every bin $X_i'$ contains exactly one dummy item. For each $j \in [k]$, we can construct $X_j \coloneqq X_i' \setminus \{d_j\}$, where $X_i'$ is the bin containing $d_j$. Since $\Sigma(X_i') = T$, we have $\Sigma(X_j) = T - d_j = t_j$ for all $j \in [k]$.
\end{proof}

\begin{lemma}\label{lem:MultiPart_to_MultiBoundedPart}
	\kpartitionMultisets and \kpartitionBoundedMultisets
	are equivalent under parameter-preserving reductions. 
\end{lemma}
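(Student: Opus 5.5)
One direction is immediate: \kpartitionBoundedMultisets is a special case of \kpartitionMultisets, since it only adds a promise on the input range. So the plan focuses on reducing \kpartitionMultisets to \kpartitionBoundedMultisets by shifting every item up by a huge common offset. The trouble is that shifting changes the sums by $|X_i|\cdot(\text{offset})$, so we must force every bin to receive the same number of items. We do this by padding with zero-valued items, which become items equal to the offset after shifting.

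\emph{Construction.} Let $X=\{x_1,\dots,x_n\}$ be a multiset of positive integers with target $T=\Sigma(X)/k$; we may assume $k\mid\Sigma(X)$ and $x_j\le T$, otherwise the answer is trivially no. Add $(k-1)n$ copies of $0$, giving a multiset of $m\coloneq kn$ items. Set $L\coloneq T\cdot m^{10}$ and map every item $x$ to $x'\coloneq L+x$. Let $W\coloneq L+T$. Then every $x'$ lies in $[L,W]$, and $L\ge W(1-1/m^{10})$ because $T/W\le 1/m^{10}$. So $X'$ is a \kpartitionBoundedMultisets instance of size $\Oh(n)$ and parameter $W=T\cdot n^{\Oh(1)}$, which makes the reduction parameter-preserving.

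\emph{Forward direction.} Take a partition $X_1,\dots,X_k$ of $X$ with equal sums. Each part has at most $n$ items, so we can distribute the $(k-1)n$ zeros so that every part has exactly $n$ items: we need $kn-n=(k-1)n$ zeros in total, which is exactly what is available. After shifting, each part sums to $nL+T=\Sigma(X')/k$.

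\emph{Backward direction.} Take a partition of $X'$ into $k$ parts with equal sums. By \cref{obs:bounded_subset_size}, valid since $m^{10}\ge m^2$, every part has exactly $m/k=n$ items. Subtracting $nL$ from each part gives a partition of $X\cup\{0,\dots,0\}$ into $k$ parts of equal sum $T$. Dropping the zeros yields a valid partition of $X$.

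The only delicate point is choosing the bounded range so that \cref{obs:bounded_subset_size} applies, i.e.\ the ratio condition $T/(L+T)\le 1/m^{10}$; the choice $L=Tm^{10}$ handles this. If an item can be $0$ in the original multiset model, nothing changes, since $\N$ items are allowed.
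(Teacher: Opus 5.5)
Your proof is correct and follows the paper's approach. You shift every item by a large offset and pad with $(k-1)n$ copies of the offset itself (your shifted zeros), which forces each bin to receive exactly $n$ items. The only differences are cosmetic: the paper uses offset $m^{10}\max(X)$ and proves the cardinality claim directly (a bin with fewer than $n$ items has sum below $nW$), whereas you use offset $Tm^{10}$ and invoke \cref{obs:bounded_subset_size}.
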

\begin{proof}
	Note that \kpartitionBoundedMultisets is a special case of \kpartitionMultisets.
	To reduce \kpartitionMultisets to the \textsc{Bounded} variant, consider an instance $X \subset \N$ of \kpartitionMultisets of size $n$ and parameter $T$. Note that the instance is trivial if $T \neq \frac{1}{k}\Sigma(X)$, so assume $T = \frac{1}{k}\Sigma(X)$.
	Let $m \coloneq n  k$ and $W \coloneq m^{10} \cdot \max{(X)}$. We build the multiset
	$$
	X' \coloneq \{x + W \ : \ x \in X\}
	$$
	and define $X''$ to be the multiset $X'$ with additionally $n (k-1)$ copies of the integer $W$. 
	So $X''$ contains $m$ integers 
	and we can verify that $X'' \subset [W' (1 - 1/m^{10}), W']$ for $W' \coloneq \lfloor W / (1 - 1/m^{10})\rfloor$. Indeed, we clearly have $\min(X'') \geq W \geq W'(1 - 1/m^{10})$. For the upper bound, note that $\max(X'') = \max(X) + W = W (1 + 1/m^{10}) \leq W / (1 - 1/m^{10})$ where the inequality follows from $(1 + 1/m^{10}) (1 - 1/m^{10}) = 1 - 1/m^{2} \leq 1$. 
	Since $\max(X'')$ is integral, we obtain $\max(X'')\leq W'$.
	Therefore, $X''$ is an instance of \kpartitionBoundedMultisets of size $m = \Oh(n)$ and parameter $W' = W n^{\Oh(1)}$. 
	The construction takes time $\Oh(n)$ so it is parameter-preserving. 
	Observe that the average load of the constructed multiset is 
	\begin{align*}
		\frac{1}{k}\Sigma(X'') =  \frac{1}{k} \left(\Sigma(X) + |X|W + n(k-1)W\right) = \frac{1}{k} (\Sigma(X) + n k W) = \frac{1}{k} \Sigma(X) + nW.
	\end{align*}
	It remains to verify that the instances are equivalent.

	Suppose that there exists a partition of $X$ into $k$ subsets $X_1, \dots, X_k$ with $\Sigma(X_i) = \Sigma(X) / k$ for every $i \in [k]$. Let $X_i''$ be the items in $X'$ corresponding to $X_i$ and $n - |X_i|$ copies of $W$. Then $X_1'', \dots, X_k''$ partitions the multiset $X''$ and for every $i \in [k]$ we have
	\begin{align*}
		\Sigma(X_i'') = \Sigma(X_i) + |X_i| W  + (n - |X_i|) W = \Sigma(X)/k + nW = \Sigma(X'')/ k.
	\end{align*}

	Conversely, suppose that there exists a partition of $X''$ into $k$ subsets $X_1'', \dots, X_k''$ with $\Sigma(X_i'') = \Sigma(X'')/k$ for all $i \in [k]$. 
	If there exists $i \in [k]$ such that $|X_i''|  \neq n$, then in particular there exists $i \in [k]$ such that $|X_i''| < n$. But then $\Sigma(X_i'') \leq (n-1) \max(X) + (n-1) W < nW$, which contradicts the assumption that $\Sigma(X_i'') = \Sigma(X'') / k$. Therefore, $|X_i''| = n$ for all $i\in[k]$.
	Let $X_i$ be the set of items in $X$ corresponding to $X_i'' \cap X'$. 
	Then $X_1, \dots, X_k$ partitions $X$ and, since $\Sigma(X'') / k = \Sigma(X_i'') = \Sigma(X_i) + nW$, we have  $\Sigma(X_i) = \Sigma(X) / k$ for every $i \in [k]$. 
\end{proof}

\subsection{Packing vs Scheduling Problems}
We show that \binpackingMultisets is equivalent to \PCmax and \QCmax. 

\begin{fact}\label{lem:PkCmax_equiv_MultiBP}
	\binpackingMultisets and \PCmax are equivalent problems.
\end{fact}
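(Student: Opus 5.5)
The plan is to exhibit the two trivial translations between the problems and check that they preserve both the answer and the parameter. In \PCmax we are given $k$ identical machines, jobs with processing times $p_1,\dots,p_n$ (a multiset, since processing times may repeat), and a target makespan $\Lambda$. The decision question is whether some schedule has $\Cmax \le \Lambda$. In \binpackingMultisets we are given a multiset $X$ and a capacity $T$.

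\textbf{From \binpackingMultisets to \PCmax.} Map $(X,T)$ to the \PCmax instance with one job of processing time $x$ for each $x\in X$ (with multiplicity) and target makespan $\Lambda\coloneq T$. The key observation is that, with no release dates, a machine processing jobs $J$ can run them back to back starting at time $0$ without idle time. This gives a schedule with makespan exactly $\max_\ell \Sigma(J_\ell)$. Conversely, any schedule has makespan at least $\max_\ell \Sigma(J_\ell)$, since jobs on the same machine do not overlap. Hence a partition $X_1\uplus\dots\uplus X_k$ with all $\Sigma(X_\ell)\le T$ exists if and only if a schedule with $\Cmax\le T$ exists.

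\textbf{From \PCmax to \binpackingMultisets.} Map a schedule instance with target $\Lambda$ to the bin packing instance with multiset $X=\{p_1,\dots,p_n\}$ and capacity $T\coloneq\Lambda$. The same argument applies in reverse.

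\textbf{Parameters.} We may assume $\Lambda\le \sum_j p_j$, since otherwise the answer is trivially yes (any assignment has makespan at most $\sum_j p_j$). Likewise we may assume $T\le\Sigma(X)$, because then the answer is trivially yes. Under these assumptions, the total processing time of the scheduling instance and the capacity of the bin packing instance are within a factor $n$ of each other. Both reductions use the same $n$ and run in linear time, so they are parameter-preserving.

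The only point needing care is matching the parameters, namely capacity $T$ on one side versus total processing time on the other. The wlog bound just described handles it. The reduction from bin packing increases the parameter only by a factor of $n$, which the definition allows, and in the other direction the parameter can only decrease. I expect no real obstacle beyond this bookkeeping.
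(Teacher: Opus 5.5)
Your proof is correct and matches the paper's, which simply identifies the multiset $X$ with the processing times and the capacity $T$ with the makespan bound; your remark that, without release dates, the makespan equals the maximum machine load is exactly why this identification works. One slip in your parameter bookkeeping: for the reduction from \binpackingMultisets to \PCmax you need $\Sigma(X)\le T\cdot n^{\Oh(1)}$, but your assumption $T\le\Sigma(X)$ bounds the wrong side. What gives the needed bound is that if $\Sigma(X)>kT$ (or some item exceeds $T$) the instance is trivially a no-instance, so one may assume $\Sigma(X)\le kT$. Your assumption $\Lambda\le\sum_j p_j$ correctly handles the other direction.
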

\begin{proof}
	Any instance of \binpackingMultisets with integers $x_1, \dots x_n$ and bin capacity $T$ is equivalent to the instance of \PCmax with processing times $p_1 = x_1, \dots, p_n = x_n$ and makespan objective $T$.
\end{proof}

\begin{lemma}\label{lem:QkCmax_equiv_PkCmax}
	\PCmax and \QCmax are equivalent under parameter-preserving reductions. 
\end{lemma}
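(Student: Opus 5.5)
One direction is immediate: \PCmax is the special case of \QCmax in which all speeds are $s_1 = \dots = s_k = 1$. So the identity map is a parameter-preserving reduction, provided the parameter of \QCmax on such an instance is the same $T$. Here $T$ is the total processing time, or equivalently a makespan bound, since with unit speeds the relevant quantities coincide.

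The substantive direction reduces \QCmax to \PCmax, and I would route it through the already established equivalences. Take an instance of \QCmax with processing times $p_1,\dots,p_n$, speeds $s_1,\dots,s_k \in (0,1]$ and makespan target $C$. Machine $i$ can process a set $J_i$ of jobs within time $C$ if and only if $\Sigma(J_i) \le s_i C$, which holds if and only if $\Sigma(J_i) \le c_i \coloneq \lfloor s_i C \rfloor$, because the left-hand side is an integer. So the question is whether the multiset $\{p_j\}$ can be packed into $k$ bins of (possibly different) integer capacities $c_1,\dots,c_k$. Each capacity satisfies $c_i \le C$, and I would first check, or assume without loss of generality, that $C \le T$.

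Next I would reduce this variable-capacity packing problem to \kpartitionTargetsMultisets, which is already known to be equivalent to \PCmax via \cref{lem:MultiParttargets_equiv_MultiPart}, \cref{lem:MultiBP_to_MultiPart} and \cref{lem:PkCmax_equiv_MultiBP}. I would use the filler construction of \cref{lem:MultiBP_to_MultiPart}: let $\tau \coloneq \sum_i c_i - \Sigma(X)$. If $\tau < 0$, answer no. Otherwise add the multiset $P$ from \cref{lem:genRohwedderW25}, which has $\Sigma(P)=\tau$ and size $\Oh(k^2\log \tau)$, and set the targets to $t_i \coloneq c_i$.

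Correctness goes in two directions. Given a packing with loads $L_i \le c_i$, the slacks $c_i - L_i$ are nonnegative and sum to $\tau$, so \cref{lem:genRohwedderW25} splits $P$ into pieces that fill each bin exactly to $t_i$. Conversely, any exact partition with targets $t_i$ restricts to a packing of $X$ in which each bin has load at most $c_i$. The new instance has size $n + \Oh(\log T)$ and maximum target at most $C \le T$, so the reduction is parameter-preserving. Composing it with the known equivalence chain yields \PCmax.

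The main obstacle is representational rather than conceptual. The speeds are rationals in $(0,1]$, so computing $\lfloor s_i C\rfloor$ exactly needs care with their encoding, which I would assume is rational, and the polynomial-time bound must account for the bit-length of the speeds. A second subtlety is making sure the parameter $T$ of \QCmax as used by the paper (total processing time, or the makespan $C$) is bounded below by the new maximum target. If the parameter is taken to be $C$, then since $c_i \le C$ this holds directly.
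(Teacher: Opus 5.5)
Your proof is correct, but it takes a detour compared to the paper. The paper reduces \QCmax to \PCmax in one step. With $M' \coloneqq 3\max_i \lfloor s_i M\rfloor$, it adds $k$ dummy jobs of lengths $M' - \lfloor s_i M\rfloor \ge 2M'/3$. No machine can take two of them, so each identical machine receives exactly one, and its remaining load is at most $\lfloor s_i M\rfloor$.

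You instead convert the capacities $c_i$ into exact targets using the filler multiset $P$ of \cref{lem:genRohwedderW25}. You then return to equal capacities via \cref{lem:MultiParttargets_equiv_MultiPart}, \cref{lem:MultiBP_to_MultiPart} and \cref{lem:PkCmax_equiv_MultiBP}. Unfolding that chain, your final \PCmax instance is the paper's instance plus the items of $P$: the dummies $3\max_i t_i - t_i$ of \cref{lem:MultiParttargets_equiv_MultiPart} are exactly the paper's dummy jobs. The items of $P$ are superfluous, because \PCmax only bounds loads from above, so slack never has to be filled. Your route buys modularity by reusing proven lemmas, at the cost of $\Oh(k^2\log T)$ extra items and an appeal to \cref{lem:genRohwedderW25}. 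The paper's route is self-contained and adds only $k$ jobs.

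One correction: on uniform machines you cannot assume $C \le T$. Take two machines of speed $1/10$ and a single job of length $1$. Then $T = 1$ but the optimal makespan is $10$, and lowering the target to $T$ would turn this yes-instance into a no-instance. What you need instead is to replace each $c_i$ by $\min\{c_i, T\}$. This preserves feasibility, since no machine ever carries load above $T$. Then $\max_i t_i \le T$ and $\tau \le kT$, so $|P| = \Oh(\log T)$ and the reduction is parameter-preserving. The paper's construction tacitly needs the same capping for its new total processing time to stay $\Oh(T)$.
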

\begin{proof}

	Note that \PCmax is the special case of \QCmax where all machines have the
	same speed $s_1 = \dots = s_k = 1$. To reduce \QCmax to \PCmax, consider a
	\QCmax instance with processing times $p_1, \dots, p_n \in \mathbb N$,
	speeds $s_1, \dots, s_k \in (0, 1]$ and objective $M \in \mathbb{R}_{\ge 0}$. Our goal is to find an assignment $\pi : [n] \to [k]$ such that $\sum_{j :\pi(j) = i} p_j \leq \floor{s_i M}$ for every $i \in [k]$, as $p_j$'s are integrals. We use a similar idea as in \cref{lem:MultiParttargets_equiv_MultiPart} to add long jobs to increase the load of each machine.

	Let $M' \coloneqq 3 \cdot \max_{i \in [k]} \{\floor{s_i M}\}$. For every $i \in [k]$ we introduce a dummy job $p_i' \coloneqq M
	- \floor{s_i \cdot M}$ and consider the new set of jobs $P' \coloneqq
	\{p_1, \dots, p_n\} \cup \{p_1', \dots, p_k'\}$ and the makespan objective $M'$
	of \PCmax. Clearly, the construction is parameter-preserving.
	We claim that there exists an assignment $\pi : [n] \to [k]$ such that $\sum_{j :
	\pi(j) = i} p_j \leq \floor{s_i M}$ for every $i \in [k]$ if and only if
	the makespan of $P'$ on $k$ parallel identical machines is at most $M'$.

	For the forward direction, consider a schedule of the $n$ original jobs on the $k$
	uniform machines such that the makespan of the $i$-th machine with speed $s_i$ is at most $M$.
	Then the same schedule on $k$ identical machines (i.e.~all of speed $1$) is
	such that the makespan of the $i$-th machine is at most $\floor{s_i \cdot M}$. By
	additionally scheduling the $i$-th dummy job $p_i'$ on the $i$-th machine, we
	obtain a schedule of all $n+k$ jobs on $k$ identical machines such that the
	total makespan is at most $M'$. 

	Conversely, consider a schedule of all $n + k$ jobs on $k$ identical
	machines such that the total makespan is at most $M'$. Since all dummy jobs
	have processing time at least $2/3 M'$, and since there are as many dummy
	jobs as there are machines, each machine necessarily schedules exactly one
	dummy job. Without loss of generality, we can assume that the $i$-th dummy job
	$p_i'$ is scheduled on the $i$-th machine. Let $X_i$ be the remaining jobs
	scheduled on the $i$-th machine. Then we have $\Sigma(X_i) + p_i' \leq M'$
	and thus $\Sigma(X_i) \leq \floor{s_i M} \le s_i M$ for every $i \in
	[k]$. Hence, the schedule of the $n$ first jobs on the $k$ uniform
	machines with speeds $s_1,\ldots,s_k$ given by $X_1, \dots, X_k$ has a total
	makespan of at most $M'$.
\end{proof}

\subsection{Relation between Bin Packing and \textsc{Weak} Grouped $k$-way Partition}

Finally, we formally establish the relation between \binpacking and \pbtwo. Showing
the same for arbitrary $q$ would yield a SETH-based lower bound for \binpacking.

\begin{lemma}\label{lem:BoundedPart_to_WeakGroupedPart}
	The \kpartitionBounded problem is equivalent to the special case of \pbtwo where the number of groups is restricted to $q=1$.
\end{lemma}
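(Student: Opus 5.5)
We show that, for $q=1$, an instance of \pbtwo is literally the same object as an instance of \kpartitionBounded, and that the \textsc{Yes} and \textsc{No} cases coincide with the yes- and no-instances of \kpartitionBounded. In particular the promise becomes vacuous. With $q=1$ we have one set $G_1\subset \N\cap[W(1-1/n^{10}),W]$ with $|G_1|=ks=n$, which is a \kpartitionBounded instance. Conversely, any \kpartitionBounded instance $X$ has a solution only if $k\mid n$ by \cref{obs:bounded_subset_size}. So when $k\nmid n$ we output a fixed trivial no-instance, and otherwise we set $G_1\coloneq X$, $s\coloneq n/k$, $q\coloneq1$. Both maps are the identity on items, so they are parameter-preserving.

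The \textsc{Yes} case of \pbtwo with $q=1$ asks for disjoint $S_{1,1},\dots,S_{1,k-1}\subset G_1$ with $|S_{1,\ell}|=s$ and $\Sigma(S_{1,\ell})=\mu\coloneq\Sigma(G_1)/k$. If this holds, the complement $S_{1,k}\coloneq G_1\setminus\bigcup_{\ell<k}S_{1,\ell}$ has sum $k\mu-(k-1)\mu=\mu$, so we obtain a $k$-way partition. Conversely, take a $k$-way partition with all parts of sum $\mu$. \Cref{obs:bounded_subset_size} applies since $1/n^{10}\le 1/n^2$, so every part has size $n/k=s$. The first $k-1$ parts then witness the \textsc{Yes} case.

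It remains to show that the negation of the \textsc{No} case, i.e.\ the existence of $S_{1,1},\dots,S_{1,k-1}$ satisfying the three relaxed conditions, also implies a $k$-way partition. The conditions give $|S_{1,\ell}|\le s$ for every $\ell$ and $\sum_\ell|S_{1,\ell}|\ge(k-1)s$, so $|S_{1,\ell}|=s$ for all $\ell$. Together with $\Sigma(S_{1,\ell})=\mu$, this is exactly the \textsc{Yes} situation, and the complement argument above applies.

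Hence every $q=1$ instance lies in exactly one of the two cases, and that case matches the \kpartitionBounded answer. This gives the equivalence in both directions. The only subtle step is invoking \cref{obs:bounded_subset_size} to force the part sizes and to handle $k\nmid n$; everything else is bookkeeping.
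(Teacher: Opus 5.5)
Your proof is correct and follows essentially the same route as the paper: for $q=1$, both the \textsc{Yes} case and the negation of the \textsc{No} case reduce to the existence of an equal-sum $k$-way partition of $G_1$, with \cref{obs:bounded_subset_size} forcing the part sizes. You are slightly more explicit than the paper, which passes silently from $k-1$ disjoint subsets to a $k$-partition (your complement argument fills this in) and does not address the reverse direction from \kpartitionBounded when $k \nmid n$.
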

\begin{proof}
	Consider an instance of \pbtwo with $q=1$, i.e.~a set $G \subset \N \cap [W (1 - 1/n^{10}), W]$ of size $n = s k$ for some integer $W \in \mathbb N$.
	Without loss of generality, we may assume that $W \geq n^{10}$ as otherwise the instance can be solved in polynomial time. 
	Since $q=1$, we can rewrite the \ref{enum:pb2_yes} and \ref{enum:pb2_no} cases of \pbtwo as:
	\begin{yesenum} 
		\item There is a partition of $G$ into subsets $S_1, \dots, S_k$ such that
	\begin{yesenum}
		\item\label{enum:pb2q1_1a} $|S_{\ell}| = s$ for all $\ell \in [k]$,
		\item\label{enum:pb2q1_1b} $\Sigma(S_\ell) = \frac{1}{k}\Sigma(G)$ for all $\ell \in [k]$.
	\end{yesenum}
	\end{yesenum}
	\begin{noenum}
		\item  There is no partition of $G$ into subsets $S_1, \dots, S_k$ such that 
		\begin{noenum}
			\item\label{enum:pb2q1_2a} $|S_{\ell}| \leq s$ for all $\ell \in [k]$,
			\item\label{enum:pb2q1_2b} $\sum_{\ell \in [k]} |S_{\ell}| \geq k s$,
			\item\label{enum:pb2q1_2c} $\Sigma(S_{\ell}) = \frac{1}{k}\Sigma(G)$ for all $\ell \in [k]$.
		\end{noenum}
	\end{noenum}
	Note that by \cref{obs:bounded_subset_size}, any partition $S_1 \uplus \ldots \uplus S_k = G$ with $\Sigma(S_i) =  \frac{1}{k} \Sigma(G)$ for every $i \in [k]$ necessarily has $|S_i| = |G|/k = s$. Hence, \cref{enum:pb2q1_1a} is implied by \cref{enum:pb2q1_1b} and \cref{enum:pb2q1_2a} is implied by \cref{enum:pb2q1_2c}. Furthermore, \cref{enum:pb2q1_2b} holds for any partition of $G$. So \pbtwo when $q=1$ really asks whether there is a partition $S_1, \dots, S_k$ of $G$ such that $\Sigma(S_1) = \dots = \Sigma(S_k) = \frac{1}{k}\Sigma(G)$. This is exactly the problem of \kpartitionBounded with input set $G$. 
\end{proof}

\section{Conclusion and Future Work}
\label{sec:conclusion}

In this work, we proved a tight ETH-based lower bound for \binpacking, answering
an open question posed by Jansen, Kratsch, Marx, and
Schlotter~\cite{jansen2013bin}. Our main result
(\Cref{thm:bin-packing-lowerbound}) rules out time $2^{o(n)} T^{o(k)}$ for
\binpacking under ETH, improving upon Jansen et al.'s bound of $(nT)^{o(k/\log k)}$ by
eliminating the logarithmic factor loss in the exponent.

The hardness of \binpacking has long been recognised as a central bottleneck in
parameterised complexity. The $\log k$-factor gap in the lower bound of Jansen
et al.~\cite{jansen2013bin} was inherited by numerous reductions to
other problems, making it a fundamental barrier to establishing tight complexity
bounds.  Our result resolves this issue and yields improved ETH-based lower
bounds for a wide range of problems in parameterised complexity, as detailed in
\cref{app:further-applications}.

While our ETH-based lower bound for \binpacking is tight, we leave as an
intriguing open problem whether a tight SETH-based lower bound can be
established, i.e.~whether \binpacking can be solved in time $2^{o(n)} T^{k-1-\varepsilon}$ for
any $\varepsilon > 0$. 

This may be a challenging task. 
One reason is that \binpacking is closely related to the Set Cover problem~\cite{NederlofPSW23}, but SETH-based lower bounds for Set Cover remain a notoriously open problem~\cite{DBLP:journals/talg/CyganDLMNOPSW16}.
Another reason is that there have been surprising algorithmic breakthroughs for \binpacking in some settings: Nederlof et al.~\cite{NederlofPSW23} showed that for every $k \ge 2$ there exists $\eps_k >0$ such that \binpacking can be solved in time
$\Oh((2-\eps_k)^n)$, thus shattering the hope of a lower bound ruling out time $\Oh((2-\eps)^n)$ for any $\eps>0$ for, say, \binpacking[10]. It is conceivable that a similar algorithmic breakthrough could also be possible for pseudopolynomial algorithms, and \binpacking could be solved in time $2^{o(n)} T^{k-1-\eps}$ after all.



\paragraph{SETH Lower Bounds for Scheduling Problems}

We demonstrate that this barrier can be overcome for several multi-machine
scheduling problems that are harder than \binpacking. Specifically, we
prove tight SETH-based lower bounds for the scheduling problems \PrjCmax, \PSwjCj, \PLmax,
$\PSUj$, and \PSpjUj (see~\cref{tab:summary}). This establishes optimality of
several classical algorithms from the 60s and 70s~\cite{lawler1969functional,rothkopf1966scheduling,lawler}.

After Abboud et al.~\cite{AbboudBHS19} proved an SETH-based lower
bound for \subsetsum, a fruitful research direction has emerged: improving the
dependence on $n$ in the running times of pseudopolynomial-time
algorithms~\cite{DBLP:conf/soda/Bringmann17,bringmann2022faster,fischer2025minimizing,polak2021knapsack,klein2023minimizing,bringmann2024even,bringmann2024knapsack}.
For instance, Fischer and Wennmann~\cite{fischer2025minimizing} improved the running time for \PSpjUj
from $\Oh(nT^k)$ to $\Ot(n+T^k)$.  This raises the natural
question: can similar improvements be achieved for the other problems we study?
Specifically, can \PrjCmax, \PSwjCj, and \PSUj all be solved in
time $\Ot(n + T^{k-1})$? Similarly, to \subsetsum, one can also study the running time dependence 
parameterised by $p_{\max}$, the largest processing time. Perhaps all \PrjCmax, \PSwjCj, and \PSUj can be solved in time $\Ot(n + (p_{\max})^{k-1})$?

Finally, another interesting research direction is to study tight FPTASes for these scheduling problems. Indeed, our (S)ETH-based lower bounds opens the possibility of tight FPTASes. 
For example \PCmax admits an approximation scheme running in time $\Oh(n) + (1/\eps)^{\Oh(k)}$ as shown by Jansen et al.~\cite{jansen2010scheduling}. Prior to our work, Chen et al.~\cite{chen2018optimality} showed that a running time of $n^{\Oh(1)} + (1/\eps)^{\Oh(k^{1-\delta})}$ for $\delta > 0$ would contradict ETH.
Our results yield an improved lower bound, ruling out time $2^{o(n)} \cdot (1/\eps)^{o(k)}$ assuming ETH.\footnote{Indeed, plugging $\eps < 1/T$ into the time bound $2^{o(n)} \cdot (1/\eps)^{o(k)}$ would yield an exact algorithm running in time $2^{o(n)} \cdot T^{o(k)}$, contradicting \cref{thm:bin-packing-lowerbound}.} Hence, the FPTAS by Jansen et al.~\cite{jansen2010scheduling} is tight up to constant factors in the exponent.

\bibliographystyle{abbrv}
\bibliography{lit}

\appendix 

\section{Problem Definitions}
\label{app:problem-definitions}

In the following problems, $k$ is treated as a constant and $n$ denotes the instance size. Every problem is defined with respect to a parameter. 

\hypertarget{prob:binpacking}{
\paragraph*{Bin Packing}}
We consider the following decision version of \binpacking.
The \binpackingMultisets problem is the variant where given integers form a multiset. 
We parameterise the problems by the bin capacity. 


\Problem{\binpacking}
{Set of $n$ integers $X \subset \N$; bin capacity $T \in \N$.}
{Is there a partition of $X$ into $X_1, \dots, X_k$ such that $\Sigma(X_\ell) \leq T$ for every $\ell \in [k]$?}
{$T$}


\hypertarget{prob:kpartition}{
\paragraph*{Multiway Partition}}
In the standard \kpartition problem, the goal is to partition a set of integers into $k$ parts of equal sum.
We extend the problem to the \kpartitionTargets where every part has to sum to a given value called the target. Additionally, we assume in \kpartitionBounded that given integers are within the same range. 
Finally, \kpartitionMultisets, \kpartitionTargetsMultisets and \kpartitionBoundedMultisets are the natural variants where the given integers form a multiset. 
We parameterise the problems by the (maximum) target value. 


\Problem{{\kpartition}}
{Set of $n$ integers $X \subset \mathbb N$.}
{Is there a partition of $X$ into $X_1, \dots, X_k$ such that $\Sigma(X_1) = \ldots = \Sigma(X_k)$?}
{$T \coloneq \Sigma(X) / k$}

\Problem{\hypertarget{prob:kpartitiontargets}{\kpartitionTargets}}
{Set of $n$ integers $X \subset \N$; targets $t_1, \dots, t_k \in \N$.}
{Is there a partition of $X$ into $X_1, \dots, X_k$ such that $\Sigma(X_\ell) = t_\ell$ for every $\ell \in [k]$?}
{$T \coloneq \max_{1 \le i \le k} t_i$}

\Problem{\hypertarget{prob:kpartitionbounded}{\kpartitionBounded}}
{Set of $n$ integers $X \subset \N \cap [W(1 - 1/n^{10}), W]$ for some integer $W \in \mathbb N$
. }
{Is there a partition of $X$ into $X_1, \dots, X_k$ such that $\Sigma(X_1) = \ldots = \Sigma(X_k)$?}
{$T \coloneq \Sigma(X) / k$}


\paragraph*{Scheduling problems}
We use the Graham et al.~\cite{graham1979optimization} notation explained in the introduction to define the following scheduling problems. These problems are defined as optimisation problems; however each of them has a natural decision variant, in which the task is to decide if the objective value is at most $\Lambda$ for some given $\Lambda \in \N$.


\Problem{\hypertarget{prob:pcmax}{\PCmax}}{$n$ jobs with processing times $p_1, \dots, p_n \in \N$.}{What is the minimum value of $C_{\max} \coloneq \max_{1 \le j \le n} C_j$ for any schedule of the $n$ jobs on $k$ identical machines?}{$T \coloneq \sum_{1 \le j \le n} p_j$}

\Problem{\hypertarget{prob:qcmax}{\QCmax}}
{$n$ jobs with processing times $p_1, \dots, p_n \in \N$; machine speed $s_1, \dots, s_k \in (0, 1]$.}
{What is the minimum value of $C_{\max} \coloneq \max_{1 \le j \le n} C_j$ for any schedule of the $n$ jobs on $k$ uniform machines with speeds $s_1, \dots, s_k$?}
{$T \coloneq \sum_{1 \le j \le n} p_j$}

\Problem{\hypertarget{prob:prjcmax}{\PrjCmax}}
{$n$ jobs with processing times $p_1, \dots, p_n \in \N$ and release dates $r_1, \dots, r_n \in \N$.}
{What is the minimum value of $C_{\max} \coloneq \max_{1 \le j \le n} C_j$ for any schedule of the $n$ jobs on $k$ identical machines such that every job is processed after its release date?}
{$T \coloneq \sum_{1 \le j \le n} p_j$}

\Problem{\hypertarget{prob:psuj}{\PSUj}}
{$n$ jobs with processing times $p_1, \dots, p_n \in \N$ and due dates $d_1, \dots, d_n \in \N$.}
{What is the minimum value of $\sum_{1 \le j \le n} U_j$ for any schedule of the $n$ jobs on $k$ identical machines, where $U_j \in \{0, 1\}$ indicates whether job $j$ completes after its due date $d_j$?}
{$T \coloneq \sum_{1 \le j \le n} p_j$}

\Problem{\hypertarget{prob:pspjuj}{\PSpjUj}}
{$n$ jobs with processing times $p_1, \dots, p_n \in \N$ and due dates $d_1, \dots, d_n \in \N$.}
{What is the minimum value of $\sum_{1 \le j \le n} p_j U_j$ for any schedule of the $n$ jobs on $k$ identical machines, where $U_j \in \{0, 1\}$ indicates whether job $j$ completes after its due date $d_j$?}
{$T \coloneq \sum_{1 \le j \le n} p_j$}

\Problem{\hypertarget{prob:pslmax}{\PLmax}}
{$n$ jobs with processing times $p_1, \dots, p_n \in \N$ and due dates $d_1, \dots, d_n \in \N$.}
{What is the minimum value of $\max_{1 \le j \le n} L_j$ for any schedule of the $n$ jobs on $k$ identical machines, where $L_j \coloneq C_j - d_j$ is the lateness of job $j$?}
{$T \coloneq \sum_{1 \le j \le n} p_j$}

\Problem{\hypertarget{prob:pstmax}{\PTmax}}
{$n$ jobs with processing times $p_1, \dots, p_n \in \N$ and due dates $d_1, \dots, d_n \in \N$.}
{What is the minimum value of $\max_{1 \le j \le n} T_j$ for any schedule of the $n$ jobs on $k$ identical machines, where $T_j \coloneq \max\{0, C_j - d_j\}$ is the lateness of job $j$?}
{$T \coloneq \sum_{1 \le j \le n} p_j$}

\Problem{\hypertarget{prob:pswjcj}{\PSwjCj}}
{$n$ jobs with processing times $p_1, \dots, p_n \in \N$ and weights $w_1, \dots, w_n \in \N$.}
{What is the minimum value of $\max_{1 \le j \le n} w_j C_j$ for any schedule of the $n$ jobs on $k$ identical machines?}
{$T \coloneq \sum_{1 \le j \le n} p_j$}



\paragraph*{\pbtwoYes}
Finally, we introduce the following variant of \kpartition, which is used as an intermediate problem to show SETH-hardness of various scheduling problems. In \pbtwoYes, the given integers are grouped into equal-sized groups, and the goal is to partition all integers into parts of equal sum such that each group is partitioned into equal-sized parts. Intuitively, the goal is to pack all items into $k$ \emph{bins} of equal capacity, as in \kpartition, with the additional constraint that each \emph{bin} needs to contain an equal number of items from each group. 

\Problem{\hypertarget{prob:pbtwoyes}{\pbtwoYes}}
{Sets of integers $G_1, \dots, G_q \subset \N \cap [W(1 - 1/n^{10}), W]$ with $|G_i| = ks$ for every $i \in [q]$ and $n = k s q$ for integers $s, q, W \in \N$.
}
{Decide whether for all $i \in [q]$ there is a partition of $G_i$ into $k$ subsets $S_{i, 1}, \dots, S_{i, k}$ such that
\begin{enumerate}
	\item $|S_{i, \ell}| = s$ for all $i \in [q]$ and $\ell \in [k]$,
	\item $\sum_{i \in [q]} \Sigma(S_{i, \ell}) = \frac{1}{k} \sum_{i \in [q]} \Sigma(G_i)$ for all $\ell \in [k]$.
\end{enumerate}
}
{$W$}
We weaken the above requirement in \pbtwo where the goal is to differentiate between two non-exhaustive cases, referred to as the \textsc{Yes} and \textsc{No} cases. 
Note that \pbtwo is a promise problem where the given instance is guaranteed to be either in the \textsc{Yes} or in the \textsc{No} case. We point out that the two cases handle $k-1$ disjoint subsets of each group instead of a partition into $k$ parts. 
\Problem{\hypertarget{prob:pbtwo}{\pbtwo}}
{Sets of integers $G_1, \dots, G_q \subset \N \cap [W(1 - 1/n^{10}), W]$ with $|G_i| = ks$ for every $i \in [q]$ and $n = k s q$ for integers $s, q, W \in \N$.
}
{Decide which of the following cases hold.
\begin{yesenum}
	\item \label{enum:pb2_yes} $\forall i \in [q]$, there are disjoint subsets $S_{i, 1}, \dots, S_{i, k-1} \subset G_i$ such that
	\begin{yesenum}
		\item $|S_{i, \ell}| = s$ for all $i \in [q]$ and $\ell \in [k-1]$,
		\item $\sum_{i \in [q]} \Sigma(S_{i, \ell}) = \frac{1}{k} \sum_{i \in [q]} \Sigma(G_i)$ for all $\ell \in [k-1]$.
	\end{yesenum}
\end{yesenum}
\begin{noenum}
	\item \label{enum:pb2_no} It does not hold that $\forall i \in [q]$, there are disjoint subsets $S_{i, 1}, \dots, S_{i, k-1} \subset G_i$ such that
	\begin{noenum}
		\item $|S_{1, \ell}| + \dots + |S_{i, \ell}| \leq i \cdot s$ for all $i \in [q]$ and $\ell \in [k-1]$,
		\item $\sum_{i\in [q], \ell \in [k-1]} |S_{i, \ell}| \geq (k-1) q s$,
		\item $\sum_{i \in [q]} \Sigma(S_{i, \ell}) = \frac{1}{k} \sum_{i \in [q]} \Sigma(G_i)$ for all $\ell \in [k-1]$.
	\end{noenum}
\end{noenum}
}
{$W$}
Naturally, \pbtwo generalises to \pbtwoMultisets, where the given integers form multisets. An additional variant to consider is when each bin is required to sum to a given value, called the target. 
\Problem{\hypertarget{prob:pbtwotargetsmultisets}{\pbtwoTargetsMultisets}}
{Multisets $G_1, \dots, G_q \subset \N \cap [W(1 - 1/n^{10}), W]$ with $|G_i| = ks$ for every $i \in [q]$ and $n = k s q$ for integers $s, q, W \in \N$; and targets $t_1, \dots, t_{k-1} \in \N$.}
{Decide which of the following cases hold.
\begin{yesenum}
	\item \label{enum:pb2targets_yes} $\forall i \in [q]$, there are disjoint subsets $S_{i, 1}, \dots, S_{i, k-1} \subset G_i$ such that:
	\begin{yesenum}
		\item $|S_{i, \ell}| = s$ for all $i \in [q]$ and $\ell \in [k-1]$,
		\item $\sum_{i \in [q]} \Sigma(S_{i, \ell}) = t_\ell$ for all $\ell \in [k-1]$.
	\end{yesenum}
\end{yesenum}
\begin{noenum}
	\item \label{enum:pb2targets_no} It does not hold that $\forall i \in [q]$, there are disjoint subsets $S_{i, 1}, \dots, S_{i, k-1} \subset G_i$ such that:
	\begin{noenum}
		\item $|S_{1, \ell}| + \dots + |S_{i, \ell}| \leq i \cdot s$ for all $i \in [q]$ and $\ell \in [k-1]$,
		\item $\sum_{i\in [q], \ell \in [k-1]} |S_{i, \ell}| \geq (k-1) q s$,
		\item $\sum_{i \in [q]} \Sigma(S_{i, \ell}) = t_\ell$ for all $\ell \in [k-1]$.
	\end{noenum}
\end{noenum}
}
{$W$}

\paragraph*{Subset Sum}

We consider the classical \subsetsum problem and the \vectorsubsetsum variant.

\Problem{\hypertarget{prob:subsetsum}{\subsetsum}}
{Set of $n$ integers $X \subset \N$, target value $t \in \N$.}
{Is there a subset $S \subset X$ such that $\Sigma(S) = t$?}
{$t$}

\Problem{\hypertarget{prob:vectorsubsetsum}{\vectorsubsetsum}}
{Set of $n$ vectors $X \subset \N^k$, target vector $t \in \N^k$.}
{Is there a subset $S \subset X$ such that $\Sigma(S) = t$?}
{$T \coloneq \lVert t \rVert_{1}$}

\section{Further Implications of ETH-Tight Lower Bound for Bin Packing}\label{app:further-applications}

We list problems for which hardness results were proven via a reduction from \binpacking, using the $(nT)^{o(k/\log k)}$ lower bound for \binpacking by Jansen et al.~\cite{jansen2013bin}. Since we strengthen that lower bound in \cref{thm:bin-packing-lowerbound}, we directly strengthen the lower bounds for the following problems. 


\medskip
Lafond et al.~\cite{lafond2025cluster} studied the {Cluster Editing} problem 
on cographs, where the task is to edit at most $k$ edges in a given cograph $G$ such that the resulting graph is a disjoint union of at most~$p$ cliques. 
They showed that the problem can be solved in $n^{\Oh(p)}$ time and by a reduction from Bin Packing ruled out time $f(p) \cdot n^{o(p/\log p)}$ assuming ETH for any computable function $f(\cdot)$. 
By replacing the Bin Packing lower bound of Jansen et al.~\cite{jansen2013bin} with our \cref{thm:bin-packing-lowerbound}, their reduction immediately yields a higher lower bound.
\begin{corollary}\label{cor:cluster-editing}
	There is no $f(p) \cdot n^{o(p)}$-time algorithm for Cluster Editing on cographs parameterised by the number of cliques $p$ for any computable function $f(\cdot)$, unless ETH fails.
\end{corollary}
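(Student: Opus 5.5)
This follows by composing the Lafond et al.\ reduction with \cref{thm:bin-packing-lowerbound}; the only work is to check that the reduction preserves the parameter linearly and keeps the instance size polynomial. Concretely, Lafond et al.~\cite{lafond2025cluster} reduce a \binpacking instance with $n_{\mathrm{BP}}$ items, $k$ bins and capacity $T$ to a Cluster Editing instance on a cograph $G$. I will recheck that their construction has three properties: the number of cliques is $p = \Oh(k)$ (I expect $p = k$, or $k$ plus a constant); the graph has $|V(G)| = (n_{\mathrm{BP}} T)^{\Oh(1)}$ vertices, as a unary-style encoding of item sizes would give; and the reduction runs in time polynomial in $n_{\mathrm{BP}}+T$.

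Suppose Cluster Editing on cographs could be solved in time $f(p)\cdot |V(G)|^{o(p)}$. Composing with the reduction would solve \binpacking in time
\[
f(\Oh(k))\cdot (n_{\mathrm{BP}}T)^{o(k)} + (n_{\mathrm{BP}}T)^{\Oh(1)}.
\]
In \cref{thm:bin-packing-lowerbound} the number of bins $k$ is a fixed constant, so $f(\Oh(k))$ is also a constant.

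The main care point is the factor $n_{\mathrm{BP}}^{o(k)}$, since the theorem is stated as $2^{o(n)}T^{o(k)}$. I would handle this exactly as in \cref{lem:parameter-preserving-reduction}. For constant $k$ we have $n_{\mathrm{BP}}^{o(k)} \le 2^{o(n_{\mathrm{BP}})}$. More carefully, the lower bound behind \cref{thm:bin-packing-lowerbound} comes from \cref{lem:ETH_to_BPtargets}, which uses a fixed $\delta$ that works uniformly for all $k$. So I would pick $k$ large enough that the $o(k)$ exponent drops below $\delta k$, and then $n_{\mathrm{BP}}^{\delta k}\le \Oh(2^{\delta n_{\mathrm{BP}}})$ absorbs the item-count factor. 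The result is a \binpacking algorithm running in time $\Oh(2^{\delta n}T^{\delta k})$.

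That contradicts ETH by \cref{lem:ETH_to_BPtargets} together with the equivalences of \cref{thm:equivalence-bin-packing}.
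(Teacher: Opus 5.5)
Your proposal is correct and uses the same argument as the paper, which plugs \cref{thm:bin-packing-lowerbound} into the reduction of Lafond et al.~\cite{lafond2025cluster}. Your explicit quantifier handling is exactly what the paper leaves implicit: you fix one large $k$, so that $f(\Oh(k))$ is a constant and the $o(k)$ exponent drops below $\delta k$, and you use that $\delta$ in \cref{lem:ETH_to_BPtargets} is uniform in $k$.

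One minor bookkeeping point concerns your last step. For fixed $k$, keep the item-count factor in its polynomial form $n^{\Oh(1)}$ until after you compose with the equivalence reductions of \cref{thm:equivalence-bin-packing}. Some of those reductions multiply the number of items by a factor depending on $k$, so weakening the factor to $2^{\delta n}$ first, with the very same $\delta$, does not literally contradict \cref{lem:ETH_to_BPtargets}.
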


Javadi and Nikabadi~\cite{JAVADI20241} designed a $k^{t} \cdot n^{\Oh(k)}$-time algorithm for the $k$-Sparsest Cut problem in graphs of treewidth $t$, where the task is to partition the vertices of a graph into $k$ parts $S_1,\ldots,S_k$ to minimise $\max_{i \in [k]} |\delta(S_i)|/|S_i|$, where $\delta(S_i)$ is the set of edges with exactly one endpoint in $S_i$.
They complemented this result by showing a reduction from Bin Packing, which, by Jansen et al.~\cite{jansen2013bin} rules out time $n^{o((k+t)/\log(k+t))}$ assuming ETH~\cite[Theorem 4.3]{JAVADI20241}. By using \cref{thm:bin-packing-lowerbound} in their reduction, we improve their lower bound.
\begin{corollary}\label{cor:sparsest-cut}
	There is no $f(k,t) \cdot n^{o(k+t)}$-time algorithm for $k$-Sparsest Cut parameterised by the number of parts $k$ and the treewidth $t$ of the input graph, for any computable function $f(\cdot)$, unless ETH fails.
\end{corollary}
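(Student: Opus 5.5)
This follows directly from the reduction of Javadi and Nikabadi~\cite[Theorem 4.3]{JAVADI20241} from Bin Packing to $k$-Sparsest Cut, now fed with the tight lower bound of \cref{thm:bin-packing-lowerbound} instead of the bound of Jansen et al.~\cite{jansen2013bin}. No new reduction is needed. The work is to track how their parameters depend on the Bin Packing parameters and to redo the running-time arithmetic.

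Concretely, I would first extract from~\cite{JAVADI20241} the form of their reduction. Given a \binpacking instance with $n_B$ items, capacity $T$ and $k_B$ bins, it produces in polynomial time a $k$-Sparsest Cut instance with $k = \Oh(k_B)$ parts, treewidth $t = \Oh(k_B)$, and $n = (n_B T)^{\Oh(1)}$ vertices. We may encode the items in unary, since $T$ is pseudopolynomial. The key check is that both $k$ and $t$ are bounded linearly in the number of bins; their lower bound of $n^{o((k+t)/\log(k+t))}$ being inherited from $(nT)^{o(k_B/\log k_B)}$ strongly suggests this.

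Next, assume an algorithm for $k$-Sparsest Cut running in $f(k,t)\cdot n^{o(k+t)}$ time. Composing it with the reduction solves \binpacking in time
\[
f(\Oh(k_B),\Oh(k_B))\cdot (n_B T)^{o(k_B)} .
\]
Two factors remain. The factor $n_B^{o(k_B)}$ is $2^{o(k_B\log n_B)}$, and the factor $f(\cdot)$ depends only on $k_B$.

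The main obstacle is a mismatch between the two statements. \Cref{thm:bin-packing-lowerbound} is phrased for fixed $k$ as ruling out $2^{o(n)}T^{o(k)}$, whereas here $k_B$ must be allowed to grow and $f$ must be absorbed. I would resolve this via the proof of \cref{lem:ETH_to_BPtargets}, which gives a uniform $\delta>0$ valid for every $k$, applied with $k_B$ chosen as a slowly growing function of $N$ (as in the standard Marx-style argument that handles computable $f$). The reduction produces $n_B = \Oh(N)$ items and $T = 2^{\Oh(N/k_B)}N^{\Oh(k_B^2)}$; the factor $N^{\Oh(k_B^2)}$ comes from the $\log M$ blocks of the construction.

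Choosing $k_B$ growing slowly enough that $f(k_B,k_B)\le 2^{o(N)}$ and $k_B^3\log N = o(N)$, the composed time becomes $2^{o(N)}$ for 3-SAT. This contradicts ETH. The equivalences of \cref{thm:equivalence-bin-packing} add only $\Oh(1)$ bins and polynomial blow-up, so they preserve all of this.
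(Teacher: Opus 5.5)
Your proposal is correct and follows the paper's own route: it substitutes \cref{thm:bin-packing-lowerbound} for the bound of Jansen et al.\ in the reduction of Javadi and Nikabadi~\cite{JAVADI20241}. Your handling of the computable factor $f$ is sound but more than is needed, because \cref{lem:ETH_to_BPtargets} gives a $\delta$ uniform in $k$: fixing one sufficiently large $k_B$ makes $f(\Oh(k_B),\Oh(k_B))$ a constant and the composed algorithm run in time $\Oh(2^{\delta n}T^{\delta k_B})$, which is already a contradiction, so $k_B$ need not grow with $N$.
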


Bla{\v{z}}ej et al.~\cite[Theorem 1.2]{blavzej2024parameterized} reduced Bin Packing to the Eulerian Strong Component Arc Deletion (ESCAD) problem, where the task is to delete a minimum number of arcs in a directed graph so that every strongly connected component is Eulerian. They consequently excluded $f(v) \cdot n^{o(v/\log{v})}$-time algorithm for ESCAD assuming ETH, where $v$ is the vertex cover number of the input graph. 
Using~\cref{thm:bin-packing-lowerbound} immediately gives a higher lower bound.

\begin{corollary}\label{cor:escad}
	There is no $f(v) \cdot n^{o(v)}$-time algorithm for Eulerian Strong Component Arc Deletion parameterised by the vertex cover number $v$ of the input graph, for any computable function $f(\cdot)$, unless ETH fails.
\end{corollary}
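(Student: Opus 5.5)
This corollary should follow directly from the reduction of Bla{\v{z}}ej et al.~\cite[Theorem 1.2]{blavzej2024parameterized}, with \cref{thm:bin-packing-lowerbound} substituted for the lower bound of Jansen et al.~\cite{jansen2013bin}. I will not build a new reduction; instead I will check what parameters their reduction produces.

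The key fact I need is this. Their reduction maps a \binpacking instance with $k$ bins, $n_{\mathrm{BP}}$ items and capacity $T$ to an ESCAD instance whose vertex cover number $v$ is $\Oh(k)$ (and in particular does not depend on $n_{\mathrm{BP}}$ or $T$). The number of vertices $n$ of that instance is polynomial in $n_{\mathrm{BP}}$ and $T$: the items are encoded in unary, via paths or parallel structures whose lengths are item sizes, hanging off a set of $\Oh(k)$ hub vertices that represent the bins. The reduction should also run in time $(n_{\mathrm{BP}}T)^{\Oh(1)}$. I will confirm these parameter bounds by inspecting their construction; this is the only point where I rely on details of their paper.

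Now suppose, for contradiction, that ESCAD admits an $f(v)\cdot n^{o(v)}$-time algorithm. Take a \binpacking instance with $k$ fixed. Reducing and running the algorithm takes
\[
f(\Oh(k))\cdot (n_{\mathrm{BP}}T)^{o(k)} = 2^{o(n_{\mathrm{BP}})}\,T^{o(k)}
\]
time, treating $k$ as a constant that grows slowly as needed. This would contradict \cref{thm:bin-packing-lowerbound}, which rules out $2^{o(n)}T^{o(k)}$-time algorithms for \binpacking under ETH. The polynomial factor in $n_{\mathrm{BP}}$ is absorbed into $2^{o(n_{\mathrm{BP}})}$.

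The main obstacle is the quantifier bookkeeping. \cref{thm:bin-packing-lowerbound} is stated for fixed $k$, whereas the corollary's $n^{o(v)}$ is a uniform statement as $v$ grows. To handle this, I would go back to \cref{lem:ETH_to_BPtargets}. It gives a single constant $\delta>0$, valid for all $k$, such that time $\Oh(T^{\delta k}2^{\delta n})$ is impossible. The equivalences of \cref{thm:equivalence-bin-packing} and \cref{lem:parameter-preserving-reduction} preserve this form, changing only constants.

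Given an algorithm running in $f(v)\, n^{g(v)}$ time with $g(v)=o(v)$, I pick $k$ large enough that $g(\Oh(k))\cdot C \le \delta k$, where $C$ is the polynomial exponent of the reduction. For this fixed $k$, the algorithm then solves \binpacking in time $f(\Oh(k))\,(n_{\mathrm{BP}}T)^{\delta k}$. With $f(\Oh(k))$ now a constant, this bound is also of the forbidden form once the $n_{\mathrm{BP}}$ factor is absorbed, possibly after adjusting $\delta$ by a constant. This contradicts ETH.
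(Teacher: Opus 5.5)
Your proposal is correct and follows the same route as the paper, which simply substitutes \cref{thm:bin-packing-lowerbound} for the bound of Jansen et al.\ in the reduction of Bla{\v{z}}ej et al.\ (vertex cover number $\Oh(k)$, instance size polynomial in the unary input). Your extra quantifier bookkeeping via the $k$-uniform $\delta$ of \cref{lem:ETH_to_BPtargets} is sound, with one caveat: \cref{lem:parameter-preserving-reduction} is stated for the SETH-type form $T^{\gamma-\eps}$, so carrying the $T^{\delta k}$ bound through the equivalences needs the analogous direct computation instead. That computation goes through, since only the $T$-exponent must stay uniform in $k$ and the polynomial factors in $n$ are absorbed once $k$ is fixed.
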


Bliem et al.~\cite{bliem2016complexity} considered a problem of EEF-Allocation,
where we are given a set of $n$ agents and a set of $m$ resources, the task is
to find an allocation of the resources to the agents that is \emph{Pareto
efficient} and \emph{envy-free}~(see~\cite{bliem2016complexity} for an exact
definition). By a reduction from Bin Packing, they showed that EEF-Allocation with monotonic additive preferences
bounded by $z$ is $W[1]$-hard parameterised by the number of agents $b$ and excluded $f(b) \cdot z^{o(b/\log{b})}$-time algorithms for any computable function $f(\cdot)$ assuming ETH. With~\cref{thm:bin-packing-lowerbound} their reduction immediately yields a
$z^{\Omega(b)}$ lower bound.

\begin{corollary}\label{cor:eef-allocation}
	There is no $f(b) z^{o(b)}$-time algorithm for EEF-Allocation with monotonic additive preferences bounded by $z$ parameterised by the number of agents $b$ for any computable function $f(\cdot)$, unless ETH fails.
\end{corollary}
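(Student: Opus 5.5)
The plan is to take the ETH-based reduction to EEF-Allocation of Bliem et al.~\cite{bliem2016complexity} as a black box and keep track of how its parameters depend on those of the \binpacking instance. Their reduction maps an instance of \binpacking with $k$ bins, $n$ items and capacity $T$ to an EEF-Allocation instance with $b = \Oh(k)$ agents, monotonic additive preferences bounded by $z = (nT)^{\Oh(1)}$ (most likely $z = \Oh(T)$ or polynomial in $n$ and $T$), and total size polynomial in $n$ and $T$. It runs in polynomial time. The first step is therefore to check their construction and state these bounds explicitly: $b \le c\,k$ and $z \le (nT)^{c}$ for a constant $c$.

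Next, suppose for contradiction that EEF-Allocation admits an algorithm running in time $f(b)\, z^{o(b)}$, where $f$ is computable. Fix $k$, run the reduction, and solve the resulting instance. The total time is
\[
\text{poly}(n,T) + f(ck)\,(nT)^{c\cdot o(ck)} = f'(k)\,(nT)^{o(k)}.
\]
Because $k$ is a constant in \binpacking, $f'(k)$ is a constant, so this gives time $(nT)^{o(k)}$ for \binpacking. To make the $o(\cdot)$ quantifiers precise, I would argue in the form of \cref{lem:ETH_to_BPtargets}: assume that for every $\delta>0$ there is an EEF-Allocation algorithm running in time $\Oh(z^{\delta b})$ for all sufficiently large $b$, and derive a $T^{\delta' k}\cdot 2^{\delta' n}$ algorithm for $k$-Bin Packing with $\delta' < \delta$, where $\delta$ is the constant from \cref{lem:ETH_to_BPtargets}.

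The main obstacle is that the reduction may blow up the number of items polynomially, contributing $n^{\Oh(1)}$ factors to $z$. The factor $n^{o(k)}$ is then not directly bounded by $2^{o(n)}T^{o(k)}$. To handle this, I would use \cref{lem:ETH_to_BPtargets} directly. There, $T = 2^{\Theta(N/k)}N^{\Oh(1)}$ and $n = \Oh(N)$, so $n^{\Oh(1)} \le T^{\Oh(1)}$ up to a $k$-dependent factor $N^{\Oh(k^2)}$. That factor is $2^{o(N)}$ and is absorbed, exactly as in the final calculation of that proof. Passing through the equivalences of \cref{thm:equivalence-bin-packing}, which are parameter-preserving, keeps $nT$ within polynomial factors of this regime.

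Combining the steps, an $f(b)z^{o(b)}$ algorithm would solve 3-SAT in time $2^{o(N)}$, contradicting ETH.
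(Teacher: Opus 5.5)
Your proposal is correct and follows the paper's route. The paper simply substitutes \cref{thm:bin-packing-lowerbound} for the Jansen et al.\ bound in the black-box reduction of Bliem et al.\ (with $b=\Oh(k)$ agents and utilities bounded by $z=(nT)^{\Oh(1)}$); your quantifier bookkeeping spells out what the paper calls immediate. The ``obstacle'' you raise is harmless: for each fixed $k$ any factor $n^{\Oh(k)}$ is $2^{o(n)}$, so a running time of $(nT)^{o(k)}$ is already excluded by \cref{thm:bin-packing-lowerbound} without revisiting the internals of \cref{lem:ETH_to_BPtargets}.
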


Bla{\v{z}}ej et al.~\cite{blavzej2024equitable} considered the Equitable Connected Partition (ECP) problem where for a given graph $G$ the task is to find a partition of its vertices into $p$ parts such that each part induces a connected subgraph of $G$ and the size of each pair of parts differ by at most $1$. They considered the parameter $\ell = p + \text{dist}_\mathcal{G}(G)$, where $\mathcal{G}$ is a graph family and $\text{dist}_\mathcal{G}(G)$ is the size of the minimum modulator of $G$ to $\mathcal{G}$.\footnote{A set $M \subseteq V(G)$ is a modulator to $\mathcal{G}$ if $G \setminus M$ is in $\mathcal{G}$.}
Bla{\v{z}}ej et al.~\cite[Theorem 33]{blavzej2024equitable} showed a reduction from Bin Packing, which allowed them to exclude $f(\ell) \cdot n^{o(\ell/\log{\ell})}$ algorithm for ECP parameterised by $\ell$ assuming ETH.
If instead of using the result of Jansen et al.~\cite{jansen2013bin} we use \cref{thm:bin-packing-lowerbound}, their reduction excludes $f(\ell)n^{o(\ell)}$-time algorithms for any computable function $f(\cdot)$. 
\begin{corollary}\label{cor:ecp}
	There is no $f(\ell) \cdot n^{o(\ell)}$-time algorithm for Equitable Connected Partition parameterised by $\ell = p + \text{dist}_\mathcal{G}(G)$, for any computable function $f(\cdot)$, unless ETH fails.
\end{corollary}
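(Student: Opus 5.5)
The plan is to plug \cref{thm:bin-packing-lowerbound} into the existing reduction of Bla{\v{z}}ej et al.~\cite[Theorem 33]{blavzej2024equitable} from Bin Packing to Equitable Connected Partition. We re-examine only its parameters, not its correctness. That reduction maps a \binpacking instance (or, via \cref{thm:equivalence-bin-packing}, an instance of whichever equivalent variant they start from, e.g.\ unary Bin Packing with $k$ bins) to an ECP instance. It has three properties we need. First, the number of parts is $p = \Oh(k)$. Second, the modulator to $\mathcal{G}$ has size $\Oh(k)$, so $\ell = \Oh(k)$. Third, the graph has $n_{\mathrm{ECP}} = (nT)^{\Oh(1)}$ vertices, since items are encoded in unary as paths or cliques whose size is proportional to the item value. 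The first step is to extract these bounds from their construction. Equivalence of the instances we take verbatim from their proof.

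Next, suppose ECP admits an $f(\ell)\cdot N^{o(\ell)}$-time algorithm, where $N$ is the number of vertices. Composing it with the reduction solves \binpacking in time
\[
f(\Oh(k))\cdot (nT)^{o(k)} + (nT)^{\Oh(1)}.
\]
For each fixed constant $k$ this is $(nT)^{o(k)}$. The formally correct reading of ``$o(\ell)$'' is an exponent $\ell/g(\ell)$ with $g$ unbounded. With that reading, it becomes $n^{o(k)}T^{o(k)}$ for the Bin Packing parameter $k$.

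We still have to absorb the $n^{o(k)}$ factor into $2^{o(n)}$. This is the one genuinely delicate point, since \cref{thm:bin-packing-lowerbound} is phrased as $2^{o(n)}T^{o(k)}$ for every fixed $k$, while parameterised statements quantify over growing $k$. I would handle it as follows. In the proof of \cref{lem:ETH_to_BPtargets}, the hard instances have $n = \Theta(N)$ and $\log T = \Theta(N/k)$ for a 3-SAT instance on $N$ variables. We choose $k$ as a slowly growing function so that $f(\Oh(k))$ is negligible, for instance by taking $k$ as large as possible subject to $f(\Oh(k)) \le 2^{N/k}$. Then
\[
n^{o(k)} = 2^{o(k\log N)} = 2^{o(N)}
\]
whenever $k \le N/\log N$, and
\[
T^{o(k)} = 2^{o(N)}
\]
because $\log T = \Theta(N/k)$. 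Together these contradict ETH. The constants of \cref{lem:ETH_to_BPtargets} are uniform in $k$ ($\delta$ does not depend on $k$), so the argument goes through, provided the polynomial blow-up $(nT)^{\Oh(1)}$ of the ECP reduction has an exponent independent of $k$. The main obstacle is verifying that last fact, along with $\ell = \Oh(k)$, in their construction.
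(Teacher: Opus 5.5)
Your proposal is correct and takes the same route as the paper, which handles this corollary in one line by composing \cref{thm:bin-packing-lowerbound} with the reduction of Bla{\v{z}}ej et al. Your extra step of re-running the 3-SAT reduction of \cref{lem:ETH_to_BPtargets} with $k=k(N)\to\infty$ supplies the detail the paper leaves implicit, namely how to pass from the fixed-$k$ statement to the parameterised $f(\ell)\,n^{o(\ell)}$ bound. One small fix: also cap $k$ (e.g.\ $k\le\log N$, or assume without loss of generality that $f$ grows fast), because for large $k$ the $50(k-1)^2\lceil\log M\rceil$ bookkeeping bits dominate and $\log T=\Theta(N/k)$ no longer holds.
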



Dreier et al.~\cite{dreier2019complexity} studied the {Exact Path Packing} problem, where the task is to decide whether a given graph $G$ contains an embedding of $k$ vertex-disjoint paths $p_1,\ldots,p_k$ such that each edge of $G$ is covered exactly once. 
Assuming ETH, by a reduction from Bin Packing, they showed that the problem does not admit $f(k) \cdot n^{o(k/\log{k})}$-time algorithms for any computable function $f(\cdot)$, even when the treewidth of $G$ is 2. By using our lower bound (\cref{thm:bin-packing-lowerbound}), their reduction immediately yields a tight lower bound.
\begin{corollary}\label{cor:path-packing}
	There is no $f(k) \cdot n^{o(k)}$-time algorithm for Exact Path Packing parameterised by the number of paths $k$ on graphs of treewidth two for any computable function $f(\cdot)$, unless ETH fails.
\end{corollary}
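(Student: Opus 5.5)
This corollary follows by plugging \cref{thm:bin-packing-lowerbound} into the reduction of Dreier et al.~\cite{dreier2019complexity}, so the plan is to audit how their reduction transforms parameters. Their reduction maps a \binpacking instance (equivalently, via \cref{thm:equivalence-bin-packing}, a \kpartition instance) with $n$ items, $k'$ bins and capacity $T$ to an Exact Path Packing instance. The resulting graph has treewidth two, $k = \Oh(k')$ paths, and size polynomial in $n+T$: paths of length proportional to item values are attached to a small hub structure, and the $k'$ bins correspond to the $k$ paths. First I would verify three facts about this map: it runs in time $(n+T)^{\Oh(1)}$, the number of paths is linear in $k'$ (not merely $k'\log k'$), and the graph has $N \le (nT)^{c}$ vertices for a constant $c$ independent of $k'$.

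Next comes the contradiction. Suppose Exact Path Packing could be solved in time $f(k)\cdot N^{o(k)}$. Fix $k'$ as a constant, which is the regime of \binpacking in \cref{thm:bin-packing-lowerbound}. Then $f(k)$ is a constant, and the running time on the reduced instance is $(nT)^{c\cdot o(k')}$. The item count $n$ may be as large as $T$ in such instances, so I would bound $(nT)^{o(k')}$ by $2^{o(n)}\,T^{o(k')}$ only after the reduction step; equivalently, I would compare against the stronger statement of \cref{thm:bin-packing-lowerbound}, which already forbids $2^{o(n)}T^{o(k')}$. Since $n^{o(k')} \le 2^{o(n)}$ for fixed $k'$, this yields a $2^{o(n)}T^{o(k')}$ algorithm for \binpacking, contradicting ETH.

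The main obstacle is making ``$o(k)$'' rigorous with constant $k$. As in \cref{lem:ETH_to_BPtargets}, the clean form is: there is $\delta>0$ such that for every $k$ no algorithm runs in $\Oh(T^{\delta k}2^{\delta n})$. An $f(k)N^{o(k)}$ algorithm gives, for large enough $k$, time $f(k)N^{\delta' k}$ with $\delta' < \delta/c$. Choosing such a $k$ and absorbing $f(k)$ as a constant then contradicts that statement. The only real care needed is confirming that the path count is $\Theta(k')$ and that the blow-up exponent $c$ does not depend on $k'$; I expect both to hold directly from the construction in~\cite{dreier2019complexity}.
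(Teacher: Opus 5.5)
Your proposal is correct and follows the paper's argument. The paper only observes that feeding \cref{thm:bin-packing-lowerbound} into the reduction of Dreier et al.~\cite{dreier2019complexity} immediately gives the bound; your third paragraph is the right way to make this rigorous. Fix one sufficiently large $k$, absorb $f(k)$ and all factors polynomial in $n$ (including the $k$-dependent blow-ups from the equivalences of \cref{thm:equivalence-bin-packing}) into the $2^{\delta n}$ term, and contradict the uniform-$\delta$ statement of \cref{lem:ETH_to_BPtargets}. By contrast, the ``fix $k'$ and write $o(k')$'' phrasing in your second paragraph is not meaningful on its own.
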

Heeger et al.~\cite{heeger2023single} considered the scheduling problems $1 | \bar{d_j} | \Sigma w_j C_j$ and $1 | \bar{d_j} | \Sigma U_j$, where the task is to schedule $n$ jobs with $k$ distinct due dates on a single machine to minimise (i) the weighted completion times or (ii) the number of tardy jobs. They showed that both problems can be solved in $T^{\Oh(k)}$-time and showed that assuming ETH no $f(k) \cdot T^{o(k/\log{k})}$ algorithm exists by a reduction from Bin Packing (here $T$ is the total processing time and $f(\cdot)$ is any computable function). Using our lower bound for Bin Packing (\cref{thm:bin-packing-lowerbound}) in their reduction shows optimality of their result.
\begin{corollary}\label{cor:single-machine}
	There are no $f(k) T^{o(k)}$-time algorithms for the scheduling problems $1 | \bar{d_j} | \Sigma w_j C_j$ and $1 | \bar{d_j} | \Sigma U_j$ parameterised by the number of different due dates $k$ for any computable function $f(\cdot)$, unless ETH fails.
\end{corollary}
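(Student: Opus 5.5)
The plan is to take the reduction of Heeger et al.~\cite{heeger2023single} from Bin Packing to $1 | \bar{d_j} | \Sigma w_j C_j$ and $1 | \bar{d_j} | \Sigma U_j$ as a black box. Instead of feeding it instances coming from Subgraph Isomorphism via Jansen et al.~\cite{jansen2013bin}, we compose it with \cref{thm:bin-packing-lowerbound}. In outline: we start from an instance of \binpacking with $n$ items, $k$ bins and capacity $T$. We apply their reduction to obtain a single-machine instance with $k' = \Oh(k)$ distinct due dates and total processing time $T' \le (nT)^{\Oh(1)}$, or, more precisely, $T' \le g(k)\cdot \mathrm{poly}(n)\cdot T$. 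We then argue that an $f(k')T'^{o(k')}$-time algorithm would solve \binpacking too fast.

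The first step is to recall from their construction the exact parameter bounds. In such reductions each bin becomes an interval between consecutive due dates, so $k' = k + \Oh(1)$ and $T' = \Oh(kT + \Sigma(X)) = \Oh(kT)$, up to polynomial blow-ups in $n$ from weight gadgets. I will verify that the number of distinct due dates is linear in $k$ and that $T'$ is polynomial in $n$ and $T$ with exponent independent of $k$. This independence is what matters: if $T' \le T^{c}n^{c}$ for an absolute constant $c$, then an $f(k')T'^{o(k')}$ algorithm runs in $f(\Oh(k))\,(nT)^{o(k)}$ time. For the $n$-dependence, we may assume $T \ge n$; otherwise the \binpacking instance has items of size at most $T<n$ and can be handled with $T^{\Oh(k)}$ DP, which is polynomial in $n$ for fixed $k$ and so cannot beat the target bound.

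The second step is converting $(nT)^{o(k)}$ into a contradiction with \cref{thm:bin-packing-lowerbound}, which excludes $2^{o(n)}T^{o(k)}$. Here $n^{o(k)}$ is not automatically $2^{o(n)}$ uniformly in $k$. So I would instead go through \cref{lem:ETH_to_BPtargets} and the equivalences of \cref{thm:equivalence-bin-packing}. These produce hard instances with $T = 2^{\Theta(N/k)}$ and $n = \Oh(N)$, for every fixed $k$, hence $n \le T^{\Oh(1)}$ when $k$ is at most a small function of $N$. Choosing $k$ as a slowly growing function of $N$, so that $f(k') \le 2^{o(N)}$, the running time becomes $2^{o(N)}\cdot T^{o(k)} = 2^{o(N)}$, contradicting ETH. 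This is the standard argument for ETH bounds with a computable $f$.

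The main obstacle is that \cref{lem:ETH_to_BPtargets} is stated for fixed constant $k$, while the final argument needs $k$ growing with $N$. I would need to check that the constants in that proof ($\Delta$, $\delta$, and the $(k-1)^2\log M$ bookkeeping bits) remain harmless when $k = k(N)$ grows slowly. They do if $k^2\log N = o(N/k)$, since the construction was explicit in $k$. I would also need to check that the equivalence reductions and the reduction of Heeger et al.\ keep $T' \le (nT)^{\Oh(1)}$ with constants independent of $k$.
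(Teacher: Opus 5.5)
Your overall plan matches the paper's, which proves this corollary in one line by substituting \cref{thm:bin-packing-lowerbound} into Heeger et al.'s reduction. Where you differ is in how you close the ETH argument, and the ``main obstacle'' you flag can be avoided.

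\Cref{lem:ETH_to_BPtargets} provides one $\delta>0$ that is valid for every fixed $k\ge 2$, and this uniformity already excludes $f(k)T^{o(k)}$. Suppose an algorithm runs in time $f(k')T'^{g(k')}$ with $g(k')=o(k')$. Fix a single $k_0$ large enough that, with $k'=\Theta(k_0)$ and $T'\le \mathrm{poly}(n,k_0)\cdot T^{c}$ for an absolute $c$, we have $c\,g(k')\le\delta k_0/2$. For this fixed $k_0$, $f(k')$ is a constant. Every polynomial-in-$n$ factor (from the chain in \cref{thm:equivalence-bin-packing} and from Heeger et al.), raised to the power $O(g(k'))$, is $n^{O(1)}=O(2^{\delta n})$. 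Hence Multiset $k_0$-way Partition with targets would run in time $O(T^{\delta k_0}2^{\delta n})$, contradicting the lemma exactly as stated.

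So the fact that $n^{o(k)}$ is not $2^{o(n)}$ uniformly in $k$ never matters, because only one value of $k$ is used. In this argument $f$ need not even be computable.

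Your growing-$k$ version is also sound, and your conditions ($k^3\log N=o(N)$, $f(k')\le 2^{o(N)}$) are the right ones. It follows the usual Clique-style template and does not rely on the $2^{\delta n}$ slack. However, it forces you to reopen the proof of \cref{lem:ETH_to_BPtargets}, including the sparsification constants and the $O(k^2\log M)$ bookkeeping bits. It also requires rechecking $k$-dependent blow-ups in the equivalences, such as the $nk$ items and the $(nk)^{10}$ factor in \cref{lem:MultiPart_to_MultiBoundedPart}. The fixed-$k$ argument needs none of this.
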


\section{SETH-hardness of Vector Subset Sum}
\label{app:vector-subset-sum}

In this section we prove a tight SETH-based lower bound for Vector Subset Sum, as we mentioned in \Cref{sec:techoverview_vectorsubsetsum}. See \Cref{app:problem-definitions} for a formal problem definition. We prove the following reformulation of \cref{thm:SETH_to_VectorSubsetSum} via the equivalence of \subsetsum and \vectorsubsetsum shown in \cref{lem:SubsetSum_equiv_VectorSubsetSum}.



\begin{theorem}[Reformulation of \Cref{thm:SETH_to_VectorSubsetSum}]\label{thm:SETH_to_VectorSubsetSum2}
	Assuming SETH, for every $\eps > 0$ and integer $k \geq 1$ there exists $\delta > 0$ such that $k$-dimensional \vectorsubsetsum cannot be solved in time $\Oh(2^{\delta n} T^{k-\eps})$, where $n$ is the number of items and $T = \|t\|_1$ is the 1-norm of the target vector $t$.
\end{theorem}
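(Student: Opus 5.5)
We reduce from \subsetsum via the SETH-based lower bound of Abboud et al.~\cite{AbboudBHS19}: assuming SETH, for every $\eps>0$ there is $\delta>0$ such that \subsetsum with $n$ items and target $t$ cannot be solved in time $\Oh(2^{\delta n} t^{1-\eps})$. Given a \subsetsum instance $(X,t)$ and $k\ge 1$, we build a $k$-dimensional \vectorsubsetsum instance whose target has $1$-norm roughly $T\approx k\,t^{1/k}$, with $\Oh(n+\log t)$ vectors. Then an $\Oh(2^{\delta' n}T^{k-\eps})$ algorithm for \vectorsubsetsum would yield time $2^{\Oh(\delta' n)}\,t^{(k-\eps)/k}\cdot k^{k}=2^{\Oh(\delta' n)} t^{1-\eps/k}$ for \subsetsum, contradicting SETH once $\delta'$ is small enough relative to the $\delta$ for $\eps/k$.

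The construction is base-$B$ digit splitting with carries made explicit. Set $B\coloneq\lceil t^{1/k}\rceil\cdot c$ for a small slack constant $c$ (or $B$ slightly larger than $t^{1/k}$). Write each item $x\in X$ (w.l.o.g.\ $x\le t$) and the target $t$ in base $B$ with $k$ digits, and map $x$ to the vector of its digits $(x_0,\dots,x_{k-1})$. Summing vectors loses the carries, so we add carry gadgets. For each position $j<k-1$, we add items that move value from coordinate $j$ to coordinate $j+1$. Since vectors must be nonnegative, we instead guess the carry vector. The total carry into position $j+1$ is at most $n$, and there are $(n+1)^{k-1}=2^{\Oh(\log n)}$ guesses. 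For a fixed carry guess $(c_1,\dots,c_{k-1})$, the requirement is that the subset's digit-sums equal $t_j + c_jB... $ adjusted, namely $\sum_{x\in S} x_j = t_j + B c_{j+1} - c_j$. These targets can be as large as $nB$ per coordinate, giving $T=\Oh(knB)=n\,t^{1/k}\cdot\Oh(1)$. The factor $n^{k}$ in $T^k$ is only polynomial, so it is absorbed into the running-time analysis exactly as in \cref{lem:parameter-preserving-reduction}. The number of guesses $n^{\Oh(k)}$ is also polynomial. Correctness holds because $\sum_{x\in S}x=t$ if and only if some carry sequence makes all digit equations hold, by the standard column-addition argument.

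The step needing most care is the parameter bookkeeping. The dependence on $n$ in $T$ (from carries up to $n$) must stay polynomial and not enter the exponent of $t$. We also need $2^{\delta' \cdot n_{\text{new}}}$ with $n_{\text{new}}=n$ (same number of vectors) to be fine. Both hold: $T^{k-\eps}\le (c\,n)^{k} t^{(k-\eps)/k}$. It remains only to choose $\delta'$ from the \subsetsum $\delta$ for parameter $\eps/k$, and to absorb $n^{\Oh(k)}$ into $2^{\delta n}$ for large $n$. For the converse equivalence mentioned in \cref{lem:SubsetSum_equiv_VectorSubsetSum}, one would encode a vector as a single integer with padded blocks. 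That direction is not needed for the lower bound.
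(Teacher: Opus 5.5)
Your proposal is correct and is essentially the paper's proof. You split numbers into $k$ base-$B$ digits with $B \approx t^{1/k}$, enumerate the $n^{\Oh(k)}$ carry sequences with digit targets $t_j + Bc_{j+1} - c_j$ (so $T = \Oh(knB)$), and absorb the polynomial factors into $2^{\delta n}$ before invoking the bound of Abboud et al.\ with $\eps/k$. Your choice of $B$ strictly larger than $t^{1/k}$ is slightly more careful than the paper's $B = \lceil t^{1/k} \rceil$, which does not yield $k$-digit representations when $t$ is a perfect $k$-th power.
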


\begin{lemma}\label{lem:SubsetSum_equiv_VectorSubsetSum}
	There exists $\eps > 0$ and $k \geq 1$ such that for every $\delta > 0$, $k$-dimensional \vectorsubsetsum can be solved in time $\Oh(2^{\delta n} T^{k(1-\eps)})$ time if and only if
	there exists $\eps > 0$ such that for every $\delta > 0$ 
	\subsetsum can be solved in time $\Oh(2^{\delta n} t^{1-\eps})$ time.
\end{lemma}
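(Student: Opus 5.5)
We prove the two implications separately, each by a direct reduction between the problems that controls the blow-up of the parameter.

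\textbf{From Vector Subset Sum to Subset Sum.} Suppose Subset Sum can be solved in time $\Oh(2^{\delta n} t^{1-\eps})$ for every $\delta>0$. Given a $k$-dimensional instance $(X,t)$ with $T=\|t\|_1$, we encode each vector $v=(v_1,\dots,v_k)$ as the single integer
\[
\sum_{i=1}^k v_i (nT+1)^{i-1},
\]
and encode the target $t$ in the same way. Only vectors with all coordinates at most the corresponding $t_i$ can be used, so we discard the others. After that, any subset sums each coordinate to at most $nT$, so there is no carry between digits. Hence a subset hits the encoded target if and only if it hits $t$ coordinatewise.

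The encoded target is at most $(nT+1)^k$, which is off by the factor $n^k$. The product $\prod_i (t_i+1)$ is at most $(T/k+1)^k$, so if we could use mixed radix $(t_i+1)$ we would avoid the $n$ factor. Carries are the only obstacle, so we will instead use base $B_i = n(t_i+1)$. This yields an integer target at most $n^k\prod_i (t_i+1) \le n^k (T+1)^k$, and the running time becomes
\[
2^{\delta n}\, n^{k(1-\eps)}\, (T+1)^{k(1-\eps)}.
\]
Since $n^{k} \le 2^{\delta n}$ asymptotically, this is $\Oh(2^{2\delta n} T^{k(1-\eps)})$, which is the desired form.

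\textbf{From Subset Sum to Vector Subset Sum.} This is the main direction, since it transfers hardness. Suppose $k$-dimensional Vector Subset Sum can be solved in time $\Oh(2^{\delta n} T^{k(1-\eps)})$. Given a Subset Sum instance with target $t$, set $b = \lceil t^{1/k}\rceil$ and write every number in base $b$ with $k$ digits. Each item $x$ becomes its digit vector, but summing digit vectors produces carries. To handle them, we guess the carry vector $c=(c_1,\dots,c_{k-1})$ with each $c_i<n$, since each digit column sums to less than $nb$. That gives $n^{k-1}$ guesses, which is at most $2^{\delta n}$. For a fixed guess, the required digit-column sums are $t_i' = t_i + b\,c_i - c_{i-1}$, where $t_i$ is the $i$-th base-$b$ digit of $t$. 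A subset is a solution if and only if its digit vectors sum to some such $t'$. Each coordinate of $t'$ is at most $nb$, so $\|t'\|_1 \le k n b = \Oh(n t^{1/k})$. The total time is therefore
\[
n^{k-1}\cdot 2^{\delta n}\cdot \bigl(n t^{1/k}\bigr)^{k(1-\eps)} = \Oh\bigl(2^{2\delta n}\, t^{1-\eps}\bigr),
\]
after absorbing polynomial factors in $n$ into the exponential term.

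Two points remain to check. First, we must justify that each guessed carry satisfies $c_i < n$. Second, we must confirm that the map from a subset to its column sums, together with the carries, reconstructs the integer equality exactly.

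Finally, we adjust the constants $\delta$ and $\eps$ as in \cref{lem:parameter-preserving-reduction}, rescaling $\delta$ to account for the factor $2$ in the exponent. The only real obstacle is that the naive encodings lose a factor $n^{\Oh(k)}$ in $T$. In both directions this loss is absorbed by the $2^{\delta n}$ term, which is why the lemma allows an arbitrary $\delta$.
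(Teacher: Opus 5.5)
Your proposal is correct and is essentially the paper's proof. For Vector Subset Sum $\to$ Subset Sum you use a carry-free radix encoding; your mixed radix $n(t_i+1)$ in place of the paper's uniform base $2n\|t\|_\infty$ changes nothing, since in both the $n^{k}$ blow-up of the target is absorbed into $2^{\delta n}$. For the converse you use digit vectors in base $\lceil t^{1/k}\rceil$ together with $n^{k-1}$ guessed carry sequences with $c_0=c_k=0$, exactly as the paper does.

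The two checks you defer are exactly what the paper verifies. The bound $c_\ell<n$ needs induction with the incoming carry, via $c_{\ell-1}+\sum_i y_i x_{i,\ell}\le (n-1)+n(b-1)<nb$. The fact $c_k=0$ follows by telescoping $t-\sum_i y_i x_i$. You should also discard items larger than $t$, and choose $b$ with $b^k>t$ (e.g.\ $b=\lfloor t^{1/k}\rfloor+1$, since $\lceil t^{1/k}\rceil^k=t$ for perfect $k$-th powers), so that every number genuinely has $k$ base-$b$ digits.
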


\begin{proof}[Proof of \cref{thm:SETH_to_VectorSubsetSum2}]
	Assuming SETH, for every $\eps > 0$, there exists $\delta >0$ such that \subsetsum cannot be solved in time $\Oh(2^{\delta n} \cdot t^{1-\eps})$ by~\cite[Theorem 1]{AbboudBHS19}. \Cref{lem:SubsetSum_equiv_VectorSubsetSum}  implies that 
	for any $\eps > 0$ and $k \geq 1$, there exists $\delta > 0$ such that, $k$-dimensional \vectorsubsetsum cannot be solved in time $\Oh(2^{\delta n} \cdot T^{k(1-\eps)})$. Rescaling to $\eps' \coloneqq k\eps$, we equivalently obtain that for any $k \ge 1$ and $\eps' > 0$, there exists $\delta > 0$ such that $k$-dimensional \vectorsubsetsum cannot be solved in time $\Oh(2^{\delta n} \cdot T^{k-\eps'})$.
\end{proof}


\begin{proof}[Proof of \cref{lem:SubsetSum_equiv_VectorSubsetSum}]
%
\textbf{\vectorsubsetsum $\rightarrow$ \subsetsum}\;\;
Let $x_1, \dots, x_n, t \in \N^k$ be an instance of \vectorsubsetsum.
Observe that we can discard $x_i$ if there exists $\ell \in [k]$ with $x_i[\ell] > t[\ell]$ since such items can never be a part of a solution. Therefore, let $B \coloneq 2 n \cdot \lVert t \rVert_{\infty}$ and construct the integers 
\begin{align*}
\hat x_i \coloneq \sum_{\ell = 1}^k x_i[\ell] \cdot B^{\ell-1} \text{\qquad and \qquad}
\hat t \coloneq \sum_{\ell = 1}^k t[\ell] \cdot B^{\ell-1}
\end{align*}
for any $i \in [n]$. Clearly, a solution to the \vectorsubsetsum instance $x_1, \dots, x_n, t$ implies a solution to the \subsetsum instance $\hat x_1, \dots, \hat x_n, \hat t$. On the other hand, since for any $\ell \in [k]$ we have $\sum_{i=1}^n x_i[\ell] < B$, any \subsetsum solution $y_1, \dots, y_n \in \{0, 1\}$ such that $\sum_{i = 1}^n y_i \cdot \hat x_i = \hat t$ necessarily satisfies  
$
	\sum_{i = 1}^n y_i \cdot x_i[\ell] = t[\ell]
$
for all $\ell \in [k]$,
and thus is also a solution to \vectorsubsetsum.

Note that $\hat t \le B^k = 2^k n^k \lVert t \rVert_{\infty}^k \le 2^k n^k T^k$, where $T = \lVert t \rVert_{1}$. Hence, if there exists $\eps >0$ such that for every $\delta>0$ \subsetsum can be solved in time $\Oh{(2^{\delta n} {\hat t}^{1-\eps})}$, then \vectorsubsetsum can be solved in time $\Oh{(2^{\delta n} n^{k(1-\eps)} T^{k(1-\eps)})} \le \Oh{(2^{\delta' n} T^{k(1-\eps)})}$ for $\delta' = f(k, \eps) \cdot \delta$, where $f(k, \eps)$ is a function depending only on $k$ and $\eps$.

\subparagraph*{\subsetsum $\rightarrow$ \vectorsubsetsum}
Let $x_1, \dots, x_n, t \in \N$ be an instance of \subsetsum. We can assume that $x_i \le t$ for every $i \in [n]$ as items $x_i >t$ can never be part of a solution.
For any fixed dimension $k \geq 1$, we encode the integers in a $B$-ary system for $B \coloneq \lceil t^{1/k} \rceil$: let $t_\ell, x_{i, \ell} \in \{0, 1, \dots, B-1\}$ for all $i \in [n]$ and $\ell \in [k]$ such that 
$$x_i = \sum_{\ell = 1}^k x_{i, \ell} \cdot B^{\ell -1} \text{\qquad and \qquad} t = \sum_{\ell =1}^k t_\ell \cdot B^{\ell-1}.$$
We denote by $C$ the set of all sequences $(c_0,\ldots,c_k) \in \{0, 1, \dots, n-1\}^{k+1}$ with $c_0 = c_k = 0$ (this is the set of all possible carry sequences). For any $c \in C$ we construct the $k$-dimensional \vectorsubsetsum instance given by 
\begin{align*}
\hat x_i & \coloneq (x_{i, 1}, x_{i, 2}, \dots, x_{i, k}) \quad \text{for any } i \in [n], \\
\hat t_c  &\coloneq (t_1 + c_1 \cdot B - c_0, t_2 + c_2 \cdot B - c_1, \dots, t_k + c_k \cdot B - c_{k-1}).
\end{align*}

We claim that there exists $c \in C$ such that $(\hat x_1, \dots, \hat x_n, \hat{t}_c)$ is a \textsc{Yes}-instance of \vectorsubsetsum if and only if $(x_1, \dots, x_n, t)$ is a \textsc{Yes}-instance of \subsetsum.
Indeed, assume there exist $c \in C$ and 
$y_1, \dots, y_n \in \{0, 1\}$ such that $\sum_{i = 1}^n y_i \hat x_i = \hat{t}_c$. Then in particular, for every $\ell \in [k]$, we have $\sum_{i =1}^n y_i \cdot x_{i, \ell} = \sum_{i =1}^n y_i \cdot \hat x_i[\ell] = \hat{t}_c[\ell] = t_\ell + c_\ell \cdot B - c_{\ell -1}$. Thus,
\begin{align*}
	\sum_{i =1}^n y_i \cdot x_i &
	= \sum_{i=1}^n y_i \cdot \left(\sum_{\ell =1}^k x_{i, \ell} \cdot B^{\ell -1} \right) \\
	&= \sum_{\ell =1}^k B^{\ell-1} \left(\sum_{i=1}^n y_i \cdot x_{i, \ell} \right) \\
	&= \sum_{\ell=1}^k B^{\ell-1} (t_\ell + c_\ell \cdot B - c_{\ell -1}) \\
	&= \sum_{\ell=1}^k t_\ell \cdot B^{\ell-1} + \sum_{\ell=1}^k c_\ell \cdot B^{\ell} - \sum_{\ell=1}^k c_{\ell -1} \cdot B^{\ell-1} \\
	&= t + c_k \cdot B^k - c_0 = t. \tag{since $c_0 =c_k=0$}
\end{align*}
Hence, $(x_1, \dots, x_n, t)$ is a \textsc{Yes}-instance of \subsetsum.

\smallskip
Conversely, suppose there exist $y_1, \dots, y_n \in \{0, 1\}$
such that $\sum_{i = 1}^n y_i x_i = t$. 
Consider the carries $c_0,c_1,\ldots,c_k$ that arise when summing up $\sum_{i = 1}^n y_i x_i$ in base $B$; more precisely, let $c_0 \coloneq 0$ and for any $\ell \in [k]$ let 
$$c_\ell \coloneq \Big\lfloor \Big(c_{\ell-1} + \sum_{i = 1}^n y_i x_{i,\ell}\Big) / B \Big\rfloor.$$ 
Then clearly $c_\ell \ge 0$ for all $\ell \in [k]$. 
As we have $c_0 < n$, we can prove by induction that $c_\ell < n$ for all $\ell \in [k]$: the observation that $\sum_{i = 1}^n y_i x_{i,\ell} \le n (B-1)$ and the induction hypothesis $c_{\ell-1} < n$ imply that $(c_{\ell-1} + \sum_{i = 1}^n y_i x_{i,\ell}) / B < n$. 
It follows that $c_0, \dots, c_k \in \{0,\ldots,n-1\}$.

Furthermore, $\sum_{i = 1}^n y_i x_i = t$ implies that $\sum_{i = 1}^n y_i \hat{x}_i = \hat{t}_c$ for each $\ell \in [k]$. Indeed, this follows from the correctness of addition in base $B$: For any $\ell \in [k]$, the $\ell$-th digit of the summation result $t_\ell$ equals the previous carry $c_{\ell-1}$ plus the sum of $\ell$-th digits $\sum_{i = 1}^n y_i x_{i,\ell}$, reduced modulo $B$. In other words, we have $t_\ell = c_{\ell-1} + \sum_{i = 1}^n y_i x_{i,\ell} \pmod B$. The new carry $c_\ell = \lfloor (c_{\ell-1} + \sum_{i = 1}^n y_i x_{i,\ell}) / B \rfloor$ is the same by using integer division by $B$ instead of reducing modulo $B$. 
The identity $x = (x \bmod y) + y \cdot \lfloor x/y \rfloor$ now implies $c_{\ell-1} + \sum_{i = 1}^n y_i x_{i,\ell} = t_\ell + B \cdot c_\ell$, which is equivalent to $\sum_{i = 1}^n y_i x_{i,\ell} = t_\ell + c_\ell \cdot B - c_{\ell-1}$, for each $\ell \in [k]$. In other words, we have $\sum_{i = 1}^n y_i \hat{x}_i = \hat{t}_c$.

Again, by correctness of addition, we can see that the last carry is always 0, i.e.~$c_k = 0$. More formally, we can argue that
\begin{align*}
  0 &= t - \sum_{i = 1}^n y_i x_i  \tag{by definition of $y_1, \dots, y_n$}\\
  &= \sum_{\ell=1}^k t_\ell B^{\ell-1} - \sum_{i = 1}^n y_i \sum_{\ell=1}^k x_{i,\ell} B^{\ell-1}  \\
  &= \sum_{\ell=1}^k \Big(t_\ell - \sum_{i = 1}^n y_i x_{i,\ell}\Big) B^{\ell - 1}  \\
  &= \sum_{\ell=1}^k (c_{\ell - 1} - c_\ell \cdot B) B^{\ell-1}  \tag{since $\sum_{i = 1}^n y_i x_{i,\ell} = t_\ell + c_\ell \cdot B - c_{\ell-1}$}  \\
  &= c_0 - c_k B^k = - c_k B^k  \tag{since $c_0=0$} 
\end{align*}
and thus $c_k = 0$.
We have thus proven that $c \coloneq (c_0,\ldots,c_k)$ is contained in the set $C$, i.e.~$(\hat x_1, \dots, \hat x_n, \hat{t}_c)$ is among the constructed \vectorsubsetsum instances. 
Since we also showed $\sum_{i = 1}^n y_i \hat{x}_i = \hat{t}_c$, $(\hat x_1, \dots, \hat x_n, \hat{t}_c)$ is a \textsc{Yes}-instance of \vectorsubsetsum.

\smallskip
Note that $T \coloneq \| \hat{t}_c \|_1 \leq k n B  = \Oh(n t^{1/k})$. Hence, if 
there exists $\eps > 0$ and $k \ge 1$ such that for every $\delta > 0$,
$k$-dimensional \vectorsubsetsum can be solved in time $\Oh(2^{\delta n} T^{k(1-\eps)})$ then \subsetsum can be solved in time $\Oh(n^{k-1} \cdot 2^{\delta n} (n t^{1/k})^{k(1-\eps)}) \le \Oh(2^{\delta' n} t^{1-\eps})$ for some $\delta' = f(k,\eps) \cdot \delta$, where the factor $n^{k-1}$ comes from iterating over all $c \in C$ and $f(k,\eps)$ is a function depending only on $k$ and $\eps$. 
\end{proof}

Doron{-}Arad et al.~\cite{ddimknapsack-itcs26} considered in the arXiv version of their paper~\cite{DBLP:journals/corr/abs-2407-10146} a
$k$-dimensional Knapsack problem, in which we are given a set of items $I$, a
profit function $p : I \to \nat$, a $k$-dimensional cost function $c : I \to
\nat^k$ and a budget $t \in \nat^k$. They observed that $k$-dimensional Knapsack
can be solved in time $\Oh(nW^k)$, where $n$ is the number of items and $W = \norm{t}_\infty$ is the maximum budget of any dimension. They complemented this by a lower bound excluding time $(n+W)^{o(k/\log{k})}$
assuming ETH. 
Note that we can reduce \vectorsubsetsum to $k$-dimensional Knapsack as follows. Let $x_1, \dots, x_n, t \in \N^k$ be a \vectorsubsetsum instance. For each $i \in [n]$, create an item with profit $p(i) = \norm{x_i}_1$ and cost $c(i)= x_i$, and let $t$ be the budget. This $k$-dimensional Knapsack instance has a solution of optimal profit $\norm{t}_1$ if and only if the \vectorsubsetsum admits a solution. 
Hence, we immediately get the following corollary that improves upon the lower bound of Doron{-}Arad et al.

\begin{corollary}
		Assuming SETH, for every $k \in \nat$ and $\eps > 0$ the $k$-dimensional Knapsack problem cannot be solved in time $2^{o(n)} W^{k-\varepsilon}$. Assuming ETH, $k$-dimensional Knapsack cannot be solved in time $2^{o(n)} W^{o(k)}$.
\end{corollary}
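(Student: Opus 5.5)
The statement is a corollary: $k$-dimensional Knapsack has no $2^{o(n)}W^{k-\eps}$ algorithm under SETH and no $2^{o(n)}W^{o(k)}$ algorithm under ETH. My plan is to reduce from \vectorsubsetsum using the reduction sketched just before the statement. Given $x_1,\dots,x_n,t\in\N^k$, each item $i$ gets cost $c(i)=x_i$ and profit $p(i)=\norm{x_i}_1$, and the budget is $t$.

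\textbf{Step 1: correctness.} Any feasible set $S$ satisfies $\Sigma(S)\le t$ coordinatewise. Its profit is $\norm{\Sigma(S)}_1$, since all entries are nonnegative. Hence the profit is at most $\norm{t}_1$, with equality exactly when $\Sigma(S)=t$. So the Knapsack optimum equals $\norm{t}_1$ if and only if the \vectorsubsetsum instance is a \textsc{Yes}-instance.

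\textbf{Step 2: parameters under SETH.} The item count is unchanged, and $W=\norm{t}_\infty\le\norm{t}_1=T$. So a $2^{\delta n}W^{k-\eps}$ algorithm for Knapsack yields a $2^{\delta n}T^{k-\eps}$ algorithm for \vectorsubsetsum. That contradicts \cref{thm:SETH_to_VectorSubsetSum2}. The quantifiers match directly: for each $\eps$ and $k$ there is a $\delta$ excluded.

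\textbf{Step 3: the ETH part.} This is the only step needing care, because \vectorsubsetsum was only shown SETH-hard. I would reduce from \binpacking via \cref{thm:equivalence-bin-packing}, specifically from \kpartitionTargetsMultisets with $k$ bins, targets $t_1,\dots,t_k$ and $n$ items. The encoding uses $k$ dimensions for the bin loads plus one dimension per item to force each item into exactly one bin:
\begin{itemize}
\item for each item $a$ and bin $\ell$, create a vector that has $a$ in coordinate $\ell$ and a unit in that item's selector coordinate;
\item the target is $(t_1,\dots,t_k,1,\dots,1)$.
\end{itemize}
This blows up the dimension to $k+n$, so it fails as stated.

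The fix I would try first is to pack the selector into the bin coordinates using large-base digits, i.e.\ give coordinate $\ell$ the value $a\cdot B+\text{(bookkeeping)}$. With $n$ items, however, this costs $2^{O(n)}$ in the parameter $W$, which is too much.

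A cleaner alternative is to observe that the ETH reduction of \cref{lem:ETH_to_BPtargets} works almost verbatim when each item is a vector supported on one coordinate. Each item $z$ is copied into $k-1$ candidate vectors, each placing $z$'s value in one coordinate. Exactly-one-copy selection is enforced by bookkeeping bits added within each coordinate: the II-block style already ensures one item per clause, and dummy items can be duplicated with a selector bit. This keeps $W=2^{O(N/k)}$ and $n=O(N)$, which rules out $2^{o(n)}W^{o(k)}$. I expect this adaptation (guaranteeing that each original item is used in exactly one coordinate without an $n$-dimensional selector) to be the main obstacle.
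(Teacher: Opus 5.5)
Your Steps 1 and 2 are correct and coincide with the paper's argument: the profit-$\norm{x_i}_1$ reduction from \vectorsubsetsum, $W=\norm{t}_\infty\le\norm{t}_1$, and \cref{thm:SETH_to_VectorSubsetSum2}.

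The gap is the ETH part. You leave it open, and the route you sketch for it cannot succeed. If every vector is supported on a single coordinate, then $\Sigma(S)=t$ splits into $k$ independent constraints on disjoint sets of vectors. The instance is then just $k$ separate one-dimensional \subsetsum instances with targets at most $W$, solvable in $\Oh(nW)$ total time. In particular, no bookkeeping placed inside individual coordinates can forbid choosing the copy of an item in coordinate $\ell(i)$ together with its copy in coordinate $\ell(j)$. Without that, take both copies of $d(i,j,x)$ and no copy of $d'(i,j,x)$: whenever $\alpha_i(x)=1$ and $\alpha_j(x)=0$, both channels read $\bincode{001}+\bincode{010}=\bincode{011}$, so inconsistent assignments pass. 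You could rescue the construction by making each dummy a single vector with entries in both coordinates $\ell(i)$ and $\ell(j)$, which is exactly the extra freedom \vectorsubsetsum offers. But this is unnecessary.

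The missing observation is that no detour through \binpacking is needed. \cref{lem:SubsetSum_equiv_VectorSubsetSum} is stated in SETH form, but the \subsetsum$\to$\vectorsubsetsum reduction in its proof transfers ETH bounds just as well. Under ETH there is $\delta_0>0$ such that \subsetsum has no $\Oh(2^{\delta_0 n}t^{\delta_0})$ algorithm \cite{BuhrmanLT15,JansenLL16}. That reduction yields at most $n^{k-1}$ instances, one per carry sequence. Each has $n$ vectors and $\norm{\hat t_c}_1\le kn\lceil t^{1/k}\rceil\le 2knt^{1/k}$. Compose this with your Step 1 and $W\le T$, and suppose $k$-dimensional Knapsack had an $\Oh(2^{\delta n}W^{\delta k})$ algorithm with $\delta=\delta_0/2$. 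Then \subsetsum could be solved in time $n^{k-1}(2kn)^{\delta k}2^{\delta n}t^{\delta}=\Oh(2^{\delta_0 n}t^{\delta_0})$ for every fixed $k$, contradicting ETH. Since $\delta$ does not depend on $k$, this is precisely the statement that time $2^{o(n)}W^{o(k)}$ is ruled out.
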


\section{Further Observations on Scheduling Problems}
\label{app:folklore}

\begin{lemma}\label{obs:PrjCmax_wlog_small_release_dates}
  For \PrjCmax we can assume without loss of generality that $C \le T$, where $C = \max_{1 \le j \le n} C_j$ is the makespan of an optimal schedule and $T = \sum_{1 \le j \le n} p_j$ is the total processing time. 
  Moreover, if the jobs are sorted by release dates $r_1 \le \ldots \le r_n$ we can assume without loss of generality that $r_j \le \sum_{j' < j} p_{j'}$ for every $j \in [n]$.
\end{lemma}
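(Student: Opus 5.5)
The plan is to transform any instance of \PrjCmax into an equivalent one (same optimal makespan up to a known additive shift) whose release dates satisfy $r_j \le \sum_{j'<j} p_{j'}$ after sorting. Once this holds, $C \le T$ follows from a greedy schedule: process all jobs on a single machine in order of release dates, starting each job as early as possible. By induction on $j$, job $j$ starts at time $\max\{r_j, C_{j-1}\}$. Assuming $C_{j-1} = \sum_{j'<j} p_{j'}$ and $r_j \le \sum_{j'<j} p_{j'}$, the start is $\sum_{j'<j}p_{j'}$, so $C_j = \sum_{j'\le j} p_{j'}$. The optimal makespan is therefore at most $C_n = T$.

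To achieve the release-date condition, sort jobs by $r_1 \le \dots \le r_n$ and scan for the first index $j$ with $r_j > \sum_{j'<j} p_{j'} =: P_{j}$. The key observation is that in any schedule, the jobs $1,\dots,j-1$ can be completed by time $r_j$ (even on one machine, by the induction above, they finish by $P_j < r_j$). So no machine needs to be busy in the gap, and we may shift time. Concretely, replace $r_{j'}$ by $r_{j'} - (r_j - P_j)$ for all $j' \ge j$, keeping earlier release dates, and subtract $r_j - P_j$ from the target makespan.

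The equivalence needs an argument in both directions.

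\emph{Original $\Rightarrow$ transformed.} Take an optimal schedule of the original instance. Modify it so that all jobs $j'<j$ are placed first on machine $1$ greedily, finishing by $P_j \le r_j$. The jobs $j' \ge j$ are kept at their times, which are all $\ge r_j$. This requires checking that moving earlier jobs onto machine $1$ before time $r_j$ does not collide with later jobs, which is true since all later jobs start at or after $r_j$. Then shift every job $j' \ge j$ left by $r_j-P_j$.

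\emph{Transformed $\Rightarrow$ original.} Shift the later jobs right by $r_j - P_j$ and do the same rearrangement of the early jobs.

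Hence the optimal makespan decreases by exactly $r_j-P_j$, and an algorithm for the transformed instance solves the original one. Repeating the scan (at most $n$ times, each in polynomial time) yields an instance satisfying the condition for all $j$. The sorted order is preserved, since we subtract the same amount from a suffix and the new $r_j$ equals $P_j \ge r_{j-1}$ only if $r_{j-1} \le P_{j-1} \le P_j$, which holds inductively.

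The main obstacle is the careful rearrangement argument: showing that in an optimal schedule one may assume the prefix jobs finish before $r_j$ without increasing the makespan and without overlapping later jobs. I expect the single-machine greedy placement to suffice, since all later jobs start at or after $r_j \ge P_j$. Everything else, including $T$ being unchanged so the reduction is parameter-preserving, is routine.
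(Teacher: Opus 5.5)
Your proposal is correct and follows the same route as the paper: take the first index $j$ with $r_j > \sum_{j'<j} p_{j'}$, place the prefix jobs on one machine before $r_j$, shift the suffix release dates down by $r_j - \sum_{j'<j} p_{j'}$, and iterate, with $C \le T$ then following from the single-machine schedule. Your two-directional exchange argument, including the matching shift of the target makespan, spells out what the paper compresses into ``this removes the idle time $[t_j, r_j]$ from consideration.''
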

\begin{proof}
	By shifting any positive offset and sorting, we can assume that $0 = r_1 \le r_2 \le \ldots \le r_n$. 
	Let $t_j = \sum_{j' < j} p_{j'}$ for every $j \in [n]$. In particular, $r_1 = 0 = t_1$.

	Observe that if $r_j \leq t_j$ for every job $j$, then we can schedule every job $j$ starting at time $t_j$ on the first machine. This is a valid schedule with makespan $C = t_n + p_n = T$. Hence, if $C > T$ then there exists a job $j$ with $r_j > t_j$. In this case, we show how to transform the instance into an equivalent one where all jobs have $r_j \leq t_j$. 
	
	Let $j$ the smallest index such that $r_j > t_j$, i.e.~for every $j' <j$ we have $r_{j'} \leq t_{j'}$. Then we can schedule each job $j' \in \{1, \dots, j-1\}$ starting at time $t_{j'}$ on the first machine. This is a valid (partial) schedule with makespan $t_{j-1} + p_{j-1} = t_j < r_j$. 
	Now we shift all later release dates by setting $\bar r_{j'} \coloneq r_{j'} - r_j + t_j$ for any $j' \ge j$, and keeping $\bar r_{j'} \coloneq r_{j'}$ for any $j' < j$. This removes the idle time $[t_j,r_j]$ from consideration, and thus yields an equivalent instance of \PrjCmax. Repeating this argument removes all jobs with $r_j > t_j$. 
\end{proof}

\begin{lemma}\label{lem:PSUj_folkloreUB}
	\PSUj and \QSUj can be solved in time $\Oh(n^2 T^{k-1})$.
\end{lemma}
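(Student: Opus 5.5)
We solve the problem by a dynamic program over jobs in Earliest-Due-Date order. The standard exchange argument for a single machine shows that there is always an optimal schedule in which the non-tardy jobs on each machine are processed first, in non-decreasing order of due dates, and the tardy jobs are appended afterwards in arbitrary order. Such a schedule is determined by which jobs are on time and on which machine each of them runs. We therefore sort the jobs so that $d_1 \le \dots \le d_n$ and process them in this order, deciding for each job $j$ whether it is tardy or is appended as the next on-time job on some machine $i$. If job $j$ is placed on machine $i$, it starts when the on-time jobs already on that machine finish, so its completion time equals the new load $L_i$ of that machine.

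The state after the first $j$ jobs records the loads $(L_1,\dots,L_k)$ of on-time jobs per machine. We cannot afford all $k$ loads, since that costs $T^k$. Instead we store the number of tardy jobs as the value to be optimised, and drop one load. A natural first attempt keeps $L_1,\dots,L_{k-1}$ and the number $u$ of tardy jobs so far as the index, and stores as the value the minimum achievable $L_k$. Concretely, define $F[j, L_1,\dots,L_{k-1}, u]$ as the minimum load of machine $k$ over all feasible partial schedules of jobs $1..j$ with on-time loads $L_1,\dots,L_{k-1}$ on the first $k-1$ machines and exactly $u$ tardy jobs, or $\infty$ if no such schedule exists. Each state has $k+1$ transitions, one per machine and one for marking the job tardy. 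Placing job $j$ on machine $i<k$ is allowed iff $L_i + p_j \le d_j$. Placing it on machine $k$ is allowed iff $F + p_j \le d_j$, and it updates $F$ to $F+p_j$. The number of states is $n \cdot T^{k-1} \cdot n$, so the total time is $\Oh(n^2 T^{k-1})$ for constant $k$. The answer is the smallest $u$ with $F[n,\cdot,u] < \infty$.

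The crux is justifying that storing only the minimum $L_k$ loses nothing. This is a monotonicity argument. For fixed $L_1,\dots,L_{k-1}$ and $u$, a smaller load on machine $k$ is never worse: any continuation that is feasible from a larger $L_k$ is also feasible from a smaller one, because the deadline checks $L_k + p_j \le d_j$ only become easier. The same argument covers machines $i<k$, but those loads are exact indices, so there is nothing more to show for them.

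For $\QSUj$, loads are measured in processing amount. Job $j$ on machine $i$ is on time iff $(L_i+p_j)/s_i \le d_j$, and the condition on machine $k$ is stated with the stored minimum $L_k$. The monotonicity argument is unchanged, since each deadline check is still monotone in the load. The EDD exchange argument also holds per machine, because uniform speeds only rescale time on each machine. This gives the same bound $\Oh(n^2 T^{k-1})$ for $\QSUj$.
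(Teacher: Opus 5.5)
Your proposal is correct and is essentially the paper's proof: the same DP over jobs sorted by due date, indexed by the job prefix, the number of tardy jobs and the loads $L_1,\dots,L_{k-1}$, storing the minimum load of machine $k$, for $\Oh(n^2T^{k-1})$ states. You use forward transitions and justify storing only the minimum load of machine $k$ by an explicit dominance argument (a smaller $L_k$ is never worse). The paper instead writes a pull-style recurrence and checks it by induction on $j$, which is the same argument in different form.
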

\begin{proof}
	This is a folklore result, which we include here for completeness. 
	We present a dynamic programming algorithm that solves any instance of \QSUj with $n$ jobs of processing times $p_1, \dots, p_n$, due dates $d_1, \dots, d_n$ and $k$ parallel machines of speeds $s_1, \dots, s_k \in (0, 1]$. In case of parallel identical machines, i.e.~\PSUj, we have $s_1 = \dots = s_k$. 
	
	Note that we can assume without loss of generality that all tardy jobs are not scheduled (they can be scheduled in an arbitrary way after all the non-tardy jobs). Additionally, we can assume that the (non-tardy) jobs are scheduled in Earliest Due Date (EDD) order, i.e.~on each machine, the jobs are scheduled in order of non-decreasing due dates. So sort the jobs according to their due date such that $d_1 \leq d_2 \leq \dots \leq d_n$. Let $T \coloneq \sum_{j=1}^n p_j$ be the sum of processing times. 
	
	The state of the dynamic programming table is a tuple $(j, t, \ell_1, \dots, \ell_{k-1})$, where $j,t \in \{0,\ldots,n\}$ and $\ell_1, \dots, \ell_{k-1} \in \{0,\ldots,T\}$. The entry $\texttt{DP}[j, t, \ell_1, \dots, \ell_{k-1}]$ of the dynamic programming table stores the minimum 
	value $\ell_k \in \{0,\ldots,T\}$ such that we can schedule the jobs $\{1, \dots, j\}$ onto $k$ machines such that $t$ jobs are tardy, and the total processing time of non-tardy jobs on the $i$-th machine is $\ell_i$, for every $i \in [k]$. If such a schedule does not exist, then the table of that state has value $\infty$. 
	The optimal schedule of all jobs $\{1, \dots, n\}$ has $t^*$ tardy jobs, where $t^*$ is the minimum $t \in \{0, 1, \dots, n\}$ such that there exist $\ell_1, \dots, \ell_{k-1} \in [T]$ with $\texttt{DP}[n, t, \ell_1, \dots, \ell_{k-1}] \neq \infty$. 

	To compute the table, first initialise every entry to $\infty$ except the base-case $\texttt{DP}[0, 0, 0, \dots, 0] = 0$. Next, we will fill the table by iterating over $j = 1, \dots, n$ and for each $j$ we will iterate over all possible values of $t, \ell_1, \dots, \ell_{k-1}$.
	We set $\texttt{DP}[j, t, \ell_1, \dots, \ell_{k-1}]$ to the value 
	\begin{equation}\label{eq:DP_QSUj}
	\min  
	\begin{cases}
		\texttt{DP}[j-1, t-1, \ell_1, \dots \ell_{k-1}] & \text{if } t \ge 1,\\
		\texttt{DP}[j-1, t, \ell_1, \dots, \ell_i - p_j \dots, \ell_{k-1}] & \text{for } i \in [k-1] \text{ such that } \ell_i / s_i \leq d_j \text{ and } p_j \le \ell_i,\\
		\texttt{DP}[j-1, t, \ell_1, \dots, \ell_{k-1}] + p_j & \text{if } (\texttt{DP}[j-1, t, \ell_1, \dots, \ell_{k-1}]  +p_j)/ s_i \leq d_j.\\
	\end{cases}
	\end{equation}
	Here we interpret the minimum over an empty set as $\infty$. 
	Since there are $\Oh(n^2 T^{k-1})$ dynamic programming states and each entry of the dynamic programming table takes time $\Oh(k)$ to compute, the algorithm takes total time $\Oh(n^2 T^{k-1})$ as $k$ is a fixed constant.

	It remains to show correctness of our algorithm, which we do by an induction on $j$. 
	The base case $j=0$ holds trivially. Hence, consider $j \ge 1$ and assume that the table is correctly filled for $j-1$.  
	Fix $t \in \{0, 1, \dots, n\}$ and $\ell_1, \dots, \ell_{k-1} \in \{0, 1, \dots, T\}$.
	Note that any schedule of the first $j$ jobs can be constructed from a schedule of the first $j-1$ jobs by scheduling $j$ either as tardy, on one of the first $k-1$ machines, or on the last machine. Hence, if there is no schedule corresponding to $(j, t, \ell_1, \dots, \ell_{k-1})$, then the minimum in (\ref{eq:DP_QSUj}) is taken over the empty set and $\texttt{DP}[j, t, \ell_1, \dots, \ell_{k-1}]$ correctly contains value $\infty$. Otherwise, the correct value of $\texttt{DP}[j, t, \ell_1, \dots, \ell_{k-1}]$ is at most the value of (\ref{eq:DP_QSUj}). We now show that it is at least the value of (\ref{eq:DP_QSUj}). 

	Let $\ell_k \in \{0, 1, \dots, T\}$ be the correct value of $\texttt{DP}[j, t, \ell_1, \dots, \ell_{k-1}]$, i.e.~$\ell_k$ is minimal such that there exists a partition $\{1, \dots, j\} = X_1 \uplus \dots \uplus X_k \uplus B$ of the first $j$ jobs where $B$ contains $t = |B|$ tardy jobs, and $X_i$ contains all non-tardy jobs scheduled on the $i$-th machine with $\sum_{j' \in X_i} p_{j'} = \ell_i$, for each $i \in [k]$. 
	Since $X_1 \uplus \ldots \uplus X_k \uplus B$ is a partition of $\{1, \dots, j\}$, we have the following three cases: either (i) $j \in B$, (ii) $j \in X_1 \uplus \ldots \uplus X_{k-1}$, or (iii) $j \in X_k$.
	In case (i), the job $j$ is tardy. In particular, $|B| = t \geq 1$. Consider the same schedule restricted to the first $j-1$ jobs: the number of tardy jobs decreases by 1 but the total processing time of non-tardy job on each machine remains unchanged. Thus, in that case, by induction hypothesis, we have $\ell_k \geq \texttt{DP}[j-1, t-1, \ell_1, \dots, \ell_{k-1}]$.
	In case (ii) and (iii), the job $j$ is non-tardy and scheduled on the $i$-th machine for some $i \in [k]$. In particular, $\ell_i/s_i \leq d_j$ and $p_j \leq \sum_{j' \in X_i } p_{j'} = \ell_i$. Consider the same schedule restricted to the first $j-1$ jobs: the number of tardy jobs and the total processing time of non-tardy jobs on each machine remains unchanged, except for the total processing time of non-tardy jobs on the $i$-th machine, which decreases by $p_j$. By induction hypothesis, we deduce that if $i \in [k-1]$, i.e.~case (ii), then $\ell_k \geq \texttt{DP}[j-1, t, \ell_1, \dots, \ell_i - p_j , \dots, \ell_{k-1}]$; and otherwise, $i = k$, i.e.~case (iii), and we have $\ell_k \geq \texttt{DP}[j- 1, t, \ell_1, \dots, \ell_{k-1}] + p_k$.
	This proves that $\ell_k$ is at least the value of (\ref{eq:DP_QSUj}), and therefore the correct value $\ell_k$ of $\texttt{DP}[j, t, \ell_1, \dots, \ell_{k-1}]$ is equal to (\ref{eq:DP_QSUj}).
\end{proof}
 
\paragraph{Maximum lateness and maximum tardiness}
For a given schedule, the \emph{lateness} (respectively, \emph{tardiness}) of job $j$ is defined as $L_j \coloneq C_j - d_j$ (respectively, $T_j \coloneq \max\{0, C_j - d_j\}$), where $C_j$ and $d_j$ denote the completion time and due date of job $j$. We consider the scheduling problem \PLmax (respectively, \PTmax), where one is given $n$ jobs with processing times $p_1, \dots, p_n$ and due dates $d_1, \dots, d_n$ and a target objective $\ell \in \Z$ (respectively, $t \in \N$) and the task is to decide whether there exists a schedule of the $n$ jobs on $k$ identical parallel machines such that the maximum lateness $L_{\max} \coloneq \max_{j\in [n]} L_j$ is at most $\ell$ (respectively, the maximum tardiness $T_{\max} \coloneq \max_{j\in [n]} T_j$ is at most $t$). 
We show in \cref{lem:PSUj0_Lmax_Tmax} that both those problems are parameter-preserving equivalent to the special case of \PSUj asking for $\sum_j U_j =0$, and thus to \PrjCmax by \cref{lem:GroupedkPart_to_PrjCmax}. The reductions are folklore. We include them for completeness in order to verify that they are indeed parameter-preserving. 

\begin{lemma}\label{lem:PSUj0_Lmax_Tmax}
	The special case of \PSUj asking for $\sum_j U_j =0$ is parameter-preserving equivalent to \PLmax and to \PTmax.
\end{lemma}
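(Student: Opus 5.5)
The plan is to give four simple reductions, each keeping the processing times unchanged and only shifting due dates or the objective threshold by an additive constant. Since the processing times are untouched, the parameter $T=\sum_j p_j$ is the same in all instances, the number of jobs is unchanged, and the constructions run in linear time. Hence each reduction is parameter-preserving in the sense of the earlier definition. It suffices to show that (i) \PSUj with $\sum_j U_j=0$ is equivalent to \PLmax, and (ii) \PSUj with $\sum_j U_j=0$ is equivalent to \PTmax.

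\textbf{The \PLmax direction.} A schedule has $\sum_j U_j=0$ if and only if $C_j\le d_j$ for all $j$, i.e. $L_{\max}\le 0$. So an instance of the special case of \PSUj is the instance of \PLmax with the same jobs and threshold $\ell=0$. Conversely, given \PLmax with threshold $\ell\in\Z$, the condition $L_{\max}\le \ell$ is equivalent to $C_j\le d_j+\ell$ for all $j$. We therefore set new due dates $d'_j\coloneq d_j+\ell$ and ask for $\sum_j U_j=0$, with the same schedules witnessing both instances. The only wrinkle is that $d'_j$ may be negative. In that case job $j$ cannot finish by $d'_j$, since $p_j\ge 1$ (jobs with $p_j=0$ can be discarded), so we output a trivial \textsc{No} instance. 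Otherwise $d'_j\in\N$.

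\textbf{The \PTmax direction.} Note that $T_{\max}\le t$ with $t\in\N$ is equivalent to $C_j\le d_j+t$ for all $j$, because $\max\{0,C_j-d_j\}\le t$ and $t\ge 0$. Setting $d'_j\coloneq d_j+t\ge 0$ reduces \PTmax to the special case of \PSUj. In the other direction, $\sum_j U_j=0$ is exactly $T_{\max}\le 0$, so the \PSUj instance with the same jobs and threshold $t=0$ is an equivalent \PTmax instance.

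\textbf{Remaining check.} The only thing to verify is that shifting due dates does not violate the size or parameter bounds. The input size grows by at most $\Oh(\log|\ell|)$ bits per number, which is acceptable since the encoding of the input already includes $\ell$. One may also assume $|\ell|\le T+\max_j d_j$, since otherwise the answer is trivial. The parameter $T$ is identical, so nothing is lost. I expect no real obstacle here; the only care needed is the sign handling for negative $\ell$.
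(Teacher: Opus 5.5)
Your proof is correct and matches the paper's approach: both rest on $\sum_j U_j=0 \iff L_{\max}\le 0 \iff T_{\max}\le 0$ and on shifting every due date by the threshold. The one difference is organisational. The paper closes a cycle (special case $\to$ \PTmax $\to$ \PLmax $\to$ special case, using $T_{\max}=\max\{0,L_{\max}\}$ for the middle link), whereas you reduce \PTmax to the special case directly via $d_j'=d_j+t$.

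Your treatment of negative $d_j+\ell$ is cleaner than the paper's blanket assumption $|\ell|\le\min_j d_j$, which is not valid for large positive $\ell$. You do not need $p_j\ge 1$ there, since $C_j\ge 0$ already makes a negative due date unmeetable. Discarding zero-length jobs would in fact be unsound in exactly that case.
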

\begin{proof}
	Note that $\sum_j U_j = 0$ if and only if $T_{\max} = 0$, so the special case of \PSUj asking for $\sum_j U_j =0$ reduces to \PTmax. Next, note that $T_{\max} = \max \{0, L_{\max}\}$, so any schedule that minimises $L_{\max}$ also minimises $T_{\max}$, i.e.~\PTmax reduces to \PLmax. We now reduce \PLmax to the special case of \PLmax asking for $L_{\max} \leq 0$. Since $L_{\max} \leq 0$  if and only if $\sum_j U_j = 0$, this closes the cycle of reductions. 
	
	Let $p_1, \dots, p_n$ and $d_1, \dots, d_n$ be the processing times and due dates of $n$ given jobs. Let $\ell \in \Z$ be the target objective, i.e.~the task is to decide whether there exists a schedule of the $n$ jobs such that $L_{\max} \leq \ell$. Consider the instance $p_1', \dots, p_n'$ and $d_1', \dots, d_n'$ where $p_j' \coloneq p_j$ and $d_j' \coloneq d_j + \ell$. 
	Note that we can assume without loss of generality that $|\ell| \leq \min_{j \in [n]} \{d_j\}$, as otherwise the given instance is trivially in the NO case. Hence, the constructed due dates are non-negative integers. Additionally, the order of the due dates is preserved and thus for the same schedule we have $L_{\max}' = L_{\max} - \ell$, i.e.~$L_{\max} \leq \ell$ if and only if $L_{\max}' \leq 0$. As the processing times remain unchanged, this is indeed a parameter-preserving reduction. 
\end{proof}

\section{Technical Lemmas}
\label{app:technical-lemmas}

\lembehrend*
\begin{proof}
	It suffices to adapt Behrend's classic construction~\cite{behrend1946sets} by adding an extra coordinate to account for the number of summed items. For completeness, we provide the details below.

	Let $u, r, d \in \mathbb N$ be parameters that we choose later. Consider the set of vectors $R \coloneq \{y \in \{0, 1, \dots, u\}^d \ : \ \norm{y} = r\}$ where $\norm{\cdot}$ denotes the Euclidean norm. 
	Then for any $\ell \geq 0$ and vectors $y_1, \dots, y_\ell, y \in R$ we have $\norm{\ell \cdot y} = \ell \cdot r$ and $\norm{y_1 + \dots + y_\ell} = \norm{y_1} + \dots + \norm{y_\ell} = \ell \cdot r$ if and only if $y_1 = \dots = y_\ell$. 
	Embed the $d$-dimensional vectors of $R$ into integers using a base $2ku$ encoding: 
	$$B \coloneq \left\{(2ku)^d + \sum_{i=1}^d y[i] \cdot (2ku)^{i-1} \ : \ y
    \in R\right\}.$$ We claim that the set $B$ is strong $k$-average-free. Indeed, let $0\leq a, b \leq k$ and consider integers $x_1, \dots, x_{a}, x \in B$ with their corresponding vectors $y_1, \dots, y_{a}, y \in R$.
	Then 
	$$x_1 + \dots + x_{a} = a \cdot (2ku)^d + \sum_{i=1}^d (y_1 + \dots + y_{a})[i] \cdot (2ku)^{i-1} \;\;\text{and}\;\; b x = b \cdot (2ku)^{d} + \sum_{i=1}^{d} b \cdot y[i] \cdot (2ku)^{i-1}.$$ 
	Since $0 \leq a, b \leq k$ and $0 \leq y[i] \leq u$ for any vector $y \in R$ and any $i \in [d]$, there is no overflow between terms corresponding to different exponents of $2ku$. It follows that $x_1 + \dots + x_{a} = b x$ if and only if $a = b$ and $y_1 + \dots + y_{a} = b y$. In particular, this implies that $\norm{y_1 + \dots + y_{a}} = \norm{a y} = a \cdot r$, which by the above observation is equivalent to $y_1 = \dots = y_{a}$. So we deduce that $y_1 = \dots = y_{a} = y$. 

	It remains to tune the parameters $u, r, d \in \mathbb N$ such that $B$ has size at least $n$. 
	We choose $d \coloneq \lceil 2 / \mu \rceil +2$ so that $d \ge 3$. We choose $r$ that maximises the size of $B$, i.e.~$|B| = |R| = \max_{r \in \mathbb N} |\{y \in \{0, 1, \dots, u\}^d \ : \ \norm{y} = r\}|$.
	For any $y \in R$, we have $\norm{y}^2  = y[1]^2 + \dots + y[d]^2 \in \{0, 1, \dots, d u^2\}$, so there are at most $d u^2+1$ choices for $r$. Therefore, 
	$|B| \geq (u+1)^d / (du^2+1) \geq u^{d-2} / d$. 
	Hence, by setting $u \coloneq \lceil (dn)^{1/(d-2)} \rceil$ we get $|B| \geq n$. We can bound the integers in $B$ by
	$
	U \coloneq \max(B) \leq 2 (2ku)^{d} \leq 
	 n^{1+\mu} k^{\Oh(1/\mu)}
	$, where we plugged in the values of $d$ and $u$ and bounded $2 (2k)^d \le k^{\Oh(d)} \le k^{\Oh(1/\mu)}$, $d^{d/(d-2)} \le d^3 \le (1/\mu)^{\Oh(1)} \le 2^{\Oh(1/\mu)} \le k^{\Oh(1/\mu)}$, and $n^{d/(d-2)} = n^{1 + 2/\lceil 2/\mu \rceil} \le n^{1+\mu}$.
\end{proof}

\genrohwedder*
\begin{proof}
	We assume that $\tau \ge k^2$, as otherwise $\tau$ copies of the integer $1$ satisfy the claim.
	For $n \in \N$, let $\text{bit}_i(n) \in \{0,1\}$ denote the $i$-th bit of the binary representation of $n$.
	Let $h \coloneq \lfloor \log (\tau/k^2) \rfloor$, $b \coloneq 2^h - 1$, $a \coloneq \tau - (k-1) b$
	and $c \coloneq \floor{a/2^h}$. Define the sets
	$$
	A \coloneq \{2^i \ : \ i \in \{0, \dots, h-1\} \text{ and } \text{bit}_i(a) = 1\}
 \quad \text{  and  } \quad 
 B \coloneq \{2^i \ : \ i \in \{0, \dots, h-1\}\}.
	$$
	For every $i \in [k-1]$ we construct a copy of $B$ that we denote by $B^{(i)}$.
	We let $C$ be a multiset consisting of $c$ copies of the integer $2^h$.
	Finally, let $P$ be the union of multisets $A$, $B^{(1)},\ldots,B^{(k-1)}$ and
	$C$.
	This concludes the construction of $P$, which clearly takes time $\Oh(|P|)$. We now verify the properties of $P$. 
	
	\begin{property}
			$|P| = \Oh(k^2 \log{\tau})$.
	\end{property}
	\begin{proof}
		This follows from $|P| = |A| + (k-1)|B| + |C|$, and observing that $|A| \leq |B| \leq  h \le \log \tau$ and 
			\begin{displaymath}
					|C| = c = \floor{a/2^h} \le \tau/2^{\floor{\log(\tau/k^2)}} \le
					2\tau/2^{\log(\tau/k^2)} \le 2k^2.\qedhere
			\end{displaymath}\end{proof}
	\begin{property}
		$\Sigma(P) = \tau$.
	\end{property}
	\begin{proof}
		Observe that $\Sigma(A) \equiv a \Mod{2^h}$ and $\Sigma(A) \leq 2^h$. So we have 
		\begin{align*}
			c 2^h = \floor{\frac{a}{2^h}} 2^h = a - (a \bmod{2^h}) = a - \Sigma(A),
		\end{align*}
		and therefore
		\begin{align*}
			\Sigma(P) 
			&= \Sigma(A) + (k-1)\Sigma(B) + \Sigma(C) = \Sigma(A) + (k-1)b + c 2^h \tag{by construction of $B$ and $C$} \\
			&= (k-1) b + a = \tau.\tag*{\hfill\qedhere}
		\end{align*}
	\end{proof}

	\begin{property}
		For any integers $0 \le t_1 \le \dots \le t_k$ such that $t_1 + \dots + t_k = \tau$, we can partition $P = P_1 \uplus \ldots \uplus P_k$ such that $\Sigma(P_i) = t_i$ for every $i
 \in [k]$. 
	\end{property}
	\begin{proof}
	For every $i \in [k-1]$, let $t_i^{+} \coloneq 2^h \floor{t_i/2^h}$ and $t_i^{-} \coloneq t_i - t_i^{+} = t_i \bmod 2^h$. Notice that $t_i^{+} \equiv 0  \Mod{2^h}$ and $t_i^{-} < 2^h$. So we can choose $P_i^{-} \subseteq B^{(i)}$ to be the unique set such that $\Sigma(P_i^{-}) = t^{-}_i$. Let $P^{+}_i \subseteq C$ be a multiset of cardinality $t^{+}_i/2^h$, i.e.~$\Sigma(P_i^+) = 2^h \cdot t_i^+ / 2^h= t_i^+$. Then, since $P_i^+$ and $P_i^-$ are disjoint, the sets $P_i \coloneq P_i^- \cup P_i^+$ satisfy $\Sigma(P_i) = t_i$, for every $i \in [k-1]$.

	We claim that the sets $P_1, \dots, P_{k-1}$ are disjoint.  
	Notice that $P_1^-, \dots, P_{k-1}^-$ are disjoint by construction, so we focus on showing that we can construct disjoint multisets $P_1^+, \dots, P_{k-1}^+ \subset C$ for which it suffices to show that $\sum_i |P_i^+| \leq C$. Recall that $2^h = 2^{\floor{\log (\tau / k^2)}} \leq \tau / k^2$. 
	So we bound the size of $C$ by
	\begin{align*}
			|C| &= c =  \floor{\frac{a}{2^h}} 
			= \floor{\frac{\tau - (k-1)(2^h - 1)}{2^h}} 
			\ge \floor{\frac{\tau}{2^h} - k} \ge \floor{\frac{\tau - \tau/k}{2^h}} \tag{by $2^h \leq 
			\tau / k^2 $}\\
			&= \floor{\frac{\sum_{i=1}^k t_i - \tau/k}{2^h}} \ge \floor{\frac{\sum_{i=1}^{k-1} t_i}{2^h}}  \tag{since $t_k \ge \tau/k$}\\
			& \ge \floor{\frac{\sum_{i=1}^{k-1} t_i^{+}}{2^h}} 
			= \sum_{i=1}^{k-1} \frac{t_i^{+}}{2^h} \tag{since $t_i^+ \equiv 0 \Mod{2^h}$} = \sum_{i=1}^{k-1} |P_i^+|.
	\end{align*}
	Hence, $C$ contains enough elements to be distributed among $P_1^+, \dots, P_{k-1}^+$ such that $P_1, \dots, P_{k-1}$ are disjoint.
	Finally, set $P_k \coloneq P \setminus (\bigcup_{i=1}^{k-1} P_i)$. Then $\Sigma(P_k) = \Sigma(P) - \left( \Sigma(P_1) + \dots +\Sigma(P_{k-1}) \right) = \tau - (t_1 + \dots + t_{k-1}) = t_k$. Hence, $P_1 \uplus \dots \uplus P_k$ is a partition of $P$ with $\Sigma(P_i) = t_i$ for every $i \in [k]$. 
\end{proof}
As a consequence, we can construct the set $P$ with the required properties in time $\Oh(|P|) = \Oh(k^2 \log \tau)$. 
\end{proof}

\end{document}